\documentclass[11pt]{article}
\usepackage{graphicx} %
\usepackage{float}

\usepackage{xspace}
\usepackage{xcolor}
\usepackage{amscd}
\usepackage{amsmath}
\usepackage{amssymb}
\usepackage{amstext}
\usepackage{amsthm}
\usepackage{bbold}
\usepackage{bm}
\usepackage{colonequals}
\usepackage{tikz}

\usepackage{hyperref}

\usepackage[dvips,letterpaper,margin=1in]{geometry}

\usepackage{algpseudocode}
\usepackage[ruled,vlined]{algorithm2e}
\newcommand{\sA}{\mathcal{A}}

\newcommand{\sD}{\mathcal{D}}

\newcommand{\sG}{\mathcal{G}}
\newcommand{\sH}{\mathcal{H}}
\newcommand{\sI}{\mathcal{I}}

\newcommand{\sR}{\mathcal{R}}

\newcommand{\sT}{\mathcal{T}}

\newcommand{\DD}{\mathbb{D}}

\newcommand{\PP}{\mathbb{P}}
\newcommand{\QQ}{\mathbb{Q}}
\newcommand{\RR}{\mathbb{R}}

\newcommand{\ZZ}{\mathbb{Z}}
\DeclareSymbolFont{bbold}{U}{bbold}{m}{n}
\DeclareSymbolFontAlphabet{\mathbbold}{bbold}

\newcommand{\Erdos}{Erd\H{o}s}
\newcommand{\Renyi}{R\'{e}nyi}

\newcommand{\Col}{\mathrm{Col}}

\newcommand{\Var}{\mathrm{Var}}
\newcommand{\poly}{\mathrm{poly}}
\newcommand{\tw}{\mathrm{tw}}

\newcommand{\Inj}{\mathrm{Inj}}
\newcommand{\Aut}{\mathrm{Aut}}
\newcommand{\Iso}{\mathrm{Iso}}
\newcommand{\conn}{\mathrm{conn}}
\newcommand{\width}{\mathrm{width}}

\newcommand{\DP}{\mathrm{DP}}
\DeclareMathOperator*{\E}{\mathbb{E}}
\newcommand{\1}{\mathbf{1}}

\newcommand{\Cov}{\mathrm{Cov}}

\newcommand{\Bin}{\mathrm{Bin}}
\newtheorem{claim}{Claim}
\newtheorem{theorem}{Theorem}[section]
\newtheorem{remark}[theorem]{Remark}
\newtheorem{assumption}[theorem]{Assumption}
\newtheorem{lemma}[theorem]{Lemma}
\newtheorem{question}[theorem]{Question}

\newtheorem{definition}[theorem]{Definition}

\newtheorem{proposition}[theorem]{Proposition}
\newtheorem{corollary}[theorem]{Corollary}
\newtheorem{conjecture}[theorem]{Conjecture}

\renewcommand{\emptyset}{\varnothing}

\usepackage{pgfplots}
\pgfplotsset{compat=1.16}

\title{Improved polynomial-time algorithms for detecting and recovering planted $\Theta(\sqrt{n})$-cliques}
\usepackage{authblk}
\author[1]{Dmitriy Kunisky\thanks{Email: \texttt{kunisky@jhu.edu}}}
\author[2]{Songtao Mao\thanks{Email: \texttt{smao13@jhu.edu}.}}
\affil[1]{Department of Applied Mathematics \& Statistics, Johns Hopkins University}
\affil[2]{Department of Computer Science, Johns Hopkins University}

\date{September 21, 2026}

\begin{document}
\maketitle
\thispagestyle{empty}

\begin{abstract}
In the planted clique problem, one observes either an \Erdos--\Renyi\ graph on $n$ vertices or such a graph with a clique added to $k = k(n)$ vertices, and seeks to detect or recover the clique.
It is widely believed that $k = \Theta(\sqrt{n})$ is the smallest clique size for which polynomial-time algorithms exist for these tasks.
We develop new algorithms in this regime using color-coding to estimate signed subgraph counts, further accelerated with fast matrix multiplication.

We first show that, for each $t \geq 1$, for $c(t)$ a constant associated to the order of growth of the number of connected graphs of treewidth at most $t$, cliques of size $k = \lambda\sqrt{n}$ planted in a random location with $\lambda > 1 / \sqrt{c(t)}$ can be detected and recovered in time $n^{t + 1 + o(1)}$.
For instance, since $c(1) = e$, this recovers by counting signed trees the performance of the $\widetilde{O}(n^2)$-time message-passing algorithm of Deshpande--Montanari~(2015) that succeeds when $\lambda > 1 / \sqrt{e} \approx 0.6066$. For $t \geq 3$, the exact value of $c(t)$ is not known, but lower bounds on it give a hierarchy of slower polynomial-time algorithms that succeed for smaller $\lambda$.

We further show that the above algorithm for $t = 2$ can be implemented in time $n^{\omega + o(1)}$ for $\omega$ the constant of square matrix multiplication and succeeds when $\lambda > 0.3320$; under the folklore conjecture that $\omega = 2$, this runs in the nearly-linear time of the algorithm of Deshpande--Montanari while finding smaller cliques.
Second, we show that the above algorithm for $t = 1$ can be combined with the boosting scheme of Alon--Krivelevich--Sudakov (1998) using rectangular matrix multiplication, giving improved runtimes for smaller $\lambda$.
Taken together, our results achieve the best known tradeoff between runtime and signal strength $\lambda$.
\end{abstract}
\newpage

\thispagestyle{empty}
\tableofcontents
\thispagestyle{empty}

\newpage

\section{Introduction}
\pagenumbering{arabic}

\subsection{Planted clique problem}

The planted clique problem is one of the most influential problems in average-case complexity theory, random graph algorithms, and high-dimensional statistics, and it has become a central benchmark for understanding statistical--computational tradeoffs. We write $\sG(n, 1/2, C)$ for the distribution involved in this problem, the law of a random graph $G$ formed by first drawing an Erd\H{o}s--R\'enyi graph $G \sim \sG(n,1/2)$, and then adding a clique on a subset $C \subseteq [n]$ of vertices.
We further write $\sG(n, 1/2, k)$ for the case where $C$ is drawn uniformly at random among subsets of size $k$.
We focus on the latter case for now and will return to enumerating over specific deterministic cliques $C$ later.
The task associated to this model is then either to \emph{detect} the presence of the planted clique, i.e., to test whether a graph $G$ was drawn from $\sG(n, 1/2, k)$ or from $\sG(n, 1/2)$, or to \emph{recover} the clique vertices when $G \sim \sG(n, 1/2, k)$. We think of $k = k(n)$ and consider a sequence of such problems as $n \to \infty$.

Since the size of the largest clique in $G \sim \sG(n, 1/2)$ is with high probability in the interval $[(2 - \varepsilon)\log_2 n, (2 + \varepsilon)\log_2 n]$ for any $\varepsilon > 0$ (see \cite[Section 7.2]{FK-2016-RandomGraphs}), both tasks are possible by brute force search once $k \geq (2 + \varepsilon)\log_2 n$. In contrast, no polynomial-time algorithm is known to succeed until $k = \Omega(\sqrt{n})$. This leaves a broad regime,
\[
(2+\varepsilon)\log_2 n \leq k \ll \sqrt n,
\]
in which detection and recovery are possible but widely conjectured to be computationally intractable. Various forms of the \emph{planted clique conjecture} give precise forms of this hardness; a slightly weaker and widely believed conjecture is that, for instance, if $k \leq n^{1/2 - \varepsilon}$ for some $\varepsilon > 0$ then no polynomial-time algorithm can detect or recover the planted clique (which we state below in Conjecture~\ref{conj:clique}).

When $k=\Omega(\sqrt n)$, various polynomial-time algorithms are known for detecting or recovering planted cliques:
\begin{itemize}
    \item \textbf{Degree thresholding:} In \cite{kuvcera1995expected}, Kučera observes that when $k\ge C\sqrt{n\log n}$ for a constant $C > 0$, then the $k$ vertices of highest degree in $G \sim \sG(n, 1/2, k)$ are outliers of the degree distribution and are with high probability the planted clique vertices, giving $O(n^2)$-time algorithms for detection and recovery.
    \item \textbf{Spectral methods:} In \cite{alon1998finding}, Alon, Krivelevich, and Sudakov use the second eigenvector of the adjacency matrix to recover a planted clique of size $k \ge 10\sqrt{n}$ in time $O(n^3)$. Standard random matrix theory implies that detection can be achieved by examining the largest eigenvalue of the adjacency matrix for the broader range $k \geq \sqrt{n}$ (as presented in, for example, \cite[Section 1]{lugosi2017lectures}), and a simple modified spectral algorithm of \cite{ma2025nonlinear} is conjectured to achieve detection when $k \geq 0.76\sqrt{n}$.

    \item \textbf{Boosting:} In \cite{alon1998finding} it is also pointed out that one may run the above algorithm on all induced subgraphs of common neighbors of cliques of some small size $s$ (some of which will be subsets of the larger planted clique). If $t = 2\lceil \log_2(10 / \lambda) \rceil + 2$, then this method can detect cliques of size $k \geq \lambda\sqrt{n}$ in time $O(n^{t + 3})$. The same method may also be used to ``boost'' the performance of other algorithms by combining either with brute force search over $s$-subsets or with slightly subtler choices of subsets that give faster algorithms, as we describe in Section~\ref{sec:prior}.
    \item \textbf{Convex optimization:} Appropriate semidefinite programs can also find cliques of size $k = \Omega(\sqrt{n})$, as shown by \cite{feige2000finding}, though without an explicit constant $k \geq C\sqrt{n}$. Other convex optimization methods were studied by \cite{ames2011nuclear,ames2015guaranteed}. Compared to other algorithms, the current analyses of these algorithms seem to require $k \geq C\sqrt{n}$ for $C$ large and sometimes to have polynomial runtime with a large exponent in order to use general-purpose convex optimization algorithms.
    \item \textbf{Markov chain Monte Carlo and gradient descent:} While, as we mention below, the natural Metropolis dynamics Markov chain does \emph{not} recover planted cliques of size $k = \Theta(\sqrt{n})$, \cite{GJX-2023-PlantedCliqueMCMC} show that a variant thereof as well as gradient descent on a related objective function does find cliques of size $k \geq C\sqrt{n}$ in polynomial time for some $C$, but again this constant is not given explicitly.
    \item \textbf{Combinatorial algorithms:} In \cite{feige2010finding}, the authors propose a purely combinatorial ``pruning'' algorithm that runs in time $O(n^2)$ and, for $k \geq C\sqrt n$ with $C$ sufficiently large, finds the hidden clique with probability at least $2/3$. A similar $O(n^2)$-time algorithm of \cite{dekel2014finding} succeeds with an explicit constant $k \geq 1.26\sqrt n$ and with high probability.
    \item \textbf{Message-passing:} In \cite{deshpande2015finding}, Deshpande and Montanari develop a belief propagation algorithm and analysis that achieves quite strong constants: it recovers cliques of size $k \geq \sqrt{n / e} \approx 0.61\sqrt{n}$ and runs in $O(n^2\log n)$ time. This is (to our knowledge) the smallest clique size known to be recoverable by nearly-linear time algorithms.\footnote{``Nearly linear'' in this context is relative to the $\Theta(n^2)$ size of the input.}
\end{itemize}

Despite decades of effort, no polynomial-time algorithm is known for planted clique in the regime $k=o(\sqrt n)$, and multiple lines of evidence support the conjecture that none exists.
One important line of evidence comes from restricted computational models: Metropolis dynamics~\cite{Jerrum-1992-LargeCliques,CMZ-2023-LinearPlantedCliquesMetropolis}, algorithms in the statistical query~(SQ) computational model~\cite{feldman2017statistical}, low-degree polynomial algorithms~\cite[Section~2.4]{hopkins2018statistical}, and sum-of-squares semidefinite programming relaxations~\cite{MPW-2015-PlantedClique,DM-2015-SOSPlantedClique,HKPRS-2018-PlantedCliqueSOS4,barak2019nearly}.\footnote{The evidence from sum-of-squares relaxations is slightly weaker; the best known lower bounds for the degree $d$ relaxation that runs in time $n^{O(d)}$ show that it cannot find cliques of size $k\le n^{1/2-c\sqrt{d/\log n}}$, which does not quite rule out the entire regime $k = o(\sqrt{n})$.} The planted clique conjecture has thus become a cornerstone assumption in average-case complexity; assuming its truth, researchers have reduced many other problems to planted clique or variants thereof, including sparse principal component analysis~\cite{berthet2013computational}, submatrix detection~\cite{ma2015computational,brennan2019universality}, detection of planted dense subgraphs and other notions of hidden community structure in graphs~\cite{brennan2020reducibility,bresler2023detection}, and other clustering-type problems~\cite{chen2016statistical,abbe2018community}.

\subsection{Fine-grained polynomial-time complexity}

Many of the above works have, implicitly or explicitly, studied the tradeoff between signal strength and computational complexity in the $k = \Omega(\sqrt{n})$ regime: if we restrict ourselves to algorithms running in time $O(n^{\tau})$, then what size $k \geq \lambda\sqrt{n}$ of planted cliques can we detect? As for instance the boosting scheme of \cite{alon1998finding} demonstrates, there is in fact a tradeoff: paying more runtime, that is, allowing larger $\tau$, lets us handle smaller constants $\lambda$. While this has been studied in special cases, it seems that the general question of determining this tradeoff has not been posed explicitly before, which we do below.

We will see that there are some subtle issues surrounding the presence of randomness in two places: the location of the clique and the design of an algorithm.
Note that, for randomized algorithms, the distinction between a uniformly random clique location $G \sim \sG(n, 1/2, k)$ and an arbitrary fixed clique location $G \sim \sG(n, 1/2, C)$ is immaterial.
Indeed, a randomized algorithm may first apply a uniformly random permutation $\pi$ to the input vertices, reducing the latter to the former.

For deterministic algorithms, however, this randomization is not allowed and so the choice of model of clique location matters more. Thus we distinguish three tradeoff functions: (1) where the algorithm is randomized and the clique location is uniformly random (to which deterministic clique location can be reduced), (2) where the algorithm is deterministic and the clique location is uniformly random, and (3) where the algorithm is deterministic and the clique location is deterministic and worst-case (i.e., the algorithm must succeed with high probability for any location of planted clique).
We will give an important example below related to the boosting scheme of~\cite{alon1998finding} that illustrates how these models can allow for algorithms with different performance.

\begin{definition}\label{def:tau-variants}
For $\lambda>0$, define $\tau_{\mathrm{rand}}(\lambda)$, $\tau_{\det,\mathrm{avg}}(\lambda)$, and $\tau_{\det}(\lambda)$ as follows. In each case, the value is the infimum over all $\alpha>0$ such that, for every $\varepsilon>0$, there are algorithms running in time $O(n^{\alpha+\varepsilon})$ that achieve strong recovery and strong detection (in the sense of Definition~\ref{def:detection-recovery}) in a planted clique problem whenever $k\ge(\lambda+\varepsilon)\sqrt n$.
The three variants differ only in the allowed algorithm and in the model of clique location:
\begin{itemize}
\item $\tau_{\mathrm{rand}}$: the algorithm may be randomized and succeeds for $G \sim \sG(n, 1/2, k)$.
\item $\tau_{\det, \mathrm{avg}}$: the algorithm is deterministic and succeeds for $G \sim \sG(n,1/2,k)$.
\item $\tau_{\det}$: the algorithm is deterministic and, for every fixed sequence of clique locations $C_n\in\binom{[n]}{k}$, succeeds for $G \sim \sG(n, 1/2, C_n)$.
\end{itemize}
Lastly, when no ambiguity is possible, we write $\tau=\tau_{\mathrm{rand}}$. Since strong recovery also yields strong detection, it suffices throughout to establish strong recovery.
\end{definition}

\noindent
Clearly,
\[
\tau_{\mathrm{rand}}(\lambda)
\le
\tau_{\det,\mathrm{avg}}(\lambda)
\le
\tau_{\det}(\lambda).
\]
We allow for an ``$\varepsilon$ of room'' in the signal strength and runtime to exclude subtleties about factors of sub-leading order in $n$ in either quantity. This gives a precise formulation of the smooth tradeoff between required runtime and signal strength in the planted clique problem. We therefore believe the following is a fundamental question about the planted clique problem.

\begin{question}
\label{question}
Determine $\tau_{\mathrm{rand}}(\lambda)$, $\tau_{\det,\mathrm{avg}}(\lambda)$, and $\tau_{\det}(\lambda)$ for every $\lambda>0$.
\end{question}
\noindent
Figure~\ref{fig:four-envelope} illustrates our best bounds on these functions compared to the best (to our knowledge) implied by prior work.

\begin{figure}[t]

\centering
\begin{tikzpicture}
\begin{axis}[
    width=0.82\textwidth,
    height=0.52\textwidth,
    xmin=0.07, xmax=0.62,
    ymin=1.85, ymax=6.15,
    xlabel={$\lambda$},
    ylabel={Upper bound on runtime exponent},
    xtick={0.1,0.2,0.3,0.4,0.5,0.6},
    ytick={2,3,4,5,6},
    minor x tick num=1,
    minor y tick num=1,
    grid=both,
    major grid style={draw=gray!35, line width=0.25pt},
    minor grid style={draw=gray!18, line width=0.15pt},
    tick align=outside,
    axis line style={black!70},
    ticklabel style={font=\small},
    label style={font=\small},
    legend style={
        draw=gray!25,
        fill=white,
        fill opacity=0.92,
        text opacity=1,
        font=\small,
        at={(0.98,0.98)},
        anchor=north east,
        row sep=1pt
    },
    legend cell align={left},
    every axis plot/.append style={line width=1.25pt}
]

\addplot+[
    orange,
    const plot,
    mark=none
]
coordinates {
    ({1/sqrt(2^6*exp(1))},4.1989)
    ({1/sqrt(2^5*exp(1))},3.7205)
    ({1/sqrt(2^4*exp(1))},3.2501)
    ({1/sqrt(2^3*exp(1))},2.7947)
    ({1/sqrt(2^2*exp(1))},2.3712)
    (0.3320,2.3712)
    ({1/sqrt(2^1*exp(1))},2.0428)
    ({1/sqrt(exp(1))},2.0000)
    (0.6200,2.0000)
};

\addplot+[
    orange!55,
    densely dashed,
    const plot,
    mark=none
]
coordinates {
    ({1/sqrt(2^6*exp(1))},5.0000)
    ({1/sqrt(2^5*exp(1))},4.5000)
    ({1/sqrt(2^4*exp(1))},4.0000)
    ({1/sqrt(2^3*exp(1))},3.5000)
    ({1/sqrt(2^2*exp(1))},3.0000)
    (0.3320,3.0000)
    ({1/sqrt(2^1*exp(1))},2.5000)
    ({1/sqrt(exp(1))},2.0000)
    (0.6200,2.0000)
};

\addplot+[
    blue!70!black,
    const plot,
    mark=none
]
coordinates {
    ({1/sqrt(2^6*exp(1))},5.1989)
    ({1/sqrt(2^5*exp(1))},4.6989)
    ({1/sqrt(2^4*exp(1))},4.1989)
    ({1/sqrt(2^3*exp(1))},3.7205)
    ({1/sqrt(2^2*exp(1))},3.2501)
    (0.3320,2.8712)
    ({1/sqrt(2^1*exp(1))},2.7947)
    ({1/sqrt(exp(1))},2.0000)
    (0.6200,2.0000)
};

\addplot+[
    blue!70!black!55!white,
    dashed,
    const plot,
    mark=none
]
coordinates {
    ({1/sqrt(2^6*exp(1))},5.5000)
    ({1/sqrt(2^5*exp(1))},5.0000)
    ({1/sqrt(2^4*exp(1))},4.5000)
    ({1/sqrt(2^3*exp(1))},4.0000)
    ({1/sqrt(2^2*exp(1))},3.5000)
    (0.3320,3.5000)
    ({1/sqrt(2^1*exp(1))},3.0000)
    ({1/sqrt(exp(1))},2.0000)
    (0.6200,2.0000)
};

\legend{
    Our best bound on $\tau_{\mathrm{rand}}$,
    Prior work bound on $\tau_{\mathrm{rand}}$,
    Our best bound on $\tau_{\det}$,
    Prior work bound on $\tau_{\det}$
}

\end{axis}
\end{tikzpicture}
\vspace{-1em}
\caption{Comparison of the best bounds on $\tau_{\mathrm{rand}}$ and $\tau_{\det}$ obtained from our results (the minimum of all applicable bounds for each $\lambda$) and those of prior work as given in \eqref{eq:tau-rand-aks} and \eqref{eq:tau-det-aks}.}
\label{fig:four-envelope}
\end{figure}

\begin{table}[t]
\centering
\small
\vspace{1em}
\begin{tabular}{c||c|c||c|c}
\hline
$\lambda$ &
Our bound: $\tau_{\mathrm{rand}}$ &
Prior work: $\tau_{\mathrm{rand}}$ &
Our bound: $\tau_{\det}$ &
Prior work: $\tau_{\det}$ \\
\hline
0.6066 & 2.0000 & 2.0000 & 2.0000 & 2.0000 \\
0.4289 & 2.0428 & 2.5000 & 2.7947 & 3.0000 \\
0.3320 & 2.3712 & 3.0000 & 2.8712 & 3.5000 \\
0.3033 & 2.3712 & 3.0000 & 3.2501 & 3.5000 \\
0.2145 & 2.7947 & 3.5000 & 3.7205 & 4.0000 \\
0.1517 & 3.2501 & 4.0000 & 4.1989 & 4.5000 \\
0.1073 & 3.7205 & 4.5000 & 4.6989 & 5.0000 \\
0.0759 & 4.1989 & 5.0000 & 5.1989 & 5.5000 \\
\hline
\end{tabular}
\caption{We give the numerical values of point bounds on the various $\tau$ functions that produce the step functions plotted in Figure~\ref{fig:four-envelope}.}
\label{tab:figure-values}
\end{table}

\subsection{Summary of consequences of prior results}
\label{sec:prior}

Several prior algorithmic ideas give basic upper bounds on these tradeoff functions.
We review below the best bounds we have been able to deduce based on these results, sometimes augmented with small additional arguments.

\paragraph{Message passing and nearly-linear time algorithms}
The algorithm of~\cite{deshpande2015finding} gives
\[
\tau_{\det}(e^{-1/2})\le 2,
\]
and hence also the same bound for $\tau_{\det,\mathrm{avg}}$ and $\tau_{\mathrm{rand}}$.
This is both the best point bound on these functions whose right-hand side is 2 (corresponding to a nearly-linear time algorithm) and the best to use in the boosting arguments we discuss below.
In fact, \cite{deshpande2015finding} speculate whether this might be the smallest size of clique detectable in nearly-linear time.
Translated to our notation, they write:
\begin{quote}
    ``...it is natural to ask whether the threshold $\sqrt{n / e}$ has a fundamental computational meaning or is instead only relevant for our specific algorithm.''
\end{quote}
We will show below that, under the conjecture that the constant $\omega$ of square matrix multiplication equals 2, it is in fact the latter that holds and that smaller cliques can still be detected in nearly-linear time.

\paragraph{Boosting}
The boosting scheme of \cite{alon1998finding} gives more general relationships between these values for different choices of $\lambda$.
Actually, the original formulation admits a simple improvement, and this leads to different costs depending on the choice of clique location model.
The original premise is to do a brute force search over small subsets of vertices of some size $s$, and repeatedly apply a planted clique algorithm on the subgraphs induced on simultaneous neighbors of all members of the set.
For some choice of such a subset in $\binom{[n]}{s}$, all vertices will lie in the planted clique, in which case the induced subgraph will be a smaller planted clique instance with a relatively larger clique.
We refer to these subsets as \emph{seed} sets of vertices whenever such methods are being applied.

While \emph{a priori} this multiplies the runtime by $\binom{n}{s} \leq n^s$, this factor can be improved.
If the algorithm can be randomized, one can choose random sets.  For every $k\ge\lambda\sqrt n$, a random $s$-subset falls in the planted clique with probability at least $\Omega_{s,\lambda}(n^{-s/2})$, so $O(n^{s/2 + o(1)})$ random draws suffice to, with high probability, include a subset contained in the clique.
This gives:
\[
\tau_{\mathrm{rand}}(2^{-s/2} \lambda)
\le
\max\{\tau_{\mathrm{rand}}(\lambda),2\}+\frac{s}{2}
\]
for every fixed integer $s\ge0$ and all $\lambda > 0$.
One may apply this to any ``base'' planted clique algorithm bounding a single $\tau_{\mathrm{rand}}(\lambda)$ to get a sequence of bounds for smaller $\lambda$.
To the best of our knowledge, the best possible application of this combination of ideas based on prior work is to apply this to the message-passing algorithm of \cite{deshpande2015finding}, which gives, for all $\lambda > 0$,
\begin{equation}
\tau_{\mathrm{rand}}(\lambda)
\le
2+\frac12\left\lceil \log_2\frac{1}{e\lambda^2}\right\rceil_+ \sim \log_2 \frac{1}{\lambda}, \label{eq:tau-rand-aks}
\end{equation}
the latter asymptotic applying to the limit $\lambda \to 0$ of small planted cliques.
The same argument and bound apply to $\tau_{\det, \mathrm{avg}}$ as well: one may draw and fix a list of $n^{s/2 + o(1)}$ random $s$-subsets in advance, and a uniformly random planted clique will cover some subset in this list with high probability by the same argument.

For $\tau_{\det}$, however, to apply this strategy we must fix a deterministic list of $s$-subsets that must ``hit'' every $k$-subset (i.e., every possible clique location).
Using all $s$-subsets as in the naive application of the strategy above gives
\begin{align*}
\tau_{\det}(2^{-s/2}\lambda)
&\le
\max\{\tau_{\det}(\lambda),2\}+s, \\
\tau_{\det}(\lambda) &\leq 2+\left\lceil \log_2\frac{1}{e\lambda^2}\right\rceil_+ \sim 2 \log_2 \frac{1}{\lambda},
\end{align*}
asymptotically larger than the bound on $\tau_{\mathrm{rand}}$ by a factor of 2.
This can be partly but not entirely repaired by a standard Tur\'an-type block construction.  For a fixed integer $s\ge2$, let
$b=\lfloor (k-1)/(s-1)\rfloor$, partition $[n]$ into $b$ nearly equal parts, and list every $s$-subset contained in one part (for $s=1$, list all singletons).  If a $k$-set contained no listed seed subset, then each part would contain at most $s-1$ of its vertices, for a total of at most $b(s-1)\le k-1$, a contradiction.  The list has size
$O_s(b(n/b)^s)=O_s(n^s/k^{s-1})\le n^{(s+1)/2+o(1)}$ whenever $k\ge\lambda\sqrt n$.
This gives
\begin{align}
\tau_{\det}(2^{-s/2}\lambda)
&\le
\max\{\tau_{\det}(\lambda),2\}+ \frac{s + 1}{2}, \nonumber \\
\tau_{\det}(\lambda) &\leq \frac{5}{2} + \frac12\left\lceil \log_2\frac{1}{e\lambda^2}\right\rceil_+ \sim \log_2 \frac{1}{\lambda}, \label{eq:tau-det-aks}
\end{align}
which recovers the asymptotic behavior of $\tau_{\mathrm{rand}}$ but with a worse next-order term (of size $\Theta(1)$ above).
The lines giving results of prior work in Figure~\ref{fig:four-envelope} correspond to applying these boosting bounds starting with the algorithm of \cite{deshpande2015finding}.

\subsection{Main results}

Before stating our main results, we recall the standard notions of detection and recovery for the planted clique problem.
We switch notation to work with more standard general notation for such settings from the literature.

\begin{definition}[Planted clique model]
    We write $\QQ = \QQ_n = \sG(n, 1/2)$ for the law of an \Erdos-\Renyi\ random graph on $n$ vertices, $\PP_{n, C}$ for the planted clique model with a specific clique $C \subseteq [n]$, and $\PP_{n, k} = \sG(n, 1/2, k)$ for the planted clique model with a uniformly random clique $C \sim \mathrm{Unif}(\binom{[n]}{k})$.
    We denote the clique by $C$ in both of the latter cases, view it as a random variable defined implicitly whenever we speak of the $\PP$ laws, and abbreviate both models as $\PP = \PP_n$ to allow the cases of a deterministic clique or a random one to be covered.
\end{definition}
\noindent We note that both of the above distributions can be viewed, by constructing suitable adjacency matrices, as distributions on $\{0, 1\}^{\binom{n}{2}}$ or $\{\pm 1\}^{\binom{n}{2}}$.

\begin{definition}[Strong detection and recovery]
\label{def:detection-recovery}
We call functions $\sA = \sA_n: \{0, 1\}^{\binom{n}{2}} \to \{0, 1\}$ a \emph{test}. A test \emph{(strongly) distinguishes} or achieves \emph{(strong) detection} between $\PP_{n}$ and $\QQ_n$ if
\[
\Pr_{G\sim \QQ_n}[\sA(G)=1]
+
\Pr_{G\sim \PP_{n}}[\sA(G)=0]
=
o(1).
\]
We call a sequence of functions $\sR = \sR_n: \{0,1\}^{\binom{n}{2}} \to 2^{[n]}$ a \emph{recovery algorithm} or \emph{estimator}. A recovery algorithm $\sR:\{0,1\}^{\binom{n}{2}} \to 2^{[n]}$ \emph{(strongly) recovers} the planted clique if
\[
\Pr_{G\sim \PP_{n}}\bigl[\sR(G)=C\bigr]
=
1-o(1).
\]
\end{definition}

\begin{remark}
    \label{rem:recovery-detection}
    We note that an algorithm for strong recovery also achieves strong detection provided $k \geq (2 + \varepsilon)\log_2 n$: we can simply check whether the output subset is indeed a large clique in the input graph, which with high probability will successfully test whether $G \sim \PP_{n}$ or $G \sim \QQ_n$.
\end{remark}

We first establish a connection between the fine-grained polynomial-time complexity of detecting and recovering planted cliques on the one hand, and the enumeration of bounded-treewidth graphs on the other by testing using a statistic calculating signed counts of such graphs.
See Definition~\ref{def:treewidth} for the definition of treewidth of a graph $G$, which we denote $\tw(G)$. For reference, the connected graphs of treewidth 1 are the trees, while the connected graphs of treewidth at most 2 can be viewed as trees assembled from series-parallel graphs. For integers $v\ge 1$ and $t\ge 1$, let
\[
\sG_{v,\le t}^{\mathrm{conn}} \colonequals \{\text{simple graphs } G = ([v], E): G \text{ connected}, \tw(G) \leq t\}.
\]
Here we emphasize that we view $G$ as having its vertices labelled by $[v]$, and we consider $G$ distinct up to labelled isomorphism.

For any constant $t$, the leading order behavior of $|\sG_{v,\leq t}^{\mathrm{conn}}|$ is of order $v!$ (or equivalently $v^v$). The following describes the second-order exponential growth, whose base depends on $t$:
\begin{definition}
For each integer $t\ge 1$, define
\[
\log c(t) \colonequals
\liminf_{v \to \infty} \frac{1}{v} \log\left(\frac{|\sG_{v,\le t}^{\mathrm{conn}}|}{v!}\right).
\]
\end{definition}
\noindent In words, $c(t)$ is the largest constant such that, for all $\varepsilon > 0$, for all sufficiently large $v$, we have $|\sG^{\mathrm{conn}}_{v, \leq t}| \geq (c(t) - \varepsilon)^v v!$. It follows from the results of \cite{baste2017number} that $0 < c(t) < \infty$ for all $t \geq 1$ (the upper bound also follows from the abstract results of \cite{norine2006proper}).

By considering signed subgraph counts of these graphs, we then obtain the following; the random clique location part is proved in Sections~\ref{sec:approx-detection} and~\ref{sec:derandomization}, and the worst-case deterministic location part is proved in Theorem~\ref{thm:universal-deterministic-exact-recovery}.

\begin{theorem}
\label{thm:intro-main-general}
\label{thm:main-bounded-treewidth}
Let $t\ge 1$ be an integer.
Then, we have
\begin{align*}
\tau_{\mathrm{rand}}\!\left(c(t)^{-1/2}\right)
&\le
\tau_{\det,\mathrm{avg}}\!\left(c(t)^{-1/2}\right)
\le t+1,\\
\tau_{\det}\!\left(c(t)^{-1/2}\right)
&\le t+\frac32.
\end{align*}
Moreover, in the worst-case deterministic clique location model, strong detection alone can be achieved in time $\widetilde O(n^{t+1})$.
\end{theorem}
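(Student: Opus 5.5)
We plan to prove Theorem~\ref{thm:main-bounded-treewidth} by analysing a signed subgraph count statistic, in the spirit of low-degree polynomial tests. Write $Y_{ij} = 2\1\{ij \in E(G)\} - 1 \in \{\pm1\}$. For a fixed $\lambda > c(t)^{-1/2}$ and $v = v(n)$ growing slowly (say $v = \lfloor C\log n/\log\log n\rfloor$ or a large constant tending to infinity slowly), we consider
\[
f(G) \colonequals \sum_{H \in \sG^{\conn}_{v,\le t}/\cong} \; \sum_{\text{copies } S \text{ of } H \text{ in } K_n} \prod_{ij \in E(S)} Y_{ij},
\]
which is the sum over connected $v$-vertex subgraphs of $K_n$ of treewidth at most $t$. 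Under $\QQ$ the terms are orthonormal with mean zero, so $\Var_\QQ f$ is the number of such subgraphs, roughly $\binom{n}{v}|\sG^{\conn}_{v,\le t}| \approx n^v (c(t))^v / \mathrm{poly}$ up to $e^{o(v)}$ factors. Under $\PP$, a term has mean $1$ if $V(S)\subseteq C$ and $0$ otherwise, so $\E_\PP f \approx (k/n)^v \cdot n^v c(t)^v = k^v c(t)^v$. The ratio $(\E_\PP f)^2/\Var_\QQ f \approx (\lambda^2 c(t))^v \to \infty$ exactly when $\lambda^2 c(t) > 1$; this is where the threshold comes from. The first main step is to bound $\Var_\PP f$ by $o((\E_\PP f)^2)$. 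We will expand the second moment over pairs $(S,S')$ and classify them by the overlap of their vertex sets with $C$ and with each other. Connectivity and bounded treewidth control the number of overlapping pairs, and the surplus factor $\lambda^2c(t)>1$ absorbs the combinatorial overlap factors. This gives strong detection via Chebyshev.

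The second step is computation. Exact counting is infeasible, so we use color-coding. We colour the vertices with $v$ random colours and count colourful copies of each bounded-treewidth pattern by dynamic programming over a tree decomposition of width $t$. Each bag's DP table has $n^{t+1}$ entries times $2^v$ colour subsets, so the total cost is $n^{t+1+o(1)}$ when $v = O(\log n/\log\log n)$ or slower. We will aggregate over all pattern graphs at once by running a DP over ``partial graphs with boundary'': the boundary is a bag of at most $t+1$ labelled vertices, and the state records the colour set used. This sums signed weights over all connected treewidth-$\le t$ graphs without enumerating them, provided each graph is counted a controlled number of times. Normalizing by the decomposition multiplicity is the subtle point here. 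Averaging over $n^{o(1)}$ colourings, or using a deterministic perfect hash family of size $e^{v}\log n$, gives an estimator concentrated enough to preserve the signal-to-noise ratio. A perfect hash family also makes the test deterministic, which gives the strong-detection claim in time $\widetilde O(n^{t+1})$ for worst-case $C$.

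For recovery, we compute vertex-rooted versions $f_i$ that count subgraphs containing vertex $i$. These are available from the same DP at no extra asymptotic cost. We then show that $f_i$ separates $i\in C$ from $i\notin C$ well enough to identify a large fraction of $C$, and finish with a cleanup step that takes vertices with at least about $3k/4$ neighbours in the candidate set. Bounding the variance of $f_i$ uniformly over all $i$ needs a union bound, which forces $v \gtrsim \log n$ and strains the runtime. We will handle this by showing only that most clique vertices are correctly classified, then cleaning up. For $\tau_{\det,\mathrm{avg}}$, we fix the random colourings in advance. For $\tau_{\det}$, we additionally split the vertex set using the Turán-type block/seed idea of Section~\ref{sec:prior} with $s=1$, which is the source of the extra $n^{1/2}$ factor giving $t+3/2$. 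We expect the main obstacle to be the second-moment bound for the rooted statistics. Precisely counting overlapping pairs of bounded-treewidth connected graphs, using only the liminf definition of $c(t)$ plus a matching upper growth bound from \cite{baste2017number}, will require care about which $v$ achieves near-optimal growth. We would first try choosing $v$ along a subsequence that realises the liminf.
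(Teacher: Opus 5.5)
Your detection outline (signed counts of connected treewidth-$\le t$ patterns, threshold $c(t)\lambda^2>1$, colour coding over a width-$t$ decomposition) is essentially the paper's, but the recovery step fails as written.

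\textbf{The cleanup step.} With $v=o(\log n)$, which you need both for the $e^{O(v)}$ colour-coding overhead and for the rooted variance bounds, the per-vertex mean-square error is only $\delta=(c_0\lambda^2)^{-\Theta(v)}=n^{-o(1)}$. Second-moment control therefore bounds the number of non-clique vertices passing any score threshold only by $n^{1-o(1)}\gg k$. With a candidate set $W$ that large, three things go wrong:
\begin{itemize}
\item The rule ``keep vertices with at least $3k/4$ neighbours in $W$'' accepts essentially every vertex, since a typical vertex has about $|W|/2\gg k$ neighbours in $W$.
\item The argument of \cite{alon1998finding} behind that rule needs $|W\setminus C|\lesssim k/6$, i.e.\ $\delta\lesssim n^{-1/2}$. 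This forces $v=\Omega(\log n)$, the same obstruction you hoped to avoid by giving up the union bound.
\item A corrected threshold such as $|W|/2+k/4$ cannot be analysed, because $W$ was computed from the very edges being counted.
\end{itemize}
The missing idea is sample splitting. Compute scores on $G[V_1]$ only and keep a holdout $V_2$ of size $\lfloor n^{3/5}\rfloor$. Given $C$, the $V_1$--$V_2$ edges are independent of the candidate set $V_3\subseteq V_1$. Thresholding degrees from $V_2$ into $V_3$ at $|V_3|/2+0.2499\,\widetilde k_1$ then recovers $C\cap V_2$ as soon as $|V_3\cap C|^2/|V_3|=\omega((\log n)^3)$. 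This needs only $\delta^{1/2}=o((\log n)^{-3})$, and $C\cap V_1$ is then the common neighbourhood of $C\cap V_2$.

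\textbf{The rooted statistic.} Separately, your rooted statistic (all copies containing $i$, i.e.\ every vertex allowed as root) does not have vanishing error for $i\in C$. For trees, consider the terms in which $i$'s only neighbour outside $C$ is a pendant vertex $o$. These sum to $N\sum_{o\notin C}Y_{io}$ with deterministic $N=(1+o(1))\mu/(ek)$, where $\mu=\E[f_i\mid i\in C]$. All conditional covariances between characters are nonnegative, so $\Var(f_i\mid C)/\mu^2\ge(1-o(1))\,n/(ek)^2=\Theta(1)$. The paper roots only at non-cut vertices $r\in\mathcal R(H)$. This guarantees that whenever $s\ge2$ pattern vertices land in $C$, at least two of their images are pinned (Claim~\ref{lem:dominant-s1-inC}).

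\textbf{Derandomization.} The $\tau_{\det}$ accounting does not work. Seeds with $s=1$ cost a factor $n$, not $n^{1/2}$, since any family of singletons meeting every $k$-set has at least $n-k+1$ members; this would give $t+2$. In the paper the extra $1/2$ is the price of making the \emph{holdout} split work for every clique location. Lemma~\ref{lem:deterministic-holdout-splitter} gives $n^{1/2+o(1)}$ sets $B$ such that every $k$-set meets some $B$ in between $h$ and $2h$ vertices, with $h=\lfloor k/(10\log n)\rfloor$. Scoring and cleanup are then run with $V_2=B$ for each such $B$.

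For the colourings, a perfect hash family does not derandomize a \emph{signed} count. It weights each embedded pattern by the number of functions colourful on its image, which is only known to lie in $[1,|\mathcal T|]$. For a worst-case $C$ the ratio $(\E_\PP f)^2/\Var_\QQ f$ is then guaranteed only up to a loss of $|\mathcal T|^2\ge(\log_v n)^2$. With $v\log v=O(\log\log n)$, as needed for $\widetilde O(n^{t+1})$, this swamps the margin $(c(t)\lambda^2)^v=(\log n)^{o(1)}$. You need $\delta$-balanced families (Theorem~\ref{thm:balanced-phf-main}), which keep all weights in $[\delta^{-1},\delta]$. Nonnegativity of the planted covariances then transfers the mean and variance bounds uniformly over fixed $C$.

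\textbf{The aggregated DP and choice of $v$.} Your single dynamic program aggregating all patterns, with the decomposition-multiplicity normalization left unresolved, would compute $\sum_H w(H)T_H$ with non-uniform weights. That can lower the effective growth constant below $c(t)$. It is also unnecessary: with $v=\lfloor\sqrt{\log n}\rfloor$ there are only $e^{O_t(v\log v)}=n^{o(1)}$ patterns. These can be enumerated with their decompositions and handled by the per-pattern DP at cost $2^{O(v)}n^{t+1}$ each. Your alternative $v\asymp\log n/\log\log n$ would make that enumeration cost polynomial. Finally, no subsequence in $v$ is needed: the liminf already gives $|\sG^{\conn}_{v,\le t}|\ge(c(t)-\varepsilon)^v v!$ for all large $v$.
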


\begin{remark}
In the worst-case deterministic model, the detection algorithm has exponent $t+1$. The additional $1/2$ in the displayed bound on $\tau_{\det}$ comes only from the deterministic splitting of a holdout set used in the exact recovery algorithm. For brevity, we omit the corresponding bounds in the next several results; they follow the same pattern.
\end{remark}
\noindent
Thus, understanding the quantity $c(t)$ translates directly into upper bounds on the three tradeoff functions.
By plugging into this result what is known about $c(t)$ from the combinatorics literature, we obtain the following consequences.

By Cayley's theorem~\cite{Cayley-1889-CountingTrees} counting labeled trees, we have $c(1)=e\approx 2.7183$, and so we find the following (ignoring here for the sake of simplicity the above outlier case of deterministic algorithms and worst-case clique locations).

\begin{corollary}
$\tau_{\mathrm{rand}}(e^{-1/2}) \leq \tau_{\mathrm{det},\mathrm{avg}}(e^{-1/2}) \le 2$.
\end{corollary}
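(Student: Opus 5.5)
This corollary follows by specializing Theorem~\ref{thm:main-bounded-treewidth} to $t = 1$. The only work is to identify $c(1)$, which I will do by computing it exactly from Cayley's formula.

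First, the connected graphs of treewidth at most $1$ are exactly the trees. A connected graph containing a cycle has treewidth at least $2$, since a cycle has treewidth $2$ and treewidth is minor-monotone. Conversely, every tree has a tree decomposition with bags equal to its edges, so it has treewidth $1$. For $v \geq 2$ this gives $|\sG^{\mathrm{conn}}_{v,\le 1}| = v^{v-2}$ by Cayley's theorem \cite{Cayley-1889-CountingTrees}.

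Next, I compute the limit via Stirling's formula $v! = (1+o(1))\sqrt{2\pi v}\,(v/e)^v$:
\[
\frac{1}{v}\log\frac{v^{v-2}}{v!} = \frac{1}{v}\Bigl(v\log v - 2\log v - v\log v + v - \tfrac12\log(2\pi v) + o(1)\Bigr) \longrightarrow 1 .
\]
The limit exists, so the $\liminf$ equals $1$, and hence $c(1) = e$.

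Finally, I substitute $t = 1$ into Theorem~\ref{thm:main-bounded-treewidth}. This gives
\[
\tau_{\mathrm{rand}}(e^{-1/2}) \le \tau_{\det,\mathrm{avg}}(e^{-1/2}) \le t + 1 = 2,
\]
which is the claim. The first inequality also holds directly by the general chain $\tau_{\mathrm{rand}} \le \tau_{\det,\mathrm{avg}}$ noted after Definition~\ref{def:tau-variants}.

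No step is a real obstacle; the argument is a direct specialization. The only point needing care is the identification of treewidth-$\le 1$ connected graphs with trees, and the convention that $\tw$ of a single vertex is $0 \le 1$, which does not affect the asymptotics.
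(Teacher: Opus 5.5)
Your proposal is correct and follows the paper's own route: the paper likewise gets $c(1)=e$ from Cayley's formula (Theorem~\ref{thm:cayley}) and then specializes Theorem~\ref{thm:main-bounded-treewidth} to $t=1$. Your Stirling computation uses natural logarithms while the paper's default base is $2$, but $c(t)$ does not depend on the base, so this changes nothing.
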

\noindent
This matches, up to lower-order factors, the threshold and runtime achieved by the deterministic message-passing algorithm of Deshpande and Montanari~\cite{deshpande2015finding}.
Indeed, there is a folklore understanding prior to our work that, after unfolding the message-passing iterations, the algorithm of~\cite{deshpande2015finding} amounts to ``counting signed trees'' in a suitable sense.
Similarly, the value $c(2) \approx 9.073$ is described, albeit without a closed-form expression, by \cite{bodirsky2007enumeration}, and leads to an analogous bound.
\begin{corollary}
$\tau_{\mathrm{rand}}(0.3320) \leq \tau_{\mathrm{det},\mathrm{avg}}(0.3320) \le 3$.
\end{corollary}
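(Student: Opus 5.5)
The plan is to specialize Theorem~\ref{thm:main-bounded-treewidth} to $t = 2$. Two facts are needed: a numerical value for $c(2)$ precise enough to compare $c(2)^{-1/2}$ with $0.3320$, and monotonicity of the $\tau$ functions in $\lambda$. The first inequality, $\tau_{\mathrm{rand}} \le \tau_{\det,\mathrm{avg}}$, holds for every $\lambda$ by the chain of inequalities recorded just after Definition~\ref{def:tau-variants}. It remains to show $\tau_{\det,\mathrm{avg}}(0.3320) \le 3$.

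\textbf{Step 1: identify $c(2)$.} I would use the enumeration of labelled series-parallel graphs due to Bodirsky et al.~\cite{bodirsky2007enumeration}. Connected graphs of treewidth at most $2$ are exactly the connected graphs with no $K_4$ minor, i.e.\ the connected series-parallel graphs. That work gives an asymptotic of the form
\[
|\sG^{\mathrm{conn}}_{v,\le 2}| \sim g\, v^{-5/2} \gamma^v v!, \qquad \gamma \approx 9.0733.
\]
The polynomial factor disappears after taking $\frac1v\log$, so the $\liminf$ in the definition is a genuine limit and $c(2) = \gamma$. The only care needed is to use a rigorous lower bound on $\gamma$, e.g.\ $\gamma \ge 9.073$, since a lower bound on $c(2)$ is what yields an upper bound on the threshold.

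\textbf{Step 2: numerics.} Compute $0.3320^2 \cdot 9.073 = 0.110224 \cdot 9.073 \approx 1.00006 > 1$. Hence
\[
c(2)^{-1/2} \le 9.073^{-1/2} < 0.3320 .
\]

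\textbf{Step 3: monotonicity.} I would show that each $\tau$ is nonincreasing in $\lambda$, directly from Definition~\ref{def:tau-variants}. Suppose $\lambda' \ge \lambda$. Then any algorithm that succeeds whenever $k \ge (\lambda+\varepsilon)\sqrt n$ also succeeds whenever $k \ge (\lambda'+\varepsilon)\sqrt n$. So every admissible $\alpha$ for $\lambda$ is admissible for $\lambda'$, which gives $\tau(\lambda') \le \tau(\lambda)$. Combining this with Step 2 and the $t=2$ case of Theorem~\ref{thm:main-bounded-treewidth}:
\[
\tau_{\det,\mathrm{avg}}(0.3320) \le \tau_{\det,\mathrm{avg}}\bigl(c(2)^{-1/2}\bigr) \le 3 .
\]

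\textbf{Main obstacle.} The only nontrivial point is Step 1. One must confirm that the cited constant is rigorously at least $9.073$, not merely a rounded numerical estimate, and that it counts labelled connected graphs matching the normalization by $v!$ used in the definition of $c(t)$. Since the margin $1.00006 - 1$ is small, I would check the reported digits of $\gamma$ in~\cite{bodirsky2007enumeration} before relying on them.
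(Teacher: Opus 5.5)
Your proposal is correct and is essentially the paper's argument: the corollary is the $t=2$ case of Theorem~\ref{thm:main-bounded-treewidth} with $c(2)=1/\rho$ from Theorem~\ref{thm:count-tw2}. Since $\rho\approx 0.11021<0.110224=0.3320^2$, we get $c(2)^{-1/2}<0.3320$; note that $c(2)\approx 9.0736$, slightly above the $9.0733$ you quote, which only helps your margin. The monotonicity in $\lambda$ that you spell out from Definition~\ref{def:tau-variants} is exactly what the paper uses implicitly to pass from $c(2)^{-1/2}$ to $0.3320$.
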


While no other value of $c(t)$ for $t \geq 3$ is known exactly (to the best of our knowledge), the asymptotics of $c(t)$ as $t\to\infty$ are again well understood up to a logarithmic factor. In particular, \cite{baste2017number} proves that
\[
\frac{1}{128}\cdot \frac{1}{\log t}\cdot t2^t
\leq c(t)
\leq e\cdot t2^t
\qquad \text{for all } t\ge2,
\]
and also suggests that the upper bound is likely to be tight and the logarithmic factor superfluous. Applying the lower bound gives the following.

\begin{corollary}\label{cor:intro-explicit}
For all sufficiently small $\lambda>0$,
\[
\tau_{\mathrm{rand}}(\lambda) \leq \tau_{\mathrm{det},\mathrm{avg}}(\lambda)
\leq
\left\lceil
2\log_2 \frac{1}{\lambda}
-\log_2 \log_2 \frac{1}{\lambda}
+\log_2 \log_2 \log_2 \frac{1}{\lambda}
\right\rceil
+11.
\]
\end{corollary}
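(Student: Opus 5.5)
Corollary~\ref{cor:intro-explicit} follows from Theorem~\ref{thm:main-bounded-treewidth} combined with the lower bound of \cite{baste2017number}. The plan is to invert that lower bound and then use monotonicity of the tradeoff functions.

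\textbf{Step 1 (monotonicity).} An algorithm that succeeds whenever $k \geq (\lambda_0+\varepsilon)\sqrt n$ also succeeds for every $\lambda \geq \lambda_0$. Hence $\tau_{\det,\mathrm{avg}}$ is nonincreasing in $\lambda$. Theorem~\ref{thm:main-bounded-treewidth} gives $\tau_{\det,\mathrm{avg}}(c(t)^{-1/2}) \le t+1$, so for any $t$ with $c(t)^{-1/2} \le \lambda$ we get $\tau_{\det,\mathrm{avg}}(\lambda)\le t+1$. A lower bound on $c(t)$ suffices for this: $c(t) \ge L(t) \colonequals \frac{t2^t}{128\log t}$ together with $L(t)\ge \lambda^{-2}$ implies $c(t)^{-1/2}\le\lambda$.

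\textbf{Step 2 (choose $t$).} Write $x = \log_2(1/\lambda)$, which is large since $\lambda$ is small. We need $t + \log_2 t - \log_2\log t - 7 \ge 2x$. Take
\[
t = \left\lceil 2x - \log_2 x + \log_2\log_2 x\right\rceil + 10 ,
\]
so that $t+1$ equals the claimed bound. We must check $t + \log_2 t - \log_2 \log t \ge 2x+7$. Since $t \ge 2x - \log_2 x + \log_2\log_2 x + 10$, it suffices to show
\[
\log_2 t - \log_2\log t \ge \log_2 x - \log_2\log_2 x - 3 .
\]
As $t \sim 2x$, we have $\log_2 t = \log_2 x + 1 + o(1)$. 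Also $\log t = \log(2x)(1+o(1))$, so $\log_2 \log t = \log_2\log_2 x + \log_2\log 2 + o(1)$, where $\log_2 \log 2 \approx -0.53$. The left side is therefore $\log_2 x - \log_2\log_2 x + 1.53 + o(1)$, which exceeds the right side for all sufficiently large $x$. The constant $10$ (giving $+11$ overall) leaves slack; the natural log inside the bound of \cite{baste2017number} only shifts terms by constants, which is why the base of the logarithm in $\log t$ does not matter.

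\textbf{Step 3 (conclude).} The inequality $\tau_{\mathrm{rand}}\le\tau_{\det,\mathrm{avg}}$ was noted after Definition~\ref{def:tau-variants}. The only real obstacle is the bookkeeping in Step 2. There we must confirm that the $o(1)$ errors from $t \sim 2x$ are eventually dominated by the constant slack. This holds because all error terms tend to $0$ as $\lambda\to 0$, which explains the ``sufficiently small $\lambda$'' hypothesis. We also need $t\ge 2$ for the bound of \cite{baste2017number} to apply, and this holds automatically.
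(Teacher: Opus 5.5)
Your proposal is correct and is essentially the paper's argument: the paper obtains this corollary by inserting the lower bound $c(t)\ge t2^t/(128\log t)$ from \cite{baste2017number} into Theorem~\ref{thm:main-bounded-treewidth} and using that $\tau_{\det,\mathrm{avg}}$ is nonincreasing in $\lambda$, and your choice $t=\lceil 2x-\log_2 x+\log_2\log_2 x\rceil+10$ with the asymptotic check supplies the bookkeeping the paper leaves implicit. One small correction: under the paper's convention $\log$ means $\log_2$, so the constant you computed as $\log_2\log 2\approx-0.53$ is actually $0$, but your final inequality still holds with slack about $4$.
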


Further improvements are possible and give even faster algorithms, taking advantage of opportunities to use fast matrix multiplication.
Below, $\omega=\omega(1,1,1)$ is the square matrix multiplication runtime exponent, while $\omega(a,b,c)$ is the rectangular runtime exponent for multiplying an $n^a\times n^b$ matrix by an $n^b\times n^c$ matrix.
In Section~\ref{sec:mm-prelim} we describe the best known bounds on these exponents that we use in our claims of numerical exponent values below.
We obtain the following from two different uses of matrix multiplication; see the brief discussion below and Theorems~\ref{thm:tw2-mm}, \ref{thm:seeded-tree-rectangular-mm}, and~\ref{thm:universal-deterministic-exact-recovery} for more details.
\begin{theorem}
    \label{thm:tau-fast-tw2}
    We have:
\begin{align*}
\tau_{\mathrm{rand}}(c(2)^{-1/2})
&\leq
\tau_{\det, \mathrm{avg}}(c(2)^{-1/2})
\leq
\omega,\\
\tau_{\det}(c(2)^{-1/2})
&\le
\omega+\frac12,
\end{align*}
where $c(2)^{-1/2} < 0.3320$.
Moreover, in the worst-case deterministic clique location model, strong detection alone for every fixed $\lambda>c(2)^{-1/2}$ can be achieved in time $n^{\omega+o(1)}$.
\end{theorem}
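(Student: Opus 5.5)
The plan is to reuse the statistic behind Theorem~\ref{thm:main-bounded-treewidth} for $t=2$ and only change how it is evaluated. For $\lambda > c(2)^{-1/2}$ we fix $v = \Theta(\log n)$ with a large constant. The test statistic is the normalized sum, over labelled graphs $H \in \sG^{\conn}_{v,\le 2}$ and over injective embeddings of $H$, of the signed edge products $\prod_{(i,j)\in E(H)} Y_{ij}$, where $Y_{ij} = \pm 1$ is the signed adjacency matrix. Color-coding gives an unbiased estimator for this sum, and its mean/variance analysis, together with the recovery and derandomization arguments of Sections~\ref{sec:approx-detection} and~\ref{sec:derandomization}, is already established there. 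Those arguments depend only on the value of the (colorful) counts, not on how they are computed. So the entire task reduces to evaluating the colorful signed count of all treewidth-$\le 2$ connected graphs in time $n^{\omega+o(1)}$ instead of $n^{3+o(1)}$.

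\textbf{Step 1: a dynamic program for all graphs at once.} I will not sum over graphs $H$ individually. Instead I will use the decomposition of connected treewidth-$\le2$ graphs into series--parallel pieces, i.e.\ $2$-terminal graphs built by series and parallel composition and glued along cut vertices in a tree-like way. The DP tables are indexed by a color set $S\subseteq[v]$, where $2^{v}=n^{O(1)}$, with the constant absorbed by choosing colors carefully, e.g.\ $v=\varepsilon\log n$ with $\varepsilon$ small, or else $n^{o(1)}$ after standard tuning. The tables are:
\begin{itemize}
\item $A_S(x,y)$: the weighted signed count of colorful $2$-terminal series--parallel pieces with terminals $x,y$ using exactly the colors $S$;
\item $B_S(x)$: the corresponding count for rooted pieces.
\end{itemize}
The weights are chosen so that the final sum reproduces, with the correct multiplicity, each labelled graph up to the symmetry factors already handled in the color-coding analysis. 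Parallel composition is an entrywise (Hadamard) product summed over disjoint color splits $S=S_1\sqcup S_2$, which costs $O(n^2)$ per pair. Series composition is $\sum_z A_{S_1}(x,z)A_{S_2}(z,y)\,\1[\text{color}(z)\in\ldots]$, which is a matrix product and costs $n^{\omega}$. Tree-like gluing at cut vertices is diagonal scaling. Using subset convolution over color sets, all of this costs $2^{O(v)}\,n^{\omega}=n^{\omega+o(1)}$.

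\textbf{Step 2: unique decompositions.} The main obstacle is making each graph counted exactly once, or with a known multiplicity, so that the statistic equals the one analyzed for Theorem~\ref{thm:main-bounded-treewidth}. Series--parallel decompositions are not unique. I would first try canonical SP-trees, with alternating series/parallel node types and parallel children treated as a multiset. Because each piece carries a distinct color set, the unordered-multiset issue reduces to ordering the splits by their minimum color. If exact bijection fails, an alternative is to let the statistic be the weighted sum with whatever multiplicities the DP produces. Positive multiplicities of at least $1$ preserve the lower bound on the mean, and the variance bound only needs an upper bound on the weights, which is polynomial in $v$. The growth constant $c(2)$ then still governs the signal-to-noise ratio.

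\textbf{Step 3: the deterministic bounds.} For $\tau_{\det}$ and for worst-case detection, I plug the faster evaluation into the procedure of Theorem~\ref{thm:universal-deterministic-exact-recovery}. Colorings are derandomized via perfect hash families of size $2^{O(v)}\log n=n^{o(1)}$, which gives detection in $n^{\omega+o(1)}$ time. The additional holdout split costs a factor $n^{1/2}$, exactly as in the $t+\tfrac32$ bound, and this yields $\omega+\tfrac12$. Finally, the numerical value $c(2)^{-1/2}<0.3320$ follows from $c(2)\approx 9.073$ \cite{bodirsky2007enumeration}.
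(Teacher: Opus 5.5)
There is a genuine gap, and it starts with your parameter choice. You take $v=\Theta(\log n)$, but every ingredient you say you are reusing is proved only for much smaller $v$. Theorem~\ref{thm:general-separation} assumes $v\le \log n/(5\log\log n)$, and Lemmas~\ref{lem:rooted-moments} and~\ref{lem:p-var} assume $v=o(\log n/\log\log n)$. This restriction does real work in the proofs: they bound overlap contributions by quantities such as $v^{2v}n^{-1/2}$ and $n^{-1/2}v(1+C_t)^v$. Once $v$ is a constant times $\log n$, these are no longer $o(1)$ (the first is superpolynomial), so separation and the mean-square bounds are not established in your regime.

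The runtime claims also fail there. The DP overhead $2^{O(v)}$ and the number $q=\lceil\rho_v^{-2}\rceil=e^{\Theta(v)}$ of colorings become $n^{\Theta(1)}$, which destroys the exponent $\omega$ for a large constant. Your parenthetical switch to $v=\varepsilon\log n$ contradicts your first sentence and is still outside the analyzed range. Likewise, perfect hash families of size $2^{O(v)}\log n$ are not $n^{o(1)}$, and the balanced family of Theorem~\ref{thm:balanced-phf-main} has a superpolynomial size bound $2^{O(v\log\log v)}\log n$. A plain perfect hash family does not derandomize an \emph{estimator} anyway. You need a balanced family, so that every embedding gets weight in $[1/\delta,\delta]$, with $\delta=1+\eta_v$ for recovery. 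Nothing forces $v$ to be this large. Separation needs only $v\to\infty$, and the cleanup in Lemma~\ref{lem:recovery} needs $\delta_N^{1/2}=o((\log N)^{-3})$, i.e.\ $v$ of order at least $\log\log n$. The paper takes $v=\lfloor\sqrt{\log n}\rfloor$.

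With that choice, $|\mathcal G^{\mathrm{conn}}_{v,\le 2}|=\exp(O(v\log v))=n^{o(1)}$, so your all-graphs-at-once DP, and with it your Step~2, becomes unnecessary. The paper enumerates each connected partial $2$-tree $H$ with one $2$-tree completion and construction history (Lemma~\ref{lem:enumerate-partial-2trees}). For each fixed $H$ it runs a DP whose tables $D_e^S$ are $n\times n$ matrices indexed by edges $e=(p,q)$ of the completion (Lemma~\ref{lem:one-partial-2tree-mm}):
\begin{itemize}
\item a vertex $w$ stacked on $e$ contributes $D_{e_L}^{S_L}P_\gamma D_{e_R}^{S_R}$, a genuine matrix product;
\item branches on the same edge are merged by entrywise products over color sets meeting exactly in $\{\zeta(a),\zeta(b)\}$;
\item completion edges not in $H$ get weight $1$.
\end{itemize}
Because $H$ is fixed, each injective embedding is counted exactly once, so no uniqueness question arises.

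Your Step~2, by contrast, is left open, and the fallback is unjustified. Naive series--parallel decompositions are highly non-unique: there are Catalan-many binary series splittings of a path, plus choices of terminals and of the block-tree root. So the multiplicities are exponential in $v$, not polynomial. Non-uniform weights are not harmless. For class-dependent weights $w_H$, $\Var_{\QQ}$ scales with $\sum_H w_H^2$ while $\E_{\PP}$ scales with $\sum_H w_H$. The effective family size is therefore $(\sum_H w_H)^2/\sum_H w_H^2\le|\mathcal H|$, and weight ratios of order $e^{\Theta(v)}$ can push the effective growth constant below $c(2)$, moving the threshold.

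Your Step~3 matches the paper once these issues are fixed. That step uses balanced hash families, the $n^{1/2+o(1)}$ holdout splitter of Lemma~\ref{lem:universal-holdout-wrapper} to get $\omega+\frac12$, and no holdout for worst-case detection via Lemma~\ref{lem:balanced-hash-detection}.
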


\begin{theorem}
\label{thm:intro-main-bounds}
\label{thm:intro-revised-tau}
\label{thm:tau-summary}
\label{thm:main-upper-bounds}
For every $\lambda>0$,
\begin{align*}
\tau_{\mathrm{rand}}(\lambda) \leq \tau_{\det,\mathrm{avg}}(\lambda) &\leq \min_{s\in\ZZ_{\geq 0}}\left\{\omega(1,1,s/2)
+
\frac12
\left\lceil
\log_2\frac{1}{\lambda^2\,2^s e}
\right\rceil_+\right\}, \\
\tau_{\det}(\lambda)
&\leq
\min_{\substack{s\in\ZZ_{\geq0}\\ \lambda\geq(2^s e)^{-1/2}}}
\min\left\{
\omega(1,1,\kappa_s+1/2),
\frac52+\kappa_s
\right\},
\end{align*}
where, in the worst-case deterministic bound,
\[
\kappa_s=
\begin{cases}
0, & s=0,\\
(s+1)/2, & s\ge1.
\end{cases}
\]
\end{theorem}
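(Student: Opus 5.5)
We will prove Theorem~\ref{thm:tau-summary} by combining the signed-tree statistic underlying Theorem~\ref{thm:main-bounded-treewidth} for $t=1$ with the boosting scheme of Section~\ref{sec:prior}. The key change is that the cost of running the base algorithm on every seed set is paid in one batched rectangular matrix multiplication, rather than one at a time.

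\emph{Reduction to seeds.} Fix $s\ge 0$ and $\lambda$ with $\lambda^2 2^s e>1$. For a seed set $S$ contained in the clique, the common neighbourhood $N(S)$ has about $n2^{-s}$ vertices and still contains roughly $k-s$ clique vertices. So the induced instance has signal $\lambda' \approx \lambda 2^{s/2} > e^{-1/2}$. By Theorem~\ref{thm:main-bounded-treewidth} with $c(1)=e$, the signed-tree counting algorithm then recovers $C\cap N(S)$, and the full clique $C$ follows by a cleanup step (keep the vertices adjacent to almost all of the recovered set). The extra term $\frac12\lceil\log_2(1/(\lambda^2 2^s e))\rceil_+$ comes from nesting further rounds of random seeding, exactly as in~\eqref{eq:tau-rand-aks}, on top of a base exponent that we now bound.

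\emph{Batching via rectangular multiplication.} In the random and deterministic-average models we draw $m=n^{s/2+o(1)}$ seeds, or fix such a list in advance, so that with high probability some seed lies in $C$. The signed-tree statistic is computed by color-coding dynamic programming. Each step of that dynamic program is a product of the $\pm1$ adjacency matrix $A$ ($n\times n$) with a table of rooted partial counts indexed by (vertex, color subset).

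The restriction to $N(S)$ can be written as a diagonal $0/1$ mask $D_S$. The dynamic program for seed $S$ then only involves products of the form $A\,D_S\,X_S$. The plan is to stack, for all seeds, the columns of the $n\times O(1)$ tables $D_SX_S$ into a single $n\times n^{s/2}$ matrix. One multiplication by $A$ then advances every seed's dynamic program simultaneously, at cost $n^{\omega(1,1,s/2)+o(1)}$ per step. With $O(\log n)$ steps and $2^{O(\text{tree size})}=n^{o(1)}$ colorings (tree size $O(\log n)$), the total is $n^{\omega(1,1,s/2)+o(1)}$.

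The main obstacle is arranging the dynamic program so that each step really is a left multiplication by the common matrix $A$ with per-seed masks applied entrywise in between. Per-seed side information, such as the normalization by $|N(S)|$ and the centering, must only enter through cheap $O(n\cdot n^{s/2})$ entrywise operations. A second delicate point is the concentration analysis of Theorem~\ref{thm:main-bounded-treewidth}. It must be applied with $n' = |N(S)|$ random and with the clique vertices conditioned to be inside $C$. This needs the analysis to be uniform over a range of $n'$ and then a union bound over the $m$ seeds, so that the false-positive rate is $o(1/m)$. I would first try to inflate the test threshold by a $\sqrt{\log n}$ factor, which the $n^{\Omega(1)}$ signal-to-noise gap of the tree statistic at $\lambda'>e^{-1/2}$ should absorb.

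\emph{Worst-case model.} Replace the random seeds by the Tur\'an block list of Section~\ref{sec:prior}, which has size $n^{\kappa_s+o(1)}$ with $\kappa_s=(s+1)/2$, or $\kappa_0=0$. The same batching then gives cost $\omega(1,1,\kappa_s)$. The extra $+1/2$ in $\omega(1,1,\kappa_s+1/2)$ is the deterministic holdout splitting used for exact recovery in Theorem~\ref{thm:universal-deterministic-exact-recovery}, which multiplies the number of batched columns by $n^{1/2}$. The alternative bound $\frac52+\kappa_s$ comes from running, seed by seed, a recovery procedure of cost $n^{5/2+o(1)}$, which is again supplied by that theorem. Taking the minimum over admissible $s$ gives the stated bound.
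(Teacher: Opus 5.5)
Your architecture is the paper's: the masked rooted-tree DP is batched over seeds as $(n^{s/2}\times n)\cdot(n\times n)$ products, Tur\'an seeds are crossed with an $n^{1/2+o(1)}$-size holdout splitter in the worst case, the unbatched per-pair DP gives $5/2+\kappa_s$, and serial extra seeds give the ceiling term. But the way you propose to decide which seed to trust does not work. A union bound over $m=n^{s/2}$ seeds needs per-seed error $o(n^{-s/2})$, and the tree statistic has no $n^{\Omega(1)}$ signal-to-noise gap. Its SNR is about $(e\lambda'^2)^v$, and $v$ must satisfy $v\log v=o(\log n)$ so that the $2^{O(v)}$ colorings and color states and the $v^{v-2}$ labeled trees cost only $n^{o(1)}$. (This is also why your ``tree size $O(\log n)$'' is incompatible with ``$2^{O(\text{tree size})}=n^{o(1)}$''; the paper takes $v=\lfloor\sqrt{\log n}\rfloor$.) So the SNR is $e^{\Theta(v)}=n^{o(1)}$, the available moment bounds give per-seed errors of only $e^{-\Omega(v)}$, and inflating the threshold by $\sqrt{\log n}$ cannot justify the union bound.

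The missing idea is that bad seeds never need to be controlled. Each seed yields a candidate set, and a candidate is accepted only if it is a $k$-clique. With high probability $\QQ_n$ has no $k$-clique, while $C$ is the unique $k$-clique under the planted law (a first-moment fact, uniform over $k\ge\lambda\sqrt n$), so every accepted candidate equals $C$. Only the one good seed $Q^\star$ has to succeed with probability $1-o(1)$. For the extra serial seeds of Proposition~\ref{prop:randomized-seed-boosting}, one adds a Markov bound showing that not every good seed fails.

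Your runtime count also stops at the scores. Since $\delta_v=e^{-\Omega(v)}$, the rounded (or top-$M$) set $V_3$ can only be guaranteed to have size $k+ne^{-\Omega(v)}=n^{1-o(1)}\gg k$. It is therefore dominated by non-clique vertices, and ``keep the vertices adjacent to almost all of the recovered set'' fails on it. The paper instead uses a holdout $V_2$ whose edges to $V_1$ are independent of the scores, with threshold $|V_3|/2+0.2499k$; this works because $|V_3\cap C|^2/|V_3|=\omega(\log^3 n)$.

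Done seed by seed, this cleanup costs $|V_2|\,|V_3|\ge n^{3/2-o(1)}$ per seed, since $|C\cap V_2|\gg\log n$ forces $|V_2|\gg\sqrt n\log n$ in the random-location models. The total is $n^{3/2+s/2-o(1)}$, which is not within $n^{\omega(1,1,s/2)+o(1)}$; for $s=2$, already $5/2>\omega$. The paper batches cleanup and verification as well, via products of the indicator matrices of the sets $V_3(Q)$, $C_2(Q)$ and the final candidates with the adjacency matrix (Lemma~\ref{lem:batched-cleanup}).

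Finally, for $\tau_{\det,\mathrm{avg}}$ and $\tau_{\det}$ you fix the seeds and holdouts but never derandomize the colorings. This requires $(1+\eta_v)$-balanced perfect hash families with $\eta_v=o(\delta_v^{1/2})$ (Lemmas~\ref{lem:balanced-hash-rooted} and~\ref{lem:seeded-rooted-derandomization}). The extra seeds behind the ceiling term in the $\tau_{\det,\mathrm{avg}}$ bound must likewise come from a fixed list, for example all $a$-tuples of a fixed vertex set of size $\lceil\sqrt n\log n\rceil$, rather than from nested random seeding.
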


Figure~\ref{fig:four-envelope} plots the best bound among these and the others given above as compared to the corresponding bounds \eqref{eq:tau-rand-aks} and \eqref{eq:tau-det-aks} that come from combining the results of~\cite{alon1998finding} and~\cite{deshpande2015finding} along with the improvements to the boosting scheme of the former described above.

In the next result, we also record the resulting leading-order behavior as $\lambda\to0$.

\begin{corollary}[Asymptotic upper bound]
\label{cor:intro-asymptotic-envelopes}
For sufficiently small $\lambda > 0$,
\begin{align*}
\tau_{\mathrm{rand}}(\lambda) \leq \tau_{\det,\mathrm{avg}}(\lambda)
&\le
\log_2\frac1\lambda+0.978, \\
\tau_{\det}(\lambda)
&\le
\log_2\frac1\lambda+1.978
\end{align*}
\end{corollary}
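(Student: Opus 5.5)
We derive Corollary~\ref{cor:intro-asymptotic-envelopes} from Theorem~\ref{thm:main-upper-bounds}, choosing $s$ as a function of $\lambda$ and using the known behavior of the rectangular exponent $\omega(1,1,r)$ for large $r$. The key fact is that $\omega(1,1,r)$ is linear in $r$ once $r$ is at least the dual exponent $\alpha$. Concretely, for $r\ge1$ we have $\omega(1,1,r)=\omega(1,r,1)$. Splitting the middle dimension into $n^{r-1}$ square blocks gives
\[
\omega(1,1,r)\le r-1+\omega .
\]
A sharper bound comes from taking the best known value $\beta_0$ such that $\omega(1,1,r)\le r+\beta_0$ for all large $r$. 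We would take $\beta_0$ from the bounds recorded in Section~\ref{sec:mm-prelim}; with $\omega<2.372$ the naive choice gives $\beta_0\le\omega-1<1.372$, and the improved rectangular bounds should push this lower. The constant $0.978$ is presumably $2\omega(1,1,r)$-type bookkeeping, so we would track it precisely from the values used there.

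\textbf{Randomized and average-case bound.} We choose $s$ to make the ceiling term vanish, namely $s=\lceil\log_2(1/(e\lambda^2))\rceil$, so that $\lambda^2 2^s e\ge1$. Then
\[
s/2\le \log_2(1/\lambda)-\tfrac12\log_2 e+\tfrac12 ,
\]
and hence
\[
\tau_{\det,\mathrm{avg}}(\lambda)\le \omega(1,1,s/2)\le s/2+\beta_0\le \log_2\tfrac1\lambda+\tfrac12-\tfrac12\log_2 e+\beta_0 .
\]
Since $\tfrac12\log_2 e\approx0.721$, the additive constant is $\beta_0-0.221$. To reach $0.978$ we need $\beta_0\approx1.2$. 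We would therefore look for a rectangular bound of the form $\omega(1,1,r)\le r+1.199$ for large $r$, which follows from a known value $\omega(1,1,r_0)\le r_0+1.199$ at some $r_0$ by the block-splitting argument.

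Alternatively, we could trade off the ceiling: keep one step of the ceiling (a $+1/2$ cost) against a smaller $s$ whenever that lowers the total. Taking the minimum over these options, with the worst case over the fractional part of $\log_2(1/(e\lambda^2))$, gives the stated constant.

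\textbf{Worst-case deterministic bound.} Here $\kappa_s=(s+1)/2$ contributes an extra $1/2$, and we use the $\omega(1,1,\kappa_s+1/2)=\omega(1,1,s/2+1)$ branch. With the same minimal $s$ subject to $\lambda\ge(2^se)^{-1/2}$, this is exactly one unit larger than the randomized bound, giving $\log_2(1/\lambda)+1.978$. One must still check that the alternative branch $5/2+\kappa_s$ is not needed, since it is worse asymptotically.

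\textbf{Main obstacle.} The hard step is pinning down the numerical constant. This means identifying the precise linear asymptote of $\omega(1,1,r)$ from the best rectangular matrix multiplication bounds, and handling the worst case over the fractional part of $\log_2(1/\lambda^2)$ so that the bound holds for all small $\lambda$ and not only on a subsequence. I would first compute $\sup_\lambda$ of the minimum over $s$ of the stated expression minus $\log_2(1/\lambda)$ as $\lambda\to0$, using linearity of $\omega(1,1,\cdot)$.
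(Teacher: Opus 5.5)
Your approach is the paper's: take the smallest $s$ with $\lambda\ge(2^se)^{-1/2}$ in Theorem~\ref{thm:main-upper-bounds} and apply the block-splitting bound $\omega(1,1,\kappa)\le\omega(1,1,3)+\kappa-3$ for $\kappa\ge3$ stated in Section~\ref{sec:mm-prelim}, which applies once $\lambda$ is small enough that $s\ge6$. The input you were looking for is $\omega(1,1,3)\le4.1989$ from Table~\ref{tab:rectangular-exponents}, i.e.\ $\beta_0=1.1989$, so the constant is $1.1989+\tfrac12-\tfrac12\log_2 e\approx0.9776<0.978$, and $\kappa_s+\tfrac12=s/2+1$ makes the deterministic bound exactly one larger; you only need this upper bound (exact linearity of $\omega(1,1,\cdot)$ beyond $\alpha$ is not a known fact), and neither the ceiling trade-off nor a check of the $5/2+\kappa_s$ branch, which enters only through a minimum, is needed.
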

\noindent
These have the same leading order $\Theta(\log_2 \lambda^{-1})$ behavior as \eqref{eq:tau-rand-aks} and \eqref{eq:tau-det-aks} respectively, but improve on the next order $\Theta(1)$ constants, by roughly $0.801$ and $0.301$, respectively.

Lastly, as an ancillary result, we note that our analysis of signed subgraph counts also turns out to describe some polynomials that, if they could be calculated efficiently, would achieve strong detection even if $k \leq n^{1/2 - \varepsilon}$.
The following standard conjecture predicts that this should be impossible.
\begin{conjecture}[Planted clique hardness]\label{conj:clique}
For every fixed $\varepsilon\in(0,1/2)$, no polynomial-time algorithm achieves strong detection, and hence no polynomial-time algorithm achieves strong recovery, in the planted clique problem when $k\le n^{1/2-\varepsilon}$.
\end{conjecture}
\noindent
Thus, if we believe the conjecture, then our results also show the hardness of calculating certain polynomials. Indeed, as a simple example, the function
\begin{equation}
f(G) = \#\{\text{cliques of size } (2 + \varepsilon)\log_2 n \text{ in G}\} \label{eq:hard-f}
\end{equation}
can be written as a polynomial of degree $O(\log^2 n)$ (the number of edges in a clique of the size being counted), and certainly this argument applies to that $f(G)$ since with high probability there exist no cliques of this size in $G \sim \QQ_n$. But, the same actually holds for a large and robust family of polynomials based on signed subgraph counts of sufficiently rich families of graphs.
The details of this result may be found in Theorem~\ref{thm:separation-hardness}.

\subsection{Proof techniques}

\subsubsection{Signed graph count polynomials}
\label{sec:signed-count}

The starting point of our algorithms follows the approach of a long line of recent work \cite{hopkins2017bayesian,DHS-2020-SpikedMatrixHeavyTailed,mao2024testing,mao2023exact,li2025algorithmic,li2025smooth,chen2025detecting,chen2026computational} in studying polynomials that sum certain symmetric classes of Boolean Fourier coefficients of a graph's adjacency matrix, or equivalently count a certain notion of \emph{signed subgraph} of an input graph $G$ in order to achieve detection and recovery. We define these notions precisely below.

We identify a graph $G$ on labelled vertex set $[n]$ with its $\{\pm 1\}$-valued adjacency matrix $X \in \{\pm 1\}^{\binom{[n]}{2}}$, having off-diagonal entries
\[
X_{ij} \colonequals 2\cdot \mathbf{1}\{\{i,j\}\in E(G)\}-1\in\{\pm1\} \text{ for } 1\le i<j\le n.
\]
We consider polynomials of $X$ associated to subsets of its indices, $H \subseteq \binom{[n]}{2}$. We view such a subset as itself being a graph, with the following standard but important details of this identification:
\begin{definition}
    \label{def:H}
    For $H \subseteq \binom{[n]}{2}$, we view $H$ as a set of edges on vertex set $[n]$. We write $V(H) \colonequals \{i \in [n]: i \in e \text{ for some } e \in H\}$, the set of vertices that are not isolated in this graph.
\end{definition}
\noindent The polynomials we will study are then as follows.
\begin{definition}[Graph characters and symmetrizations]
\label{def:char}
Given a graph $H$ on some number $|V(H)| = v \leq n$ of labelled vertices and an injective embedding $\phi:V(H) \hookrightarrow [n]$, we define the associated \emph{character} by
\[
\chi_{H,\phi}(X) \colonequals \prod_{\{a,b\}\in E(H)} X_{\phi(a)\phi(b)} \in \{\pm1\}.
\]
All Boolean Fourier characters over $\{\pm 1\}^{\binom{n}{2}}$ can be viewed as realized in this way for some $H$ and an injective $\phi$. Summing over all injective embeddings of a given $H$, we define the polynomials
\[
T_H(X) \colonequals \sum_{\phi:V(H)\hookrightarrow[n]} \chi_{H,\phi}(X).
\]
More generally, for a set of such graphs $\sH$, we define the polynomials
\[
F_{\sH}(X) \colonequals \sum_{H\in \sH}T_H(X).
\]
\end{definition}

\begin{remark}
Note that if $H$ is isomorphic to $H'$, i.e., $H$ and $H'$ only differ by a relabeling of their vertices, then $T_H=T_{H'}$. We will end up taking $\sH$ a family of labeled graphs closed under relabeling or isomorphism (see Definition~\ref{def:iso-closed}), which in particular contains several isomorphic copies of various graphs. Thus, effectively $F_{\sH}$ for such a family is a weighted sum over isomorphism classes of graphs, where the isomorphism class of each $H$ is counted with multiplicity $v! / |\Aut(H)|$. This particular weighting scheme will turn out to be convenient in our calculations to come.
\end{remark}

We omit it from this overview for the sake of brevity, but in Section~\ref{sec:recovery} we also define a vector-valued analog of $F_{\sH}(X)$ that we use to design recovery algorithms.

\subsubsection{Separation by polynomials}
In order for thresholding $F_{\sH}(X)$ to succeed at detection, it will suffice to consider the first two moments of $F(X)$ by relying on the following stronger notion of detection that is standard in the literature on low-degree polynomial algorithms.
\begin{definition}[Strong separation]
    Let $F = F_n: \{0, 1\}^{\binom{n}{2}} \to \RR$. We say that these functions \emph{strongly separate} $\QQ = \QQ_n$ from $\PP = \PP_{n, k}$ if
    \begin{equation}
\left(\E_{\PP}[F(G)]-\E_{\QQ}[F(G)]\right)^2 \gg
\Var_{\PP}[F(G)] + \Var_{\QQ}[F(G)], \label{eq:separation}
\end{equation}
     as $n \to \infty$.
\end{definition}
\noindent The following is then a simple consequence of Chebyshev's inequality.
\begin{proposition}
    If $F$ strongly separates $\PP$ from $\QQ$, then, for a suitable sequence $\gamma = \gamma_n \in \RR$, the test $\mathcal{A}(G) = \1\{F(G) \geq \gamma\}$ achieves strong detection when $\E_{\PP}F>\E_{\QQ}F$, while the test with the reverse inequality achieves strong detection when $\E_{\PP}F<\E_{\QQ}F$.
\end{proposition}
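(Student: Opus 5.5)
The plan is a direct application of Chebyshev's inequality under each of the two laws, with the threshold placed at the midpoint of the two means. By symmetry it suffices to treat the case $\E_{\PP}F > \E_{\QQ}F$. The reverse case follows by applying the same argument to $-F$, which has the same variances and negated means.

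Write $\mu_{\PP} = \E_{\PP}[F]$, $\mu_{\QQ} = \E_{\QQ}[F]$ and $\Delta = \mu_{\PP} - \mu_{\QQ} > 0$. Take $\gamma_n = (\mu_{\PP}+\mu_{\QQ})/2$.

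For the type I error, note that $F(G) \geq \gamma$ forces $F(G) - \mu_{\QQ} \geq \Delta/2$. Chebyshev's inequality then gives
\[
\Pr_{\QQ}[F \geq \gamma] \leq \frac{4\Var_{\QQ}[F]}{\Delta^2}.
\]

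For the type II error, $F(G) < \gamma$ forces $\mu_{\PP} - F(G) > \Delta/2$. Chebyshev's inequality gives
\[
\Pr_{\PP}[F < \gamma] \leq \frac{4\Var_{\PP}[F]}{\Delta^2}.
\]

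Summing the two bounds, the total error is at most $4(\Var_{\PP}[F]+\Var_{\QQ}[F])/\Delta^2$. The strong separation condition \eqref{eq:separation} says exactly that this ratio is $o(1)$, which is strong detection in the sense of Definition~\ref{def:detection-recovery}.

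There is no real obstacle; only two small points need checking.
\begin{itemize}
\item If one of the variances is zero, Chebyshev's bound is still valid, and \eqref{eq:separation} still guarantees $\Delta \neq 0$ for large $n$.
\item The threshold $\gamma_n$ is allowed to depend on $n$ through the two means. The statement only asserts the existence of a suitable sequence, so no computational concern about $\gamma_n$ arises.
\end{itemize}
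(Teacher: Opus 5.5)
Your proposal is correct and is exactly the argument the paper has in mind: it gives no written proof, only the remark that the proposition is ``a simple consequence of Chebyshev's inequality.'' Your midpoint threshold, the two Chebyshev bounds giving total error at most $4(\Var_{\PP}[F]+\Var_{\QQ}[F])/\Delta^2 = o(1)$, and the reduction of the reverse case to $-F$ fill in that remark completely.
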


Thus, it suffices for us to produce strongly separating statistics $F(X)$ (we go back and forth freely between thinking of the graph input $G$ and the adjacency matrix input $X$ as needed). But, we must balance two considerations: on the one hand we would like \eqref{eq:separation} to hold, but on the other we need to avoid making $\sH$ so complicated that $F_{\sH}(X)$ becomes intractable to compute.

In order for \eqref{eq:separation} to hold, it essentially just suffices for $\sH$ to be sufficiently large. We state our criterion for strong separation as a standalone result since it will be useful for both our upper and our lower bounds and may be of independent interest.

\begin{assumption}[Graph family assumptions]
    \label{asm:H}
    We suppose $\sH = \sH^{(v)} \subseteq 2^{\binom{[v]}{2}}$ is a sequence of sets of graphs satisfying the following:
    \begin{enumerate}
        \item As the notation indicates, each $H \in \sH^{(v)}$ is a graph on $v$ labeled vertices.
        \item $\sH^{(v)}$ is closed under relabeling of vertices (equivalently, $\sH^{(v)}$ is \emph{isomorphism-closed} in the terminology of Definition~\ref{def:iso-closed}).
        \item Each $H \in \sH^{(v)}$ is connected.
        \item We have
        \[
        \log c(\sH) \colonequals \liminf_{v \to \infty} \frac{1}{v} \log\left(\frac{|\sH^{(v)}|}{v!}\right) > -\infty.
        \]
    \end{enumerate}
\end{assumption}
\noindent Note that the last notation is compatible with our previous notation in the sense that $c(t)$ discussed earlier is $c(t) = c(\sG^{\conn}_{v, \leq t})$.

\begin{theorem}[General separation criterion]
    \label{thm:general-separation}
    Let $\sH = \sH^{(v)}$ be a sequence of sets of graphs satisfying Assumption~\ref{asm:H}, and let $v(n)$ satisfy
    \[
    \omega(1) \leq v(n) \leq \frac{\log n}{5\log \log n}.
    \]
    If $k = k(n) \geq \lambda\sqrt{n}$ with $\lambda > 1 / \sqrt{c(\sH)}$, then the statistic $F_{\sH^{(v(n))}}(X)$ strongly separates $\PP_{n, k(n)}$ from $\QQ_n$.
\end{theorem}
\noindent The connectedness condition is a mild but important one that turns out to be sufficient to control the variances in the separation condition \eqref{eq:separation}.\footnote{Our lower bound in Theorem~\ref{thm:separation-hardness} follows from the same idea, but requires us to handle the case of $|\sH^{(n)}|$ growing much more rapidly with $n$ in order to achieve \eqref{eq:separation} when $k = n^{1/2 - \varepsilon}$. That in turn requires a more delicate analysis of the variances in \eqref{eq:separation}, leading to more stringent connectivity conditions than merely that all $H \in \sH^{(n)}$ are connected.}

We emphasize that, by Theorem~\ref{thm:general-separation}, \emph{any} sufficiently large family $\sH$ of connected graphs for which we can efficiently compute $F_{\sH}$ will yield a distinguishing statistic for the planted clique problem. Thus the ``bottleneck'' in designing such algorithms is not in ensuring separation but in computing $F_{\sH}$, and this is where the bounded treewidth assumption plays an important role in our specific choice of $\sH$.

\subsubsection{Efficient approximation by color-coding}
\label{sec:intro-approx}

To make our upper bounds computationally effective, we must find $\sH$ that satisfies the above conditions but such that $F_{\sH}(X)$ can be computed efficiently. The difficulty with this is that to evaluate $F_{\sH}(X)$ naively takes time roughly $|\sH| \cdot (n)_v$, since for each $H \in \sH$ there are $(n)_v$ injective maps $V(H) \hookrightarrow [n]$. Recall that we take $|\sH| \approx c^v v!$ and $\omega(1) \leq v \leq o(\log n / \log\log n)$. Thus, $|\sH| = n^{o(1)}$, so calculating $F_{\sH}(X) = \sum_{H \in \sH}T_H(X)$ by enumerating over $H \in \sH$ does not affect the polynomial runtime. On the other hand, $(n)_v = n^{\omega(1)}$, so computing each $T_H(X)$ in this brute force way is prohibitively expensive.
Like many of the works cited below in Section~\ref{sec:related}, we address this issue using the \emph{color-coding} technique \cite{alon1995color}, whose idea is to fix a coloring of the vertices and consider only summing over ``colorful'' $\phi$ that assign each vertex a different color, which turns out to be tractable via dynamic programming provided the graphs being counted have bounded treewidth.
See Section~\ref{sec:approx} for further details.

\subsubsection{Fast matrix multiplication and seeding}

Furthermore, similar to observations for instance of the recent work \cite{bringmann2021current}, certain operations in color coding can be optimized by recognizing that they perform implicit matrix multiplications.
In particular, this occurs in our setting for signed counts of graphs of treewidth at most $t = 2$, leading to the improvement in Theorem~\ref{thm:tau-fast-tw2} and the appearance of $\omega$.

For Theorem~\ref{thm:intro-main-bounds}, we instead consider using an algorithm computing signed counts of trees ($t = 1$) together with the boosting scheme of \cite{alon1998finding}.
As explained above, this works by running the algorithm on the induced subgraph on mutual neighbors of small seed subsets of vertices.
It turns out that summing over all trees for each such seed choice can be written in terms of a sequence of rectangular matrix multiplications, one of whose axes enumerates all chosen seed subsets; thus, all seed choices are treated at once rather than serially as in the original proposal of \cite{alon1998finding}.
Applying fast matrix multiplication to these operations leads to the exponents $\omega(1, 1, s/2)$ appearing in Theorem~\ref{thm:intro-main-bounds}.
Naive matrix multiplication would be equivalent to iterating over the seed subsets serially; at an intuitive level, fast matrix multiplication can be viewed here as in effect giving a ``partial parallelization'' of that computation.

\subsubsection{Low-degree recovery algorithms}

For recovery, we use a rooted version of the same polynomials $F_{\sH}(X)$, which we discuss in Section~\ref{sec:recovery}. For each $H \in \sH$, we sum over choices of a root vertex $r \in V(H)$, and sum over embeddings $\phi$ with $\phi(r) = i$ to produce a polynomial $F_{\sH, i}(X)$. We show a suitable rescaling of the vector-valued polynomial $(F_{\sH, 1}(X), \dots, F_{\sH, n}(X))$ gives a good estimator of the indicator vector of the planted clique in an $\ell^2$ sense; while $F_{\sH}(X)$ for us plays the role of an approximation to the low-degree likelihood ratio, this vector plays the role of an approximation to the low-degree MMSE estimator, as studied by works like \cite{schramm2022computational,sohn2025sharp}.

However, this only gives a ``soft'' estimator of clique membership. We then perform a somewhat involved rounding procedure to actually exactly recover the clique. First, thresholding this vector yields an approximate candidate set that is not too large and contains most of the clique vertices. However, since the distribution of the induced subgraph on this set of vertices is no longer a tractable distribution like a smaller planted clique instance, we instead perform a ``cleanup'' step by only running the above strategy on a large induced subgraph of the input graph, and retain the small induced subgraph on the complement as a ``holdout'' set. We then identify the clique vertices in the holdout set by thresholding their degrees to the initial clique estimate, which with high probability outputs exactly the clique vertices in the holdout set. Finally, the clique vertices in the first subgraph are with high probability those adjacent to all clique vertices in the holdout set. This ``zig-zag'' strategy going back and forth between two subgraphs allows for most of the cleanup phase of the algorithm to use edges independent from those used in the main low-degree polynomial estimator.

\subsection{Related work}
\label{sec:related}

\paragraph{Low-degree polynomials as statistical algorithms}
The idea of using explicit low-degree polynomials to construct statistical algorithms has been used in several notable recent works. We note that much work on low-degree polynomial algorithms in statistics focuses on \emph{lower bounds} against such algorithms; see \cite{kunisky2019notes,wein2025computational} for surveys mostly focused on this direction. For \emph{upper bounds} this technique, especially relying on color-coding, seems to originate with the work of \cite{hopkins2017bayesian} on the stochastic block model, who proposed using low-degree polynomials to approximate the Bayesian posterior mean estimator. Ideas very close in spirit to ours have appeared in similar applications to matching correlated graphs \cite{mao2024testing,mao2023random} and detecting communities in correlated stochastic block models \cite{chen2025detecting,chen2026computational}. The performance of these algorithms ends up related to the asymptotics of certain graph counts, in their case Otter's constant associated to the growth of the number of unlabeled trees \cite{otter1948number}. This is precisely the same role that graphs of bounded treewidth play in our setting. Similar ideas have also been applied to matrix and tensor PCA \cite{DHS-2020-SpikedMatrixHeavyTailed,li2025algorithmic,li2025smooth}. The latter reference \cite{li2025smooth} is quite close in spirit to ours as well, as it both relies on the asymptotics of counts of certain hypergraphs (since it works with a tensor problem) and uses these to establish a smooth tradeoff between algorithm runtime and required signal strength. The work of \cite{yu2025counting} is also close to ours in examining using polynomials computing signed subgraph counts for general planted subgraph problems, but they give a coarser analysis and their algorithms are based on simple families of graphical polynomials associated to \emph{stars}. Their results apply to the planted clique problem, but only in the sub- and super-critical regimes $k \ll \sqrt{n}$ or $k \gg \sqrt{n}$, not in the dependence on $\lambda$ in the critical regime $k = \lambda\sqrt{n}$.

\paragraph{Message-passing, belief propagation, and tree-shaped polynomials}
Message-passing algorithms such as belief propagation (BP) and approximate message passing (AMP) are another common framework for designing statistical algorithms, also closely related to analytical tools from statistical physics; see \cite{yedidia2001understanding,zdeborova2015statphys,FVRS-2021-TutorialAMP} for some general surveys. Various work has studied the connection between these methods and graphical polynomials associated specifically to trees. For example, this connection is made quite explicitly in \cite{montanari2022equivalence,jones2024fourier}. For specifically the planted clique problem, as we have mentioned, it is understood as folklore in the low-degree polynomials literature that the BP algorithm of \cite{deshpande2015finding} is essentially equivalent to computing a low-degree polynomial whose terms are associated to trees. Our approach may be viewed as extending such a polynomial to graphs of higher treewidth; it is an interesting question whether there is a natural message-passing algorithm that corresponds to that extension.

\paragraph{Algorithms and treewidth}
Graphs of bounded treewidth are well-known to be algorithmically tractable for various problems beyond our applications; see, e.g., \cite{fomin2012faster,bringmann2021current} for fast algorithms for detecting and counting graphs of bounded treewidth. Our results will rely on the enumeration of labeled graphs of treewidth at most $t$, which has been studied by numerous works in the combinatorics literature. The case $t = 1$ admits a closed-form solution by Cayley's theorem \cite{Cayley-1889-CountingTrees} (the current historical consensus seems to be that this was rather originally discovered by Borchardt \cite{Borchardt-1860-CountingTrees}). Precise asymptotics for the case $t = 2$ were obtained by \cite{bodirsky2007enumeration}. For $t \geq 3$, similarly precise asymptotics are not known, as discussed by \cite{castellvi2024chordal}. Looser upper and lower bounds for all $t$ were established by \cite{baste2017number}.
\section{Preliminaries}

\subsection{Notation}\label{sec:notation}

\paragraph{Combinatorics}
For an integer $n \ge 1$, let $[n] := \{1,2,\dots,n\}$. For $0 \le v \le n$, let $\binom{[n]}{v}$ denote the collection of all $v$-subsets of $[n]$. We use the falling factorial $(n)_v := n(n-1)\cdots(n-v+1)$, with $(n)_0 := 1$, so that the number of injections $\phi:[v]\hookrightarrow [n]$ is $(n)_v$. We write $S_v$ for the symmetric group of permutations of $[v]$.

\paragraph{Asymptotics}
We write $\poly(\cdot)$ for an unspecified polynomial, and use the standard asymptotic notation $O(\cdot)$, $\Omega(\cdot)$, $\Theta(\cdot)$, and $o(\cdot)$ as $n\to\infty$. Throughout, asymptotic statements are with respect to $n\to\infty$; auxiliary parameters may depend on $n$ unless explicitly declared to be fixed constants. We write ``w.h.p.'' for ``with probability $1-o(1)$ as $n \to \infty$.'' For two nonnegative sequences $(a_n)$ and $(b_n)$, we write $a_n\gg b_n$ to mean $a_n/b_n\to\infty$ as $n\to\infty$; equivalently, $b_n=o(a_n)$. Unless stated otherwise, all logarithms are base $2$. For $x\in\RR$, write $\lceil x\rceil_+\colonequals \max\{0,\lceil x\rceil\}$.
\paragraph{Graphs}
Given a graph $G = (V, E)$, if $|V| = n$ we often identify $V$ with $[n]$. We also write $v(G) \colonequals |V(G)|$ and $e(G) \colonequals |E(G)|$. For $v \in V(G)$, we write $N_G(v) \colonequals \{w \in V(G): \{v, w\} \in E(G)\}$, the set of neighbors of $v$ in $G$. The adjacency matrix of $G$ is $A\in\{0,1\}^{n\times n}$, where for $1\le i<j\le n$, $A_{ij}=A_{ji}=\mathbf{1}\{\{i,j\}\in E\}$. We also sometimes encode edges by the centered variables $X_{ij}:=2A_{ij}-1\in\{\pm1\}$ for $1\le i<j\le n$, and extend symmetrically by setting $X_{ji}=X_{ij}$ and $X_{ii}=0$.

\paragraph{Graph embeddings}
For another graph $H = (V(H), E(H))$, we call an \emph{injective embedding} of $H$ into $V(G) = [n]$ an injection $\phi:V(H)\hookrightarrow [n]$. We write $\Inj(V(H),[n])$ for the set of all such injections. Note that this only exists when $|V(H)| \leq n$. We reserve the variable $v = v(H) = |V(H)|$ for such situations, where $v$ denotes the number of vertices in the smaller graph and $n = |V(G)|$ the number of vertices in the larger graph into which the smaller one is being embedded. As for $G$, we often identify $V(H)$ with $[v]$ when we are following this notation. In such situations, we call $G$ the \emph{host graph} and $H$ the \emph{pattern graph}.

For such a graph $H$ with labeled vertex set $[v]$, for each permutation $\pi \in S_v$ we write $\pi(H)$ for $H$ with its vertices relabeled according to $\pi$. We denote its automorphism group by
\[
\Aut(H)
\colonequals
\bigl\{\pi\in S_v:\ \{a,b\}\in E(H) \text{ if and only if } \{\pi(a),\pi(b)\}\in E(H)\bigr\}.
\]
For two such graphs $H,H'$, we denote the set of isomorphisms between them as
\[
\Iso(H,H')
\colonequals
\bigl\{\pi\in S_v:\ \{a,b\}\in E(H) \text{ if and only if } \{\pi(a),\pi(b)\}\in E(H')\bigr\}.
\]
In particular, $\Iso(H,H)=\Aut(H)$. We write $H \cong H'$ if $\Iso(H, H') \neq \emptyset$, and say that $H$ and $H'$ are isomorphic in this case.

\subsection{Computational model}

Since we are discussing granular polynomial runtimes of algorithms, we must be somewhat careful about the meaning of these runtimes. Our results apply to any reasonable model that supports exact computation with real numbers; the real RAM model \cite{Shamos-1978-ComputationalGeometry} is a common such choice. We view the input into the algorithms we discuss as the $n \times n$ adjacency matrix of the $n$-vertex graph $G$, and an $O(n^{\tau})$ runtime means an algorithm performing $O(n^{\tau})$ arithmetic operations. Our algorithms will truly only involve actual arithmetic operations since they are based on computing polynomials; earlier when we compared to more complicated computations such as spectral algorithms, we appealed implicitly to standard results about their runtime.

\subsection{Graph families}

As mentioned above, we will work extensively with sets of graphs $\sH$ on the same vertex set $[v]$. All such sets we work with will be of the following kind:

\begin{definition}[Isomorphism-closed family]\label{def:iso-closed}
    We call a set of graphs $\sH$ on $[v]$ \emph{isomorphism-closed} if whenever $H \in \sH$ and $H' \cong H$, then $H' \in \sH$. Equivalently, every relabeling of the vertices of any $H \in \sH$ is also in $\sH$, and again equivalently membership in $\sH$ depends only on the isomorphism type of a graph.
\end{definition}

\begin{lemma}\label{lem:iso-sum-vfact}
Let $\sH$ be an isomorphism-closed family of labeled graphs on $[v]$. Then, for every $H\in\sH$,
\[
\sum_{H'\in\sH} |\Iso(H,H')|
=
v!.\]
\end{lemma}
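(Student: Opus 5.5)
The plan is a double-counting argument over the symmetric group $S_v$. Fix $H\in\sH$. Every permutation $\pi\in S_v$ carries $H$ to exactly one labeled graph, namely $\pi(H)$. By the definition of $\Iso$, we have $\pi\in\Iso(H,H')$ if and only if $\{a,b\}\in E(H)\iff\{\pi(a),\pi(b)\}\in E(H')$. Since $\pi$ is a bijection on unordered pairs, this holds exactly when $E(H')=\{\{\pi(a),\pi(b)\}:\{a,b\}\in E(H)\}$, that is, when $H'=\pi(H)$.

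It follows that the sets $\Iso(H,H')$, as $H'$ ranges over all labeled graphs on $[v]$, are pairwise disjoint and together cover $S_v$. Concretely, the map $\pi\mapsto\pi(H)$ partitions $S_v$ into the fibers $\Iso(H,H')$.

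Next we use that $\sH$ is isomorphism-closed (Definition~\ref{def:iso-closed}). Each $\pi(H)$ is isomorphic to $H$, so it lies in $\sH$. Hence every nonempty fiber corresponds to some $H'\in\sH$, while graphs outside $\sH$ contribute empty fibers. Summing the fiber sizes over $H'\in\sH$ therefore counts every element of $S_v$ exactly once, which gives
\[
\sum_{H'\in\sH}|\Iso(H,H')|=|S_v|=v!.
\]

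The only step that needs any care is the equivalence ``$\pi\in\Iso(H,H')$ if and only if $H'=\pi(H)$.'' It rests on the fact that $\pi$ induces a bijection on $\binom{[v]}{2}$, so the ``if and only if'' condition on edges pins $E(H')$ down exactly. Everything else is bookkeeping.

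As an optional cross-check, the orbit–stabilizer theorem gives the same answer. Each nonempty $\Iso(H,H')$ is a coset $\pi\Aut(H)$, and the orbit of $H$ has $v!/|\Aut(H)|$ elements. The sum is then $\bigl(v!/|\Aut(H)|\bigr)\cdot|\Aut(H)|=v!$.
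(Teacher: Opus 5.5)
Your proposal is correct and follows the paper's proof: both fix $H$, use isomorphism-closedness to get $\pi(H)\in\sH$ for every $\pi\in S_v$, and group the $v!$ permutations by the value of $\pi(H)$. Your explicit check that $\pi\in\Iso(H,H')$ exactly when $H'=\pi(H)$ is the step the paper leaves implicit, and your argument for it is sound.
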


\begin{proof}
    Since $\sH$ is isomorphism-closed, $\pi(H) \in \sH$ for every $\pi \in S_v$.
    The result follows by grouping the $v!$ permutations of $S_v$ according to the value of $\pi(H)$.
\end{proof}

\subsection{Treewidth}
\label{sec:treewidth}

\label{subsec:bounded-tw-family}

Treewidth is a fundamental graph parameter, introduced by Robertson and Seymour in their theory of graph minors \cite{robertson1986graph}, which in a certain sense measures how much a graph behaves like a tree.

\begin{definition}[Tree decomposition]
A \emph{tree decomposition} of a graph $H=(V(H),E(H))$ is a pair
\[
\Bigl(\mathsf T,\{B_t\subseteq V(H)\}_{t\in V(\mathsf T)}\Bigr),
\]
where $\mathsf T$ is a tree and each $B_t$ is a \emph{bag}, such that:
\begin{enumerate}
  \item $\bigcup_{t\in V(\mathsf T)} B_t = V(H)$.
  \item For every edge $\{u,v\}\in E(H)$, there exists $t\in V(\mathsf T)$ such that $\{u,v\}\subseteq B_t$.
  \item For every vertex $u\in V(H)$, the set $\{t\in V(\mathsf T): u\in B_t\}$ induces a connected subtree of $\mathsf T$.
\end{enumerate}
\end{definition}

Informally, a tree decomposition represents $H$ by a tree of overlapping vertex sets, which are the bags. The first two conditions say that every vertex of $H$ and every edge of $H$ are contained in some bag. The final condition says that the tree structure on the bags is such that there are paths between any two bags that overlap. Thus, $H$ may be viewed as assembled from the bags arranged in a tree-like fashion. Note that, because every edge must be contained in some bag, all tree decompositions of a graph with at least one edge will have some bags of size at least 2.

\begin{definition}[Treewidth]\label{def:treewidth}
The \emph{width} of a tree decomposition $(\mathsf T,\{B_t\})$ is defined by
\[
\width(\mathsf T,\{B_t\})
:=
\max_{t\in V(\mathsf T)} \bigl(|B_t|-1\bigr).
\]
The \emph{treewidth} of a graph $H$ is the smallest width of any tree decomposition of $H$.
\end{definition}

The graphs of treewidth at most 1 are the forests, and the connected ones with at least two vertices are trees; the graphs of treewidth at most 2 are formed by series-parallel graphs glued together at pairs of vertices into a tree structure (confusingly, sometimes graphs of treewidth at most 2 are themselves called series-parallel graphs) \cite{bodlaender1997treewidth}. Alternatively, the graphs of treewidth at most some $t \geq 1$ are characterized by a collection of forbidden minors; for $t = 1$ the only forbidden minor is $K_3$, and for $t = 2$ the only forbidden minor is $K_4$.

Fix integers $v\ge 2$ and $t\ge 1$. Recall that we have defined
\[
\mathcal{G}_{v,\le t}^{\conn}
\colonequals
\bigl\{H \text{ a labeled graph on vertex set }[v] : H \text{ connected}, \tw(H)\le t\bigr\}.
\]
Our results will rely on several pieces of prior work that compute or bound the sizes of these sets, which are as follows.

\begin{theorem}[Cayley's formula \cite{Borchardt-1860-CountingTrees,Cayley-1889-CountingTrees}]\label{thm:cayley}
    For all $v \geq 1$,
    \[
    |\sG^{\conn}_{v, \leq 1}| = v^{v - 2}.
    \]
    The set $\sG^{\conn}_{v, \leq 1}$ is also the set of labelled trees on $v$ vertices.
\end{theorem}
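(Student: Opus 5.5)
The statement to prove is Theorem~\ref{thm:cayley}: $|\sG^{\conn}_{v,\le 1}| = v^{v-2}$, and $\sG^{\conn}_{v,\le 1}$ is exactly the set of labelled trees on $[v]$. The proof has two parts. First we identify connected treewidth-$\le 1$ graphs with trees. Then we count labelled trees with a Prüfer-code bijection.

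\emph{Characterization.} Suppose $H$ is a tree. Root it, and for each edge $\{u,w\}$ with $w$ the child, create a bag $\{u,w\}$. Make bag $\{u,w\}$ adjacent to every bag $\{w,x\}$ of a child $x$ of $w$. This is a tree decomposition of width $1$:
\begin{itemize}
\item the bags containing a fixed vertex $w$ are its parent-edge bag together with its child-edge bags, and these form a star;
\item for $v=1$, the single bag $\{1\}$ suffices.
\end{itemize}
Conversely, suppose $H$ contains a cycle $C$ and has a tree decomposition of width $\le 1$. Restrict the decomposition to $V(C)$, which still yields a decomposition of $C$ of width $\le 1$. Take a leaf bag $B$ of the decomposition tree. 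If $B$ contains a vertex $x$ that appears in no other bag, then both cycle edges at $x$ must lie in $B$, so $|B|\ge 3$, a contradiction. Otherwise $B$ is contained in its neighbouring bag and can be deleted. Repeating this strictly shrinks the decomposition, which is impossible. Hence every graph of treewidth $\le 1$ is acyclic, so the connected ones are exactly the trees.

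\emph{Counting.} For $v\ge 2$, define the Prüfer map from labelled trees on $[v]$ to $[v]^{v-2}$ as follows. Repeatedly remove the smallest-labelled leaf and record its neighbour, doing this $v-2$ times. To show this is a bijection, observe that a vertex $j$ appears in the code exactly $\deg(j)-1$ times. Given this, the inverse is constructed step by step: at step $i$, the removed leaf must be the smallest label that does not appear in the remaining suffix of the code and has not yet been removed. Join that leaf to the $i$-th code entry. At the end, join the two vertices that remain. An induction on $v$ shows the two maps are mutually inverse. Therefore $|\sG^{\conn}_{v,\le1}| = v^{v-2}$, and the case $v=1$ gives $1 = 1^{-1}$ trivially.

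\emph{Main obstacle.} The only delicate step is verifying that the decoding map always produces a tree and inverts encoding. The degree–multiplicity observation makes the choice of removed leaf forced at each step, so this reduces to a routine induction on $v$.
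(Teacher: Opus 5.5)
The paper gives no proof of this statement. It cites Borchardt and Cayley for the count, and in Section~\ref{sec:treewidth} it asserts as standard that the graphs of treewidth at most $1$ are the forests. Your proposal proves both halves directly: the identification $\sG^{\conn}_{v,\le 1}=\{\text{trees on }[v]\}$ by a leaf-bag elimination argument, and the count by the Pr\"ufer bijection. This is a correct, standard route. The citation buys brevity; your argument makes explicit the treewidth characterization that the paper uses silently when it deduces $c(1)=e$. The cycle argument is fine: restricting to $V(C)$ is valid, and a leaf bag either has a private vertex forcing three vertices into it or is contained in its unique neighbour (by condition~3) and can be deleted. The Pr\"ufer encoding and decoding, with the $\deg(j)-1$ multiplicity rule, is fine as a sketch.

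There is one slip in your width-$1$ decomposition of a tree. Your bags are indexed by non-root vertices $w$, with adjacency given by the parent--child relation. So the bag graph is $H$ minus its root, which is disconnected whenever the root $r$ has two or more children. Also, the bags containing $r$ are exactly the child-edge bags $\{r,x\}$, which are pairwise non-adjacent, so condition~3 fails for $r$. For example, the path $1$--$2$--$3$ rooted at $2$ gives two isolated bags $\{2,1\}$ and $\{2,3\}$. Your ``star'' claim holds only for non-root vertices.

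The fix is one line. For $v\ge 2$, root $H$ at a leaf: then $r$ has a unique child, $r$ lies in a single bag, and the bag graph is the tree $H-r$. Alternatively, add a bag $\{r\}$ adjacent to every bag $\{r,x\}$.
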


\begin{theorem}[Theorems 3.7 and 6.1 of \cite{bodirsky2007enumeration}]\label{thm:count-tw2}
As $v \to \infty$, we have the asymptotic
\[
|\mathcal G^{\conn}_{v,\le 2}|
\sim
C\, v^{-5/2}\rho^{-v} v!,
\]
where $\rho \approx 0.11021$ and $C \approx 0.0067912$.
\end{theorem}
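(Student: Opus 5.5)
This statement is imported from \cite{bodirsky2007enumeration} rather than proved here. If I had to reprove it, I would follow the standard analytic-combinatorics route for labelled graph classes closed under taking blocks. The first step uses the fact that the connected graphs of treewidth at most $2$ are exactly the connected $K_4$-minor-free graphs. A connected graph is a tree of $2$-connected blocks glued at cut vertices, and every block of a $K_4$-minor-free graph is again $K_4$-minor-free. So it suffices to enumerate $2$-connected series-parallel graphs, and then to pass to connected graphs through the block decomposition.

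Second, I would encode $2$-connected series-parallel graphs via \emph{networks}: graphs with two distinguished poles, obtained from a single edge by series and parallel compositions. Writing $D(x,y)$, $S(x,y)$, $P(x,y)$ for the exponential generating functions (in vertices $x$ and edges $y$) of all networks, series networks, and parallel networks, the unique decomposition gives the system
\[
D=y+S+P,\qquad S=(D-S)\,x\,D,\qquad P=(1+y)\,e^{S}-1-y .
\]
This yields an explicit implicit equation for $D$. The edge-rooted $2$-connected generating function $\partial_y B$ is then expressed through $D$ by the standard relation $\frac{2}{x^2}\,\partial_y B(x,y)=\frac{1+D}{1+y}$. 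Integrating and setting $y=1$ gives $B(x)$ together with its dominant singularity $R$ and its singular expansion. Since the singularity comes from an implicit-function branch point, it is a square-root singularity of $D$, which after integration gives a $(1-x/R)^{5/2}$ term in $B$.

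Third, the block decomposition gives $C'(x)=\exp\bigl(B'(xC'(x))\bigr)$ for the connected class. The key step, and the one I expect to be the main obstacle, is to verify that this composition is \emph{subcritical}. This means showing that $\tau=xC'(x)$ reaches the value satisfying $\tau B''(\tau)=1$ strictly before $\tau$ reaches $R$. Concretely, I would solve $\tau B''(\tau)=1$ numerically using the explicit form of $B$ and check that the solution satisfies $\tau<R$; here one needs rigorous numerical control of the implicit function defining $D$ at $y=1$. Given subcriticality, $\tau(x)$ has a square-root singularity at $\rho=\tau_0e^{-B'(\tau_0)}\approx 0.11021$, where $\tau_0$ is the solution of $\tau B''(\tau)=1$. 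Hence $C'(x)$ has a square-root singularity there and $C(x)$ has a leading singular term of order $(1-x/\rho)^{3/2}$. Transfer theorems (Flajolet--Odlyzko singularity analysis) then give $[x^v]C(x)\sim C\,v^{-5/2}\rho^{-v}$, and multiplying by $v!$ to count labelled graphs gives the stated asymptotic. The constant $C\approx0.0067912$ is read off from the coefficient of the $(1-x/\rho)^{3/2}$ term.
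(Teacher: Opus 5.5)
Your route is the standard one, and it is essentially how Bodirsky--Gim\'enez--Kang--Noy obtain the count internally. The paper itself proves nothing here. It combines their asymptotic for all graphs of treewidth at most $2$ (Theorem~3.7) with the limiting probability of connectedness (Theorem~6.1), so its $C$ is the all-graphs constant times that probability. You instead go straight to the connected class through the block decomposition and never need the connectivity probability. Two of your intermediate claims are wrong, however.

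First, your parallel equation double counts. As written, $P=(1+y)e^{S}-1-y=(1+y)(e^{S}-1)$ includes the configuration ``exactly one series network and no pole edge''. That is just a series network, already counted by $S$ in $D=y+S+P$. Already at order $x$ your system gives $2xy^2+xy^3$ instead of the correct $xy^2+xy^3$. A parallel network needs at least two components, so the correct equation is $P=y(e^{S}-1)+(e^{S}-1-S)$, equivalently $1+D=(1+y)e^{S}$. Together with $S=xD^2/(1+xD)$ this gives $\log\frac{1+D}{1+y}=\frac{xD^2}{1+xD}$. Since $R$, $B$, $\tau_0$, $\rho$ and $C$ are all computed from this equation, the slip is not cosmetic.

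Second, $B$ has a $(1-x/R)^{3/2}$ singular term, not $(1-x/R)^{5/2}$. At $y=1$ your relation reads $\partial_yB(x,1)=\frac{x^2}{4}(1+D(x,1))$. The $v$-th coefficient on the left is that of $B$ times the average number of edges, which for $2$-connected graphs of treewidth at most $2$ lies between $v$ and $2v-3$. The square-root singularity of $D$ makes the right-hand coefficients $\asymp v^{-3/2}R^{-v}$, so the coefficients of $B$ are $\asymp v^{-5/2}R^{-v}$, i.e.\ exponent $3/2$. A $5/2$ term would give $v^{-7/2}$, which is what happens for planar graphs, where $D$ itself has a $3/2$ singularity. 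Analytically: the singularity $R(y)$ moves with $y$, so differentiating in $y$ lowers the exponent by exactly one. Integrating a square root in $y$ therefore yields $3/2$, not $5/2$.

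This removes the step you flag as the main obstacle. We have $[x^m]B''\asymp m^{-1/2}R^{-m}$, so $B''(\tau)\to+\infty$ as $\tau\uparrow R$, while $\tau B''(\tau)$ increases from $0$. Hence $\tau B''(\tau)=1$ has a unique root $\tau_0<R$, and subcriticality is automatic, with no rigorous numerics required. Numerics are needed only to evaluate $\rho\approx0.11021$ and $C\approx0.0067912$.

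With these two corrections the rest of your argument goes through as you describe: the square-root branch point of $\tau=x\exp(B'(\tau))$ at $\rho=\tau_0e^{-B'(\tau_0)}$, the $(1-x/\rho)^{3/2}$ term in $C(x)$, and transfer. You still need the routine check that $\rho$ is the only singularity on $|x|=\rho$.
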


\begin{remark}
In Theorem 3.7 of \cite{bodirsky2007enumeration}, the authors enumerate all graphs of treewidth at most~2, not only connected ones. But, in Theorem 6.1 they also deduce the asymptotic fraction of such graphs that are connected, and combining these results yields the statement of our Theorem~\ref{thm:count-tw2}.
\end{remark}
\noindent For larger treewidth, we use the following general bounds.

\begin{theorem}[Lemma 1 and Theorem 2 of \cite{baste2017number}]\label{lem:count-tw}
For integers $v$ and $t$ with $2\le t\le v$,
\begin{equation}
2^{-\frac{t(t+3)}{2}}
t^{-2t-2} \cdot \left(\frac{1}{128e}\cdot \frac{t2^t}{\log t}\,v\right)^{\!v}
\le
|\mathcal G^{\conn}_{v,\le t}|
\le
2^{-\frac{t(t+1)}{2}}
t^{-t} \cdot \left(t2^tv\right)^{\!v} \label{eq:bns}
\end{equation}
\end{theorem}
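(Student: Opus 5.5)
Since this is a cited result, the plan below reconstructs the two bounds of \cite{baste2017number} separately; the upper bound is a direct count, and the lower bound is an outline whose main step I have not worked through. Throughout, connected labeled graphs of treewidth at most $t$ are compared with labeled $t$-trees.

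\textbf{Upper bound.} Every graph $H$ on $[v]$ with $\tw(H)\le t$ and $v\ge t+1$ is a spanning subgraph of some labeled $t$-tree on $[v]$. A $t$-tree is a maximal graph of treewidth $t$: start from a $(t+1)$-clique and repeatedly add a vertex adjacent to an existing $t$-clique. The plan is to bound the number of pairs ($t$-tree, spanning subgraph).
\begin{enumerate}
\item By the Beineke--Pippert formula, the number of labeled $t$-trees on $[v]$ is $\binom{v}{t}\bigl(t(v-t)+1\bigr)^{v-t-2}$.
\item Each $t$-tree has exactly $tv-\binom{t+1}{2}$ edges, so it has at most $2^{tv-t(t+1)/2}$ spanning subgraphs.
\item Multiplying, using $\binom{v}{t}\le v^t/t!$ and $t(v-t)+1\le tv$, gives at most
\[
\frac{v^t}{t!}\,(tv)^{v-t-2}\,2^{tv}\,2^{-t(t+1)/2}\;\le\; 2^{-t(t+1)/2}\,t^{-t}\,(t2^t v)^v .
\]
\end{enumerate}
Connectedness is only a restriction, so the same bound holds for $|\mathcal G^{\conn}_{v,\le t}|$.

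\textbf{Lower bound (plan, not verified).} I would give an explicit construction producing many distinct connected graphs of treewidth at most $t$.
\begin{enumerate}
\item Fix a small root clique on $t$ labeled vertices.
\item Add the remaining vertices one at a time. Each new vertex picks a ``parent'' among earlier vertices, which gives roughly $v!$ labeled choices and keeps the graph connected.
\item The new vertex also picks an arbitrary subset of a $t$-clique of the underlying $t$-tree containing its parent, which gives a factor $2^t$ per vertex.
\item Every graph so produced is a spanning subgraph of a $t$-tree containing a spanning tree, so it is connected with treewidth at most $t$.
\end{enumerate}
To also collect the factor $t$ per vertex, I would let the new vertex choose which $t$-clique around the parent to attach to, which gives about $t$ options.

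\textbf{Main obstacle.} The hard step is injectivity: different choice sequences can produce the same graph, because parent choices and subset choices interact. I would first restrict to choice sequences that can be decoded from the resulting graph, for instance by insisting that the edge to the parent is always present and that vertices are added in increasing label order within blocks. I would then bound the multiplicity of each graph by $(C\log t)^v$ using a counting argument over the possible decompositions. This accounts for the $\frac{1}{128e\log t}$ loss per vertex. The prefactor $2^{-t(t+3)/2}t^{-2t-2}$ comes from the initial clique and from boundary effects at the start and end of the construction. If the decoding argument proves difficult, I would instead invoke Lemma~1 and Theorem~2 of \cite{baste2017number} directly, as the statement does.
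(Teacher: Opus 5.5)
\textbf{Upper bound.} Your upper bound is correct. It is the same argument \cite{baste2017number} use: every graph of treewidth at most $t$ on at least $t+1$ vertices is a spanning subgraph of a labeled $t$-tree, the Beineke--Pippert formula counts these $t$-trees, and each has $tv-\binom{t+1}{2}$ edges. Your final comparison then reduces to $t!\,t^2v^2\ge 1$, and restricting to connected graphs costs nothing, as the paper notes.

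One small repair is needed. Replacing $(t(v-t)+1)^{v-t-2}$ by $(tv)^{v-t-2}$ gives an upper bound only when $v\ge t+2$; for $v\le t+1$ the exponent is negative and the inequality reverses. Your containment claim was also stated only for $v\ge t+1$. So handle $v\in\{t,t+1\}$ directly. There the trivial bound $2^{\binom v2}$ lies below the right-hand side, which equals $t^t2^{\binom t2}$ at $v=t$ and $t(t+1)^{t+1}2^{\binom{t+1}{2}}$ at $v=t+1$.

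\textbf{Lower bound.} The lower bound is not proved, and the gap is exactly the step you label the main obstacle. In your scheme the final graph records, for each new vertex $w$, only the set $S$ of its neighbours among earlier vertices. The parent may be any vertex of $S$, and every $t$-clique $K\supseteq S$ of the current $t$-tree produces the same new edges. Yet different choices of $K$ give different underlying $t$-trees, and hence different options at all later steps. Requiring the parent edge to be present secures connectivity but does not make $K$ recoverable.

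The proposal contains no argument bounding the resulting multiplicity by $(C\log t)^v$. The constants $\frac{1}{128e}$ and $2^{-t(t+3)/2}t^{-2t-2}$ are likewise asserted rather than derived. This decoding and multiplicity analysis is the whole substance of the lower bound in \cite{baste2017number}, so as written this half is a heuristic, not a proof.

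\textbf{The citation fallback.} Citing \cite{baste2017number} is what the paper itself does, but their bounds count \emph{all} graphs of treewidth at most $t$. The citation therefore gives a lower bound on $|\mathcal G^{\conn}_{v,\le t}|$ only together with a further observation. That observation, made in the paper's remark after the statement, is that every graph produced by their lower-bound construction is connected.
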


\begin{remark}
The bounds in \cite{baste2017number} are stated for all graphs of treewidth at most $t$. But, their upper bound of course bounds all connected such graphs also, while the graphs constructed to prove their lower bound are all connected and thus also give a lower bound on connected such graphs.
\end{remark}

Unfortunately, the lower bound of \eqref{eq:bns} is very far from tight for small $t$. So, we instead rely on the work of \cite{castellvi2024chordal} giving precise asymptotics for the subset of connected \emph{chordal} graphs of treewidth at most $t$, which then gives a lower bound on $|\sG_{v, \leq t}^{\conn}|$. A graph is \emph{chordal} if every cycle of length at least four has a chord.
\begin{theorem}[Theorem 1.1 in \cite{castellvi2024chordal}]
\label{thm:chordal}
For every $t\ge 2$, there exist constants $c_t>0$ and $\gamma_t>1$ such that
\[
|\mathcal G^{\conn}_{v,\le t}|
\ge
c_t\,v^{-5/2}\gamma_t^{\,v}v!
\]
for all sufficiently large $v$. Moreover, the constants $c_t$ and $\gamma_t$ can be approximated numerically; for several small values of $t$, the corresponding values of $1/\gamma_t$ are listed in Table~\ref{table:gamma}.
\end{theorem}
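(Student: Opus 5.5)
The plan is to deduce the statement from an inclusion of graph classes together with the enumeration of connected chordal graphs of bounded clique number in \cite{castellvi2024chordal}. Nothing probabilistic or algorithmic is needed.

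\textbf{Step 1: reduce to chordal graphs.} Let $\mathcal{C}_{v,t}$ denote the set of labelled connected chordal graphs on $[v]$ with clique number at most $t+1$. I will show $\mathcal{C}_{v,t}\subseteq \sG^{\conn}_{v,\le t}$ using the standard characterization of chordal graphs: a graph is chordal if and only if it has a tree decomposition whose bags are exactly its maximal cliques (a clique tree). Such a decomposition exists, for instance, by induction along a perfect elimination ordering: each simplicial vertex, together with its neighbourhood, forms a bag attached to a bag containing that neighbourhood. Every bag of this decomposition has at most $t+1$ vertices, so its width is at most $t$, and hence $\tw(H)\le t$ for every $H\in\mathcal{C}_{v,t}$. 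Consequently $|\sG^{\conn}_{v,\le t}|\ge |\mathcal{C}_{v,t}|$.

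\textbf{Step 2: import the chordal asymptotics.} The main result of \cite{castellvi2024chordal} gives, for each fixed $t\ge 2$,
\[
|\mathcal{C}_{v,t}| \sim c_t\, v^{-5/2}\gamma_t^{\,v} v!
\]
with explicit, numerically computable constants. Their proof goes through a singularity analysis of the exponential generating function of $t$-connected chordal pieces, and that analysis is also how the constants are computed. Combined with Step 1, this yields the displayed lower bound for all sufficiently large $v$. The $\sim$ leaves slack: replacing $c_t$ by $c_t/2$ gives a clean inequality for all large $v$. The entries of Table~\ref{table:gamma} are simply their reported values of $1/\gamma_t$.

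\textbf{Step 3: check $\gamma_t>1$.} To confirm that $\gamma_t>1$ (indeed $\gamma_t\ge e$), note that trees are connected chordal graphs with clique number $2\le t+1$. Hence $|\mathcal{C}_{v,t}|\ge v^{v-2}$ by Theorem~\ref{thm:cayley}. By Stirling's formula,
\[
v^{v-2} \asymp v^{-5/2} e^{v} v!,
\]
so comparing exponential growth rates forces $\gamma_t\ge e>1$.

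\textbf{Main obstacle.} The only point requiring care is matching conventions with \cite{castellvi2024chordal}. I expect they parametrize by clique number $\omega\le t+1$ (equivalently, treewidth $\le t$ for chordal graphs), and that their counts are of labelled connected graphs. If instead they state results for all chordal graphs, I would pass to the connected ones through the exponential formula, $G(x)=e^{C(x)}$. This leaves the radius of convergence, and hence $\gamma_t$, unchanged, and changes the $v^{-5/2}$ prefactor only by a constant factor. In the subcritical regime this is exactly the same argument used for Theorem~\ref{thm:count-tw2}.
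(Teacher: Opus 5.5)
Your proposal is correct and follows the paper's reasoning. The paper likewise notes that connected chordal graphs of treewidth at most $t$ (equivalently, clique number at most $t+1$) form a subfamily of $\mathcal G^{\conn}_{v,\le t}$, and imports the asymptotic count $c_t v^{-5/2}\gamma_t^{v} v!$ for that subfamily directly from \cite{castellvi2024chordal}, which it cites as already treating the connected case, so your exponential-formula fallback is not needed. Your Step~3 comparison with Cayley's formula, giving $\gamma_t\ge e$, is a harmless extra check that the paper does not spell out.
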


\begin{table}
\begin{center}
\begin{tabular}{c||ccccccc}
    \hline
    $t$
& $1$
& $2$
& $3$
& $4$
& $5$
& $6$
& $7$
\\
\hline
$1/\gamma_{t}$
& $0.36788$
& $0.14665$
& $0.07703$
& $0.04444$
& $0.02657$
& $0.01608$
& $0.00974$
\\ \hline
\end{tabular}
\caption{Approximate numerical values of $1/\gamma_{t}$ for $1\le t\le 7$, as appear in Theorem~\ref{thm:chordal}. The value for $t=1$ follows from Cayley's formula, and the values for $2\le t\le7$ are as given in Table~1 of \cite{castellvi2024chordal}.}\label{table:gamma}
\end{center}
\end{table}

\subsection{Matrix multiplication exponents}
\label{sec:mm-prelim}

We will use matrix multiplication exponents to state several running-time bounds. Let $\omega$ denote the usual square matrix multiplication exponent. In our numerical claims, we use the bound $\omega < 2.371177$ from \cite{dupont2026improving}.

More generally, for $a,b,c\ge 0$, let $\omega(a,b,c)$ denote the rectangular matrix multiplication exponent. That is, $\omega(a,b,c)$ is the infimum over all $\gamma$ such that an $n^a\times n^b$ matrix can be multiplied by an $n^b\times n^c$ matrix in $n^{\gamma+o(1)}$ arithmetic operations. Thus $\omega=\omega(1,1,1)$. We use the convention $\omega(a,b,0)=a+b$ corresponding to the input-size bound for multiplying an $n^a\times n^b$ matrix by an $n^b\times 1$ vector. In particular, $\omega(1,1,0)=2$. We also use the standard cyclic and transpose symmetries, which imply that $\omega(a,b,c)$ is invariant under all permutations of $(a,b,c)$. Thus a product of shape $(n^\kappa\times n)\cdot(n\times n)$ has exponent $\omega(\kappa,1,1)=\omega(1,1,\kappa)$. We will also use the trivial output-size lower bound
\[
\omega(a,b,c)\ge a+c,
\]
since the product has $n^{a+c}$ output entries. In particular, $\omega(1,1,\kappa)\ge 1+\kappa$.

For later numerical comparisons, we will use the bounds on rectangular matrix multiplication exponents given in Table~\ref{tab:rectangular-exponents}. For $\kappa=1/2,3/2,2$, we use the bounds of \cite{alman2024asymmetry}; for $\kappa=1$, we use \cite{dupont2026improving}; and for $\kappa=5/2,3$, we use the bounds of \cite{vassilevska2024new}.
\begin{table}[t]
\centering
\vspace{1em}
\begin{tabular}{c||c|c|c|c|c|c|c}
\hline
$\kappa$ & $0$ & $1/2$ & $1$ & $3/2$ & $2$ & $5/2$ & $3$ \\
\hline
Upper bound on $\omega(1,1,\kappa)$
& 2.0000 & 2.0428 & 2.3712 & 2.7947 & 3.2501 & 3.7205 & 4.1989 \\
\hline
\end{tabular}
\caption{Upper bounds on rectangular matrix multiplication exponents. The numerical bounds use \cite{alman2024asymmetry} for $\kappa=1/2,3/2,2$, \cite{dupont2026improving} for $\kappa=1$, and \cite{vassilevska2024new} for $\kappa=5/2,3$.}
\label{tab:rectangular-exponents}
\end{table}
Here the entry $\kappa=0$ follows from the convention $\omega(1,1,0)=2$, and the entry $\kappa=1$ is the square matrix multiplication exponent.
For $\kappa\ge3$, partitioning the long dimension into blocks gives
\[
\omega(1,1,\kappa)
\le
\omega(1,1,3)+\kappa-3.
\]
To see this, partition the long output dimension into $n^{\kappa-3}$ blocks of size $n^3$ and apply the exponent $\omega(1,1,3)$ to each block. We use this bound for the small values of $\lambda$ on the left of Figure~\ref{fig:four-envelope} and in Corollary~\ref{cor:intro-asymptotic-envelopes}.

We note that all statements involving quantities of the form $n^{\omega(a,b,c)+o(1)}$ are interpreted with $a,b,c$ fixed constants and $n\to\infty$.
\section{Analysis of low-degree statistics}
\label{sec:stats}
In this section, we introduce the statistics associated with a graph family and develop the framework needed for both detection and recovery in the planted clique problem. We focus on the properties of the low-degree polynomials we propose to compute, and leave the matter of showing that these can be approximated efficiently to the next section.

\subsection{Separation by low-degree polynomials}

Recall from Section~\ref{sec:signed-count} that our plan is to compute the functions, for $\sH$ a family of labelled pattern graphs $H$ on some number of vertices $2 \leq v = v(n) < n$,
\[
F_{\sH}(X) = \sum_{H \in \sH} T_H(X) = \sum_{H \in \sH} \sum_{\phi: V(H) \hookrightarrow [n]} \chi_{H, \phi}(X),
\]
where $\chi_{H, \phi}$ is the Boolean Fourier character associated to the graph $H$ embedded by $\phi$. Recall also that Theorem~\ref{thm:general-separation} gives a general statement about when such $F_{\sH}$ achieves strong separation in the planted clique problem, which in turn implies achieving strong detection. We always assume below that $\sH$ obeys Assumption~\ref{asm:H}.

To show that $F_{\sH}$ achieves strong separation, we must control the four quantities $\E_{\QQ}[F_{\sH}(X)]$, $\Var_{\QQ}[F_{\sH}(X)]$, $\E_{\PP}[F_{\sH}(X)]$, and $\Var_{\PP}[F_{\sH}(X)]$. We do this below, first treating the simpler null model $\QQ$ and then the planted model $\PP$.

\subsubsection{Null model analysis}

For the null model, we show the following.
\begin{lemma}
    \label{lem:detection-null}
    \label{lem:null-moments}
    For any isomorphism-closed family $\sH$ of graphs on $[v]$ having no isolated vertices, we have
    \begin{align*}
        \E_{\QQ} F_{\sH}(X) &= 0, \\
        \Var_{\QQ} F_{\sH}(X) &= |\sH| \cdot (n)_v \cdot v!\,.
    \end{align*}
\end{lemma}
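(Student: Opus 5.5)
Under $\QQ$ the entries $X_{ij}$, $i<j$, are i.i.d.\ uniform signs, so the characters $\chi_S(X)=\prod_{e\in S}X_e$ for $S\subseteq\binom{[n]}{2}$ form an orthonormal family: $\E_{\QQ}\chi_S=\1\{S=\emptyset\}$ and $\E_{\QQ}\chi_S\chi_{S'}=\1\{S=S'\}$. The plan is to write $F_{\sH}$ in this basis and read off both moments.

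For the mean: every $H\in\sH$ has no isolated vertices and $v\ge 2$, so $E(H)\neq\emptyset$. For injective $\phi$, the image edge set $\phi(E(H))$ is a nonempty subset of $\binom{[n]}{2}$ (injectivity keeps distinct edges distinct and keeps each edge non-loop). Hence $\E_{\QQ}\chi_{H,\phi}=0$ termwise, and linearity gives $\E_{\QQ}F_{\sH}=0$.

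For the variance, since the mean is zero we expand
\[
\E_{\QQ}F_{\sH}^2=\sum_{H,H'\in\sH}\ \sum_{\phi,\phi'}\1\{\phi(E(H))=\phi'(E(H'))\}.
\]
The key step is to count these coincident pairs. Because $H$ and $H'$ have no isolated vertices, equality of the image edge sets forces $\phi(V(H))=\phi'(V(H'))$, since the vertex set is the set of endpoints of edges. Then $\pi\colonequals\phi'^{-1}\circ\phi$ is a well-defined permutation of $[v]$, and it maps $E(H)$ onto $E(H')$, so $\pi\in\Iso(H,H')$. Conversely, any pair $(\phi,\pi)$ with $\phi$ injective and $\pi\in\Iso(H,H')$ yields the coincident pair $(\phi,\phi\circ\pi^{-1})$. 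This gives a bijection, so for fixed $H,H'$ the inner sum equals $(n)_v\,|\Iso(H,H')|$.

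Summing over $H'$ and applying Lemma~\ref{lem:iso-sum-vfact}, which uses that $\sH$ is isomorphism-closed, gives $(n)_v\,v!$ for each $H$. Summing over $H$ then gives $|\sH|\,(n)_v\,v!$. The only real subtlety is verifying the bijection, in particular that the absence of isolated vertices forces equal vertex images, so that $\pi$ is a genuine permutation of $[v]$; the remaining steps are routine.
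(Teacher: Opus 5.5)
Your proof is correct and follows the paper's argument. Orthogonality of characters under $\QQ$ gives zero mean and reduces the second moment to counting pairs with $\phi(E(H))=\psi(E(H'))$, which equals $(n)_v|\Iso(H,H')|$ before summing via Lemma~\ref{lem:iso-sum-vfact}; your explicit bijection $(\phi,\pi)\mapsto(\phi,\phi\circ\pi^{-1})$, which uses the absence of isolated vertices to force equal vertex images, spells out the step the paper asserts in one line.
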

\begin{proof}
    The first claim is immediate since we have
    \[
    \E_{\QQ} \chi_{H, \phi}(X) = \prod_{\{a,b\}\in E(H)} \E_{\QQ}[X_{\phi(a)\phi(b)}] = 0
    \]
    as under $X \sim \QQ$ the $X_{ij}$ are i.i.d.\ with law $\mathrm{Unif}(\{\pm 1\})$.

    Thus also $\Var_{\QQ} F_{\sH}(X) = \E_{\QQ} F_{\sH}(X)^2$. For this second moment, for $H, H' \in \sH$, we have that for any $\phi \in\Inj(V(H),[n])$ and $\psi \in \Inj(V(H'), [n])$,
\begin{align*}
\E_{\QQ} \chi_{H,\phi}(X)\chi_{H',\psi}(X)
&=
\E_{\QQ} \prod_{e\in E(H)} X_{\phi(e)} \prod_{f \in E(H')}X_{\psi(f)} \\
&=
\prod_{\{a, b\} \in \binom{[n]}{2}} \E_{\QQ} X_{a, b}^{\1\{\{a, b\} \in \phi(E(H))\} + \1\{\{a, b\} \in \psi(E(H'))\}},
\end{align*}
    where we group according to how many times each edge appears. Again by independence and since $X_{a, b}^2 = 1$, this expression is 1 if every edge appears either in both $\phi(E(H))$ and $\psi(E(H'))$ or in neither, and is 0 otherwise. But this happens precisely if $H$ and $H'$ are isomorphic, so that $\psi^{-1} \circ \phi$ is an isomorphism between them. Thus, we have
    \[
    \E_{\QQ}[T_H(X) T_{H'}(X)] = (n)_v |\Iso(H, H')|.
    \]
    Expanding the second moment,
    \[
\Var_{\QQ}(F_{\mathcal H})
=
\E_{\QQ}[F_{\mathcal H}^2]
=
\sum_{H,H'\in\mathcal H} (n)_v\,|\Iso(H,H')|
=
|\mathcal H|\,(n)_v\,v!
\]
    by Lemma~\ref{lem:iso-sum-vfact}.
\end{proof}

\subsubsection{Planted model analysis}
\label{subsec:planted-moments-embed}

For the planted model, we show the following more complicated estimates. We first give an exact calculation of the first two moments, but which does not give a practical bound on the variance.

For every integer $s$ with $0\le s\le n$, introduce the important quantities
\[
\alpha_s \colonequals \frac{(k)_s}{(n)_s}.
\]
In particular,
\[
\alpha_v = \frac{k}{n} \cdot \frac{k - 1}{n - 1} \cdots \frac{k - v + 1}{n - v + 1}.
\]
Also, given $H, H' \in \sH$ and $\phi \in \Inj(V(H), [n])$ and $\psi \in \Inj(V(H'), [n])$, recall that $\phi(H)$ and $\psi(H')$ may be viewed as labeled graphs on $[n]$ (whose edge sets are given by applying $\phi$ and $\psi$ to the edge sets of $H$ and $H'$, respectively). We write $\phi(H) \Delta \psi(H')$ for the graph formed by the symmetric difference of these edge sets. We recall our notation that $V(\phi(H) \Delta \psi(H'))$ is the set of vertices of such a graph, not including the isolated vertices. We then define
\[
\sigma(H, H'; \phi, \psi) \colonequals |V(\phi(H) \Delta \psi(H'))|.
\]
\begin{lemma}
    \label{lem:detection-planted}
    \label{lem:planted-moments}
    For any isomorphism-closed family $\sH$ of graphs on $[v]$ having no isolated vertices, we have
    \begin{align}
        \E_{\PP} F_{\sH}(X) &= |\sH| \cdot (n)_v \cdot \alpha_v = |\sH| \cdot (k)_v, \\
        \Var_{\PP} F_{\sH}(X) &=
\sum_{H,H'\in\mathcal H}
\sum_{\substack{\phi\in\Inj(V(H),[n]) \\ \psi\in\Inj(V(H'),[n])}}
\bigl(\alpha_{\sigma(H,H';\phi,\psi)}-\alpha_v^2\bigr). \label{eq:VarPF}
    \end{align}
\end{lemma}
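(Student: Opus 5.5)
The plan is to compute everything character by character, conditioning on the clique $C$ first and then averaging over it. Fix $H$ and $\phi$. Under $\PP$, conditional on $C$, the entries $X_{ij}$ with $\{i,j\}\subseteq C$ are deterministically $+1$. The remaining entries are i.i.d.\ uniform on $\{\pm1\}$, independent of $C$. The key elementary identity is therefore the following: for any edge set $E\subseteq\binom{[n]}{2}$,
\[
\E\Bigl[\prod_{e\in E}X_e \,\Big|\, C\Bigr]=\1\{V(E)\subseteq C\},
\]
where $V(E)$ is the set of non-isolated vertices of $E$ as in Definition~\ref{def:H}. Indeed, if some edge of $E$ is not inside $C$, then its factor is an independent mean-zero sign. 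Otherwise every factor equals $1$. Since $C$ is uniform on $\binom{[n]}{k}$, averaging gives $\Pr[S\subseteq C]=\binom{n-s}{k-s}/\binom{n}{k}=(k)_s/(n)_s=\alpha_s$ for any fixed set $S$ with $|S|=s$. Hence $\E_\PP\prod_{e\in E}X_e=\alpha_{|V(E)|}$.

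For the mean, apply this with $E=\phi(E(H))$. Since $H$ has no isolated vertices and $\phi$ is injective, $|V(\phi(H))|=v$. So $\E_\PP\chi_{H,\phi}=\alpha_v$, and summing over the $(n)_v$ injections and the $|\sH|$ graphs gives $|\sH|(n)_v\alpha_v=|\sH|(k)_v$. Note that isomorphism-closure is not even needed for this step.

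For the variance, write $\Var_\PP F_\sH=\E_\PP F_\sH^2-(\E_\PP F_\sH)^2$. Expand $F_\sH^2$ as a sum over pairs $(H,\phi),(H',\psi)$ of $\chi_{H,\phi}\chi_{H',\psi}$. Because $X_e^2=1$, this product equals $\prod_{e\in\phi(H)\Delta\psi(H')}X_e$: edges common to both embeddings cancel. The identity above then gives expectation $\alpha_{\sigma(H,H';\phi,\psi)}$. Writing $(\E_\PP F_\sH)^2$ as the double sum over the same index set of $\alpha_v^2$ and subtracting termwise yields \eqref{eq:VarPF}.

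The only point needing care is the convention when the symmetric difference is empty, so that $\sigma=0$. There the product is identically $1$, and the formula gives $\alpha_0=1$, which is consistent. The other subtle point is the model with a deterministic $C$, under the paper's convention $\PP=\PP_n$. There the averaging step is replaced by $\1\{V(E)\subseteq C\}$ itself, so the stated formula is specifically the computation for the uniformly random clique $\PP_{n,k}$. I expect no real obstacle: the whole proof reduces to the conditional-expectation identity and the hypergeometric probability $\Pr[S\subseteq C]=\alpha_{|S|}$.
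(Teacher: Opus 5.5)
Your proposal is correct and matches the paper's proof essentially step for step: you condition on $C$ so that each character's conditional mean is the indicator that its (symmetric-difference) vertex set lies in $C$, average using $\Pr[S\subseteq C]=\alpha_{|S|}$, and use $X_e^2=1$ to reduce $\chi_{H,\phi}\chi_{H',\psi}$ to the character of $\phi(H)\Delta\psi(H')$. Your remark that the formula is specific to the uniformly random clique also agrees with the paper, whose proof likewise takes $C\sim\mathrm{Unif}(\binom{[n]}{k})$.
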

\begin{proof}
    Let $C \sim \mathrm{Unif}(\binom{[n]}{k})$ denote the planted clique under $G \sim \PP$. For any $H \in \sH$ and $\phi: V(H) \to [n]$, conditional on $C$, we have
    \[
\E_{\PP}[\chi_{H,\phi}(X) \mid C]
=
\prod_{\{a,b\}\in E(H)} \E[X_{\phi(a)\phi(b)}\mid C]
=
\mathbf 1\{\phi(V(H))\subseteq C\},
\]
    since conditional on $C$ the $X_{ij}$ are either equal to 1 if $i, j \in C$ or distributed as $\mathrm{Unif}(\{\pm 1\})$ otherwise. Thus we have
    \[
    \E_{\PP}[\chi_{H,\phi}(X)] = \PP[\phi(V(H))\subseteq C] = \alpha_v,
    \]
    and summing over all injections gives $\E_{\PP}[T_H(X)] = (n)_v \cdot \alpha_v = (k)_v$, and then summing over all $H \in \sH$ gives the first result (recall that by assumption $|V(H)| = v$ for all $H \in \sH$).

    For the second moment, we similarly compute
    \begin{align*}
\E_{C}\E_{\PP}\bigl[\chi_{H,\phi}(X)\chi_{H',\psi}(X)\mid C\bigr]
&=
\E_{C}\E_{\PP}\left[\prod_{\{a, b\} \in \phi(E(H)) \triangle \psi(E(H'))} X_{a,b} \,\,\bigg|\,\, C\right] \\ &=
\PP[V(\phi(H) \Delta \psi(H')) \subseteq C] \\
&= \alpha_{\sigma(H, H'; \phi, \psi)}.
\end{align*}
    Summing over pairs of graphs and embeddings then gives that the second moment is
    \[
\E_{\PP}[F_{\mathcal H}(X)^2]
=
\sum_{H,H'\in\mathcal H}
\sum_{\substack{\phi\in\Inj(V(H),[n]) \\ \psi\in\Inj(V(H'),[n])}}
\alpha_{\sigma(H,H';\phi,\psi)}.
\]
    and the formula for the variance follows.
\end{proof}

Next, we produce a convenient bound on the ``low-degree advantage'' that governs the separation condition \eqref{eq:separation} we are trying to establish. To write down our bound, we introduce some additional parameters associated to the family of graphs $\sH$. Consider choosing $H,H'$ independently and uniformly at random from $\mathcal H$, and $\phi,\psi$ independently and uniformly at random from $\Inj([v],[n])$. We will use the following definition only in regimes where $2v\le n$. Then, for integers $u\in\{0,1,\dots,v\}$ and $\ell\in\{0,1,\dots,u\}$, we define
\[
p_{u,\ell}(\mathcal H)
\colonequals
\Pr\!\left[
\sigma(H,H';\phi,\psi)=2v-u-\ell
\,\,\bigg|\,\,
|\phi([v])\cup \psi([v])|=2v-u
\right].
\]
Equivalently, $p_{u,\ell}(\mathcal H)$ is the conditional probability that exactly $\ell$ vertices in the overlap $\phi([v])\cap \psi([v])$ do not belong to the vertex set of the symmetric difference, $V(\phi(H) \Delta \psi(H'))$.
\begin{lemma} \label{lem:sufficient-conditions-overlap}\label{lem:variance-master}
    Let $\sH$ be an isomorphism-closed set of graphs on $v$ labeled vertices, each having no isolated vertices. Suppose $v = o(\sqrt{k})$. Then, we have
    \begin{align*}
    &\frac{\Var_{\PP}[F_{\sH}(X)] + \Var_{\QQ}[F_{\sH}(X)]}{ (\E_{\PP} F_{\sH}(X) - \E_{\QQ} F_{\sH}(X))^2} \\
    &\hspace{1cm} \leq O\left(\frac{(n / k^2)^v(v + 1)!}{|\mathcal H|}
+ \max_{\substack{0\le \ell \le u \leq v - 1\\ (u,\ell)\ne(0,0)}}
  \frac{v^{2u}\,n^{\ell}}{k^{u + \ell}}\,p_{u,\ell}(\mathcal H) + \max_{0 \leq \ell \leq v - 1}
  \frac{(v+1)!\,n^{\ell}}{k^{v + \ell}}\,p_{v,\ell}(\mathcal H)
\right).
\end{align*}
\end{lemma}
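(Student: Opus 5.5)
The plan is to insert the exact moment formulas of Lemma~\ref{lem:null-moments} and Lemma~\ref{lem:planted-moments} into the ratio, then organize the double sum in \eqref{eq:VarPF} by the overlap pattern of the two embeddings. The denominator is $(|\sH|(k)_v)^2 = |\sH|^2 (n)_v^2\alpha_v^2$. For the null contribution, Lemma~\ref{lem:null-moments} gives
\[
\frac{\Var_\QQ F_\sH}{(\E_\PP F_\sH)^2} = \frac{|\sH|(n)_v v!}{|\sH|^2 (k)_v^2}.
\]
Since $v=o(\sqrt k)$, we have $(k)_v = k^v(1-o(1))$ and $(n)_v\le n^v$, so this term is $O((n/k^2)^v v!/|\sH|)$. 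This is absorbed into the first term of the claimed bound.

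For the planted variance, I will rewrite \eqref{eq:VarPF} as an expectation over independent uniform $H,H',\phi,\psi$:
\[
\Var_\PP F_\sH = |\sH|^2 (n)_v^2\, \E\bigl[\alpha_\sigma - \alpha_v^2\bigr].
\]
Dividing by the denominator gives $\E[\alpha_\sigma/\alpha_v^2] - 1$. I will then condition on the overlap size $u = |\phi([v])\cap\psi([v])|$. The probability $q_u$ that two uniform injections overlap in exactly $u$ vertices is the hypergeometric probability
\[
q_u = \binom{v}{u}\binom{n-v}{v-u}\Big/\binom{n}{v} \le \frac{v^{2u}}{u!\,n^u}\,(1+o(1)).
\]
Given $u$ and $\ell$, we have $\sigma = 2v-u-\ell$. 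Because $\alpha_s \approx (k/n)^s$ up to $1+o(1)$ factors uniformly for $s\le 2v = o(\sqrt k)$, the ratio is $\alpha_{2v-u-\ell}/\alpha_v^2 \approx (n/k)^{u+\ell}$. This yields
\[
\E[\alpha_\sigma/\alpha_v^2] \lesssim \sum_{u=0}^v \sum_{\ell=0}^u q_u\, p_{u,\ell}\,(n/k)^{u+\ell}.
\]

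The $(u,\ell)=(0,0)$ term must be handled precisely, since it nearly cancels the $-1$. Here $\sigma = 2v$ exactly, and $q_0\,\alpha_{2v}/\alpha_v^2 - 1$ has to be shown to be $\le 0$ or at least $o(1)$. I expect this is the delicate step. The plan is to use $\alpha_{2v}\le\alpha_v^2$, which holds because $(k-v-i)/(n-v-i)\le (k-i)/(n-i)$, so the $(0,0)$ contribution minus $1$ is nonpositive. This cancellation is exactly why $(0,0)$ is excluded from the maximum.

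For the remaining terms with $u\le v-1$, each summand is bounded by $\frac{v^{2u}}{u!n^u}(n/k)^{u+\ell}p_{u,\ell} = \frac{v^{2u} n^\ell}{u!\,k^{u+\ell}}p_{u,\ell}$. The number of pairs $(u,\ell)$ is $O(v^2)$, so replacing the sum by the maximum costs a polynomial-in-$v$ factor. I would either absorb that factor by using the $1/u!$ (dropping it only when $u$ is large), or simply note that the paper's $O(\cdot)$ and downstream regime tolerate it. If an exact $O(1)$ is required, I would bound $\sum_\ell$ by $(v+1)\max_\ell$ and handle $\sum_u$ through $1/u!$.

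The $u=v$ case is treated separately. Here $\phi([v])=\psi([v])$, so $q_v = 1/\binom{n}{v}\le v!/(n)_v$, and $\alpha_{v-\ell}/\alpha_v^2\approx n^{v+\ell}/k^{v+\ell}$. Summing over the $v+1$ values of $\ell$ gives the term $\frac{(v+1)!\,n^\ell}{k^{v+\ell}}p_{v,\ell}$, where the $(v+1)!$ comes from $v!$ times the count of $\ell$ values. Finally, I will justify the uniform $1+o(1)$ approximations $(k)_s\approx k^s$ and $(n)_s\approx n^s$ for $s\le 2v$ from the hypothesis $v^2=o(k)$.
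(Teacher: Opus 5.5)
The decomposition you use is the paper's own. Your $q_u\,p_{u,\ell}$ is exactly the paper's $M_{u,\ell}/(|\sH|^2(n)_v^2)$. You also match it on the nonpositivity of the disjoint term via $\alpha_{2v}\le\alpha_v^2$, and on the per-term bounds $(1+o(1))\frac{v^{2u}n^\ell}{u!\,k^{u+\ell}}p_{u,\ell}(\sH)$ for $u<v$ and $(1+o(1))\,v!\,n^\ell k^{-(v+\ell)}p_{v,\ell}(\sH)$ for $u=v$.

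\textbf{Gap in the $u=v$ case.} You sum over all $v+1$ values $\ell=0,\dots,v$ and put the result into one maximum, which gives $\max_{0\le\ell\le v}$. The lemma's third term runs only over $0\le\ell\le v-1$. The excluded term $(v+1)!\,(n/k^2)^v\,p_{v,v}(\sH)$ is not covered by anything else in your bound. It is not harmless: it is the diagonal contribution that competes with the null variance. A crude estimate such as $p_{v,v}\le 1$ would give $(v+1)!\,\lambda^{-2v}\to\infty$ at $k=\lambda\sqrt n$, which would destroy every application.

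The missing step is to evaluate this term exactly. Conditional on $\phi([v])=\psi([v])$, the map $\psi^{-1}\circ\phi$ is a uniform permutation of $[v]$, independent of $H,H'$. The case $\ell=v$ means the symmetric difference is empty, i.e.\ this permutation lies in $\Iso(H,H')$. Hence, by Lemma~\ref{lem:iso-sum-vfact},
\[
p_{v,v}(\sH)=\frac{1}{|\sH|^2\,v!}\sum_{H,H'\in\sH}|\Iso(H,H')|=\frac{1}{|\sH|}.
\]
So this term equals $(v+1)!\,(n/k^2)^v/|\sH|$ and is absorbed into the first term. This is why the first term carries $(v+1)!$ rather than the $v!$ produced by the null variance. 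It is a second use of isomorphism-closedness that your proposal never makes.

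\textbf{Sum-to-maximum step for $u\le v-1$.} Bounding $\sum_\ell$ by $(v+1)\max_\ell$ leaves a stray factor of order $v$. Appealing to downstream tolerance does not prove the lemma as stated. Use instead that for fixed $u$ there are only $u+1$ admissible values of $\ell$. The double sum is then at most
\[
\max_{\substack{0\le\ell\le u\le v-1\\(u,\ell)\ne(0,0)}}\Bigl(\frac{v^{2u}n^\ell}{k^{u+\ell}}\,p_{u,\ell}(\sH)\Bigr)\cdot\sum_{u\ge 0}\frac{u+1}{u!},
\]
which is a genuine $O(1)$ multiple of the maximum; this is what the paper does.
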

\noindent We defer the proof to Appendix~\ref{apx:sufficient-condition}, but the idea is simply that the only complicated quantity above is $\Var_{\PP}[F_{\sH}(X)]$, and we may bound this by establishing suitable asymptotics for the $\alpha_t$ and using the $p_{u, \ell}(\sH)$ to expand according to the value of $\sigma(H, H'; \phi, \psi)$.

We now move towards applying the more specific assumptions on $\sH$ that appear in Theorem~\ref{thm:general-separation}. Connectedness of the graphs in $\sH$ gets us the following.
\begin{lemma}\label{lem:puu-zero-connected}\label{lem:connected-overlap}
If each $H\in\mathcal H$ is connected, then, for every $1\le u\le v-1$, $p_{u,u}(\mathcal H) = 0$.
\end{lemma}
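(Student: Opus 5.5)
The plan is to show deterministically that the event defining $p_{u,u}(\sH)$ is impossible. Fix $H, H' \in \sH$ and injections $\phi,\psi$ with $|\phi([v]) \cap \psi([v])| = u$, where $1 \le u \le v-1$. We will prove that $\sigma(H,H';\phi,\psi) > 2v - 2u$. That is, it cannot be that all $u$ vertices of the overlap $S \colonequals \phi([v]) \cap \psi([v])$ fail to lie in $V(\phi(H)\Delta\psi(H'))$. Since $p_{u,u}$ is a conditional probability of an event that never occurs, it equals $0$.

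First we note which vertices are always in the symmetric difference. Since $H$ has no isolated vertices (a connected graph on $v \ge 2$ vertices), each vertex $x \in \phi([v]) \setminus S$ is incident to some edge of $\phi(H)$. Such an edge contains $x \notin \psi([v])$, so it is not an edge of $\psi(H')$. Hence it lies in the symmetric difference, and $x \in V(\phi(H)\Delta\psi(H'))$. The symmetric statement holds for $\psi([v]) \setminus S$. This accounts for the $2v-2u$ non-overlap vertices, so $\sigma = 2v - u - \ell$ where $\ell$ is the number of overlap vertices missing from the symmetric difference. It remains to show $\ell \le u-1$.

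The key step uses connectedness. Since $u \le v-1$, the set $\phi([v]) \setminus S$ is nonempty, and since $u \ge 1$, $S$ is nonempty. Because $\phi(H)$ is a connected graph on vertex set $\phi([v])$, there is an edge $\{x,y\}$ of $\phi(H)$ with $x \in S$ and $y \in \phi([v]) \setminus S$. This edge is not in $\psi(H')$, since $y \notin \psi([v])$. So $\{x,y\} \in \phi(H)\Delta\psi(H')$, which gives $x \in V(\phi(H)\Delta\psi(H'))$ with $x \in S$. Therefore at most $u-1$ overlap vertices can be absent, so $\ell \ne u$ and $p_{u,u}(\sH)=0$.

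No step here is a real obstacle. The only care needed is to use both hypotheses $u \ge 1$ and $u \le v-1$ so that the crossing edge exists, and to note that the conditioning event has positive probability when $2v \le n$, so that $p_{u,u}$ is well defined.
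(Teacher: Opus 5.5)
Your proposal is correct and follows essentially the same argument as the paper: connectedness of $H$, together with $1 \le u \le v-1$, gives an edge of $\phi(H)$ between the overlap and $\phi([v]) \setminus \psi([v])$, which survives in the symmetric difference and so puts an overlap vertex into $V(\phi(H)\Delta\psi(H'))$. Your extra check that all $2v-2u$ non-overlap vertices always lie in the symmetric difference makes explicit the equivalence between $\sigma = 2v-2u$ and $\ell = u$, which the paper takes directly from the definition of $p_{u,\ell}$.
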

\begin{proof}
Fix $H,H'\in\mathcal H$ and injections $\phi,\psi\in\Inj([v],[n])$ such that
\[
|\phi([v])\cup \psi([v])|=2v-u.
\]
Then the vertex intersection of the two embeddings, namely $\phi([v])\cap \psi([v])$, has size $u$. If we have $\sigma(H,H';\phi,\psi)=2v-2u$, or equivalently if $\ell=u$, then every vertex in this intersection is absent from the vertex set of the symmetric difference. This means that no vertex in $\phi([v])\cap \psi([v])$ is incident to any edge of $\phi(E(H)) \triangle \psi(E(H'))$.

However, since $H$ and $H'$ are connected and $u<v$, this is impossible: we have $\phi([v]) \setminus \psi([v]) \neq \emptyset$, and since $H$ is connected this set must be connected to $\phi([v]) \cap \psi([v])$. In particular, there exists some $x \in \phi([v]) \setminus \psi([v])$ and $y \in \phi([v]) \cap \psi([v])$ such that $\{x, y\} \in \phi(E(H))$. Further, since $x \notin \psi([v])$, we have $\{x, y\} \in \phi(E(H)) \triangle \psi(E(H'))$ as well, and so the vertex $y$ gives a contradiction.

Hence there are no $\phi, \psi$ such that $\ell = u$, and so $p_{u,u}(\mathcal H)=0$.
\end{proof}

\subsubsection{Sufficient condition on \texorpdfstring{$\sH$}{H}: Proof of Theorem~\ref{thm:general-separation}}
\label{sec:general-separation}

We now apply the general criterion above to the special kinds of families of graphs that are covered by Theorem~\ref{thm:general-separation}.

\begin{proof}[Proof of Theorem~\ref{thm:general-separation}]
    By the assumptions of the Theorem we may apply Lemma~\ref{lem:sufficient-conditions-overlap}, and it suffices to show that the bound provided there is $o(1)$:
    \[
    \frac{(n / k^2)^v\, (v + 1)!}{|\mathcal H|} + \max_{\substack{0\le \ell \le u < v\\(u,\ell)\ne(0,0)}} \frac{v^{2u}\,n^{\ell}}{k^{u + \ell}}\,p_{u,\ell}(\mathcal H) + \max_{0\le \ell < v} \frac{(v+1)!\,n^{\ell}}{k^{v + \ell}}\,p_{v,\ell}(\mathcal H) = o(1).
    \]
    We bound the three terms individually.

    For the first term, choose $c_0<c(\mathcal H)$ such that $c_0\lambda^2>1$. By the definition of $c(\mathcal H)$, for all sufficiently large $v$ we have $|\sH| \geq c_0^v v!$. Combining this with $k \geq \lambda\sqrt n$ and $v=o(\sqrt{k})$, we have
    \[
    \frac{(n / k^2)^v\, (v + 1)!}{|\mathcal H|} \leq (c_0\lambda^2)^{-v}(v + 1) = o(1)
    \]
    since $c_0\lambda^2 > 1$ and $v = \omega(1)$ by assumption.

    For the second term, since every graph in $\mathcal H$ is connected, Lemma~\ref{lem:puu-zero-connected} gives $p_{u,u}(\mathcal H)=0$ for all $1\le u\le v-1$. Thus it suffices to consider $1\le u<v$ and $0\le \ell\le u-1$. Using $p_{u,\ell}(\mathcal H)\le 1$ and $k \geq \lambda\sqrt n$, we obtain
\[
v^{2u}n^\ell k^{-(u+\ell)}p_{u,\ell}(\mathcal H)
\le
v^{2u}n^\ell (\lambda\sqrt n)^{-(u+\ell)}
=
\lambda^{-(u+\ell)}v^{2u}n^{(\ell-u)/2}\le (1 \wedge \lambda)^{-2v}v^{2v}n^{-1/2}
\]
since $\ell\le u-1$. Moreover, from $v< \frac{\log n}{5\log\log n}$, we get $(1 \wedge \lambda)^{-2v}=n^{o(1)}$ and
\[
v^{2v}=2^{2v\log v}\le n^{2/5}
\]
for all sufficiently large $n$. Hence
\[
\max_{\substack{1\le u<v\\0\le \ell\le u-1}}
v^{2u}n^\ell k^{-(u+\ell)}p_{u,\ell}(\mathcal H)
=o(1).
\]

Finally, we bound the third error term. The missing diagonal endpoint $\ell=0$ is harmless:
\[
(v+1)!k^{-v}p_{v,0}(\mathcal H)
\le
(v+1)!(\lambda\sqrt n)^{-v}
=o(1).
\]
For the remaining values of $\ell$, and in fact uniformly for $0\le \ell\le v-1$, bounding by $p_{v,\ell}(\mathcal H)\le 1$ gives
\begin{align*}
(v+1)! \, n^\ell k^{-(v+\ell)}p_{v,\ell}(\mathcal H)
&\le
(v+1)! \, n^\ell (\lambda\sqrt n)^{-(v+\ell)}
 \\ &=
(v+1)!\,\lambda^{-(v+\ell)}n^{(\ell-v)/2} \\ &\le (v+1)!\,(1 \wedge \lambda)^{-2v}n^{-1/2}
\end{align*}
because $\ell\le v-1$. By Stirling's formula and the assumption $v<\frac{\log n}{5\log\log n}$, we have $(v+1)!\le n^{1/5+o(1)}$, while also $(1 \wedge \lambda)^{-2v}=n^{o(1)}$. Together with the factor $n^{-1/2}$ above, this gives
\[
\max_{0\le\ell< v}
(v+1)! \, n^\ell k^{-(v+\ell)}p_{v,\ell}(\mathcal H)
=o(1).
\]

All three error terms in Lemma~\ref{lem:sufficient-conditions-overlap} are therefore $o(1)$, and it follows that $F_{\sH}(X)$ achieves strong separation, as claimed.
\end{proof}

By the observations in Section~\ref{sec:treewidth}, the family $\sH = \sG_{v, \leq t}^{\conn}$ with $v = \lfloor\sqrt{\log n}\rfloor$ satisfies these assumptions for every fixed $t$. Thus, for $\mathcal H=\mathcal G^{\mathrm{conn}}_{v,\le t}$, Theorem~\ref{thm:general-separation} gives the required statistical separation, while Lemma~\ref{lem:enumerate-bounded-treewidth-patterns} makes the pattern-enumeration cost $n^{o(1)}$. Section~\ref{sec:approx-detection} gives the randomized algorithm based on color coding, and Section~\ref{sec:derandomization} replaces its random choices to prove Theorem~\ref{thm:intro-main-general}.

\subsubsection{Hardness of computing graph-family statistics}
\label{sec:hardness-rich-families}

We now formulate our statement that $F_{\sH}(X)$ for sufficiently rich families of graphs $\sH$ could moreover achieve strong detection for $k = n^{1/2 - \varepsilon}$, violating Conjecture~\ref{conj:clique} on the hardness of this problem. This gives evidence via reduction that the polynomials $F_{\sH}(X)$ should be hard to compute.

To state our result precisely, we introduce the following structural condition on $H \in \sH$. For $H$ a graph and any $S\subseteq V(H)$, recall that $N_H(v)$ denotes the neighborhood of $v$ in $H$, and define
\[
B(S) \colonequals \{v\in V(H)\setminus S: N_H(v) \cap S=\emptyset\},
\]
the set of vertices in $H$ with no neighbor in $S$ and not in $S$ themselves.
For disjoint sets $S,T\subseteq V(H)$, write
\[
E_H(S,T)
\colonequals
\bigl\{\{x,y\}\in E(H):x\in S,\ y\in T\bigr\}.
\]

\begin{definition}\label{def:VC}
We say that a graph $H$ on $v$ vertices is \emph{$\beta$-vertex-connected}, denoted $\mathrm{VC}(\beta)$, for some $\beta>0$ if, for every $S\subseteq V(H)$,
\begin{equation}\label{eq:vcb}
|B(S)|\le \max\left\{2^{-\beta|S|}(v-|S|),\frac{2}{\beta}\log v\right\}.
\end{equation}
\end{definition}
\noindent This is a one-sided pseudorandomness condition with a
logarithmic cutoff. It implies the following bound for random
disjoint vertex subsets.

\begin{lemma}\label{lem:VC-implies}
Assume that $H$ satisfies $\mathrm{VC}(\beta)$ and that
$v=|V(H)|$ is sufficiently large depending only on $\beta$. Fix integers $s,t$ with $0\le s\le |V(H)|$ and $0\le t\le |V(H)|-s$. Sample $S$ uniformly from $\binom{V(H)}{s}$ and then sample $T$ uniformly from $\binom{V(H)\setminus S}{t}$. Then,
\[
\Pr\bigl[E_H(S,T)=\emptyset\bigr]\le 2^{-(\beta/8)st}.
\]
\end{lemma}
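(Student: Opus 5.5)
The plan is to reduce the event $E_H(S,T)=\emptyset$ to containment in a ``no-neighbor'' set, and then apply the $\mathrm{VC}(\beta)$ bound \eqref{eq:vcb} to whichever of $S$ or $T$ makes it useful. First, $E_H(S,T)=\emptyset$ holds iff $T\subseteq B(S)$, and equally iff $S\subseteq B(T)$. Second, the pair $(S,T)$ is a uniformly random ordered pair of disjoint sets of sizes $(s,t)$. So we may equally sample $T$ first and then $S$ uniformly from $\binom{V(H)\setminus T}{s}$. Conditioning on $S$,
\[
\Pr[T\subseteq B(S)\mid S]=\binom{|B(S)|}{t}\Big/\binom{v-s}{t}\le\left(\frac{|B(S)|}{v-s}\right)^{t},
\]
which vanishes if $|B(S)|<t$. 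The symmetric bound holds with the roles of $S$ and $T$ exchanged.

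Write $L\colonequals\frac{2}{\beta}\log v$, and let $s_0$ be the largest integer with $2^{-\beta s_0}(v-s_0)\ge L$; for large $v$ this gives $s_0=\frac1\beta\log v-O_\beta(\log\log v)$. In particular $L/4<s_0<L$. If $s\le s_0$, the first term in \eqref{eq:vcb} is the maximum, so $|B(S)|\le 2^{-\beta s}(v-s)$ and the displayed bound gives $2^{-\beta st}$. Symmetrically, if $t\le s_0$ we get $2^{-\beta st}$ by conditioning on $T$. Both are stronger than required.

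It remains to handle $s,t>s_0$; here I split further.
\begin{itemize}
\item If $s>L$: since $t>s_0$, \eqref{eq:vcb} applied to $T$ gives $|B(T)|\le\max\{2^{-\beta t}(v-t),L\}\le L<s$. Hence $S\subseteq B(T)$ is impossible and the probability is $0$.
\item If $t>L$: the same argument with the roles of $S$ and $T$ swapped gives probability $0$.
\item If $s_0<s,t\le L$: I will use monotonicity. Take $S'\subseteq S$ uniformly random of size $s'=\lceil s/4\rceil$, so that $s'\le s_0$ because $L/4<s_0$. Then $(S',T)$ is again a uniform disjoint pair, and $E_H(S,T)=\emptyset$ implies $E_H(S',T)=\emptyset$. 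The first case then gives $\Pr\le 2^{-\beta s't}\le 2^{-(\beta/4)st}$.
\end{itemize}
The lemma's constant $1/8$ leaves slack for rounding in $\lceil s/4\rceil$ and in the estimate of $s_0$.

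The main obstacle is this intermediate regime, where both $s$ and $t$ are of order $\log v$. Here neither set alone is small enough for the exponential term of \eqref{eq:vcb} to apply directly, and neither is large enough to make the event impossible. The subsampling trick resolves this because the window $(s_0,L]$ has bounded ratio $L/s_0\to 2$, which is exactly what the ``sufficiently large $v$ depending on $\beta$'' hypothesis guarantees. The routine checks I would still carry out are the precise estimate $s_0\ge L/4+1$ for large $v$, and the edge cases $s=0$ or $t=0$, which are trivial.
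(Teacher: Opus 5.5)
Your proof is correct and follows essentially the same route as the paper's. You rule out the regime where one set has size above $L=\frac{2}{\beta}\log v$ and the other is large enough that its no-neighbor set is capped at $L$. Otherwise you subsample down to a size where the exponential branch of \eqref{eq:vcb} applies, using that the subsampled pair is still a uniform disjoint pair.

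The only differences are cosmetic. You use the exact crossover $s_0$ and a subsample of size $\lceil s/4\rceil$, which gives the slightly better constant $\beta/4$. The paper instead assumes $s\le t$ and subsamples to $\min\{s,\lfloor \log v/(2\beta)\rfloor\}$.
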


\begin{proof}
The cases $s=0$ or $t=0$ are immediate. By symmetry, assume
$1\le s\le t$. Set $L=2\log v/\beta$ and
$h=\lfloor\log v/(2\beta)\rfloor$.
For sufficiently large $v$, we have $h\ge L/8$,
$v-h\ge v/2$, and $\sqrt v/2\ge L$.

If $s>L$, then $2^{-\beta s}(v-s)<1/v\le L$, so
$|B(S)|\le L<t$, and the event in question is impossible.

Otherwise, let $r=\min\{s,h\}$ and choose a uniformly random
$r$-subset $R$ of $S$. Then $r\ge s/8$, and
\[
2^{-\beta r}(v-r)\ge\sqrt v/2\ge L,
\]
so \eqref{eq:vcb} gives $|B(R)|\le2^{-\beta r}(v-r)$.
The pair $(R,T)$ is uniform over disjoint sets of sizes $r,t$.
Consequently,
\[
\Pr[E_H(S,T)=\emptyset]
\le\Pr[E_H(R,T)=\emptyset]
=\E_R\frac{\binom{|B(R)|}{t}}{\binom{v-r}{t}}
\le2^{-\beta rt}
\le2^{-(\beta/8)st}.
\]
\end{proof}
We now give the precise statement of the hardness result discussed in the introduction.

\begin{theorem}\label{thm:separation-hardness}
Suppose $\mathcal H$ is an isomorphism-closed family of graphs on $[v]$ such that, for some $\beta > 0$ not depending on $n$, every $H\in\mathcal H$ satisfies $\mathrm{VC}(\beta)$ as in \eqref{eq:vcb}. Assume further that $|\mathcal H|\ge 2^{\theta v^2}$ for some $\theta>0$ also not depending on $n$. Then there exist constants $C_1=C_1(\beta,\theta)>0$, $C_2=C_2(\beta,\theta)>0$, and $\varepsilon=\varepsilon(\beta,\theta)>0$ such that, for all sufficiently large $n$, all $v = v(n)$ satisfying
\[
C_1\log n< v(n) <n^{C_2},
\]
and $k= k(n) = n^{1/2-\varepsilon}$, the statistic $F_{\mathcal H}$ achieves strong separation and thus strong detection between $\PP_{n,k}$ and $\QQ_n$.

Consequently, if Conjecture~\ref{conj:clique} holds, then there is no polynomial-time algorithm computing the polynomial $F_{\sH}(X)$.
\end{theorem}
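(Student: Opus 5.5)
The plan is to verify the separation condition \eqref{eq:separation} for $k=n^{1/2-\varepsilon}$ using the exact moment formulas of Lemmas~\ref{lem:null-moments} and~\ref{lem:planted-moments}. The overlap decomposition behind Lemma~\ref{lem:variance-master} will be the organizing tool, and I will choose $C_2$ small enough that $v=o(\sqrt k)$, so the ratio bound of that lemma applies. The last claim is then a reduction. Strong separation gives strong detection by Chebyshev. So a polynomial-time algorithm computing $F_{\sH}(X)$ (for a $v(n)$ in the allowed window, e.g.\ $v=\lceil 2C_1\log n\rceil$) would detect cliques of size $n^{1/2-\varepsilon}$, contradicting Conjecture~\ref{conj:clique}.

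\textbf{First term.} Now $n/k^2=n^{2\varepsilon}$, so the term is at most
\[
n^{2\varepsilon v}(v+1)!\,2^{-\theta v^2}=2^{\,2\varepsilon v\log n+O(v\log v)-\theta v^2}.
\]
Since $v>C_1\log n$, we have $\theta v^2\ge \theta C_1 v\log n$. Taking $C_1>4\varepsilon/\theta$ and $C_2$ small enough that $\log v=o(v)$ is harmless makes this term $o(1)$. This is the only place $|\sH|\ge 2^{\theta v^2}$ is used, and the resulting slack $2^{-\Omega(v^2)}$ can also be spent later if needed.

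\textbf{Overlap terms.} Write $d=v-u$ for the number of vertices of $\phi([v])$ outside $\psi([v])$, and $m=u-\ell$ for the number of overlap vertices that do lie in the symmetric difference. The weight in front of $p_{u,\ell}$ is $n^{\ell}k^{-(u+\ell)}$, which equals $n^{-m/2+\varepsilon(u+\ell)}\le n^{-m/2+2\varepsilon v}$, times a combinatorial factor $v^{O(u)}$. To bound $p_{u,\ell}$, let $S$ be the $\ell$ overlap vertices isolated in $\phi(H)\triangle\psi(H')$, and pull back by $\phi$. Every edge of $H$ from $S$ to the $d$ vertices outside the overlap would survive in the symmetric difference, so $E_H(S,T)=\emptyset$ with $|T|=d$. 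Conditioned on the overlap size, $(S\text{'s superset},T)$ is a uniformly random disjoint pair in $V(H)$. A union bound over the $\binom{u}{\ell}\le 2^{u}$ choices of $S$ together with Lemma~\ref{lem:VC-implies} then gives
\[
p_{u,\ell}\le 2^{u}\,2^{-(\beta/8)\ell d}.
\]
The same argument applied to $H'$ and $\psi$ gives a symmetric bound. I then split into cases. If $m\ge 8\varepsilon v+\Omega(\log v)$, the factor $n^{-m/2}$ alone beats $n^{2\varepsilon v}v^{O(u)}$, using a more careful count of the combinatorial prefactor than the crude $v^{2u}$. If $m$ is small, then $\ell\ge v-d-m$ is large, and the VC bound gives $2^{-\Omega(\beta v d)}$. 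This beats $n^{2\varepsilon v}=2^{2\varepsilon v\log n}$ provided $d\gtrsim (\varepsilon/\beta)\log n$, which is where $\varepsilon$ must be chosen small relative to $\beta$ and $C_1$.

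\textbf{Main obstacle.} The hard regime is nearly full overlap: $d$ small, including $d=0$ (the third term, with $u=v$), and $m$ small. There Lemma~\ref{lem:VC-implies} yields only $2^{-O(\beta v)}$, which is far weaker than the $n^{-2\varepsilon v}$ that is needed. In this regime I expect Lemma~\ref{lem:variance-master} to be too lossy, and I would redo the variance bound directly from \eqref{eq:VarPF}. Here $\phi(H)$ and $\psi(H')$ share almost all vertices and agree on all edges at $\ell$ of them, so the pair is essentially a near-isomorphism between $H$ and $H'$. I would try to bound the number of such pairs $(H',\psi)$ per $(H,\phi)$ by roughly $v!\cdot 2^{O(v(d+m))}$: all edges among the $S$-vertices are forced, and only edges touching the $d+m$ remaining vertices are free. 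Normalizing by $|\sH|^2(n)_v^2\alpha_v^2$, and using $|\sH|\ge 2^{\theta v^2}$, should produce the needed factor $2^{-\Omega(v^2)}$ whenever $d+m\le c\,v$. Making this count precise is the step I expect to take the real work.
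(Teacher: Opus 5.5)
Your architecture matches the paper's proof. It applies Lemma~\ref{lem:variance-master}, bounds $p_{u,\ell}$ via Lemma~\ref{lem:VC-implies} when the overlap misses many vertices, and uses a count exploiting $|\sH|\ge 2^{\theta v^2}$ when the overlap is nearly complete. However, the case analysis as written has a hole in the small-overlap regime.

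You replace the weight $n^{-m/2+\varepsilon(u+\ell)}$ by $n^{-m/2+2\varepsilon v}$ and measure $m$ against $v$. You then assert that if $m$ is small, then $\ell=v-d-m$ is large, so the VC bound is $2^{-\Omega(\beta vd)}$. That is false when $d$ is close to $v$. Take $u=2$, $\ell=m=1$, $d=v-2$:
\begin{itemize}
\item this lies outside your first case, since $m=1$;
\item the VC bound only gives $p_{2,1}\le 2^{2-(\beta/8)(v-2)}$, which does not beat your simplified weight $v^4n^{-1/2+2\varepsilon v}$, because $\varepsilon v\log n\gg\beta v$;
\item the counting bound is useless here too, since $v-\ell=v-1$.
\end{itemize}
The true weight $v^4n^{-1/2+3\varepsilon}$ is already $o(1)$, so the problem is purely the lossy simplification.

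The repair is to keep the dependence on $u$. The weight is at most $n^{u(2C_2-1/2+\varepsilon)+\ell(1/2+\varepsilon)}$, so split into three cases:
\begin{itemize}
\item $\ell\le(1-\eta)u$: the weight is $n^{-\Omega(u)}$ once $C_2,\varepsilon\ll\eta$.
\item $\ell>(1-\eta)u$ and $d\ge\eta' v$: the VC bound gives $2^{u-(\beta/8)\ell d}\le 2^{-\Omega(uv)}=n^{-\Omega(C_1u)}$, which beats the weight $n^{u(2C_2+2\varepsilon)}$.
\item $\ell>(1-\eta)u$ and $d<\eta' v$: here $v-\ell\le(\eta+\eta')v$, so the counting bound applies.
\end{itemize}
This is exactly the paper's three-way split.

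Two further remarks. First, in your first case no sharper count of the combinatorial prefactor will help when $u\approx v$. The exact factor $(v)_u^2/u!$ from the proof of Lemma~\ref{lem:variance-master} is then of order $v!$, which is $n^{\Theta(C_2v)}$ when $v$ is near $n^{C_2}$. You must instead take $C_2$ small relative to your threshold constant, using $v^{2u}\le n^{2C_2u}$.

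Second, you do not need to abandon Lemma~\ref{lem:variance-master} in the near-full-overlap regime. Your count is precisely the second bound of Claim~\ref{clm:hardness-overlap-cancel}, and it is short. Given $H$, both embeddings, and the cancelled set, every adjacency of $H'$ at a cancelled vertex is forced:
\begin{itemize}
\item pairs inside the overlap must agree with $\phi(H)$;
\item pairs to $\psi([v])\setminus\phi([v])$ must be non-edges.
\end{itemize}
So only the $\binom{v-\ell}{2}$ pairs among the other $v-\ell=d+m$ vertices are free, giving $p_{u,\ell}\le\binom{u}{\ell}2^{(v-\ell)^2-\theta v^2}$. Every prefactor in Lemma~\ref{lem:variance-master} is at most $n^{O(v)}=2^{O(v^2/C_1)}$, so this bound wins for $C_1$ large. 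That includes the third term, where $u=v$.
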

\noindent The proof is deferred to Appendix~\ref{apx:separation}.

\subsection{Recovery by low-degree polynomials}
\label{sec:recovery}

We now show that variants of the $F_{\sH}(X)$ discussed above can also be used to design an algorithm recovering the planted clique in the same regime and under the same assumptions as in Theorem~\ref{thm:general-separation}. Recall that in the recovery problem we are only interested in the planted model $X \sim \PP$, so all expectations, variances, and probabilities in this section are with respect to $\PP$. Our strategy is to first construct a vector correlated with the indicator vector of the clique, and then to round it to identify the actual clique. We give the full procedure in Algorithm~\ref{alg:cleanup}.

\subsubsection{Polynomial estimator of clique indicator}

For a connected graph $H$, let
\[
\mathcal R(H)=\{r\in V(H):H-r\text{ is connected}\}.
\]
Every connected graph on at least two vertices has at least two vertices in $\mathcal R(H)$ (for instance, take the leaves of a spanning tree). For the first part, we use vector-valued analogs of the polynomials $F_{\sH}(X)$ from Definition~\ref{def:char}. Put
\[
\nu(\mathcal H)=\sum_{H\in\mathcal H}|\mathcal R(H)|.
\]
Then, for $\sH$ as before and $i \in [n]$, we define
\begin{equation}
\label{eq:rooted-sum}
F_{\sH,i}(X)
:=
\sum_{H\in\sH}\ \sum_{r\in\mathcal R(H)}\ \sum_{\substack{\phi:V(H)\hookrightarrow[n]\\ \phi(r)=i}}
\chi_{H,\phi}(X).
\end{equation}
We call $r$ above the \emph{root} of each inner summation above, and in forming $F_{\sH,i}(X)$ we sum over all roots in $\mathcal R(H)$, for each $H \in \sH$.

The next lemma gives the first two moments of these rooted statistics under the planted model.
\begin{lemma}\label{lem:rooted-moments}
Suppose $\mathcal H\subseteq\mathcal G^{\mathrm{conn}}_{v,\le t}$ is an isomorphism-closed family for some fixed $t$. Let $G\sim \PP_{n,k}$ with planted clique $C\subset[n]$ have $\{\pm 1\}$-valued adjacency matrix $X$. For each $i\in[n]$, the following hold.
\begin{enumerate}
\item For every $i\in[n]$,
\begin{align*}
\E\bigl[F_{\sH,i}(X)\mid i\in C\bigr]
&=
\nu(\sH) \cdot (k-1)_{v-1} \equalscolon \mu, \\
\E\bigl[F_{\sH,i}(X)\mid i\notin C\bigr] &= 0.
\end{align*}
\item Fix a constant $\lambda>0$, and assume $k=\lambda\sqrt n$ and $v=O\!\left(\frac{\log n}{\log\log n}\right)$. Then, uniformly over all $i\in[n]$ as $n \to \infty$,
\begin{align*}
\Var\bigl[F_{\sH,i}(X)\mid i\notin C\bigr]
&=
(1+o(1))\,\nu(\sH)\cdot (n-k-1)_{v-1}\cdot v!, \\
\Var\bigl[F_{\sH,i}(X)\mid i\in C\bigr]
&\le
(1+o(1))\,\nu(\sH)\cdot (n-k)_{v-1}\cdot v!
\end{align*}
\end{enumerate}
\end{lemma}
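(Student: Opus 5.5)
Conditioning on whether $i\in C$, we will compute the moments term by term, exactly as in Lemmas~\ref{lem:null-moments} and~\ref{lem:planted-moments}. Under $\PP_{n,k}$ conditioned on $i\in C$, the remaining $k-1$ clique vertices form a uniform $(k-1)$-subset of $[n]\setminus\{i\}$. Conditioned on $i\notin C$, $C$ is a uniform $k$-subset of $[n]\setminus\{i\}$.

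\textbf{First moments.} For a fixed $H$, root $r$, and embedding $\phi$ with $\phi(r)=i$, the computation in Lemma~\ref{lem:planted-moments} gives
\[
\E[\chi_{H,\phi}\mid C]=\1\{\phi(V(H))\subseteq C\}.
\]
Given $i\notin C$ this indicator is zero, because $i=\phi(r)\in\phi(V(H))$; here we use that $H$ has no isolated vertices. Given $i\in C$, it equals the probability that the $v-1$ other images lie in $C\setminus\{i\}$, namely $(k-1)_{v-1}/(n-1)_{v-1}$. There are $(n-1)_{v-1}$ embeddings with $\phi(r)=i$. Summing over roots and over $H$ gives $\mu=\nu(\sH)(k-1)_{v-1}$.

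\textbf{Second moments.} For a pair of rooted terms $(H,r,\phi)$ and $(H',r',\psi)$ with $\phi(r)=\psi(r')=i$, the conditional expectation of the product is $\Pr[V(\phi(H)\Delta\psi(H'))\subseteq C\mid\cdot]$.

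We first treat the case $i\notin C$. A pair contributes only if the symmetric difference avoids $i$. Since both terms are centered, the contribution is
\[
\beta_\sigma\colonequals (k)_\sigma/(n-1)_\sigma
\]
when $i\notin V(\phi(H)\Delta\psi(H'))$, and zero otherwise.

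\emph{Diagonal pairs.} These are the pairs where the symmetric difference is empty: $\phi(H)=\psi(H')$ as edge sets, which forces $\psi^{-1}\circ\phi\in\Iso(H,H')$ with $r\mapsto r'$. Fix $(H,r,\phi)$; there are $(n-1)_{v-1}$ choices of $\phi$. Counting pairs $(H',r')$ with $\pi\in\Iso(H,H')$ and $\pi(r)=r'$ gives $\sum_{H'}|\Iso(H,H')|=v!$ by Lemma~\ref{lem:iso-sum-vfact}. Here $r'=\pi(r)$ lies in $\mathcal R(H')$ automatically. The diagonal total is therefore $\nu(\sH)(n-1)_{v-1}v!$.

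The claimed leading term is $(n-k-1)_{v-1}$ rather than $(n-1)_{v-1}$. The ratio is $(1-k/n)^{v}\approx e^{-\lambda v/\sqrt n}=1+o(1)$, so the two agree to leading order.

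\emph{Off-diagonal pairs.} These have a nonempty symmetric difference on $\sigma\ge 2$ vertices. As in Lemma~\ref{lem:variance-master}, we group them by the overlap $u=|\phi([v])\cap\psi([v])|\ge1$ (it contains $i$) and by $\ell$, the number of overlap vertices missing from the symmetric difference. Each pair then contributes about $(k/n)^{2v-u-\ell}$, and the number of such pairs relative to the diagonal term is about $n^{v-u}(v!)^{O(1)}\cdot\nu$-type factors. The argument of Lemma~\ref{lem:connected-overlap} will be adapted to rooted graphs. For $u<v$, connectivity forces some overlap vertex into the symmetric difference, so $\ell\le u-1$. This yields ratio bounds of the form $v^{O(u)}n^{(\ell-u)/2}\lambda^{-O(v)}\le n^{-1/2+o(1)}$, using $v=O(\log n/\log\log n)$ together with $v!\le n^{o(1)}$-type estimates (possibly tightening the constant). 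For $u=v$ there are at most $(v!)^2$ relabelings on a common vertex set. Each such pair contributes about $(k/n)^{v+\ell}$ against a diagonal weight of about $n^{v-1}$, and these terms are handled exactly as the third term in the proof of Theorem~\ref{thm:general-separation}.

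For $i\in C$ we only need an upper bound. We subtract $\mu^2$ and note that the pairs with overlap exactly $\{i\}$ are where the root condition $r\in\mathcal R(H)$ is used. When $H-r$ and $H'-r'$ are connected and the overlap is only $i$, the two terms are conditionally independent given $i\in C$: they involve disjoint vertex sets apart from $i$. Their covariance is the difference between $\Pr[\text{both}\subseteq C]$ and the product, computed with $(k-1)_{2v-2}/(n-1)_{2v-2}$ versus $\bigl((k-1)_{v-1}/(n-1)_{v-1}\bigr)^2$. This difference is at most zero up to a $1+O(v^2/k)$ factor, so it is negligible. The diagonal pairs contribute at most $\nu(\sH)(n-1)_{v-1}v!\cdot(1+o(1))$, matching $(n-k)_{v-1}$ to leading order.

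The main obstacle is controlling the off-diagonal pairs with overlap $u\ge2$ uniformly when $v$ grows like $\log n/\log\log n$. In particular, one must check that the combinatorial factors $v^{2u}$ and $(v+1)!$ stay below $n^{1/2-\Omega(1)}$; if needed, I would sharpen the counting of pairs as in Appendix~\ref{apx:sufficient-condition}. The case $i\in C$ with a small overlap also needs care to confirm that the positive correlations from shared clique vertices are absorbed into the $1+o(1)$.
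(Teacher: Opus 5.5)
Your first-moment computation and your diagonal count (via $\sum_{H'}|\Iso(H,H')|=v!$, with $\pi(r)\in\mathcal R(H')$ automatic) are correct. The first gap is that grouping the off-diagonal pairs by $(u,\ell)$ and bounding $p_{u,\ell}\le 1$ cannot reach the stated range of $v$.

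Relative to the diagonal, overlap-$u$ pairs cost a factor of order $2^{O(v)}v^{2u}n^{(\ell-u)/2}$. The $u=v$ pairs have weight $(k/n)^{v-\ell}$, not $(k/n)^{v+\ell}$, and cost about $\frac{\nu(\sH)}{v}(k/n)^{v-\ell}$, i.e.\ up to $2^{O(v)}v!$ against $n^{-1}$. For $v=\Theta(\log n/\log\log n)$ both $v^{2v}$ and $v!$ are $n^{\Theta(1)}$, so the estimate ``$v!\le n^{o(1)}$'' is false there. This is why Theorem~\ref{thm:general-separation} needs the constant $1/5$. Tightening the constant does not prove the lemma, which allows any $v=O(\log n/\log\log n)$.

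What is missing is a sharper count of collision partners, and this is where $\tw(H)\le t$ genuinely enters. The paper fixes $C$ and $a=(H,r,\phi)$, with $J=\phi^{-1}(C)$, $|J|=s$, and $d=|\partial J|\ge 1$ by connectivity. It then counts the $b$ with the same out-of-clique edge set:
everything outside $C$ and the images of $\partial J$ are rigid;
the other $h=s-d$ clique images form an unordered set, with $\binom{k-d}{h}$ choices;
the edges inside $J$ have at most $A_t^s s!$ options, since treewidth-$\le t$ graphs on $s$ vertices have at most $A_t^s$ isomorphism types.
Since $s!\binom{k-d}{h}=(s)_d(k-d)_h\le(1+o(1))v(k)_{s-1}$, the factorial cancels and one power of $k$ is lost. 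Hence the $s$-th group is at most $n^{-1/2}v\binom{v-1}{s}C_t^s$ times the main term, which sums to $n^{-1/2+o(1)}$ for every $v=O(\log n/\log\log n)$.

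Your grouping can be repaired the same way. Fix $a$ and $D=V(\phi(H)\Delta\psi(H'))$. The partner agrees with $\phi(H)$ on all edges leaving $D$, and its new vertices form an unordered set. The graph on the $v-\ell$ vertices of $D\cap\psi([v])$ is controlled by the treewidth count. This gives a relative factor $2^{O(v)}(v/\sqrt n)^{u-\ell}$ when $i$ is cancelled.

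The second gap is in the case $i\in C$, where you use the root condition in the wrong place. Pairs meeting only at $i$ have nonpositive covariance for every $H$, as your own computation shows, so nothing is needed there. The terms that require $r\in\mathcal R(H)$ are pairs with $2\le u\le v-1$ in which every shared vertex other than $i$ is cancelled, while $i\in V(\phi(H)\Delta\psi(H'))$. Because $i\in C$ is given, such a pair has weight about $(k/n)^{\sigma-1}$ rather than $(k/n)^{\sigma}$. So for $\ell=u-1$ its relative contribution carries $n^{(\ell+1-u)/2}=n^{0}$, with no decay.

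Your rooted version of Lemma~\ref{lem:connected-overlap} yields only $\ell\le u-1$. That suffices for $i\notin C$, where $i$ itself must be cancelled, but not here. These terms are of the same order as the main term: if the middle vertex of a three-vertex path were allowed as a root, the variance given $i\in C$ would be about $(1+2\lambda^2)$ times the claimed bound.

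The root condition excludes them. Cancelled vertices have no edges leaving the overlap, so $H-r$ would separate the cancelled shared vertices from the nonempty set of unshared vertices. In the paper this is the observation that either $d\ge 2$ or $r\notin\partial J$, so at least two images in $C$ are pinned, which restores the missing factor of $k^{-1}$.
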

\noindent The proof is deferred to Appendix~\ref{sec:proof_root}.

Defining the normalization
\begin{equation}
p_i(G) = p_i(X) \colonequals \frac{1}{\mu} F_{\sH,i}(X), \label{eq:pi}
\end{equation}
we see that the vector $p(X) = (p_1(X), \dots, p_n(X))$ is an unbiased estimator of the indicator vector $\1^{C}$ having entries $\1^C_i = \1\{i \in C\}$. We next approximate the variance of this estimator.
\begin{lemma}
    \label{lem:p-var}
    \label{lem:rooted-mse}
    Suppose that $\sH = \sH^{(v)}$ satisfy Assumption~\ref{asm:H}, that $\sH^{(v)}\subseteq\mathcal G^{\mathrm{conn}}_{v,\le t}$ for some fixed $t$, and that $v(n)\to\infty$ with $v(n)=o(\log n/\log\log n)$. Define $p_i(G)$ as above from $F_{\sH^{(v(n))}}$, and suppose $k \geq \lambda\sqrt{n}$ with $\lambda > 1 / \sqrt{c(\sH)}$. Fix $c_0<c(\sH)$ such that $c_0\lambda^2>1$. Then, for each $i \in [n]$,
    \begin{align*}
    \E\bigl[(p_i(G)-1)^2\mid i\in C\bigr]
    &\leq (1+o(1))(c_0 \lambda^2)^{-(v - 1)}, \\
    \E\bigl[p_i(G)^2\mid i\notin C\bigr]
    &\leq (1+o(1))(c_0 \lambda^2)^{-(v - 1)},
    \end{align*}
    whenever the conditioning event has positive probability, where the $o(1)$ term is uniform over $i\in[n]$. Consequently, the same upper bound holds for $\E (p_i(G) - \1^C_i)^2$.
\end{lemma}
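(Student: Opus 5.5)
This is a bias--variance computation that rests entirely on Lemma~\ref{lem:rooted-moments}. Since $p_i=F_{\sH,i}/\mu$, the first part of that lemma gives $\E[p_i\mid i\in C]=1$ and $\E[p_i\mid i\notin C]=0$. Hence both conditional second moments in the statement are conditional variances divided by $\mu^2$:
\[
\E\bigl[(p_i-1)^2\mid i\in C\bigr]=\frac{\Var[F_{\sH,i}\mid i\in C]}{\mu^2},\qquad \E\bigl[p_i^2\mid i\notin C\bigr]=\frac{\Var[F_{\sH,i}\mid i\notin C]}{\mu^2}.
\]
The assumptions $v=o(\log n/\log\log n)$ and $\sH\subseteq\sG^{\conn}_{v,\le t}$ fall within the hypotheses of part 2 of Lemma~\ref{lem:rooted-moments}. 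That lemma is stated for $k=\lambda\sqrt n$, so I would apply it with the actual value $\lambda'=k/\sqrt n\ge\lambda$. For $k\gg\sqrt n$ I would either check that the variance bound from the appendix proof is monotone, or note that the final bound only improves. Both conditional variances are then at most $(1+o(1))\,\nu(\sH)(n)_{v-1}\,v!$, uniformly in $i$.

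The core estimate is
\[
\frac{\nu(\sH)\,(n)_{v-1}\,v!}{\nu(\sH)^2\,(k-1)_{v-1}^2}=\frac{(n)_{v-1}\,v!}{\nu(\sH)\,(k-1)_{v-1}^2}.
\]
\begin{enumerate}
\item Since every connected graph on at least two vertices has $|\mathcal R(H)|\ge 2$, we get $\nu(\sH)\ge 2|\sH|$.
\item By the definition of $c(\sH)$ and the choice $c_0<c(\sH)$, we have $|\sH|\ge c_0^v v!$ for large $v$. The $v!$ factors cancel, and what remains is at most $\tfrac12 c_0^{-v}$.
\item Because $v^2=o(k)$, we have $(k-1)_{v-1}=(1+o(1))k^{v-1}$ and $(n)_{v-1}\le n^{v-1}$. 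This gives at most $(1+o(1))\tfrac12 c_0^{-v}(n/k^2)^{v-1}\le(1+o(1))\tfrac{1}{2c_0}(c_0\lambda^2)^{-(v-1)}$.
\end{enumerate}
This has the claimed form up to the constant $1/(2c_0)$. To absorb that constant, I would apply the argument with a slightly larger $c_0'\in(c_0,c(\sH))$. The extra factor $(c_0/c_0')^{v-1}\to0$ then dominates any constant, giving $(1+o(1))(c_0\lambda^2)^{-(v-1)}$, and in fact $o$ of it.

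Finally, the unconditional bound follows by averaging over the two events $\{i\in C\}$ and $\{i\notin C\}$. The main point needing care is that the $o(1)$ terms are uniform in $i$ and in the range of $k$. This is inherited from the uniformity in Lemma~\ref{lem:rooted-moments} together with the $i$-independent estimates above.
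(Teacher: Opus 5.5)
Your bias--variance reduction and your final counting step ($\nu(\sH)\ge 2|\sH|\ge 2c_1^v v!$, with the constant absorbed by passing to some $c_1\in(c_0,c(\sH))$) are exactly what the paper does. The gap is the middle step, where you import part~2 of Lemma~\ref{lem:rooted-moments}.

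That lemma is proved only for $k=\lambda\sqrt n$ with $\lambda$ fixed: its collision estimate uses $k^2/n=O(1)$. Moreover, its conclusion is genuinely false when $k/\sqrt n\to\infty$, which the statement allows. For instance, take $\sH$ to be the trees, fix $i\notin C$, and consider the $\nu(\sH)(k)_{v-1}$ embeddings that send all $v-1$ non-root vertices into $C$. The root is a leaf, so each such embedding collides with at least $v!\binom{k-1}{v-2}$ others: the $v-2$ interior images can be moved anywhere in $C$. Hence $\Var[F_{\sH,i}\mid i\notin C]\ge \nu(\sH)(k)_{v-1}\binom{k-1}{v-2}v!$. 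This exceeds $\nu(\sH)(n)_{v-1}v!$ by a factor of order $(k^2/n)^{v-1}/(k\,(v-2)!)$, which tends to infinity for, e.g., $k=n^{0.6}$ and $v=\lfloor\sqrt{\log n}\rfloor$. So the unnormalized bound $(1+o(1))\nu(\sH)(n)_{v-1}v!$ you rely on fails in this regime. Neither fallback closes the gap. The unnormalized variance is not monotone in the helpful direction; it grows with $k$. And ``the final bound only improves'' is precisely what needs proof. The uniformity in $k$ you claim to inherit is likewise not available from that lemma.

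The fix, which is the paper's proof, is never to pass through the unnormalized variance. Take the collision decomposition from the proof of Lemma~\ref{lem:rooted-moments}, namely \eqref{eq:local-rooted-collision} together with \eqref{eq:Iis-notinC} and \eqref{eq:Iis-inC}; these are valid for any $k$. Divide each $s$-term by $\mu^2$ directly, with $r=n/k^2\le\lambda^{-2}$. For $i\notin C$ the $s$-th term becomes $(1+o(1))\frac{v!}{\nu(\sH)}\binom{v-1}{s}\frac{vA_t^s}{k}r^{v-1-s}$. Summing over $s$ gives $(1+o(1))\frac{v!}{\nu(\sH)}\bigl(r^{v-1}+v(r+A_t)^{v-1}/k\bigr)$, and the case $i\in C$ is analogous with $v^2A_t$ in place of $vA_t$. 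The second term is $n^{-1/2+o(1)}=o\bigl((c_0\lambda^2)^{-(v-1)}\bigr)$ uniformly over $k\ge\lambda\sqrt n$. The first term is handled by your steps 1--3. The missing idea is therefore this: the $s\ge1$ collisions are negligible relative to $\mu^2(c_0\lambda^2)^{-(v-1)}$, but not necessarily relative to the $s=0$ variance term, as the example above shows.
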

\noindent The proof is deferred to Appendix~\ref{apx:rooted-mse}.

\subsubsection{Rounded estimator}
We now take such an estimator as a black-box input, and analyze a preliminary thresholded rounding of $p_i(G)$, which outputs a set of indices that is correlated with, though not yet with high probability the same as, the planted clique $C$.

\begin{lemma}
    \label{lem:rounding}
    Suppose that $p_i: \{0, 1\}^{\binom{[n]}{2}} \to \RR$ satisfy, uniformly in $i$,
    \begin{align*}
    \E[(p_i(G)-1)^2\mid i\in C]&\leq \delta_n, \\
    \E[p_i(G)^2\mid i\notin C]&\leq \delta_n,
    \end{align*}
    whenever the conditioning event has positive probability, where $\delta_n=o(1)$.
    Define the rounded estimator
\begin{align*}
\widehat{\1}_i(G) &\colonequals \mathbf 1\{p_i(G)\ge 1/2\}, \\
C'(G) &\colonequals \{i\in[n]:\widehat{\1}_i(G)=1\}.
\end{align*}
    Then, with probability $1 - O(\delta_n^{1/2}) = 1 - o(1)$, we have
    \begin{align*}
        |C^{\prime}| &\leq k+O(n\delta_n^{1/2}), \\
        |C^{\prime} \cap C| &\geq (1 - \delta_n^{1/2})k.
    \end{align*}
\end{lemma}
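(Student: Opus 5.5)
The plan is to use Chebyshev-type bounds on the per-vertex errors and then Markov's inequality on the total error counts. Define the false positives $N_{\mathrm{fp}} \colonequals |C' \setminus C| = \sum_{i\notin C}\widehat{\1}_i$ and the false negatives $N_{\mathrm{fn}} \colonequals |C\setminus C'| = \sum_{i\in C}(1-\widehat{\1}_i)$. Then $|C'| \le k + N_{\mathrm{fp}}$ and $|C'\cap C| = k - N_{\mathrm{fn}}$. It therefore suffices to show that $N_{\mathrm{fp}} = O(n\delta_n^{1/2})$ and $N_{\mathrm{fn}} \le \delta_n^{1/2}k$, each with probability $1-O(\delta_n^{1/2})$, and then take a union bound.

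\textbf{Per-vertex error probabilities.} For $i\notin C$, the event $\widehat{\1}_i=1$ means $p_i\ge 1/2$, so $p_i^2\ge 1/4$. Markov's inequality applied to $p_i^2$ gives
\[
\Pr[\widehat{\1}_i=1\mid i\notin C]\le 4\delta_n.
\]
For $i\in C$, the event $\widehat{\1}_i=0$ means $p_i<1/2$, so $(p_i-1)^2>1/4$, and the same argument gives
\[
\Pr[\widehat{\1}_i=0\mid i\in C]\le 4\delta_n.
\]
Both bounds are uniform in $i$ by hypothesis.

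\textbf{Expected error counts.} We have $\E N_{\mathrm{fp}} = \sum_i \Pr[i\notin C]\Pr[\widehat{\1}_i=1\mid i\notin C] \le 4n\delta_n$. Similarly,
\[
\E N_{\mathrm{fn}} = \sum_i \Pr[i\in C]\Pr[\widehat{\1}_i=0\mid i\in C] \le 4\delta_n \sum_i \Pr[i\in C] = 4k\delta_n,
\]
using $\sum_i\Pr[i\in C]=\E|C|=k$. If $C$ is deterministic, the same computation holds with indicator weights in place of probabilities.

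\textbf{Markov's inequality on the counts.} This gives $\Pr[N_{\mathrm{fp}} \ge n\delta_n^{1/2}] \le 4\delta_n^{1/2}$ and $\Pr[N_{\mathrm{fn}} \ge k\delta_n^{1/2}]\le 4\delta_n^{1/2}$. A union bound then yields both claimed inequalities with probability $1-8\delta_n^{1/2} = 1-O(\delta_n^{1/2})$. On this event $|C'|\le k+n\delta_n^{1/2}$, and $|C'\cap C|\ge(1-\delta_n^{1/2})k$.

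There is no real obstacle here. The only points needing care are bookkeeping: the conditioning is on events of positive probability, and the decomposition of $\E N_{\mathrm{fn}}$ is uniform even though $C$ is random. Note also that the second bound could be strengthened from $\delta_n^{1/2}$ to $\delta_n$ in the exponent, but the stated form suffices.
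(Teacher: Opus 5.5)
Your proposal is correct and follows essentially the same route as the paper: Markov's inequality per vertex gives error probabilities at most $4\delta_n$, Markov again on the counts of missed clique vertices and of misclassified vertices gives the two bounds, and a union bound finishes (the paper bounds the full symmetric difference $|C'\triangle C|$ where you bound only $|C'\setminus C|$, which is immaterial). The one case you skip is $\delta_n=0$, which the paper treats separately because Markov with threshold $0$ is invalid; there the expected error counts vanish, so $C'=C$ almost surely. Separately, your closing remark about strengthening $\delta_n^{1/2}$ to $\delta_n$ does not follow from this Markov argument with the same failure probability, so drop it.
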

\begin{proof}
    If $\delta_n=0$, the two conditional mean-square bounds imply $p_i(G)=\1_i^C$ almost surely for every $i$, and hence $C'(G)=C$ almost surely. We may therefore assume that $\delta_n>0$.
    We first control the error probability for each individual vertex. By Markov's inequality and the two conditional mean-square bounds,
\begin{align*}
\Pr\bigl[\widehat{\1}_i(G)\neq 1\mid i\in C\bigr]
&\le 4\,\E[(p_i(G)-1)^2\mid i\in C]\le4\delta_n,\\
\Pr\bigl[\widehat{\1}_i(G)\neq 0\mid i\notin C\bigr]
&\le 4\,\E[p_i(G)^2\mid i\notin C]\le4\delta_n.
\end{align*}
Let $X:=|C\setminus C'(G)|$ be the number of missed clique vertices. Then
\[
\E[X]
=
\sum_{i=1}^n\Pr(i\in C)\Pr(\widehat{\1}_i=0\mid i\in C)
\le4k\delta_n.
\]
Therefore,
\[
\Pr\bigl[|C'(G)\cap C|\le (1-\delta_n^{1/2})k\bigr]
=
\Pr[X\ge \delta_n^{1/2}\cdot k]
\le 4\delta_n^{1/2}.
\]
Now, let
\[
Y:=|C'(G)\triangle C|
\]
be the size of the symmetric difference. By the same argument, $\E[Y]\le 4n\delta_n$, and so again by Markov's inequality
\[
    \Pr\bigl[Y\ge n\delta_n^{1/2}\bigr]\le 4\delta_n^{1/2}.
    \]
    The results then follow from these two bounds by union bound and since $\delta_n=o(1)$.
\end{proof}

\subsubsection{Cleanup with holdout subgraph}

Finally, we perform a last ``cleanup'' step that refines the estimate $C^{\prime}$ to the exact planted clique $C$ with high probability. We formulate this by instead running the above algorithm on a ``main'' subgraph of the input graph $G$, and retaining a ``holdout'' subgraph for the cleanup step; while our method is similar in spirit to that of \cite{alon1998finding}, it is different in this technical aspect. The reason for doing this is that, conditional on $C$, the presence of edges in the main subgraph, the holdout subgraph, and between the two are all independent, simplifying the probabilistic reasoning.

\begin{lemma}
    \label{lem:recovery}
    \label{lem:holdout-cleanup-generic}
    Fix an integer $t\ge1$, constants $0<\lambda_-<\lambda$, and a known integer sequence $v=v(N)$ satisfying $v(N)\ge2$ for all sufficiently large $N$ and $v=o(\log N/\log\log N)$. Put $\mathcal H_N=\mathcal G^{\mathrm{conn}}_{v(N),\le t}$ and
    \[
    \mu_{N,K}\colonequals \nu(\mathcal H_N)(K-1)_{v(N)-1}.
    \]
    Suppose there is an algorithm which, on input $G[U]\sim\PP_{N,K}$ with $|U|=N$ and $K\ge\lambda_-\sqrt N$, outputs raw scores $\widehat F_i$, $i\in U$, such that the normalized estimates $\widehat p_i^{(K)}\colonequals\widehat F_i/\mu_{N,K}$ satisfy
    \begin{align*}
    \E[(\widehat p_i^{(K)}-1)^2\mid i\in C\cap U]&\le\delta_N,\\
    \E[(\widehat p_i^{(K)})^2\mid i\notin C\cap U]&\le\delta_N,
    \end{align*}
    whenever the conditioning event has positive probability, uniformly over $i\in U$, where
    \[
    \delta_N^{1/2}=o((\log N)^{-3}).
    \]
    Suppose the raw scores are computed in time $T(N)$ throughout this regime, where $T$ may be taken nondecreasing. Then there is an algorithm running in time $T(n)+O(n^2)$ that achieves strong recovery whenever $k\ge \lambda\sqrt n$.
\end{lemma}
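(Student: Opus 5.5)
The plan is to split the vertex set deterministically, for instance into a main part $U$ of size $N = n - m$ and a holdout part $W$ of size $m = \lfloor n/\log^{2} n\rfloor$. The split is into $U=\{1,\dots,N\}$ and $W$ the rest. The holdout size must satisfy two competing constraints. It must be large enough that $C\cap W$ has about $km/n \gg \log n$ vertices with high probability. It must be small enough that $K=|C\cap U|\ge \lambda_-\sqrt N$ with high probability, where $\lambda_-<\lambda$ is the given constant.

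For a uniformly random clique these counts are hypergeometric. By concentration, $|C\cap W|=(1+o(1))km/n=\Theta(\sqrt n/\log^2 n)$ and $K=(1-o(1))k$. Since $K$ is unknown, I would run the black-box scorer once on $G[U]$, whose law conditional on $K$ is $\PP_{N,K}$. I would then normalize by $\mu_{N,\widehat K}$ for a guess $\widehat K$. Simplest is to take $\widehat K$ such that the ratio $\mu_{N,K}/\mu_{N,\widehat K}=(K-1)_{v-1}/(\widehat K-1)_{v-1}$ is $1+o(1)$. Because $v=o(\log N)$, any $\widehat K$ within a factor $1+o(1/v)$ of $K$ suffices.

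The natural guess is $\widehat K = k\cdot N/n$, where $k$ is the clique size, assumed known. I can assume $k$ is known, or obtain it by counting the output clique and iterating over $O(\sqrt n)$ guesses, costing only a constant factor if needed. Hypergeometric fluctuations of $K$ are $O(n^{1/4}\sqrt{\log n})$, which is $o(k/v)$. Then apply Lemma~\ref{lem:rounding} with the threshold shifted by a $1+o(1)$ factor. Its proof goes through verbatim with threshold $1/2$ replaced by any constant in $(1/3,2/3)$. This yields $C'\subseteq U$ with $|C'|\le K+O(N\delta_N^{1/2})$ and $|C'\cap C|\ge(1-\delta_N^{1/2})K$ with high probability.

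Cleanup. For each $w\in W$, compute $d(w)=|N_G(w)\cap C'|$ in time $O(nm)$. The hard step is that $C'$ depends on edges within $U$ only. It is therefore independent, conditional on $C$, of the edges between $W$ and $U$. This lets me fix $C'$ and treat $d(w)$ as a sum of independent terms.

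For $w\in C$, we get $d(w)\ge |C'\cap C|\ge(1-\delta_N^{1/2})K$. For $w\notin C$, $d(w)$ is Binomial$(|C'|,1/2)$, so it is at most $|C'|/2+O(\sqrt{n\log n})$ with high probability over all $m$ vertices. We need $|C'|\le K + O(N\delta_N^{1/2})$ to keep this below, say, $0.9K$. That requires $N\delta_N^{1/2}\ll \sqrt N$, which is the real obstacle.

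The hypothesis $\delta_N^{1/2}=o((\log N)^{-3})$ alone does not give this. So I would first prune $C'$, keeping only the vertices of $C'$ having at least $|C'|/2+\sqrt{N}\log N$ neighbours inside $C'$. Alternatively, and more simply, I would threshold $d(w)$ against $\tfrac12|C'|+\tfrac12|C'\cap C|$ estimated directly. I expect I may need to change the roles: use degrees into $C'$ only in expectation, via a threshold $|C'|/2 + K/4$. This works as long as $|C'| - K$ is $o(K^2/\log n)$. That amounts to $N\delta_N^{1/2}=o(n/\log n)$, which the hypothesis does give, since $(\log N)^{-3}\ll 1/\log N$. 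The Hoeffding deviation of Binomial$(|C'|,1/2)$ is $O(\sqrt{|C'|\log n})=O(\sqrt{n\log n}/\log n)$, which is $\ll K$. So the threshold exactly separates $C\cap W$ from $W\setminus C$ with high probability.

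Finally, output $(C\cap W)\cup\{u\in U: u\text{ adjacent to all of } C\cap W\}$. A non-clique $u\in U$ survives with probability $2^{-|C\cap W|}=2^{-\Omega(\sqrt n/\log^2 n)}$, so a union bound over $U$ gives exact recovery. The total time is $T(N)+O(n^2)\le T(n)+O(n^2)$.
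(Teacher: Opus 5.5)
Your final argument is correct and is essentially the paper's proof, which uses a random split with $|V_2|=\lfloor n^{3/5}\rfloor$ and threshold $|V_3|/2+0.2499\,\widetilde k_1$. The steps match: split off a small holdout set, normalize the main-part scores by a proxy for the unknown $K$, round, threshold holdout degrees into $C'$ at $|C'|/2$ plus a constant multiple of $K$ via Hoeffding with $K^2/|C'|=\omega(\log n)$, then take all main-part vertices adjacent to the recovered holdout clique. Treating the proxy as a shift of the rounding threshold into $(1/3,2/3)$ is a slightly cleaner variant of the paper's enlargement of $\delta_N$, since it needs only $1+o(1)$ accuracy of the proxy.

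Two small fixes. First, replace the unknown $K$ in your threshold $|C'|/2+K/4$ by your proxy $\widehat K=kN/n$; this is harmless by the same concentration. Second, $k$ is a known input, so drop the aside about iterating over $O(\sqrt n)$ guesses, which would cost a factor $\sqrt n$ rather than a constant.
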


We build up to the proof of this in a few steps. The idea is to partition the vertex set $V$ of the input $G = ([n], E)$ into a large part $V_1$ that will be the main subgraph above and a small part $V_2$ that will be the holdout set above. By symmetry, we may choose these deterministically or uniformly at random; this will not make a difference but we fix the sizes
\begin{align*}
    n_2 = |V_2| &\colonequals \lfloor n^{3/5} \rfloor, \\
    n_1 = |V_1| &\colonequals n - n_2.
\end{align*}
Our plan is to apply the algorithm suggested by Lemma~\ref{lem:rounding} to the induced subgraph $G[V_1]$ to obtain $V_3 \subseteq V_1$ that contains most vertices of $C$ and has sufficiently large signal-to-noise ratio for the holdout threshold. We then threshold numbers of edges between $V_2$ and $V_3$ to exactly recover $C_2 \colonequals C \cap V_2$. Finally, having this we can recover $C_1 \colonequals C \cap V_1$, and then $C = C_1 \sqcup C_2$.

\begin{algorithm}[t!]
\LinesNumbered
\DontPrintSemicolon
\caption{\textsc{RecoveryAlgorithm}}
\label{alg:cleanup}
\KwIn{A graph $G=([n],E)$, a target clique size $k$, a treewidth bound $t$, and the prescribed integer sequence $v=v(N)$}
\KwOut{An estimated clique set $\widehat C\subseteq [n]$}
Set $n_2 \colonequals \lfloor n^{3/5}\rfloor$, $n_1 \colonequals n-n_2$\;
Choose a uniformly random partition $[n]=V_1\sqcup V_2$, where $|V_1|=n_1$ and $|V_2|=n_2$\;
Set $v_1\colonequals v(n_1)$, $\widetilde k_1 \colonequals \lfloor(1-n^{-2/5})k\rfloor$, and $\widetilde\mu_1\colonequals\nu(\sH)(\widetilde k_1-1)_{v_1-1}$, where $\sH = \sG^{\conn}_{v_1, \leq t}$\;
Run the raw-score algorithm assumed in Lemma~\ref{lem:recovery} with $\sH$ on the induced subgraph $G[V_1]$ to obtain $(\widehat F_{\sH,i}(G[V_1]))_{i \in V_1}$, and set $\widehat p_i\colonequals \widehat F_{\sH,i}(G[V_1])/\widetilde\mu_1$\;
Apply the rounding rule in Lemma~\ref{lem:rounding} to $(\widehat p_i)_{i \in V_1}$ to obtain a vertex set $V_3\subseteq V_1$\;
Let $T \colonequals |V_3| / 2 + 0.2499\,\widetilde{k}_1$ (see Remark~\ref{rem:threshold-T}) \;
Initialize $\widehat{C}_2 \colonequals \varnothing$\;
\ForEach{$x\in V_2$}{
    \If{$|E(\{x\}, V_3)| \ge T$}{
        add $x$ to $\widehat{C}_2$\;
    }
}
Initialize
$\widehat C_1 \colonequals \varnothing$\;
\ForEach{$u\in V_1$}{
    \If{$u$ is adjacent to every vertex in $\widehat{C}_2$}{
        add $u$ to $\widehat C_1$\;
    }
}
Set $\widehat C:=\widehat C_1 \sqcup \widehat{C}_2$\;
\Return{$\widehat C$}\;
\end{algorithm}

\begin{lemma}\label{lem:partition-clique-size}\label{lem:partition-clique-sizes}\label{lem:holdout-partition}
    Suppose the assumptions of Lemma~\ref{lem:recovery} hold, and $[n] = V_1 \sqcup V_2$ with $|V_i| = n_i$ are as above. Let $C_i \colonequals C \cap V_i$ and $k_i \colonequals |C_i|$. Then, with probability $1 - o(1)$,
    \begin{align}
        k_1 &=(1-o(1))k, \label{eq:k1-big} \\
        k_2 &\geq \frac{\lambda}{4}n^{1/10}, \label{eq:k2-big}\\
        |k_1-\widetilde k_1|&\le \sqrt{k}\log n+O(k/n+1), \\
        \widetilde k_1 &\colonequals \lfloor(1-n^{-2/5})k\rfloor.
        \label{eq:k1-proxy-close}
    \end{align}
\end{lemma}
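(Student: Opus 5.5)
The plan is to observe that, because the partition $[n]=V_1\sqcup V_2$ is uniform with fixed sizes, $k_2=|C\cap V_2|$ is hypergeometric with parameters $(n,k,n_2)$, whether $C$ is uniformly random or a fixed set (the partition supplies the randomness). Then $k_1=k-k_2$. Its mean is $\E k_2 = k n_2/n$. Since $n_2=\lfloor n^{3/5}\rfloor$, this equals $k n^{-2/5}+O(k/n)$. With $k\ge \lambda\sqrt n$, we get $\E k_2 \ge (\lambda-o(1)) n^{1/10}$. Since $k=O(\sqrt n)$ in the regime of interest, we also get $\E k_2 = \Theta(n^{1/10})$. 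If $k$ is larger, the same argument works, with the deviation scale adjusted accordingly.

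For concentration I will use the standard fact that hypergeometric variables satisfy the same Chernoff and Hoeffding bounds as the binomial $\Bin(n_2,k/n)$ (Hoeffding's comparison). The multiplicative Chernoff bound gives $\Pr[k_2<\E k_2/2]\le \exp(-\E k_2/8)=\exp(-\Omega(n^{1/10}))$, so w.h.p. $k_2\ge \frac{\lambda}{2}n^{1/10}(1-o(1))\ge\frac{\lambda}{4}n^{1/10}$; this is \eqref{eq:k2-big}. For the deviation bound I use either Chebyshev, since $\Var k_2\le \E k_2\le k$, or Bernstein's inequality. Either gives $|k_2-\E k_2|\le \sqrt{k}\log n$ with probability $1-O(1/\log^2 n)$; a Bernstein tail gives exponentially small failure probability.

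Next I compare with the proxy $\widetilde k_1=\lfloor(1-n^{-2/5})k\rfloor$. We have $k-\widetilde k_1 = kn^{-2/5}+O(1)$ and $\E k_2 = kn^{-2/5}+O(k/n)$. Hence
\[
|k_1-\widetilde k_1| = |k_2-(k-\widetilde k_1)| \le |k_2-\E k_2| + O(k/n+1) \le \sqrt k\log n + O(k/n+1).
\]
Finally, \eqref{eq:k1-big} follows because $k_2 = O(n^{1/10}) + \sqrt{k}\log n = o(k)$ w.h.p., given $k\ge\lambda\sqrt n$. A union bound over the three events finishes the proof.

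The only real care is in two places. First, I must justify the hypergeometric tail bounds by citing Hoeffding's result that sampling without replacement is dominated in convex order by the binomial. Second, I must track the $O(k/n)$ floor error in $n_2$ and the $O(1)$ floor error in $\widetilde k_1$. Neither step is a serious obstacle.
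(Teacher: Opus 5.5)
Your proof is correct and follows essentially the same route as the paper: the hypergeometric law of $k_2$, a multiplicative Chernoff bound for \eqref{eq:k2-big}, a $\sqrt{k}\log n$ deviation bound, and the triangle inequality with the two floor errors. The paper uses Chv\'atal's tail bound where you use Chebyshev, which suffices since $\Var k_2 \le \E k_2 \le k$. One small fix is needed in deriving \eqref{eq:k1-big}: write $k_2 \le kn^{-2/5} + \sqrt{k}\log n = o(k)$ instead of $O(n^{1/10}) + \sqrt{k}\log n$, since the lemma assumes only $k \ge \lambda\sqrt{n}$ and no upper bound on $k$.
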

\noindent The proof is deferred to Appendix~\ref{apx:proof-recover1}.

Fix $\lambda_-$ as in Lemma~\ref{lem:recovery}, and write $v_1=v(n_1)$. On the event in Lemma~\ref{lem:partition-clique-size}, for all sufficiently large $n$ we have $k_1\ge\lambda_-\sqrt{n_1}$. Moreover, since $v_1=o(\log n/\log\log n)$, \eqref{eq:k1-proxy-close} gives, uniformly over all values of $k_1$ on this event,
\begin{align}
\left|\log\frac{\nu(\sH)(k_1-1)_{v_1-1}}
{\nu(\sH)(\widetilde k_1-1)_{v_1-1}}\right|
&\le
O\!\left(\frac{v_1|k_1-\widetilde k_1|}{k}\right)\notag\\
&=
o((\log n)^{-3}).
\label{eq:proxy-normalization}
\end{align}
For a value $K$ of $k_1$ on this event, put $\mu_{n_1,K}=\nu(\sH)(K-1)_{v_1-1}$. The implementable normalization in Algorithm~\ref{alg:cleanup} then satisfies
\[
a_{n,K}\colonequals\frac{\mu_{n_1,K}}{\widetilde\mu_1}
=1+o((\log n)^{-3}).
\]
Conditional on $k_1=K$, the graph $G[V_1]$ is distributed according to $\PP_{n_1,K}$. Hence, for $b\in\{0,1\}$,
\[
\E\!\left[(a_{n,K}\widehat p_i^{(K)}-b)^2\,\middle|\,
k_1=K,\ \mathbf 1_{\{i\in C_1\}}=b\right]
\le
2a_{n,K}^2\delta_{n_1}+2(a_{n,K}-1)^2.
\]
Consequently, uniformly over such $K$, the proxy-normalized scores satisfy the two conditional mean-square bounds in Lemma~\ref{lem:recovery} after enlarging $\delta_{n_1}^{1/2}$ by an $o((\log n)^{-3})$ term; below we continue to denote the enlarged bound by $\delta_{n_1}$.

We may now apply Lemma~\ref{lem:rounding} conditional on each $k_1=K$ in the event above, average the resulting bounds over $K$, and add the $o(1)$ probability that the event fails. This gives an algorithm that runs in time $T(n)$ and outputs $V_3 \subseteq V_1$ such that, with probability $1 - o(1)$,
\begin{align}
    |V_3| &\le k_1+O(n_1\delta_{n_1}^{1/2}), \label{eq:V3-size-bound} \\
    |V_3 \cap C_1| &\geq (1-\delta_{n_1}^{1/2}) k_1, \label{eq:V3-C1-big}\\
    \frac{|V_3\cap C_1|^2}{|V_3|}
    &=\omega((\log n_1)^3). \label{eq:V3-snr}
\end{align}
Indeed, the last claim follows from the first two since
\[
\frac{|V_3\cap C_1|^2}{|V_3|}
\ge
c\min\left\{k_1,\frac{k_1^2}{n_1\delta_{n_1}^{1/2}}\right\}
\ge
c\min\left\{\lambda_-\sqrt{n_1},\frac{\lambda_-^2}{\delta_{n_1}^{1/2}}\right\}.
\]

We next use the degrees between $V_2$ and $V_3 \subseteq V_1$ to recover $C_2$.
\begin{lemma}\label{lem:filter-V2}\label{lem:ideal-holdout-threshold}\label{lem:implementable-threshold}
Define the ideal threshold
\[
T_{\mathrm{ideal}} \colonequals \frac{|V_3|}{2} + \frac{|V_3\cap C_1|}{4},
\]
and let
\[
\widehat{C}_{2,\mathrm{ideal}} \colonequals\{v\in V_2: |E(\{v\}, V_3)| \ge T_{\mathrm{ideal}}\} \subseteq V_2.
\]
Also let
\[
  \widetilde k_1=\lfloor(1-n^{-2/5})k\rfloor,
  \qquad
  T=\frac{|V_3|}{2}+0.2499\,\widetilde k_1.
\]
Define
\[
  \widehat C_2
  =
  \{v\in V_2: |E(\{v\},V_3)|\ge T\}.
\]
Then, on the event that \eqref{eq:k1-big}, \eqref{eq:k2-big}, \eqref{eq:V3-C1-big}, and \eqref{eq:V3-snr} hold, we have
\[
\widehat C_{2,\mathrm{ideal}}=\widehat C_2=C_2
\]
with probability $1-o(1)$.
\end{lemma}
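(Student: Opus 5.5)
The plan is to condition on everything that determines $V_3$, namely the partition, the clique $C$, and all edges inside $V_1$. Then, for each $x\in V_2$, the count $D_x\colonequals|E(\{x\},V_3)|$ depends only on the edges between $x$ and $V_1$. These edges are independent of the conditioning, and independent across different $x$. We work on the event where \eqref{eq:k1-big}, \eqref{eq:k2-big}, \eqref{eq:V3-C1-big}, and \eqref{eq:V3-snr} hold. Write $m\colonequals|V_3|$ and $a\colonequals|V_3\cap C_1|$.

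If $x\in C_2$, then $x$ is adjacent to all of $V_3\cap C_1$, and its other $m-a$ edges to $V_3$ are i.i.d.\ Bernoulli$(1/2)$. So
\[
D_x = a+\Bin(m-a,1/2), \qquad \E D_x = \frac{m}{2}+\frac{a}{2}.
\]
If $x\notin C_2$, then $D_x\sim\Bin(m,1/2)$ with mean $m/2$. Both thresholds sit strictly between these means. For the ideal one this is immediate, since $T_{\mathrm{ideal}}$ lies at distance $a/4$ from each mean. For $T$, we use \eqref{eq:k1-big}, \eqref{eq:V3-C1-big}, and $\delta_{n_1}=o(1)$ to get $a=(1-o(1))k_1$. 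We also have $\widetilde k_1=(1-o(1))k$ and $k_1=(1-o(1))k$, so $0.2499\,\widetilde k_1\in[a/4-ca,\,a/4+ca]$ for a small constant $c>0$; for example $c=10^{-4}/2$ works for large $n$. Hence $T$ is also at distance at least $(1/4-c)a\ge a/5$ from each mean.

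Next we apply Hoeffding's inequality to a binomial with at most $m$ trials. This gives
\[
\Pr\bigl[\,|D_x-\E D_x|\ge a/5 \;\big|\; \mathcal F\,\bigr]\le 2\exp\!\Bigl(-\frac{2a^2}{25m}\Bigr),
\]
where $\mathcal F$ denotes the conditioning above. By \eqref{eq:V3-snr}, $a^2/m=\omega((\log n_1)^3)$, so the right-hand side is $\exp(-\omega(\log^3 n))=n^{-\omega(1)}$. Taking a union bound over the $n_2\le n^{3/5}$ vertices of $V_2$, with probability $1-o(1)$ every $x\in V_2$ falls on the correct side of both thresholds simultaneously. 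This yields $\widehat C_{2,\mathrm{ideal}}=\widehat C_2=C_2$. Averaging the conditional bound over $\mathcal F$ restricted to the good event, and adding the $o(1)$ probability that the event fails, finishes the proof.

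The main obstacle is justifying the conditional independence. $V_3$ is a function of $G[V_1]$ and of the random partition only, never of edges touching $V_2$. Conditional on $C$, those cross edges are Bernoulli$(1/2)$ except that edges between two clique vertices are present. This is exactly why the holdout design was chosen, and it must be stated carefully, since the events \eqref{eq:V3-C1-big} and \eqref{eq:V3-snr} are measurable with respect to $\mathcal F$. The second delicate point is the constant $0.2499$, which requires $a/k_1\to1$. This follows because $\delta_{n_1}\to0$.
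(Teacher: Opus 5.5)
Your proposal is correct and follows essentially the same route as the paper. Both arguments condition on everything that determines $V_3$, observe that each threshold lies at distance $\Omega(a)$ from both conditional means (your $a/5$ matches the paper's $s/5$), and finish with Hoeffding plus a union bound over $V_2$ using \eqref{eq:V3-snr}; the paper additionally conditions on the estimator's internal randomness (e.g.\ its colorings), which you should include in $\mathcal F$. There is one small arithmetic slip: since $0.2499\,\widetilde k_1/a\to 0.2499$, the offset from $a/4$ is about $10^{-4}a$, so your example $c=10^{-4}/2$ is too small (take, e.g., $c=2\cdot 10^{-4}$), though any such $c\le 1/20$ still gives the needed distance $a/5$.
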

\noindent The proof is deferred to Appendix~\ref{apx:proof-recover2}.

\begin{remark}
\label{rem:threshold-T}
The threshold in Algorithm~\ref{alg:cleanup} is implementable because it uses only $k$, $n$, and $V_3$, not the unknown set $C_1$. Lemma~\ref{lem:implementable-threshold} shows that replacing the ideal threshold $T_{\mathrm{ideal}}$,
\[
|V_3|/2+|V_3\cap C_1|/4
\]
by the algorithmic threshold $T$,
\[
|V_3|/2+0.2499\lfloor(1-n^{-2/5})k\rfloor
\]
does not affect the cleanup step with probability $1-o(1)$.
\end{remark}

Finally, if we have identified $C_2$ exactly, then we may also identify $C_1$ exactly:
\begin{lemma}\label{lem:recover-K1-from-K2}\label{lem:final-adjacency-cleanup}\label{lem:recover-main-from-holdout}
Define
\[
\widehat{C}_1 \colonequals \{v\in V_1:\ |E(\{v\}, \widehat{C}_2)| = |\widehat{C}_2| \}.
\]
Then
\[
\Pr\bigl(\widehat C_2=C_2,\ \widehat C_1\ne C_1\bigr)=o(1).
\]
Consequently, if $\Pr(\widehat C_2=C_2)=1-o(1)$, then $\widehat C_1\sqcup\widehat C_2=C$ with probability $1-o(1)$.
\end{lemma}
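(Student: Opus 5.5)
We work on the event $\widehat C_2=C_2$ and decompose the failure event $\{\widehat C_2=C_2,\ \widehat C_1\ne C_1\}$ into two parts: a clique vertex of $V_1$ is missed, or a non-clique vertex of $V_1$ is wrongly included. On $\widehat C_2=C_2$, the set $\widehat C_1$ equals $\{u\in V_1: u \text{ is adjacent to every vertex of } C_2\}$, and this set depends only on $C$ and the edges between $V_1$ and $V_2$. So it suffices to show that $C_1':=\{u\in V_1: N_G(u)\supseteq C_2\}$ equals $C_1$ with probability $1-o(1)$. We then bound $\Pr(\widehat C_2=C_2,\ \widehat C_1\neq C_1)\le\Pr(C_1'\ne C_1)$ and drop the dependence on how $\widehat C_2$ was computed. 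This matters because $\widehat C_2$ was built from $V_3$, which depends on $G[V_1]$ and on the $V_1$--$V_2$ edges through the degree threshold. Replacing $\widehat C_2$ by the deterministic-given-$C$ set $C_2$ avoids any conditioning issues.

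The inclusion $C_1\subseteq C_1'$ is immediate, since every $u\in C_1$ is adjacent to all of $C\setminus\{u\}\supseteq C_2$ (note $u\notin V_2$). For the reverse inclusion, condition on $C$ (and hence on the partition sizes $k_1,k_2$). For $u\in V_1\setminus C$, the edges from $u$ to $C_2$ are not planted edges, because $u\notin C$. They are therefore i.i.d.\ Bernoulli$(1/2)$ given $C$, so $\Pr(u\in C_1'\mid C)=2^{-k_2}$. A union bound over at most $n$ such vertices gives $\Pr(C_1'\ne C_1\mid C)\le n2^{-k_2}$.

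It remains to use the lower bound on $k_2$. By Lemma~\ref{lem:partition-clique-size}, with probability $1-o(1)$ we have $k_2\ge\frac{\lambda}{4}n^{1/10}$, and on this event $n2^{-k_2}\le n\,2^{-(\lambda/4)n^{1/10}}=o(1)$. Averaging over $C$ and adding the $o(1)$ probability that the event of Lemma~\ref{lem:partition-clique-size} fails gives $\Pr(C_1'\neq C_1)=o(1)$, which proves the first claim. The consequence then follows from $\Pr(\widehat C_1\sqcup\widehat C_2\ne C)\le\Pr(\widehat C_2\ne C_2)+\Pr(\widehat C_2=C_2,\widehat C_1\ne C_1)$. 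The only subtle point in the whole argument is the decoupling in the first paragraph, namely passing from the data-dependent $\widehat C_2$ to $C_2$ before invoking independence. Once that is done, the rest is a direct union bound.
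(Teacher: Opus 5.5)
Your proof is correct and matches the paper's argument. On $\{\widehat C_2=C_2\}$ the inclusion $C_1\subseteq\widehat C_1$ is automatic, and the failure event is contained in the event that some $u\in V_1\setminus C_1$ is adjacent to all of $C_2$; a union bound controls that event by $n2^{-k_2}$, combined with $k_2\ge\frac{\lambda}{4}n^{1/10}$ from Lemma~\ref{lem:partition-clique-size}. One small slip: $k_2$ is determined by $C$ together with the partition, not by $C$ alone, so you should condition on both, which is harmless because the partition is chosen independently of the edges.
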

\noindent The proof is deferred to Appendix~\ref{apx:proof-recover3}.

Lemma~\ref{lem:recovery} then follows by combining Lemmas~\ref{lem:partition-clique-size}, \ref{lem:rounding}, \ref{lem:implementable-threshold}, and \ref{lem:final-adjacency-cleanup} as described above to construct the required algorithm.

\section{Color coding approximation of low-degree statistics}
\label{sec:approx}

We now turn to the algorithmic part of the argument. Our goal in this section is to show how the detection and recovery statistics $F_{\sH}(X)$ and $F_{\sH, i}(X)$ can be approximated efficiently by means of color-coding. While previously we could take $\sH = \sH^{(v)}$ to be fairly generic families of graphs under only mild conditions, here we will need to rely on the much stronger treewidth bound (see Section~\ref{sec:treewidth} for the definition).

\subsection{Approximating detection statistics}
\label{sec:approx-detection}

To approximate the detection statistic $F_{\sH}(G)$, we use the classical color-coding method of Alon, Yuster, and Zwick \cite{alon1995color}, which reduces the problem of counting embeddings to counting only those embeddings whose vertices receive distinct colors under a random coloring.

We review the notation presented in Section~\ref{sec:intro-approx}. Let $\zeta:[n]\to [v]$ be a random coloring obtained by assigning independently to each vertex of $[n]$ a uniformly random color in $[v]$. Whenever $|W|=v$,
\begin{equation}
\Pr\!\bigl[\zeta \text{ is injective on } W\bigr]
\equalscolon
\rho_v = \frac{v!}{v^v}, \label{eq:rhov}
\end{equation}
the probability that $v$ independently sampled colors from $[v]$ are all distinct.

As in our algorithms, let $\sH$ be a set of graphs on $[v]$, which we call \emph{pattern graphs} following other references on color-coding. For a pattern graph $H\in\sH$, we define the associated colorful embedding sum by
\[
T_H^{\mathrm{col}}(G,\zeta)
\colonequals
\sum_{\phi:V(H)\hookrightarrow [n]}
\mathbf{1}\{\zeta \text{ is injective on } \phi(V(H))\}
\prod_{e\in E(H)} X_{\phi(e)}.
\]
In other words, $T_H^{\mathrm{col}}(G,\zeta)$ is obtained from the full embedding sum by restricting to those embeddings whose image is colorful under $\zeta$. Since each fixed embedding on $v$ vertices is colorful with probability $\rho_v$, averaging over $\zeta$ yields an unbiased scaled version of the original quantity. The key algorithmic advantage is that, once the colors are fixed, colorful embeddings can be computed efficiently by dynamic programming. There are two main differences between the simplest application of color-coding compared to ours: first, color-coding is simplest on trees and forests rather than bounded treewidth graphs, and second it is usually applied to detect or count subgraphs rather than to compute the signed counts we are interested in.

\paragraph{Warmup: Color-coding DP for tree patterns.}
Let us first explain the idea in the case of $H$ a tree. Suppose that we are interested in detecting whether the \emph{host graph} $G = (V, E)$ contains a colorful copy of $H$ (one where each vertex is given a different color by $\zeta$). Choose $r \in V(H)$ to serve as a root. For each $x \in V(G)$, we maintain a set of sets of colors $\Col_x(H) \subseteq 2^{[v]}$, where $C\in\Col_x(H)$ if and only if there exists a colorful embedding $\phi$ of $H$ into $G$ such that $\phi(r)=x$ and $\zeta(\phi(V(H)))=C$.

These collections can be computed recursively as follows. If $H$ consists of a single vertex, then for each $x \in V(G)$ we simply set $\Col_x(H) = \{\{\zeta(x)\}\}$. Otherwise, select an edge $e_T=(r,r')\in E(H)$. Deleting $e_T$ splits $H$ into two rooted subtrees $H_1$ and $H_2$, with roots $r$ and $r'$, respectively. By recursion, for each $w\in V(G)$ we obtain the family $\Col_w(H_1)$ of color sets realized by colorful embeddings of $H_1$ where $r$ is mapped to $w$, and for each $x\in V(G)$ we obtain $\Col_x(H_2)$ of color sets realized by colorful embeddings of $H_2$ where $r'$ is mapped to $x$. We then reconstruct $\Col_w(H)$ by scanning all edges $(w,x)\in E(G)$ and, for every pair of color sets $C_1\in\Col_w(H_1), C_2\in\Col_x(H_2)$ with $C_1\cap C_2=\emptyset$, we insert $C_1\cup C_2$ into $\Col_w(H)$. This bottom-up recursion over the tree $H$ yields the required color sets.

In the recursive algorithm below, $[v]$ is the color palette of the original pattern tree and remains fixed in every recursive call. Thus a recursive call may receive a proper subtree of the original pattern, but all of its tables are still indexed by subsets of the same palette $[v]$.

\begin{algorithm}[t!]
\LinesNumbered
\DontPrintSemicolon
\caption{\textsc{ColorfulCharacterTree}$(G,H,r,\zeta)$}
\label{alg:tree-colorful-character}
\KwIn{A host graph $G=([n],E)$ with edge-weights $X_{ab}\in\{\pm1\}$; a fixed color palette $[v]$; a rooted tree $(H,r)$ on at most $v$ vertices; a coloring $\zeta:[n]\to [v]$}
\KwOut{The table $\DP_{H,r}(x,C)$ for all $x\in[n]$ and $C\subseteq [v]$ as defined in \eqref{eq:DP-tree}}

\If{$|V(H)|=1$}{
    \ForEach{$x\in[n]$}{
        \ForEach{$C\subseteq [v]$}{
            $\DP_{H,r}(x,C)\gets 0$\;
        }
        $\DP_{H,r}(x,\{\zeta(x)\})\gets 1$\;
    }
    \Return $\DP_{H,r}$\;
}

Choose an edge $e_T=(r,r')\in E(H)$\;
Delete $e_T$, and let $(H_1,r)$ and $(H_2,r')$ be the two rooted components\;

$\DP_{H_1,r}\gets \textsc{ColorfulCharacterTree}(G,H_1,r,\zeta)$\;
$\DP_{H_2,r'}\gets \textsc{ColorfulCharacterTree}(G,H_2,r',\zeta)$\;

\ForEach{$x\in[n]$}{
    \ForEach{$C\subseteq [v]$}{
        $\DP_{H,r}(x,C)\gets 0$\;
    }
}

\ForEach{ordered pair $(a,b)\in[n]^2$ with $a\ne b$}{
    \ForEach{$C_1,C_2\subseteq [v]$ such that $C_1\cap C_2=\emptyset$}{
        $C\gets C_1\cup C_2$\;
        $\DP_{H,r}(a,C)\gets \DP_{H,r}(a,C)$
        $+\DP_{H_1,r}(a,C_1)\cdot \DP_{H_2,r'}(b,C_2)\cdot X_{ab}$\;
    }
}

\Return $\DP_{H,r}$\;
\end{algorithm}

We now extend the above framework to compute $T_H^{\mathrm{col}}(G, \zeta)$. Instead of storing only sets of achievable color sets by embeddings of $H$ mapping $r$ to each $x \in V(G)$, we replace $G$ with the complete graph and now also record the sum of $\prod_{e\in E(H')} X_{\phi(e)}$ over all colorful embeddings $\phi$ of the relevant rooted subtree $H'$ in the current recursive call, with its root mapped to $x$ and using exactly the color set $C$. The recursion is otherwise the same: when two partial embeddings are combined along an edge, the sums we are tracking are multiplied together, and then multiplied by the weight of the connecting edge. Thus the usual color-coding dynamic program extends in a straightforward way, except that Boolean OR of sets is replaced by signed addition, and the resulting tables compute the colorful embedding polynomial $T_H^{\mathrm{col}}(G,\zeta)$. See Algorithm~\ref{alg:tree-colorful-character}.

\begin{lemma}\label{lem:tree_algo}
Algorithm~\ref{alg:tree-colorful-character} computes, for every $x\in[n]$ and $C\subseteq [v]$, the quantity
\begin{equation}
\DP_{H,r}(x,C)
 \colonequals
\sum_{\substack{\phi:V(H)\hookrightarrow[n]\\ \phi(r)=x}}
\mathbf{1}\!\left\{
\begin{array}{l}
\zeta(\phi(V(H)))=C,\\
\zeta\circ\phi \text{ is injective on }V(H)
\end{array}
\right\}
\prod_{e\in E(H)} X_{\phi(e)}. \label{eq:DP-tree}
\end{equation}
Consequently, if $|V(H)|=v$, then
\[
T_H^{\mathrm{col}}(G,\zeta)
=
\sum_{x\in[n]} \DP_{H,r}(x,[v]).
\]
The total running time is $2^{O(v)}\cdot n^2$.
\end{lemma}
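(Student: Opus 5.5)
We will prove the correctness claim by strong induction on $|V(H)|$, and then handle the running time separately. Throughout, the palette $[v]$ stays fixed.

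\textbf{Base case.} If $|V(H)|=1$, the only injection with $\phi(r)=x$ is $r\mapsto x$. It is colorful with color set $\{\zeta(x)\}$, and the product over the empty edge set equals $1$. This matches the initialization in Algorithm~\ref{alg:tree-colorful-character}.

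\textbf{Inductive step.} Fix the chosen edge $e_T=(r,r')$. Removing $e_T$ from the tree $H$ leaves exactly two subtrees, $H_1\ni r$ and $H_2\ni r'$. Their vertex sets partition $V(H)$, and
\[
E(H)=E(H_1)\sqcup E(H_2)\sqcup\{e_T\}.
\]
The core of the argument is a bijection. Consider an injection $\phi:V(H)\hookrightarrow[n]$ with $\phi(r)=x$ whose composition $\zeta\circ\phi$ is injective with image $C$. Restrict it to the two subtrees to get $\phi_1=\phi|_{V(H_1)}$ and $\phi_2=\phi|_{V(H_2)}$, and record $b=\phi(r')$. Then $b\neq x$, $\phi_2(r')=b$, each $\phi_j$ is colorful, and the color sets $C_1=\zeta(\phi_1(V(H_1)))$ and $C_2=\zeta(\phi_2(V(H_2)))$ are disjoint with union $C$.

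Conversely, take colorful $\phi_1,\phi_2$ rooted at $x$ and at $b\neq x$ respectively, with disjoint color sets $C_1,C_2$. Gluing them gives a map $\phi$, and $\phi$ is injective. Indeed, two vertices with distinct colors cannot share an image, so images from the two sides never collide. The glued map $\phi$ is colorful with color set $C_1\cup C_2$.

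The signed weight also factors:
\[
\prod_{e\in E(H)}X_{\phi(e)}=\Bigl(\prod_{E(H_1)}X_{\phi_1(e)}\Bigr)\Bigl(\prod_{E(H_2)}X_{\phi_2(e)}\Bigr)X_{xb}.
\]
Summing over $b\neq x$ and over disjoint pairs $(C_1,C_2)$ with $C_1\cup C_2=C$, and applying the inductive hypothesis to $\DP_{H_1,r}$ and $\DP_{H_2,r'}$, gives exactly the update loop of the algorithm. Each pair $(C_1,C_2)$ is determined by the embedding, so no term is double counted.

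The main subtlety is this injectivity check. The algorithm only forbids $a=b$; it does not forbid other collisions between the images of $\phi_1$ and $\phi_2$. Disjointness of the color sets is precisely what rules those collisions out. The requirement $a\neq b$ is not redundant: $x$ and $b$ lie in different halves, so color-disjointness already forces $\zeta(x)\neq\zeta(b)$ and hence $x\neq b$.

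\textbf{Consequence for $T_H^{\mathrm{col}}$.} When $|V(H)|=v$, a colorful embedding uses all $v$ colors. So its color set is exactly $[v]$, and the colorful condition is equivalent to $\zeta(\phi(V(H)))=[v]$. Summing \eqref{eq:DP-tree} over the root image $x$ then yields $T_H^{\mathrm{col}}(G,\zeta)$.

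\textbf{Running time.} The recursion makes one call per edge split, giving $O(v)$ combination steps in total. Each combination step costs $n^2$ choices of $(a,b)$ times $3^v$ disjoint pairs $(C_1,C_2)$. Adding $O(n2^v)$ initialization per call, the total is $O(v\cdot 3^v n^2)=2^{O(v)}n^2$.
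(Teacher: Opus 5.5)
Your proof is correct and follows essentially the same route as the paper. Both argue by induction on $|V(H)|$ over the fixed palette $[v]$, use the bijection between colorful embeddings of $H$ and pairs of colorful subtree embeddings with disjoint color sets (weighted by the connecting factor $X_{xb}$), and count $O(v)$ merges at $3^v n^2$ cost each. One wording slip: your own argument shows that the check $a\neq b$ \emph{is} redundant, since color-disjointness forces $\zeta(x)\neq\zeta(b)$ for every nonzero term, so ``not redundant'' should read ``redundant''.
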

\noindent We give the proof in Appendix~\ref{apx:tree_algo}.

\paragraph{Color-coding DP for bounded-treewidth patterns.}
We now extend the signed color-coding framework from trees to a fixed pattern graph $H$ of treewidth at most $t$, the case relevant to our algorithms. Let $H$ be a graph on $v$ vertices with $\tw(H)\le t$, and fix a coloring $\zeta:[n]\to [v]$. Our objective is to compute the colorful embedding polynomial
\[
T_H^{\mathrm{col}}(G,\zeta)
 \colonequals
\sum_{\phi:V(H)\hookrightarrow[n]}
\mathbf{1}\{\zeta\circ \phi \text{ is injective on }V(H)\}
\prod_{\{a,b\}\in E(H)} X_{\phi(a)\phi(b)}.
\]

Assume a tree decomposition $(\mathsf T,\{B_x\subseteq V(H)\}_{x\in V(\mathsf T)})$ of $H$ of width at most $t$ is given, so every bag has size at most $t+1$. Choose any $u_r\in V(H)$ and root the given decomposition at a bag containing $u_r$. By a standard transformation (see, e.g., \cite{cygan2015parameterized,bodlaender1997treewidth,kloks1994treewidth}), we may further assume, without loss of generality, that this decomposition is a rooted \emph{nice} tree decomposition
\[
\mathcal D=(\mathcal T,\{B_x\}_{x\in V(\mathcal T)}),
\]
still of width at most $t$, with $O(v)$ bags, and with node types leaf, introduce, forget, and join. We additionally attach above its root a chain of forget nodes ending at $B_r=\{u_r\}$. This preserves the width and the $O(v)$ size bound. Concretely: a leaf node $x$ satisfies $|B_x|=1$; an introduce node $x$ has one child $y$ and $B_x=B_y\cup\{u_x\}$ for some $u_x\notin B_y$; a forget node $x$ has one child $y$ and $B_x=B_y\setminus\{u_x\}$ for some $u_x\in B_y$; and a join node $x$ has two children $y_1,y_2$ with $B_x=B_{y_1}=B_{y_2}$.

For each node $x\in V(\mathcal T)$, let $H_x$ denote the subgraph of $H$ induced by all vertices appearing in bags of the subtree of $\mathcal T$ rooted at $x$. The dynamic program keeps, for each node $x$, each injective assignment $\psi:B_x\hookrightarrow [n]$, and each color set $C\subseteq [v]$, the value
\[
\DP_x(\psi,C)
 \colonequals
\sum_{\phi:V(H_x)\hookrightarrow[n]}
\mathbf{1}\!\left\{
\begin{array}{l}
\phi|_{B_x}=\psi,\\
\zeta(\phi(V(H_x)))=C,\\
\zeta\circ\phi \text{ is injective on }V(H_x)
\end{array}
\right\}
\prod_{e\in E(H_x)\setminus E(H[B_x])} X_{\phi(e)}.
\]
Thus $\DP_x(\psi,C)$ is the signed contribution of all colorful embeddings of the partial graph $H_x$ that agree with $\psi$ on the current bag and use exactly the color set $C$. The product includes only those edges whose contribution has already been finalized below $x$; edges with both endpoints still inside the current bag are deferred until a later forget step.

If $r$ is the root of $\mathcal T$, then $B_r$ has size $1$. Therefore the full colorful embedding polynomial is recovered by summing over the possible images of the root bag vertex:
\[
T_H^{\mathrm{col}}(G,\zeta)
=
\sum_{z\in[n]} \DP_r(\psi_z,[v]),
\]
where $\psi_z:B_r\hookrightarrow[n]$ is the unique assignment sending the single vertex of $B_r$ to $z$. The algorithm runs bottom-up over the nice tree decomposition and is given in Algorithm~\ref{alg:treewidth-colorful-character}.

\begin{algorithm}[htbp]
\LinesNumbered
\DontPrintSemicolon
\caption{\textsc{ColorfulCharacterTreewidth}$(G,H,\zeta,\mathcal D)$}
\label{alg:treewidth-colorful-character}
\KwIn{
A graph $G=([n],E)$ with edge-weights $X_{ij}\in\{\pm1\}$;\\
a pattern graph $H$ on $v$ vertices with $\tw(H)\le t$;\\
a coloring $\zeta:[n]\to [v]$;\\
a rooted nice tree decomposition $\mathcal D=(\mathcal T,\{B_x\}_{x\in V(\mathcal T)})$ of $H$ of width at most $t$, normalized so that its root bag is $B_r=\{u_r\}$.
}
\KwOut{
The value $T_H^{\mathrm{col}}(G,\zeta)$.
}

Process the nodes of $\mathcal T$ in post-order\;

\ForEach{node $x\in V(\mathcal T)$}{

  \uIf{$x$ is a leaf node, say $B_x=\{u\}$}{
    \ForEach{injective assignment $\psi:B_x\hookrightarrow[n]$}{
      \ForEach{$C\subseteq [v]$}{
        $\DP_x(\psi,C)\gets 0$\;
      }
      $\DP_x(\psi,\{\zeta(\psi(u))\})\gets 1$\;
    }
  }

  \uElseIf{$x$ is an introduce node with child $y$, introducing $u$}{
    \ForEach{injective assignment $\psi:B_x\hookrightarrow[n]$}{
      \ForEach{$C\subseteq [v]$}{
        \uIf{$\zeta(\psi(u))\in C$}{
          $\DP_x(\psi,C)\gets
          \DP_y(\psi|_{B_y},C\setminus\{\zeta(\psi(u))\})$\;
        }
        \Else{
          $\DP_x(\psi,C)\gets 0$\;
        }
      }
    }
  }

  \uElseIf{$x$ is a forget node with child $y$, forgetting $u$}{
    \ForEach{injective assignment $\psi:B_x\hookrightarrow[n]$}{
      \ForEach{$C\subseteq [v]$}{
        $\DP_x(\psi,C)\gets 0$\;
        \ForEach{$z\in[n]\setminus \psi(B_x)$}{
          let $\psi':B_y\hookrightarrow[n]$ be the extension of $\psi$ with $\psi'(u)=z$\;
          let
          $
          m(u,\psi',\psi) \colonequals
          \prod_{\substack{w\in B_x\\ \{u,w\}\in E(H)}} X_{\psi'(u)\psi(w)}
          $\;
          $\DP_x(\psi,C)\gets
          \DP_x(\psi,C)+m(u,\psi',\psi)\cdot \DP_y(\psi',C)$\;
        }
      }
    }
  }

  \uElseIf{$x$ is a join node with children $y_1,y_2$}{
    \ForEach{injective assignment $\psi:B_x\hookrightarrow[n]$}{
      let $C_\psi \colonequals \zeta(\psi(B_x))$\;
      \ForEach{$C\subseteq [v]$}{
        $\DP_x(\psi,C)\gets 0$\;
        \ForEach{$C_1,C_2\subseteq [v]$ such that $C_1\cup C_2=C$ and $C_1\cap C_2=C_\psi$}{
          $\DP_x(\psi,C)\gets
          \DP_x(\psi,C)+\DP_{y_1}(\psi,C_1)\DP_{y_2}(\psi,C_2)$\;
        }
      }
    }
  }
}

Let $B_r=\{u_r\}$ be the root bag\;
$T\gets 0$\;
\ForEach{$z\in[n]$}{
  let $\psi_z:B_r\hookrightarrow[n]$ be given by $\psi_z(u_r)=z$\;
  $T\gets T+\DP_r(\psi_z,[v])$\;
}
\Return $T$\;
\end{algorithm}

\begin{lemma}\label{lem:treewidth_algo}
Algorithm~\ref{alg:treewidth-colorful-character} computes the colorful embedding polynomial
\begin{equation}
T_H^{\mathrm{col}}(G,\zeta)
=
\sum_{\phi:V(H)\hookrightarrow[n]}
\mathbf 1\{\zeta\circ\phi\text{ is injective on }V(H)\}
\prod_{\{a,b\}\in E(H)} X_{\phi(a)\phi(b)} \label{eq:THcol-treewidth}
\end{equation}
in time $2^{O(v)}\cdot n^{t+1}$. In particular, if $v=O(\log n/\log\log n)$ and $t$ is constant, then the running time is $n^{t+1+o(1)}$. Under the stronger condition $v=O(\log\log n)$, the per-pattern running time is $\widetilde O(n^{t+1})$.
\end{lemma}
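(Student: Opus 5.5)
We argue correctness by structural induction over the nice tree decomposition $\mathcal D$, processed in post-order, showing that after node $x$ is handled the table entry $\DP_x(\psi,C)$ equals the defined quantity for every injective $\psi:B_x\hookrightarrow[n]$ and every $C\subseteq[v]$. Two bookkeeping facts from the definition of a tree decomposition are used throughout. First, each edge of $H_x$ lies in some bag of the subtree rooted at $x$. Second, by the connectivity condition, a vertex forgotten at a node $x$ never reappears above $x$. Consequently, for a forget node $x$ with child $y$ forgetting $u$, we have $E(H_x)\setminus E(H[B_x])=\bigl(E(H_y)\setminus E(H[B_y])\bigr)\sqcup\{\{u,w\}\in E(H): w\in B_x\}$. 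For a join node with children $y_1,y_2$, we have $V(H_{y_1})\cap V(H_{y_2})=B_x$, and every edge of $H_x$ not inside $B_x$ lies in exactly one of $H_{y_1},H_{y_2}$.

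With these facts the four cases are direct. For a leaf node, the only embedding is $\psi$ itself, the edge product is empty, and the color set is $\{\zeta(\psi(u))\}$. For an introduce node with child $y$, the introduced vertex $u$ has no neighbors in $V(H_y)\setminus B_y$, again by the connectivity condition. So no new finalized edges appear, and a colorful embedding of $H_x$ extending $\psi$ with color set $C$ corresponds bijectively to one of $H_y$ extending $\psi|_{B_y}$ with color set $C\setminus\{\zeta(\psi(u))\}$, subject to the requirement that $\zeta(\psi(u))\in C$. For a forget node, we sum over the image $z$ of $u$. The restriction $z\notin\psi(B_x)$ is needed for injectivity, and global injectivity is in fact already forced by colorfulness. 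The factor $m(u,\psi',\psi)$ accounts exactly for the newly finalized edges identified above. For a join node, an embedding of $H_x$ is the union of embeddings of $H_{y_1}$ and $H_{y_2}$ agreeing on $B_x$. Colorfulness of the union is equivalent to colorfulness of each part together with $C_1\cap C_2=C_\psi$, and then $C=C_1\cup C_2$. The signed weights multiply because the finalized edge sets are disjoint.

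Colorfulness does the real work here: distinct colors force the two partial embeddings to be disjoint outside the bag, and hence injective overall. This is the step I expect to need the most care, and I would prove it explicitly as a short claim. At the root we have $B_r=\{u_r\}$ and $H_r=H$. Since $H[B_r]$ has no edges, every edge is finalized, so summing $\DP_r(\psi_z,[v])$ over $z$ gives exactly \eqref{eq:THcol-treewidth}. Requiring the color set to be $[v]$ with $|V(H)|=v$ is precisely the injectivity-of-$\zeta\circ\phi$ condition.

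For the running time there are $O(v)$ nodes. Each node has at most $n^{|B_x|}\le n^{t+1}$ assignments $\psi$ and $2^v$ color sets $C$.
\begin{itemize}
\item An introduce node costs $O(1)$ per entry.
\item A forget node has $|B_x|\le t$, so although each of its entries costs $O(n\cdot t)$, its total cost is $n^{t}\cdot 2^v\cdot O(tn)$.
\item A join node loops over pairs $(C_1,C_2)$, which costs at most $3^v$ per $\psi$ by a standard subset enumeration.
\end{itemize}
Altogether the cost is $O(v)\cdot 3^v\cdot \poly(t)\cdot n^{t+1}=2^{O(v)}n^{t+1}$. Computing the nice decomposition costs only $2^{O(v)}$ or $\poly(v)$ for constant $t$, which is negligible.

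If $v=O(\log n/\log\log n)$, then $2^{O(v)}=n^{o(1)}$, giving $n^{t+1+o(1)}$. If $v=O(\log\log n)$, then $2^{O(v)}=\poly\log n$, giving $\widetilde O(n^{t+1})$.
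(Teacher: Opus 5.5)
Your proposal is correct and follows the paper's proof essentially step for step: induction from the leaves to the root over the nice tree decomposition, with the same leaf, introduce, forget and join cases and the same per-node cost accounting. The only differences are that you spell out explicitly the edge-bookkeeping consequences of the connectivity property and the fact that colorfulness forces injectivity. You also count join pairs as at most $3^v$ per assignment $\psi$ rather than per pair $(\psi,C)$, which sharpens the constant but not the $2^{O(v)}n^{t+1}$ bound.
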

\noindent The proof is deferred to Appendix~\ref{apx:treewidth-alg}.

We next obtain a better estimate of $T_H(G)$ by averaging over several random colorings. For a fixed embedding $\phi:V(H)\hookrightarrow[n]$, the event that $\phi(V(H))$ is colorful under a uniformly random coloring $\zeta:[n]\to [v]$ has probability $\rho_v$ as defined in \eqref{eq:rhov}. Therefore $T_H^{\mathrm{col}}(G,\zeta)$ has expectation $\rho_v T_H(G)$, so its rescaling by $1/\rho_v$ is unbiased for the full character sum. This motivates the estimator
\[
\widehat T_H(G)
 \colonequals
\frac{1}{\rho_v}\cdot \frac{1}{q}\sum_{s=1}^q T_H^{\mathrm{col}}(G,\zeta_s),
\]
where $\zeta_1,\dots,\zeta_q$ are i.i.d. uniform samples from $[v]^{[n]}$.

We take $q = \lceil 1 / \rho_v^2 \rceil$. Since $\rho_v=v!/v^v$, this gives $q=2^{O(v)}$. Combined with Lemma~\ref{lem:treewidth_algo}, the total running time for $\widehat T_H(G)$ is $q\cdot n^{t+1}2^{O(v)}=n^{t+1}2^{O(v)}$. Therefore, for the family of bounded treewidth graphs $\sH \subseteq \sG^{\conn}_{v, \leq t}$, the estimator
\[
\widehat F_{\sH}(G) \colonequals \sum_{H\in\sH}\widehat T_H(G)
\]
can be computed in time $|\sH|\cdot n^{t+1}\cdot 2^{O(v)}$. See Algorithm~\ref{alg:estimate-family-sum}.
For fixed $t$, Lemma~\ref{lem:enumerate-bounded-treewidth-patterns} below shows that the full family $\sG^{\conn}_{v,\leq t}$ and all required normalized decompositions can themselves be generated uniformly in time and space $\exp(O_t(v\log v))$.

\begin{algorithm}[t!]
\LinesNumbered
\DontPrintSemicolon
\caption{\textsc{EstimateFamilySum}$(G,\sH,q)$}
\label{alg:estimate-family-sum}
\KwIn{
A host graph $G=([n],E)$ with edge-weights $X_{ij}\in\{\pm1\}$;
a family $\sH$ of pattern graphs, each on $v$ vertices;
for each $H\in\sH$, a nice tree decomposition $\mathcal D_H$ of width at most $t$ with a singleton root bag; an integer $q\ge 1$.
}
\KwOut{
An estimator $\widehat F_{\sH}(G)$ for $F_{\sH}(G)$.
}

\BlankLine
Set
$\rho_v\gets v! / v^v$, and $\widehat F_{\sH}(G)\gets 0$.
Sample colorings $\zeta_1,\ldots,\zeta_q:[n]\to[v]$ independently and uniformly at random.

\ForEach{$H\in\sH$}{
    $S_H\gets 0$\;

    \For{$s\gets 1$ \KwTo $q$}{
        $S_H\gets S_H+\textsc{ColorfulCharacterTreewidth}(G,H,\zeta_s,\mathcal D_H)$\;
    }

    $\widehat T_H(G)\gets S_H / (\rho_v q)$\;
    $\widehat F_{\sH}(G)\gets \widehat F_{\sH}(G)+\widehat T_H(G)$\;
}

\Return $\widehat F_{\sH}(G)$\;
\end{algorithm}
The following lemma records the unbiasedness and concentration properties of this estimator.

\begin{lemma}\label{lemma_concentration}\label{lem:colorcoding-detection}
Fix an integer $t\geq 1$. Let $v=v(n)\to\infty$ satisfy
\[
v\leq \frac{\log n}{5\log\log n},
\]
let $\sH = \sG^{\conn}_{v, \leq t}$, and let $\widehat F_{\sH}(G)$ be computed by Algorithm~\ref{alg:estimate-family-sum} with $q=\lceil\rho_v^{-2}\rceil$. Then the estimator $\widehat F_{\sH}(G)$ is unbiased: for any given graph $G$,
\[
\E_{\zeta_1, \dots, \zeta_q}[\widehat F_{\sH}(G)]
=
F_{\sH}(G).
\]
Moreover, if $G \sim \PP_{n,k}$ with $k\geq\lambda\sqrt{n}$ and $\lambda > 1 / \sqrt{c(\sH)} = 1 / \sqrt{c(t)}$, and also when $G \sim \QQ_n$, we have the convergence
\[
\frac{\widehat F_{\sH}(G) - F_{\sH}(G)}{\E_{\PP}[\widehat F_{\sH}(G)]}
\xrightarrow{L^2} 0.
\]
\end{lemma}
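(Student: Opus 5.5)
Unbiasedness is immediate. For each fixed embedding $\phi$, $\E_\zeta \mathbf 1\{\zeta \text{ injective on } \phi(V(H))\} = \rho_v$. Hence $\E_\zeta T_H^{\mathrm{col}}(G,\zeta)=\rho_v T_H(G)$, so $\E\widehat T_H(G)=T_H(G)$, and summing over $H\in\sH$ gives $\E\widehat F_{\sH}(G)=F_{\sH}(G)$.

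For the $L^2$ statement, we condition on $G$ and use the independence of the $q$ colorings:
\[
\E_\zeta\bigl[(\widehat F_{\sH}(G)-F_{\sH}(G))^2\bigr]=\frac{1}{q\rho_v^2}\Var_\zeta\Bigl(\sum_{H}T_H^{\mathrm{col}}(G,\zeta)\Bigr)\le \frac{1}{q\rho_v^2}\E_\zeta\Bigl(\sum_H T_H^{\mathrm{col}}(G,\zeta)\Bigr)^2 .
\]
We bound the last second moment crudely but uniformly in $G$. Expand
\[
\E_\zeta\Bigl(\sum_H T_H^{\mathrm{col}}\Bigr)^2=\sum_{H,H'}\sum_{\phi,\psi}\Pr[\phi,\psi\text{ both colorful}]\,\chi_{H,\phi}\chi_{H',\psi}.
\]
Taking the outer expectation over $G\sim\PP$ or $\QQ$, each term becomes $\Pr_\zeta[\cdots]\,\E_G[\chi_{H,\phi}\chi_{H',\psi}]$. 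Under $\QQ$ the second factor vanishes unless the edge sets coincide, and under $\PP$ it equals $\alpha_\sigma\ge 0$. The probability factor is at most $1$, so
\[
\E_G\E_\zeta\Bigl(\sum_H T^{\mathrm{col}}_H\Bigr)^2\le \E_G[F_{\sH}(G)^2],
\]
since $F_{\sH}^2$ expands into the same nonnegative terms with coefficient $1$. This is the key trick: nonnegativity of $\E_G[\chi\chi']$ under both models lets us drop the coloring probabilities.

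Therefore, with $q\rho_v^2\ge 1$,
\[
\E\bigl(\widehat F-F\bigr)^2\le \frac{1}{q\rho_v^2}\E_G[F^2]\le \E_G[F^2].
\]
This is not yet $o\bigl((\E_\PP F)^2\bigr)$, since $\E_\PP F^2\approx(\E_\PP F)^2$ under $\PP$. So I would sharpen the argument by keeping the coloring probability. If $\phi(V(H))$ and $\psi(V(H'))$ overlap in $u$ vertices, the probability that both are colorful is at most $\rho_v\cdot\rho_v$ for disjoint images (independence) and at most $\rho_v$ in general. The variance over $\zeta$ only picks up the covariance, which vanishes for disjoint images. Hence
\[
\Var_\zeta\Bigl(\sum_H T_H^{\mathrm{col}}\Bigr)\le\sum_{\text{overlapping pairs}}\rho_v\,|\E_G\chi\chi'|\quad\text{(after averaging over }G).
\]
This reduces the problem to the overlapping part of $\E_G F^2$, namely the terms with $u\ge1$. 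By Lemma~\ref{lem:planted-moments} and Lemma~\ref{lem:variance-master}, that part is exactly what is bounded as $o\bigl((\E_\PP F)^2\bigr)$ (together with the $\QQ$-variance, which is the $u=v$ diagonal), as in the proof of Theorem~\ref{thm:general-separation}. Multiplying by $\rho_v/(q\rho_v^2)\le\rho_v\le 1$ then gives $o\bigl((\E_\PP F)^2\bigr)$ under both $\PP$ and $\QQ$. Since $\E_\PP\widehat F=\E_\PP F=|\sH|(k)_v$, the claimed $L^2$ convergence follows.

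The main obstacle is this middle step: the $\zeta$-covariance must be handled correctly. One must check that for vertex-disjoint images the colorful events are independent, so the covariance vanishes. One must also check that the nonzero remaining terms are exactly the $u\ge1$ terms controlled in Lemma~\ref{lem:variance-master}. Here a subtle point is that the bound on $\alpha_\sigma$ there is stated for $\alpha_\sigma-\alpha_v^2$, whereas we need $\alpha_\sigma$ itself for $u\ge1$. The difference, at most $\alpha_v^2$ times the number of overlapping pairs, is a $O(v^2/n)$ fraction of $(\E_\PP F)^2$, which is $o(1)$.
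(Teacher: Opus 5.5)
Your argument is correct, but it differs from the paper's in where the $o(1)$ comes from. Write $I_\phi$ for the colorfulness indicator and $Y_1=\sum_H T_H^{\mathrm{col}}(G,\zeta_1)$.

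\textbf{The paper's route.} It never passes to the $\zeta$-covariance. It bounds $\E_\zeta[I_\phi I_\psi]\le\E_\zeta[I_\phi]=\rho_v$ for \emph{every} pair of embeddings, disjoint or not. Nonnegativity of $\E_{\DD}[\chi_\phi\chi_\psi]$ then gives $\E_{\DD,\zeta}[Y_1^2]\le\rho_v\,\E_{\DD}[F_{\sH}^2]$, hence $\E(\widehat F_{\sH}-F_{\sH})^2\le (q\rho_v)^{-1}\E_{\DD}[F_{\sH}^2]\le\rho_v\,\E_{\DD}[F_{\sH}^2]$. The smallness comes from $\rho_v\to0$. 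Only the black-box outputs of Theorem~\ref{thm:general-separation} are needed, namely $\E_{\PP}[F_{\sH}^2]=(1+o(1))(\E_{\PP}F_{\sH})^2$ and $\E_{\QQ}[F_{\sH}^2]=o((\E_{\PP}F_{\sH})^2)$. You already wrote down the bound ``at most $\rho_v$ in general''; using it in the second moment instead of the variance is the entire proof.

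\textbf{Your route.} You kill disjoint pairs via independence of the colors on disjoint vertex sets. You then take the smallness from the overlapping part of $\E_G[F_{\sH}^2]$, an estimate established inside the proof of Lemma~\ref{lem:variance-master} rather than in its statement. This costs more bookkeeping but gives a slightly stronger result. Your final bound is $(q\rho_v)^{-1}$ times an $o((\E_{\PP}F_{\sH})^2)$ quantity, so $q\approx\rho_v^{-1}$ colorings would already suffice, whereas the paper's bound needs $q\rho_v\to\infty$.

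\textbf{A small correction to your ``subtle point''.} The appendix proof of Lemma~\ref{lem:variance-master} already drops the $-\alpha_v^2$ and bounds the overlapping sum of $\alpha_\sigma$ itself. No fix is needed if you cite that estimate.

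If you instead start from the variance bound, the $\alpha_v^2$-times-overlaps correction is not the only one. $\Var_{\PP}F_{\sH}$ equals the overlapping sum of $\alpha_\sigma-\alpha_v^2$ \emph{plus} the nonpositive disjoint contribution $\sum_{\mathrm{disjoint}}(\alpha_{2v}-\alpha_v^2)$. So the overlapping sum is at least the variance, and you must add back this term. Its magnitude is at most $(1-\alpha_{2v}/\alpha_v^2)(\E_{\PP}F_{\sH})^2\le(2v^2/k)(\E_{\PP}F_{\sH})^2=o((\E_{\PP}F_{\sH})^2)$, so your conclusion stands.
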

\noindent The proof is deferred to Appendix~\ref{apx:concentration}.

\subsection{Approximating recovery statistics}
\label{sec:approx-recovery}

We now turn to the recovery statistic. Fix a distinguished pattern vertex $a\in V(H)$. For each host vertex $z\in[n]$, we would like to compute the signed contribution of all colorful embeddings $\phi$ satisfying $\phi(a)=z$. The key point is that this can be done with essentially the same dynamic program as in the detection case, without introducing any additional $n$-dimensional state.

Indeed, by a standard transformation, we may assume that the nice tree decomposition is rooted at a bag containing $a$, and that the root bag is $\{a\}$. By the connectedness property of tree decompositions, for every node $x$, either $a\in B_x$, or $a$ does not appear anywhere in the subtree rooted at $x$. Thus, whenever the image of $a$ is relevant, it is already determined by the current bag assignment $\psi$; no extra coordinate needs to be carried through the recursion. Consequently, all recovery values are obtained simultaneously from the root table with the same asymptotic running time as in the detection case. See Algorithm~\ref{alg:treewidth-recovery}.

\begin{algorithm}[t!]
\LinesNumbered
\DontPrintSemicolon
\caption{\textsc{ColorfulRecoveryTreewidth}$(G,H,a,\zeta,\mathcal D)$}
\label{alg:treewidth-recovery}
\KwIn{
A host graph $G=([n],E)$ with edge-weights $X_{ij}\in\{\pm1\}$;\\
a pattern graph $H$ on $v$ vertices with $\tw(H)\le t$;\\
a distinguished pattern vertex $a \in V(H)$;\\
a coloring $\zeta:[n]\to [v]$;\\
a rooted nice tree decomposition $\mathcal D=(\mathcal T,\{B_x\}_{x\in V(\mathcal T)})$ of $H$, with root bag $B_r=\{a\}$.
}
\KwOut{
For every $z\in[n]$, the value $T_{H,a}^{\mathrm{col}}(G,\zeta;z)$ defined in \eqref{eq:THaz}
}

Run the same dynamic program as in Algorithm~\ref{alg:treewidth-colorful-character}\;

\ForEach{$z\in[n]$}{
  let $\psi_z:B_r\hookrightarrow[n]$ be given by $\psi_z(a)=z$\;
  $T_{H,a}^{\mathrm{col}}(G,\zeta;z)\gets \DP_r(\psi_z,[v])$\;
}

\Return $\bigl(T_{H,a}^{\mathrm{col}}(G,\zeta;z)\bigr)_{z\in[n]}$\;
\end{algorithm}

\begin{lemma}\label{lem:treewidth_recovery_algo}
Algorithm~\ref{alg:treewidth-recovery} computes, for all $z\in[n]$, the colorful recovery polynomials
\begin{equation}
T_{H,a}^{\mathrm{col}}(G,\zeta;z)
=
\sum_{\phi:V(H)\hookrightarrow[n]}
\mathbf{1}\!\left\{
\begin{array}{l}
\zeta\circ\phi \text{ is injective on }V(H),\\
\phi(a)=z
\end{array}
\right\}
\prod_{\{u,v\}\in E(H)} X_{\phi(u)\phi(v)}. \label{eq:THaz}
\end{equation}
simultaneously in time $2^{O(v)} n^{t+1}$. More generally, if $\sH$ is a family of such pattern graphs, then the corresponding recovery scores $\sum_{H\in\sH}\ \sum_{a\in\mathcal R(H)} T_{H,a}^{\mathrm{col}}(G,\zeta;z)$ can be computed in time $|\sH|\cdot 2^{O(v)} n^{t+1}$.
\end{lemma}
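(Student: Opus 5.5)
The plan is to reduce Lemma~\ref{lem:treewidth_recovery_algo} to the correctness invariant already established for Algorithm~\ref{alg:treewidth-colorful-character} in Lemma~\ref{lem:treewidth_algo}. Fix $H$, $a$, and a nice tree decomposition $\mathcal D$ whose root bag is $B_r=\{a\}$. Such a decomposition exists by rerooting at a bag containing $a$ and appending a chain of forget nodes, exactly as in the construction preceding Algorithm~\ref{alg:treewidth-colorful-character}; this keeps the width at most $t$ and the number of bags $O(v)$.

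The proof of Lemma~\ref{lem:treewidth_algo} shows, by induction over the post-order, that every table entry $\DP_x(\psi,C)$ equals its defining signed sum over colorful embeddings of $H_x$ extending $\psi$ and using color set $C$. I would apply this invariant at the root, where $H_r=H$, $B_r=\{a\}$, and $E(H[B_r])=\emptyset$. Hence $\DP_r(\psi_z,[v])$ is the sum, over embeddings $\phi$ with $\phi(a)=z$ and $\zeta\circ\phi$ injective with image $[v]$, of $\prod_{e\in E(H)}X_{\phi(e)}$. Because $|V(H)|=v$, injectivity of $\zeta\circ\phi$ forces the image to be all of $[v]$. So this quantity is exactly $T^{\mathrm{col}}_{H,a}(G,\zeta;z)$ in \eqref{eq:THaz}.

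For the running time, the dynamic program itself costs $2^{O(v)}n^{t+1}$ by Lemma~\ref{lem:treewidth_algo}. Reading off the $n$ root entries costs only $O(n)$ more, and this yields all $z$ simultaneously.

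For the family statement, I would loop over $H\in\sH$ and over $a\in\mathcal R(H)$ with $|\mathcal R(H)|\le v$, and add the resulting length-$n$ vectors. This gives a cost of $|\sH|\cdot v\cdot 2^{O(v)}n^{t+1}=|\sH|\cdot 2^{O(v)}n^{t+1}$. Computing $\mathcal R(H)$ and the rerooted decompositions depends only on $H$ and costs $\poly(v)$ per pattern (or is supplied by Lemma~\ref{lem:enumerate-bounded-treewidth-patterns}), so it is absorbed into this bound.

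The main point to check carefully is that rerooting at $a$ yields a valid nice decomposition with singleton root $\{a\}$ and unchanged width. This is routine, since any bag containing $a$ can be made the root and forget nodes can remove the other vertices of that bag one at a time. Everything else is inherited from Lemma~\ref{lem:treewidth_algo}.
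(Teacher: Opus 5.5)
Your proposal is correct and matches the paper's proof: both root the nice decomposition at the bag $\{a\}$, reuse the dynamic program of Algorithm~\ref{alg:treewidth-colorful-character} unchanged, read off $\DP_r(\psi_z,[v])=T^{\mathrm{col}}_{H,a}(G,\zeta;z)$ for all $z$ at once, and absorb the factor $|\mathcal R(H)|\le v$ into $2^{O(v)}$ for the family bound. The only difference is that the paper adds a remark, via the connectedness property of tree decompositions, on why no extra state is needed to track the image of $a$. That remark is explanatory; your direct appeal to the root invariant already suffices.
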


\begin{proof}
Root the nice tree decomposition at a bag $\{a\}$. Then, by the connectedness property of tree decompositions, for every node $x$, either $a\in B_x$, or $a$ does not appear in the subtree rooted at $x$. Hence the image of the distinguished vertex $a$ is determined by the current bag assignment whenever it is relevant, and no additional $n$-dimensional bookkeeping is needed.

Therefore the recovery algorithm is exactly the same as the detection algorithm: it computes the same DP table $\DP_x(\psi,C)$ for every node $x$, bag assignment $\psi$, and color set $C$. At the root $r$, whose bag is $\{a\}$, the assignment $\psi_z(a)=z$ forces the distinguished vertex to map to $z$. Thus $\DP_r(\psi_z,[v])=T_{H,a}^{\mathrm{col}}(G,\zeta;z)$ for every $z\in[n]$, and all recovery values are read off simultaneously from the root table.

Since no new state parameter is introduced, the running time is identical to that in Lemma~\ref{lem:treewidth_algo}, namely $2^{O(v)} n^{t+1}$. Summing over all $H\in\sH$ and all $a\in\mathcal R(H)$ gives the second claim, since the additional factor $|\mathcal R(H)|\leq v$ is absorbed by $2^{O(v)}$.
\end{proof}

For the recovery statistic, we use the same sampling scheme as in the detection case. Namely, we sample $q= \lceil 1 / \rho_v^2 \rceil$ independent random colorings $\zeta_1,\dots,\zeta_q:[n]\to [v]$, compute the corresponding colorful recovery contributions for each $H\in\sH$, and then average and rescale by $1/\rho_v$. This gives an estimator $\widehat F_{\sH,i}(G)$ for each host vertex $i\in[n]$. See Algorithm~\ref{alg:estimate-family-recovery}. Since the recovery values for all $i\in[n]$ are obtained simultaneously from the same dynamic program, the total running time remains $|\sH|\cdot n^{t+1}2^{O(v)}$.
\begin{algorithm}[t!]
\LinesNumbered
\DontPrintSemicolon
\caption{\textsc{EstimateFamilyRecovery}$(G,\sH,q)$}
\label{alg:estimate-family-recovery}
\KwIn{
A host graph $G=([n],E)$ with edge-weights $X_{ij}\in\{\pm1\}$;\\
a family $\sH$ of pattern graphs, each on $v$ vertices;\\
for each $H\in\sH$ and each $a\in\mathcal R(H)$, a nice tree decomposition $\mathcal D_{H,a}$ of width at most $t$, rooted at the bag $\{a\}$;\\
an integer $q\ge1$.\\
}
\KwOut{
For every $i\in[n]$, an estimator $\widehat F_{\sH,i}(G)$ for $F_{\sH,i}(G)$.
}

\BlankLine
Set $\rho_v\gets v! / v^v$\;
Sample colorings $\zeta_1,\ldots,\zeta_q:[n]\to[v]$ independently and uniformly at random\;
\ForEach{$i\in[n]$}{
    $\widehat F_{\sH,i}(G)\gets 0$\;
}

\ForEach{$H\in\sH$}{
    \ForEach{$a\in\mathcal R(H)$}{
    \ForEach{$i\in[n]$}{
        $S_{H,a,i}\gets 0$\;
    }

    \For{$s\gets 1$ \KwTo $q$}{
        compute $(T_{H,a}^{\mathrm{col}}(G,\zeta_s;i))_{i\in[n]}$ using Algorithm~\ref{alg:treewidth-recovery}\;
        \ForEach{$i\in[n]$}{
            $S_{H,a,i}\gets S_{H,a,i}+T_{H,a}^{\mathrm{col}}(G,\zeta_s;i)$\;
        }
    }

    \ForEach{$i\in[n]$}{
        $\widehat T_{H,a}(G;i)\gets S_{H,a,i} / (\rho_v q)$\;
        $\widehat F_{\sH,i}(G)\gets \widehat F_{\sH,i}(G)+\widehat T_{H,a}(G;i)$\;
    }
}
}
\Return $(\widehat F_{\sH,i}(G))_{i\in[n]}$\;
\end{algorithm}

\begin{lemma}
\label{lemma_recovery_concentration}
\label{lem:colorcoding-recovery}
Let $\mathcal H=\mathcal G^{\mathrm{conn}}_{v,\le t}$, let $v=v(n)\to\infty$ satisfy $v=o(\log n/\log\log n)$, and let $\widehat F_{\mathcal H,i}$ be the estimator computed by Algorithm~\ref{alg:estimate-family-recovery} with $q=\lceil \rho_v^{-2}\rceil$. For every fixed graph $G$ and every $i\in[n]$,
\[
  \E_{\zeta_1,\ldots,\zeta_q}
  [\widehat F_{\mathcal H,i}(G)]
  =
  F_{\mathcal H,i}(G).
\]
Moreover, suppose $G\sim \PP_{n,k}$, $k\ge \lambda\sqrt n$, and $\lambda>1/\sqrt{c(t)}$. Fix $c_0<c(t)$ such that $c_0\lambda^2>1$. Let $\mu \colonequals \nu(\mathcal H)(k-1)_{v-1}$ be the planted conditional mean from Lemma~\ref{lem:rooted-moments}. For $b\in\{0,1\}$, let $\mathsf B_{i,b} \colonequals \{\mathbf 1_{\{i\in C\}}=b\}$.
Then, uniformly over all $i$ and $b$ for which $\Pr(\mathsf B_{i,b})>0$,
\[
  \E_{G,\zeta}
  \left[
    \left(
      \frac{\widehat F_{\mathcal H,i}(G)-F_{\mathcal H,i}(G)}{\mu}
    \right)^2
    \,\middle|\,\mathsf B_{i,b}
  \right]
  =O(\rho_v).
\]
Consequently, for $v=\lfloor\sqrt{\log n}\rfloor$,
\[
  \E_{G,\zeta}
  \left[
    \left(
      \frac{\widehat F_{\mathcal H,i}(G)}{\mu}
      -b
    \right)^2
    \,\middle|\,\mathsf B_{i,b}
  \right]
  \le
  \exp(-\Omega(\sqrt{\log n})).
\]
The same bound therefore holds without conditioning, with $b$ replaced by $\mathbf 1_{\{i\in C\}}$.
\end{lemma}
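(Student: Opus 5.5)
Unbiasedness is immediate. For each $H$, root $a$, and coloring $\zeta_s$, Lemma~\ref{lem:treewidth_recovery_algo} shows that the computed value is exactly $\sum_{\phi:\phi(a)=i}\mathbf 1\{\zeta_s\text{ colorful on }\phi\}\chi_{H,\phi}(X)$. Every fixed embedding is colorful with probability $\rho_v$, so the expectation over $\zeta_s$ equals $\rho_v$ times the corresponding term of $F_{\mathcal H,i}$. Rescaling by $1/(\rho_v q)$, averaging over $s$, and summing over $H\in\mathcal H$ and $a\in\mathcal R(H)$ recovers \eqref{eq:rooted-sum}.

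For the variance, condition on $G$. The $q$ colorings are i.i.d., so
\[
\E_\zeta\bigl[(\widehat F_{\mathcal H,i}-F_{\mathcal H,i})^2\bigr]\le \frac{1}{q\rho_v^2}\E_{\zeta}\bigl[Z_i(G,\zeta)^2\bigr],
\]
where $Z_i=\sum_{H,a}\sum_{\phi(a)=i}\mathbf 1\{\zeta\text{ colorful on }\phi\}\chi_{H,\phi}$ is the colorful sum for a single coloring. Since $q\ge\rho_v^{-2}$, it suffices to show
\[
\E_{G,\zeta}\bigl[Z_i^2\mid\mathsf B_{i,b}\bigr]=O(\rho_v\,\mu^2).
\]
Expanding $Z_i^2$ gives a sum over pairs $(H,a,\phi),(H',a',\psi)$ with $\phi(a)=\psi(a')=i$ of $\E_G[\chi_{H,\phi}\chi_{H',\psi}\mid\mathsf B_{i,b}]$ times the probability that both images are colorful. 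That joint probability depends only on the overlap $u=|\phi([v])\cap\psi([v])|\ge1$. It is at most $\rho_v$ in general, and for disjoint images it is about $\rho_v^2$, but disjointness cannot occur here because both images contain $i$. I would bound the colorfulness probability uniformly by $\rho_v$ and pull it out. What remains is at most $\rho_v\sum|\E_G[\chi\chi\mid\mathsf B_{i,b}]|$.

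Each such expectation is a nonnegative probability, namely $\Pr[V(\phi(H)\Delta\psi(H'))\subseteq C\mid\mathsf B_{i,b}]$, so it coincides with its absolute value. The sum is therefore exactly $\E[F_{\mathcal H,i}^2\mid\mathsf B_{i,b}]$. By Lemma~\ref{lem:rooted-moments} and Lemma~\ref{lem:rooted-mse}, this is $\mu^2 b+\mu^2(1+o(1))(c_0\lambda^2)^{-(v-1)}=O(\mu^2)$, which yields the $O(\rho_v)$ claim. The step to check carefully is that the expectations really are nonnegative: conditioned on the clique, edge variables are either $1$ or mean zero, so this holds, and no absolute-value loss occurs.

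For the consequence, take $v=\lfloor\sqrt{\log n}\rfloor$. Then $\rho_v=v!/v^v\le e^{-v+O(\log v)}=\exp(-\Omega(\sqrt{\log n}))$, and $(c_0\lambda^2)^{-(v-1)}=\exp(-\Omega(\sqrt{\log n}))$. Combining via $(x+y)^2\le2x^2+2y^2$ with Lemma~\ref{lem:rooted-mse} gives the stated bound. The unconditional version follows by averaging over $b$. The bounds are uniform in $i$ because every estimate used is uniform in $i$.
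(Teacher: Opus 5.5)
Your proposal is correct and follows essentially the same route as the paper: the same conditional-on-$G$ variance bound, the same step of pulling out the joint colorfulness probability $\le \rho_v$ using nonnegativity of the planted second moments $\E_G[\chi_{H,\phi}\chi_{H',\psi}\mid\mathsf B_{i,b}]$, and the same normalization via Lemmas~\ref{lem:rooted-moments} and~\ref{lem:p-var}. The only cosmetic difference is at the end: the paper combines the two error terms exactly, since the cross term vanishes by conditional unbiasedness, whereas you use $(x+y)^2\le 2x^2+2y^2$, which costs only a harmless factor of $2$.
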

\noindent The proof is deferred to Appendix~\ref{apx:recovery-concentration}.

\begin{corollary}\label{cor:recovery-color-coding-supplies-cleanup}
For $\mathcal H=\mathcal G^{\mathrm{conn}}_{v,\le t}$, $v=\lfloor\sqrt{\log n}\rfloor$, and $k\ge \lambda\sqrt n$ with $\lambda>1/\sqrt{c(t)}$, Algorithm~\ref{alg:estimate-family-recovery} outputs the raw estimates $\widehat F_{\mathcal H,i}$. After the normalization $\widehat p_i:=\widehat F_{\mathcal H,i}/\mu$, they satisfy, uniformly over all $i$ and $b\in\{0,1\}$ for which $\Pr(\mathsf B_{i,b})>0$,
\[
\E\!\left[(\widehat p_i-b)^2\,\middle|\,\mathsf B_{i,b}\right]
\leq \exp(-\Omega(\sqrt{\log n})).
\]
In particular, these normalized estimates satisfy the hypothesis of Lemma~\ref{lem:recovery}, and the raw estimates can be computed in time $n^{t+1+o(1)}$.
\end{corollary}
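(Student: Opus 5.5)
The corollary combines three earlier results: Lemma~\ref{lem:colorcoding-recovery} gives the mean-square bound, Lemma~\ref{lem:treewidth_recovery_algo} with Lemma~\ref{lem:enumerate-bounded-treewidth-patterns} gives the runtime, and the hypothesis of Lemma~\ref{lem:recovery} is then checked directly.

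\textbf{Mean-square bound.} Fix $v=\lfloor\sqrt{\log n}\rfloor$. This satisfies $v\to\infty$ and $v=o(\log n/\log\log n)$, so Lemma~\ref{lem:colorcoding-recovery} applies with $\mathcal H=\mathcal G^{\mathrm{conn}}_{v,\le t}$. Its final display gives, uniformly over $i$ and $b$ with $\Pr(\mathsf B_{i,b})>0$,
\[
\E\bigl[(\widehat p_i-b)^2\mid\mathsf B_{i,b}\bigr]\le\exp(-\Omega(\sqrt{\log n})).
\]
This is exactly the first claim. If one wants to see where it comes from, expand $(\widehat p_i-b)^2\le 2(\widehat p_i-p_i)^2+2(p_i-b)^2$. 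The first term is bounded by the $O(\rho_v)$ color-coding error, with $\rho_v=v!/v^v=e^{-\Theta(v)}$. The second term is bounded by Lemma~\ref{lem:rooted-mse}, which gives $(1+o(1))(c_0\lambda^2)^{-(v-1)}$. Both are $\exp(-\Omega(\sqrt{\log n}))$.

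\textbf{Hypothesis of Lemma~\ref{lem:recovery}.} Take $\delta_N=\exp(-\Omega(\sqrt{\log N}))$. Then
\[
\delta_N^{1/2}=\exp(-\Omega(\sqrt{\log N}))=o((\log N)^{-3}),
\]
since $\sqrt{\log N}\gg 3\log\log N$. Two further points need checking. First, Lemma~\ref{lem:recovery} requires the bound for every $K\ge\lambda_-\sqrt N$, where $\lambda_-$ is any fixed constant strictly between $1/\sqrt{c(t)}$ and $\lambda$. This holds because Lemma~\ref{lem:colorcoding-recovery} is stated for arbitrary $k\ge\lambda\sqrt n$, so it can be applied with $\lambda_-$ in place of $\lambda$. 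Second, the normalization must agree: $\mu_{N,K}=\nu(\mathcal H_N)(K-1)_{v-1}$ is exactly the $\mu$ of Lemma~\ref{lem:rooted-moments} with $k=K$, and the conditioning events $\{i\in C\cap U\}$ coincide with $\mathsf B_{i,b}$.

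\textbf{Runtime.} Lemma~\ref{lem:treewidth_recovery_algo} computes all recovery scores for a single coloring in time $|\mathcal H|\,2^{O(v)}n^{t+1}$. Algorithm~\ref{alg:estimate-family-recovery} repeats this for $q=\lceil\rho_v^{-2}\rceil=2^{O(v)}$ colorings. We also need $|\mathcal H|$ and the rooted nice decompositions; by Lemma~\ref{lem:enumerate-bounded-treewidth-patterns}, generating these costs $\exp(O_t(v\log v))$. With $v=\lfloor\sqrt{\log n}\rfloor$ we have
\[
\exp(O(v\log v))=\exp\bigl(O(\sqrt{\log n}\,\log\log n)\bigr)=n^{o(1)},
\]
so the total time is $n^{t+1+o(1)}$.

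The only real obstacle is confirming that the bound of Lemma~\ref{lem:colorcoding-recovery} is uniform over all $K\ge\lambda_-\sqrt N$, rather than holding only for the fixed sequence $k(n)$. I would check that every constant in its proof depends only on $\lambda_-$ and $t$, and not on the particular sequence $k(n)$.
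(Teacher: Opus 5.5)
Your proposal is correct and matches the paper's proof. The paper likewise reads the mean-square bound directly off Lemma~\ref{lemma_recovery_concentration}, and obtains the runtime from Lemmas~\ref{lem:treewidth_recovery_algo} and~\ref{lem:enumerate-bounded-treewidth-patterns} together with $q=2^{O(v)}=n^{o(1)}$ and $|\mathcal H|=n^{o(1)}$; your extra checks ($\delta_N^{1/2}=o((\log N)^{-3})$, and uniformity over $K\ge\lambda_-\sqrt N$ for a fixed $\lambda_-\in(1/\sqrt{c(t)},\lambda)$) are left implicit there and do go through, since the estimates in the proof of Lemma~\ref{lem:p-var} use $k$ essentially only through $n/k^2\le\lambda_-^{-2}$ and $k\ge\lambda_-\sqrt n$.
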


\begin{proof}
The conditional mean-square guarantee is exactly the conclusion of Lemma~\ref{lemma_recovery_concentration} and gives the corresponding hypothesis of Lemma~\ref{lem:recovery}; averaging over the conditioning events having positive probability also gives the unconditional bound. The running time follows from Lemmas~\ref{lem:treewidth_recovery_algo} and~\ref{lem:enumerate-bounded-treewidth-patterns}, the choice $q=\lceil\rho_v^{-2}\rceil=2^{O(v)}=n^{o(1)}$, and the bound $|\mathcal H|=n^{o(1)}$ for $v=\lfloor\sqrt{\log n}\rfloor$ and fixed $t$.
\end{proof}

\begin{lemma}\label{lem:enumerate-bounded-treewidth-patterns}
For every fixed $t\geq 1$, all graphs in $\mathcal G^{\mathrm{conn}}_{v,\leq t}$ can be enumerated, together with a width-$t$ tree decomposition of each graph, in time and space $\exp(O_t(v\log v))$.
For every retained graph $H$ and every $a\in V(H)$, a nice decomposition normalized to have singleton root bag $\{a\}$ can be constructed within the same total bound.
\end{lemma}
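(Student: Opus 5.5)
Since $|\mathcal G^{\mathrm{conn}}_{v,\le t}|\le 2^{v^2}$, enumerating all graphs on $[v]$ and filtering would already be $\exp(O(v^2))$, which is too slow. The plan is therefore to generate only graphs of treewidth at most $t$, together with their decompositions, and to charge everything against the upper bound of Theorem~\ref{lem:count-tw}, namely $|\mathcal G^{\mathrm{conn}}_{v,\le t}|\le (t2^tv)^v=\exp(O_t(v\log v))$. The generator is the standard characterization of treewidth at most $t$ as subgraphs of $t$-trees, or more flexibly as subgraphs of graphs built by a decomposition of a fixed shape.

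Concretely, every graph $H$ on $[v]$ with $\tw(H)\le t$ and $v\ge t+1$ has a tree decomposition of width $t$ with at most $v$ bags, each of size at most $t+1$. Equivalently, $H$ is a spanning subgraph of a $t$-tree on $[v]$, and we may also assume a small-tree-decomposition normal form. I would enumerate all labeled $t$-trees, or directly all "decomposition skeletons", as follows.

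1. Choose a tree $\mathsf T$ on at most $v$ nodes. There are at most $v^{v}$ labeled trees by Theorem~\ref{thm:cayley}, and they are enumerable via Prüfer codes in $v^{O(v)}$ time.
2. Assign each node a bag in $\binom{[v]}{\le t+1}$, giving at most $v^{(t+1)v}$ choices.
3. Keep only the assignments that satisfy the three decomposition axioms, checkable in $\poly(v)$ time.
4. For each surviving skeleton, enumerate all edge sets $E\subseteq\bigcup_x\binom{B_x}{2}$. There are at most $2^{\binom{t+1}{2}v}$ of these.
5. Keep only the connected $H=([v],E)$, attaching the given skeleton as its width-$t$ decomposition.

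Every graph in $\mathcal G^{\mathrm{conn}}_{v,\le t}$ arises this way, since it has some decomposition with at most $v$ bags (remove redundant bags contained in neighbors). The total work is $v^{O(tv)}2^{O(t^2v)}\poly(v)=\exp(O_t(v\log v))$. Graphs appear with repetition, so I deduplicate them by storing the edge sets in a dictionary keyed by a canonical bitstring of length $\binom v2$, keeping the first decomposition found. This costs $\poly(v)$ per insertion, so time and space remain $\exp(O_t(v\log v))$.

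For the second part, given $H$, its decomposition $(\mathsf T,\{B_x\})$, and $a\in V(H)$, I would root $\mathsf T$ at a bag containing $a$. I then apply the standard nice-decomposition conversion (\cite{cygan2015parameterized,kloks1994treewidth}) in time $\poly(v)$ with $O(tv)=O(v)$ nodes: binarize high-degree nodes into join chains, insert introduce/forget chains between adjacent bags, and expand leaves to singletons. Finally I append above the root a chain of forget nodes removing all vertices except $a$, yielding root bag $\{a\}$. Doing this for each of the $v$ choices of $a$ and each retained $H$ multiplies the cost by $\poly(v)$, which stays within $\exp(O_t(v\log v))$.

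The main point needing care is the claim that a decomposition with at most $v$ bags, each of size at most $t+1$, always exists, since this is what makes the skeleton count $v^{O(tv)}$ rather than unbounded. I would prove it by contracting any edge $xy$ of $\mathsf T$ with $B_x\subseteq B_y$, which preserves the axioms and the width. In the resulting decomposition, every leaf bag contains a vertex appearing in no other bag, and induction on $v$ bounds the number of bags by $v$. Everything else is routine bookkeeping.
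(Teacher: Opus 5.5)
Your proof is correct. The second part, rooting at a bag containing $a$, converting to a nice decomposition, and appending a forget chain down to $\{a\}$, is exactly what the paper does. The enumeration step takes a different route.

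\textbf{The paper's route.} It does not enumerate decomposition skeletons. It enumerates labeled $t$-trees through construction histories: an initial $(t+1)$-clique, an ordering of the remaining vertices, and, for each new vertex, one of the $O_t(v)$ existing $t$-cliques to attach it to. It then takes all $2^{O_t(v)}$ spanning subgraphs of each $t$-tree and reads the decomposition directly off the history. There is one bag for the initial clique and one bag for each added vertex together with its attachment clique. This relies on the standard fact that every graph of treewidth at most $t$ on at least $t+1$ vertices is a spanning subgraph of a $t$-tree. In exchange it needs no axiom checking. It also yields decompositions with exactly $v-t$ bags, each of size $t+1$, from roughly $v^{(2+o(1))v}$ histories.

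\textbf{Your route.} You filter arbitrary (tree, bag-assignment) pairs against the decomposition axioms, so you use only the definition of treewidth and avoid the $t$-tree completion fact. The one structural input is your bound of $v$ on the number of bags, and your contraction argument for it is sound. After all containment edges are contracted, a leaf bag has a vertex outside its neighbor's bag. By running intersection that vertex lies in no other bag, so deleting leaves strictly decreases the vertex count. The cost is a cruder count, about $v^{(t+2)v}$ skeletons, which is immaterial for fixed $t$.

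\textbf{Two minor remarks.} In your step~3 only vertex coverage and running intersection can be checked, since edge coverage is enforced by step~4. The appeal to the upper bound in Theorem~\ref{lem:count-tw} is unnecessary, because your generator's own count already gives $\exp(O_t(v\log v))$.
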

\noindent The proof is deferred to Appendix~\ref{apx:enumerate-bounded-treewidth-patterns}.

Combining Lemma~\ref{lem:holdout-cleanup-generic} with Corollary~\ref{cor:recovery-color-coding-supplies-cleanup} proves the recovery bound in Theorem~\ref{thm:intro-main-general} for bounded-treewidth patterns when the clique location is random. The detection bound follows from Theorem~\ref{thm:general-separation} and Lemma~\ref{lem:colorcoding-detection}. Thus this section proves the randomized bounds obtained by applying color coding to bounded-treewidth patterns. To obtain deterministic algorithms, the argument in Section~\ref{sec:derandomization} replaces random colorings by balanced hash families. When the algorithm must work for every clique location, the results there also replace the random holdout split by a splitter family.
\section{Matrix multiplication acceleration}
\label{sec:mm-acceleration}

In this section, we use matrix multiplication to improve the above runtime bounds. Square matrix multiplication speeds up the computation of the full signed subgraph count statistics for graphs of treewidth at most $2$, while rectangular matrix multiplication computes seeded versions of these statistics for trees (of treewidth 1) for many collections of seed vertices at once. We write $\tau=\tau_{\mathrm{rand}}$ below unless otherwise stated.

\subsection{Square matrix multiplication acceleration}
\label{sec:mm-square}

As in Section~\ref{sec:approx-detection}, for a fixed coloring $\zeta:[n]\to[v]$, color coding computes
\[
T_H^{\mathrm{col}}(X,\zeta) = \sum_{\phi:V(H)\hookrightarrow[n]} \mathbf 1\{\zeta\circ\phi \text{ is injective}\}
\prod_{\{a,b\}\in E(H)} X_{\phi(a)\phi(b)}.
\]
The purpose of using matrix multiplication is that, once the coloring is fixed, it speeds up the dynamic program used to evaluate the colorful statistic for patterns of treewidth at most $2$. A $2$-tree is a graph built recursively as follows. Start from a single edge. At each step, add one new vertex and connect it to both endpoints of some existing edge. Equivalently, every new vertex is added on top of an existing edge, creating a new triangle. A graph has treewidth at most $2$ if and only if it is a subgraph of a $2$-tree; such graphs are sometimes called \emph{partial $2$-trees}.

\begin{lemma}\label{lem:enumerate-partial-2trees}
For integers $v\ge2$ satisfying $v=o(\log n/\log\log n)$, one may enumerate all connected labeled graphs $H\in\mathcal G^{\mathrm{conn}}_{v,\le 2}$ together with one $2$-tree completion $K_H$ and one rooted construction history of $K_H$, using $n^{o(1)}$ preprocessing time and space.
\end{lemma}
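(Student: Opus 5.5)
The plan is to enumerate $2$-trees on $[v]$ together with their construction histories directly, and then obtain every connected partial $2$-tree as a spanning connected subgraph of one of them. Everything is brute force. The only real work is bounding the total count by $\exp(O(v\log v))$, and then observing that this is $n^{o(1)}$ because $v=o(\log n/\log\log n)$.

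\textbf{Step 1: enumerate construction histories.} A rooted construction history is:
\begin{itemize}
\item an ordering $u_1,\dots,u_v$ of $[v]$, with initial edge $\{u_1,u_2\}$;
\item for each $j\ge 3$, a choice of an existing edge $\{u_a,u_b\}$ with $a<b<j$ on which $u_j$ is attached.
\end{itemize}
After $j-1$ vertices a $2$-tree has $2(j-1)-3$ edges, so there are at most $2j$ choices at step $j$. The number of histories is therefore at most
\[
v!\prod_{j\le v}2j\le (2v)^{2v}=\exp(O(v\log v)).
\]
All of them can be generated by a depth-first recursion in time polynomial in $v$ per history. Each history determines a $2$-tree $K$ on $[v]$ (for $v=2$, just the single edge).

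\textbf{Step 2: enumerate subgraphs and keep the connected ones.} For each $K$, I enumerate all $2^{2v-3}$ subsets of $E(K)$ and test connectivity and spanning of $[v]$ in $O(v^2)$ time. A graph has treewidth at most $2$ if and only if it is a subgraph of a $2$-tree, and a connected graph on $v\ge 2$ vertices with treewidth at most $2$ is a spanning subgraph of some $2$-tree on the same vertex set. (If the $2$-tree containing it had extra vertices, restrict to a $2$-tree on $[v]$: deleting a simplicial degree-$2$ vertex of a $2$-tree leaves a $2$-tree, and we may iterate over vertices outside $[v]$ in reverse construction order.) Hence every $H\in\mathcal G^{\mathrm{conn}}_{v,\le 2}$ arises in this process. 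Conversely, every output is a connected subgraph of a $2$-tree, so it lies in $\mathcal G^{\mathrm{conn}}_{v,\le 2}$.

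\textbf{Step 3: deduplicate.} Store each $H$ by its edge-set bitmask in a dictionary, keeping the first completion $K_H$ and history found. The total work and space are
\[
\exp(O(v\log v))\cdot 2^{2v}\cdot\poly(v)=\exp(O(v\log v)),
\]
and with $v=o(\log n/\log\log n)$ we get $v\log v=o(\log n)$, so this is $n^{o(1)}$.

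The main obstacle is the completeness claim in Step 2: that every connected treewidth-$\le2$ graph on exactly $[v]$ embeds in a $2$-tree on exactly $[v]$. I would prove it by induction on a tree decomposition of width $2$. First make all bags of size exactly $3$, with adjacent bags sharing exactly $2$ vertices. This is done by padding small bags with vertices from neighboring bags, which is possible since $v\ge3$, and then contracting duplicate bags. Reading the bags off in BFS order from a root bag then gives a $2$-tree construction on $[v]$: each new bag introduces one new vertex attached to the shared pair, which is an edge of the completion. The case $v=2$ is trivial.
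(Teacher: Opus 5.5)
Your proposal is correct and is essentially the paper's own proof. Both enumerate rooted construction histories of labeled $2$-trees on $[v]$ (at most $\exp(O(v\log v))$ of them), take all $2^{2v-3}$ spanning subgraphs of each, keep the connected ones, and deduplicate by edge bit vectors, for total time and space $\exp(O(v\log v))=n^{o(1)}$.

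The paper simply cites the standard fact that graphs of treewidth at most $2$ are subgraphs of $2$-trees on the same vertex set, and that is all you need. Your own attempted justifications of this fact are flawed as written, though.
\begin{itemize}
\item The deletion argument fails because a vertex outside $[v]$ may be the parent of later vertices of $[v]$, so it is never simplicial of degree $2$. For example, build $ab$, then $c$ on $ab$, then $d$ on $bc$, with $c\notin[v]$: deleting $c$ leaves the path $a$--$b$--$d$, which is not a $2$-tree.
\item Your bag-smoothing sketch handles only small bags and duplicate bags. It must also insert intermediate bags when adjacent full bags share fewer than two vertices, such as $\{b,c,d\}$ between $\{a,b,c\}$ and $\{c,d,e\}$.
\end{itemize}
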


\begin{proof}
A graph has treewidth at most $2$ if and only if it is a partial $2$-tree, that is, a subgraph of a $2$-tree on the same vertex set. We enumerate labeled $2$-trees by construction histories. Such a history is specified by choosing an initial edge, which we regard as the root edge, ordering the remaining $v-2$ vertices, and, when each new vertex is introduced, choosing one of the current edges as its parent edge. At every step the current graph has at most $2v$ edges, so the number of such histories is at most
\[
\binom v2 (v-2)! (2v)^v=\exp(O(v\log v)).
\]
Each $2$-tree on $v$ vertices has exactly $2v-3$ edges, and therefore has at most $2^{2v-3}=\exp(O(v))$
spanning subgraphs. Thus, by enumerating all construction histories and all spanning subgraphs of the resulting $2$-trees, we produce every labeled partial $2$-tree on vertex set $[v]$, possibly with multiplicity. We then discard disconnected subgraphs and deduplicate the remaining graphs by the bit vectors of their edge sets, keeping, for each surviving graph $H$, one $2$-tree completion $K_H$ and the rooted construction history of $K_H$ from which it was produced. The number of generated objects is $\exp(O(v\log v))$, and each bit vector has length $O(v^2)$, so both the time and space used by this preprocessing are $\exp(O(v\log v))$. Since $v=o(\log n/\log\log n)$, this is $n^{o(1)}$.
\end{proof}

\begin{lemma}\label{lem:one-partial-2tree-mm}
Fix a connected partial $2$-tree $H$ on $[v]$, a $2$-tree completion $K$ of $H$, a rooted construction history of $K$, and a coloring $\zeta:[n]\to[v]$. Then $T_H^{\mathrm{col}}(X,\zeta)$ and, for any root vertex $r\in V(H)$, the vector $\bigl(T^{\mathrm{col}}_{H,r}(X,\zeta;i)\bigr)_{i\in[n]}$ can be computed in time $2^{O(v)}n^{\omega+o(1)}$.
\end{lemma}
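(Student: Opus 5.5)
We organize the dynamic program around the edges of the $2$-tree completion $K$, processed in reverse order of the construction history. For every edge $\{a,b\}$ of $K$, let $K_{ab}$ be the sub-$2$-tree hanging below $\{a,b\}$. It consists of $a$, $b$, and all vertices whose construction descends from edges created at or after $\{a,b\}$ on that side. Let $H_{ab}$ be the corresponding induced subgraph of $H$. For every ordered pair $(x,y)\in[n]^2$ with $x\neq y$ and every color set $C\subseteq[v]$, we store the table
\[
M_{ab}^{C}(x,y)\colonequals\sum_{\phi}\mathbf 1\{\phi(a)=x,\ \phi(b)=y,\ \zeta\circ\phi \text{ injective},\ \zeta(\phi(V(H_{ab})))=C\}\prod_{e\in E(H_{ab})\setminus\{\{a,b\}\}} X_{\phi(e)} .
\]
This is an $n\times n$ matrix for each of the $2^v$ color sets. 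The edge $\{a,b\}$ itself, if present in $H$, is deferred and multiplied in by its parent as an entrywise factor $X_{xy}$. Since $K$ has $2v-3$ edges, there are $2^{O(v)}$ matrices in total.

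The recursion has two steps. First, the subtree of $\{a,b\}$ is a union of ``fans'': for each child vertex $c$ attached on $\{a,b\}$, the piece consists of the triangle $abc$ together with the subtrees below $\{a,c\}$ and $\{c,b\}$. For a single child $c$ with color set $C_c$, the contribution is a sum over splits $C_1\sqcup C_2=C_c$ with $C_1\ni\zeta(x)$ and $C_2\ni\zeta(y)$ sharing the color of $c$. Concretely, we compute the restricted product
\[
P^{C_c}(x,y)=\sum_{z}\bigl(M^{C_1}_{ac}\circ W_{ac}\bigr)(x,z)\,\bigl(M^{C_2}_{cb}\circ W_{cb}\bigr)(z,y),
\]
where $W_{ac}$ is the all-ones matrix or the matrix $X$ depending on whether $\{a,c\}\in E(H)$, and $\circ$ is the entrywise product. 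To enforce that $z$ gets the color of $c$ and that the colors are disjoint apart from the shared ones, we restrict the summation index $z$ to the color class of $\zeta^{-1}(\gamma)$ for each $\gamma$. Each such restricted product is an ordinary matrix product of $n\times n$ matrices (with rows and columns zeroed outside $\zeta^{-1}$ of the relevant color), costing $n^{\omega+o(1)}$. Colorfulness is the reason injectivity comes for free: $z$ automatically differs from $x$ and $y$ because its color differs. The bookkeeping is that the shared colors $\zeta(x),\zeta(y),\gamma$ are counted once in the union.

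Second, multiple children of the same edge combine by a subset convolution over color sets, applied entrywise over $(x,y)$. For each fixed pair of split sets this is an entrywise product of matrices, costing $O(n^2)$ with $3^v$ splits, so it is dominated by the products.

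The base of the recursion is a leaf edge, whose table is $M^{\{\zeta(x),\zeta(y)\}}_{ab}(x,y)=\mathbf 1\{\zeta(x)\neq\zeta(y)\}$. At the root edge $\{a_0,b_0\}$, we multiply entrywise by $X$ if $\{a_0,b_0\}\in E(H)$ and sum over all $(x,y)$ with $C=[v]$, which gives $T_H^{\mathrm{col}}$. For the rooted version with root $r$, we take the root edge of the history to contain $r$; re-rooting a $2$-tree history at any edge is possible and costs only $2^{O(v)}$ extra preprocessing. We then return row sums (or column sums) at color set $[v]$, giving all values $T^{\mathrm{col}}_{H,r}(X,\zeta;i)$ simultaneously in $O(n^2)$ extra time.

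The total cost is (number of edges) $\times$ (color splits $3^v$) $\times$ ($v$ choices of $\gamma$) $\times$ $n^{\omega+o(1)}$, which is $2^{O(v)}n^{\omega+o(1)}$.

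I expect the main obstacle to be the correctness argument: verifying that each embedding of $H$ is counted exactly once, with the right sign. This needs three checks. First, every vertex of $K$ other than the root edge appears as a unique child $c$ of a unique parent edge. Second, every edge of $H$ is an edge of $K$ and is multiplied in exactly once, namely when its parent triangle is formed, or at the root. Third, disjointness of color sets across the pieces is equivalent to global injectivity of $\zeta\circ\phi$, which in turn implies injectivity of $\phi$. The proof will proceed by induction on the history, mirroring the argument for Lemma~\ref{lem:treewidth_algo}.
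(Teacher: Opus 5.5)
Your proposal is correct and follows essentially the same route as the paper: $n\times n$ tables indexed by edges of $K$ and by color sets, a matrix product through the newly introduced vertex restricted to each color class $\zeta^{-1}(\gamma)$ (the paper's diagonal projector $P_\gamma$), an entrywise color-set convolution with overlap $\{\zeta(x),\zeta(y)\}$ to merge sibling branches, and re-rooting the construction at an edge containing $r$ so that row sums give the rooted vector. The only difference is cosmetic: you defer the factor for the edge $\{a,b\}$ to the parent, whereas the paper includes it once when initializing $D_e$.
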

\noindent
The proof is deferred to Appendix~\ref{app:proof-one-partial-2tree-mm}.

\begin{theorem}\label{thm:tw2-mm}
Let $v=\lfloor\sqrt{\log n}\rfloor$ and $\mathcal H_v=\mathcal G^{\mathrm{conn}}_{v,\le2}$.
Then the detection statistic $F_{\mathcal H_v}$ and the rooted recovery vector $(F_{\mathcal H_v,i})_{i\in[n]}$ can be estimated in time $n^{\omega+o(1)}$.
Consequently,
\[
\tau(1/\sqrt{c(2)})\le \omega.
\]
\end{theorem}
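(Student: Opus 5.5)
The plan is to reuse the statistical analysis from Sections~\ref{sec:stats} and~\ref{sec:approx} unchanged, and replace only the per-pattern, per-coloring subroutine (Algorithm~\ref{alg:treewidth-colorful-character}, which costs $2^{O(v)}n^{3}$ for $t=2$) by the matrix-multiplication routine of Lemma~\ref{lem:one-partial-2tree-mm}.

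\textbf{Step 1: enumeration.} Set $v=\lfloor\sqrt{\log n}\rfloor$, so that $v=o(\log n/\log\log n)$. Use Lemma~\ref{lem:enumerate-partial-2trees} to list every $H\in\mathcal H_v$, together with a $2$-tree completion $K_H$ and a rooted construction history. This costs $n^{o(1)}$ time, and it also gives $|\mathcal H_v|\le \exp(O(v\log v))=n^{o(1)}$.

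\textbf{Step 2: randomized estimators.} Sample $q=\lceil\rho_v^{-2}\rceil=2^{O(v)}=n^{o(1)}$ independent colorings $\zeta_s$. For each $H$, each $s$, and each root $a\in\mathcal R(H)$, call Lemma~\ref{lem:one-partial-2tree-mm}. This returns $T_H^{\mathrm{col}}(X,\zeta_s)$ and the rooted vector $(T^{\mathrm{col}}_{H,a}(X,\zeta_s;i))_i$ in time $2^{O(v)}n^{\omega+o(1)}$ per call. Rescale by $1/(\rho_v q)$ and sum, exactly as in Algorithms~\ref{alg:estimate-family-sum} and~\ref{alg:estimate-family-recovery}. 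The resulting estimators $\widehat F_{\mathcal H_v}$ and $\widehat F_{\mathcal H_v,i}$ are literally the same random variables as before; only the way each colorful sum is evaluated has changed. The total time is
\[
|\mathcal H_v|\cdot v\cdot q\cdot 2^{O(v)}n^{\omega+o(1)}=n^{\omega+o(1)}.
\]

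\textbf{Step 3: correctness.} Because the estimators are identical in law to those already analyzed, the earlier guarantees transfer directly.
\begin{itemize}
\item For detection, Lemma~\ref{lem:colorcoding-detection} with $t=2$ gives $L^2$ closeness of $\widehat F$ to $F$ relative to $\E_{\PP}\widehat F$. Theorem~\ref{thm:general-separation} with $\mathcal H_v$ (which satisfies Assumption~\ref{asm:H} with $c(\mathcal H)=c(2)$) gives strong separation when $\lambda>c(2)^{-1/2}$. Combining these with Chebyshev yields strong detection.
\item For recovery, Corollary~\ref{cor:recovery-color-coding-supplies-cleanup} (its proof now uses the faster subroutine) supplies the normalized scores with error $\exp(-\Omega(\sqrt{\log n}))$. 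This satisfies $\delta_N^{1/2}=o((\log N)^{-3})$, so Lemma~\ref{lem:recovery} gives strong recovery in time $T(n)+O(n^2)=n^{\omega+o(1)}$.
\end{itemize}

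\textbf{Step 4: conclusion.} Given $\varepsilon>0$ and $k\ge (c(2)^{-1/2}+\varepsilon)\sqrt n$, the above runs in $O(n^{\omega+\varepsilon})$. Hence $\tau(c(2)^{-1/2})\le\omega$ by Definition~\ref{def:tau-variants}.

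\textbf{Main obstacle.} All of the real work sits in Lemma~\ref{lem:one-partial-2tree-mm}, which we are allowed to assume. Within the present argument, the only delicate point is bookkeeping: Lemma~\ref{lem:recovery} requires the raw-score algorithm to run on $G[V_1]$ with $N=n_1$ and an unknown $K$. We must check that the $n^{o(1)}$ factors (the $2^{O(v)}$ terms, $|\mathcal H_v|$, and $q$) stay uniform there, which holds since $v(n_1)\le v(n)$. We must also check that the concentration lemmas do not depend on how $T^{\mathrm{col}}$ was computed, which is true because their proofs use only its definition.
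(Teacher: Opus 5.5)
Your proposal is correct and follows the paper's proof. Both enumerate via Lemma~\ref{lem:enumerate-partial-2trees}, substitute Lemma~\ref{lem:one-partial-2tree-mm} for the $n^{3}$-time dynamic program in each (pattern, coloring, root) call, and reuse the unchanged color-coding concentration, separation, and holdout-cleanup analysis. The one minor difference is in recovery: the paper's proof avoids the normalizer $\mu$, which depends on the unknown $|C\cap V_1|$, by keeping the $M_{N,k}$ largest raw scores, while you rely on the proxy normalization $\widetilde\mu_1$ already built into Lemma~\ref{lem:recovery}; both routes work.
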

\noindent
We recall that the results of \cite{bodirsky2007enumeration} imply that $1/\sqrt{c(2)} \approx 0.3320$.

\begin{proof}
By Lemma~\ref{lem:enumerate-partial-2trees}, all connected partial $2$-trees $H\in\mathcal G^{\mathrm{conn}}_{v,\le2}$, together with one $2$-tree completion and one rooted construction history, can be enumerated in $n^{o(1)}$ time and space. Since $v=\lfloor\sqrt{\log n}\rfloor$, we have $v\to\infty$ and $v=o(\log n/\log\log n)$, so the moment separation from Theorem~\ref{thm:general-separation} applies to $\mathcal H_v$ whenever $\lambda>1/\sqrt{c(2)}$.

For each $H\in\mathcal H_v$ and each coloring $\zeta:[n]\to[v]$, Lemma~\ref{lem:one-partial-2tree-mm} computes $T_H^{\mathrm{col}}(X,\zeta)$ in time $2^{O(v)}n^{\omega+o(1)}$. The same lemma computes, for any distinguished root $r\in V(H)$, the full rooted vector $(T_{H,r}^{\mathrm{col}}(X,\zeta;i))_{i\in[n]}$ in the same time. Repeating this over all roots in $\mathcal R(H)$ contributes only a factor $v=n^{o(1)}$.

We use the usual color-coding estimator with $q=\lceil\rho_v^{-2}\rceil$ independent colorings, where $\rho_v=v!/v^v$. Since $\rho_v=\exp(-O(v))$, our choice of $v$ gives $q=2^{O(v)}=n^{o(1)}$. Also, Lemma~\ref{lem:enumerate-partial-2trees} implies $|\mathcal H_v|=n^{o(1)}$. Hence summing the estimates over all $H\in\mathcal H_v$, over all colorings, and, for recovery, over all roots in $\mathcal R(H)$ still takes total time $n^{\omega+o(1)}$.

For detection, Lemma~\ref{lemma_concentration} shows that the color-coded estimator is asymptotically equivalent to $F_{\mathcal H_v}$ in the required $L^2$ sense, and Theorem~\ref{thm:general-separation} gives strong separation. Thus the resulting test succeeds whenever $\lambda>1/\sqrt{c(2)}$.

For recovery, first choose, independently of the graph, a uniformly random partition $[n]=V_1\sqcup V_2$ with $|V_2|=\lfloor n^{3/5}\rfloor$, and put $N=|V_1|$.  Compute all rooted scores using only $G[V_1]$.  Conditional on the partition and on its planted vertices, $G[V_1]$ is again a planted clique instance; with probability $1-o(1)$, $|C\cap V_1|=(1-o(1))k$ and $|C\cap V_2|\gg\log n$.  Lemma~\ref{lemma_recovery_concentration}, together with the rooted moment bound, therefore gives normalized estimates satisfying, for every such realization of $(V_1,C\cap V_1)$ having positive probability,
\[
\E_{G,\zeta}\left[\left(\widehat p_i-\mathbf 1_{\{i\in C\cap V_1\}}\right)^2\mid V_1,C\cap V_1\right]\le \exp(-\Omega(v))=\exp(-\Omega(\sqrt{\log n}))
\]
uniformly in $i\in V_1$.  The normalizing factor is positive and common to all $i$, but may depend on the unknown value $|C\cap V_1|$.
The algorithm therefore takes the
\[
M_{N,k}=\min\left\{N,\left\lceil k+\frac{N}{(\log N)^3}\right\rceil\right\}
\]
largest raw scores, with deterministic tie breaking.
To analyze this, we threshold the normalized scores strictly above $1/2$.
The mean-square bound and Markov's inequality show that this threshold set has size at most $|C\cap V_1|+o(N/(\log N)^3)\le k+o(N/(\log N)^3)$ and contains all but $o(k)$ clique vertices with probability $1-o(1)$.
If $M_{N,k}=N$, containment in the top set is immediate; otherwise the threshold set has size less than $M_{N,k}$.  Hence the top-$M_{N,k}$ set contains all but $o(k)$ vertices of $C\cap V_1$.  Moreover,
\[
\frac{k^2}{M_{N,k}}
\ge
\Omega\!\left(\min\left\{k,\frac{k^2(\log N)^3}{N}\right\}\right)
=\Omega((\log n)^3).
\]
Because the partition was chosen before scoring and the scores use only edges inside $V_1$, the non-clique edges between $V_1$ and $V_2$ remain independent of this set conditional on $C$.  Applying the cleanup argument of Lemma~\ref{lem:recovery} now gives exact recovery. The cleanup costs only $O(n^2)$ time, which is dominated by $n^{\omega+o(1)}$. Thus the resulting randomized algorithms achieve strong detection and strong recovery for every $\lambda>1/\sqrt{c(2)}$ in time $n^{\omega+o(1)}$.
\end{proof}

Numerically, using the bounds $\omega<2.371177$ from \cite{dupont2026improving} and $1 / \sqrt{c(2)} < 0.3320$ from \cite{bodirsky2007enumeration}, we find
\[
\tau(0.3320) < 2.371177,
\]
compared to $\tau(0.3320) \leq 3$ that the unaccelerated color-coding computation would give us.

\begin{remark}
The choice $v=\lfloor\sqrt{\log n}\rfloor$ keeps all other runtime costs having subpolynomial order. The moment estimates only require $v\to\infty$ and $v=o(\log n/\log\log n)$, while enumerating partial $2$-trees costs $\exp(O(v\log v))$ and the color-coding repetitions contribute $\rho_v^{-2}=\exp(O(v))$. Thus any choice with $v\to\infty$ and $v\log v=o(\log n)$ would suffice for the $n^{\omega+o(1)}$ bound. By contrast, taking $v=\Theta(\log n/\log\log n)$ would make the enumeration cost polynomial in $n$, rather than $n^{o(1)}$.
\end{remark}

\begin{remark}\label{rem:why-only-tw2-mm}
    This use of square matrix multiplication is special to treewidth at most $2$. In the dynamic program for partial $2$-trees, every subproblem is indexed by an edge with two boundary vertices, so each table is an $n\times n$ matrix, and gluing two pieces along one vertex becomes an ordinary matrix product. For treewidth $t\ge 3$, the corresponding tables have $t$ boundary indices, and gluing two pieces requires a tensor contraction rather than an ordinary matrix product. A similar improvement would therefore require a faster method for this tensor contraction.
    We leave further investigation of this possibility as an intriguing question for future work.
\end{remark}

\subsection{Rectangular matrix multiplication acceleration}
\label{sec:mm-rectangular}

We next describe an application of fast rectangular matrix multiplication to an algorithm that combines our signed subgraph count technique with the boosting scheme of \cite{alon1998finding}, as detailed earlier.
Recall that the general premise is to run some algorithm (in our case our polynomial evaluations) on the induced subgraphs on mutual neighbors of a given set of \emph{seed vertices}.
The term ``seed'' always refers to such distinguished subsets below.

We first show how such a seed subgraph count may itself be expressed as a polynomial, in the case of trees that we will focus on in our algorithm.
Let $\mathcal T_v$ denote the family of labeled trees on $[v]$. Consider an ordered tuple of distinct seed vertices $Q=(q_1,\ldots,q_s)$.
We simply call these \emph{seed tuples} below.
Define
\[
w_Q(x)
=
\mathbf 1\{x\notin Q\}
\prod_{a=1}^s\frac{1+X_{q_ax}}2.
\]
Thus $w_Q(x)=1$ precisely when $x$ is adjacent to every seed vertex in $Q$, and $w_Q(x)=0$ otherwise. For $i\in[n]\setminus Q$, define the (rooted) \emph{seeded tree statistic}
\[
F_{Q,\mathrm{tree},i}(X)
=
\sum_{T\in\mathcal T_v}
\sum_{r\in\mathcal R(T)}
\sum_{\substack{\phi:V(T)\hookrightarrow[n]\setminus Q\\ \phi(r)=i}}
\left(
\prod_{u\in V(T)}w_Q(\phi(u))
\right)
\left(
\prod_{\{u,v\}\in E(T)}
X_{\phi(u)\phi(v)}
\right).
\]
We are then interested in computing this over many choices of seed $Q$.

Indeed, for seeded trees, it turns out that the dynamic program can be evaluated for many seeds at once. We view the seeds as indexing the rows of an $R\times |V_1|$ table, and each merge with a child subtree becomes a rectangular product with the adjacency matrix of $G[V_1]$. When $R=n^{s/2+o(1)}$, this gives the exponent $\omega(1,1,s/2)$.

We will use the following facts, easily shown by first moment method calculations.
Uniformly for $k\ge\lambda\sqrt n$ with fixed $\lambda>0$, the graph $\QQ_n$ contains no $k$-clique with high probability. And, uniformly in this range, the planted clique $C$ is the unique $k$-clique with high probability under $\PP_{n, k}$. Thus any clique of size $k$ that we find shows, with high probability, that our observation is drawn from the planted model and further must be equal to $C$.

\begin{proposition}\label{prop:randomized-seed-boosting}\label{prop:random-seed-boosting}
Suppose that for some $\lambda_0>0$ and $r\ge2$, there is a randomized algorithm running in $n^{r+o(1)}$ time that strongly recovers planted cliques of size $k\ge(\lambda_0+\varepsilon)\sqrt n$ for every fixed $\varepsilon>0$. Then for every fixed integer $a\ge0$,
\[
\tau(2^{-a/2}\lambda_0)\le r+\frac a2.
\]
Equivalently, if a base graph family has exponential base $c$ and can be used for recovery in time $n^{r+o(1)}$, then
\[
\tau\!\left((2^a c)^{-1/2}\right)\le r+\frac a2.
\]
\end{proposition}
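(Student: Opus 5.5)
The plan is to implement the Alon--Krivelevich--Sudakov boosting with random seed tuples and treat the base algorithm as a black box on the seeded subgraphs. Fix $a\ge0$ and $\varepsilon>0$, and let $k\ge(2^{-a/2}\lambda_0+\varepsilon)\sqrt n$. For $a=0$ there is nothing to prove, so assume $a\ge1$. The algorithm draws $R=n^{a/2}\log n$ independent uniformly random ordered $a$-tuples $Q$ of distinct vertices. For each $Q$ it forms the common neighborhood $N_Q=\{x\notin Q: w_Q(x)=1\}$ and runs the base recovery algorithm on $G[N_Q]$ with target clique size $k-a$. This produces a candidate $\widehat C_Q\subseteq N_Q$. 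The algorithm checks in $O(k^2)$ time whether $\widehat C_Q\cup Q$ is a $k$-clique of $G$, and outputs the first one that is. By the first-moment facts stated just before the proposition, any $k$-clique found equals $C$ with high probability. So correctness reduces to showing that with high probability some seed yields a correct base output, while spurious outputs are discarded by the check.

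The first step is to show that some $Q\subseteq C$ is drawn. A single draw lies in $C$ with probability $(k)_a/(n)_a=\Theta_{a,\lambda}(n^{-a/2})$, so with $R=n^{a/2}\log n$ draws this fails with probability $o(1)$. The second step is to analyze the law of $G[N_Q]$ conditional on a given $Q\subseteq C$. Every vertex of $C\setminus Q$ lies in $N_Q$, and each vertex outside $C$ lies in $N_Q$ independently with probability $2^{-a}$. Edges inside $N_Q$ that are not both in $C$ remain i.i.d.\ fair coins independent of the event defining $N_Q$, since membership only depends on edges to $Q$. Hence, conditional on $|N_Q|=N$, the graph $G[N_Q]$ is a planted clique instance with $N$ vertices and a planted clique of size $k-a$. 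Its vertex set has $N=k-a+\mathrm{Bin}(n-k,2^{-a})=(1+o(1))2^{-a}n$ with high probability by Chernoff. The clique location is uniform after the random relabeling that the randomized base algorithm may apply. This gives
\[
\frac{k-a}{\sqrt N}\ge (1-o(1))\,2^{a/2}\bigl(2^{-a/2}\lambda_0+\varepsilon\bigr)\ge \lambda_0+\varepsilon'
\]
for some fixed $\varepsilon'>0$. Thus the base algorithm strongly recovers $C\setminus Q$ on this seed, with failure probability $o(1)$ uniformly over the relevant $N$.

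The runtime accounting is next. Forming each $N_Q$ takes $O(an)$ time. Each base call costs $N^{r+o(1)}\le n^{r+o(1)}$, and the clique check is negligible. The total is therefore $R\cdot n^{r+o(1)}=n^{r+a/2+o(1)}$, which uses $r\ge2$ so that reading the input is dominated. This yields $\tau(2^{-a/2}\lambda_0)\le r+a/2$, with detection following by Remark~\ref{rem:recovery-detection}. The equivalent form follows by taking $\lambda_0=c^{-1/2}$.

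I expect the main obstacle to be the second step. One must check carefully that conditioning on $Q\subseteq C$ and on the random set $N_Q$ leaves $G[N_Q]$ distributed exactly as $\PP_{N,k-a}$. One must also check that the base algorithm's $o(1)$ failure guarantee holds uniformly over the random value of $N$. I would handle this by conditioning on $N$ in a high-probability window and invoking the base guarantee for each $N$ separately. I would use only the single good seed, so that no union bound over the $R$ runs is needed: bad seeds can only produce outputs that fail the clique check or, with probability $o(1)$, some other $k$-clique.
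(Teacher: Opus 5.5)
Your proposal is correct, and its skeleton is the paper's: $\Theta(n^{a/2}\log n)$ random seeds, the observation that for a seed $Q\subseteq C$ the graph $G[N_Q]$ is again a planted clique instance with signal boosted by about $2^{a/2}$, and a final $k$-clique check justified by uniqueness of the planted clique. The genuine difference is in the last probabilistic step.

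You follow a single good seed, say the first sampled $Q\subseteq C$. Because the seeds are independent of $G$, conditioning on $C$ and the seeds leaves $G[N_Q]\sim\PP_{N,C\setminus Q}$ with a fixed clique location. The random relabeling available to a randomized base algorithm then reduces this to its uniform-location guarantee.

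The paper instead shows that for each fixed sampled index, conditional on $Q_j\subseteq C$ and on the common neighborhood $W$, the clique $C\setminus Q_j$ is already uniform in $W$. It then compares the number $Z$ of good occurrences with the number $Z_{\mathrm{bad}}$ of failing ones, using $\E Z_{\mathrm{bad}}=o(\E Z)$, concentration of $Z$, and Markov's inequality.

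Your route is shorter. The paper's needs no relabeling, and it is the form reused in Section~\ref{sec:derandomization} for $\tau_{\det,\mathrm{avg}}$. There the base algorithm is deterministic, and conditioning on the first good index of a fixed seed list would bias the clique location.

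One small correction: $N=(1+o(1))2^{-a}n$ holds only when $k=o(n)$, whereas $\tau$ demands success for every $k\ge(2^{-a/2}\lambda_0+\varepsilon)\sqrt n$. Use instead $N\le(1+o(1))\bigl(2^{-a}n+(1-2^{-a})(k-a)\bigr)$ with high probability, together with the monotonicity of $x\mapsto x/\sqrt{2^{-a}n+(1-2^{-a})x}$, as the paper does. Then the signal bound at the smallest admissible $k$ covers all larger $k$.
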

\noindent
The proof is deferred to Appendix~\ref{app:proof-randomized-seed-boosting}.

\begin{algorithm}[t!]
\LinesNumbered
\DontPrintSemicolon
\caption{\textsc{BatchedSeededTreeRecovery}$(G,k,s,v,R)$}
\label{alg:batched-seeded-tree-recovery}
\KwIn{A graph $G=([n],E)$; a target clique size $k$; a seed size $s$; a tree size $v$; a number of seeds $R$}
\KwOut{A $k$-clique, or report failure to find a $k$-clique}
Set $n_2\gets \lfloor n^{3/5}\rfloor$ and choose, independently of $G$, a uniformly random partition $[n]=V_1\sqcup V_2$ with $|V_2|=n_2$\;
Independently sample, with replacement, $R$ uniformly random ordered $s$-tuples $Q$ of distinct vertices from $V_1$\;
Set $\rho_v\gets v!/v^v$ and $q\gets\lceil\rho_v^{-2}\rceil$, and sample colorings $\zeta_1,\ldots,\zeta_q:V_1\to[v]$ independently and uniformly, independently also of $G$, the partition, and the sampled seeds\;
For every sampled $Q$ and every $i\in V_1$, initialize $\widehat F_{Q,\mathrm{tree},i}\gets0$\;
For every $\ell\in[q]$, $T\in\mathcal T_v$, and $r\in\mathcal R(T)$, use the batched fixed-coloring dynamic program on $G[V_1]$ to compute $T^{\mathrm{col}}_{Q,T,r}(G,\zeta_\ell;i)$ simultaneously for all sampled $Q$ and $i\in V_1$, and add $T^{\mathrm{col}}_{Q,T,r}(G,\zeta_\ell;i)/(\rho_v q)$ to $\widehat F_{Q,\mathrm{tree},i}$\;
Set $N\gets |V_1|$ and $M\gets\min\{N,\lceil k+N/(\log N)^3\rceil\}$\;
For each sampled seed $Q$, form $V_3(Q)\subseteq V_1$ by taking the $M$ largest unnormalized scores, breaking ties by a fixed deterministic rule\;
Use the batched products in Lemma~\ref{lem:batched-cleanup} to recover candidate sets $\widehat C(Q)$ for all sampled seeds\;
\Return the first candidate $\widehat C(Q)$ such that $|\widehat C(Q)|=k$ and $G[\widehat C(Q)]$ is a clique; if no such seed exists, return failure\;
\end{algorithm}

\begin{lemma}\label{lem:good-seeded-row}
Fix $s=O(1)$, let $v=\lfloor\sqrt{\log n}\rfloor$, assume $k\ge\lambda\sqrt n$ for a fixed $\lambda>0$, and let $Q\subseteq C\cap V_1$ be a tuple of seed vertices. Define
\[
W=W_Q(V_1)=\{x\in V_1\setminus Q:\ x\text{ is adjacent to every vertex of }Q\}
\]
and $C_W=(C\cap V_1)\setminus Q$. Let $\widehat p_{Q,i}$ denote the color-coded estimate $\widehat F_{Q,\mathrm{tree},i}$ of the seeded tree statistic rooted at $i$, divided by the common planted-root mean $\mu_W=\nu(\mathcal T_v)(|C_W|-1)_{v-1}$. Suppose that, for some fixed constants $0<\alpha<1$ and $\eta>0$,
\[
|W|\ge n^\alpha, \qquad |C_W|\ge (e^{-1/2}+\eta)\sqrt{|W|}.
\]
Then, for every realization of $(W,C_W)$ satisfying these bounds
and having positive probability conditional on $Q\subseteq C\cap V_1$, the normalized color-coded estimates $\widehat p_{Q,i}$ satisfy
\[
\E\left[
\left(
\widehat p_{Q,i}-\mathbf 1_{\{i\in C_W\}}
\right)^2
\mid W,C_W
\right]
\le \delta_v,
\qquad
\delta_v=\exp(-\Omega(v)),
\]
uniformly over $i\in W$. Also, put $N=|V_1|$ and
\[
M_{N,k}=\min\left\{N,\left\lceil k+\frac{N}{(\log N)^3}\right\rceil\right\}.
\]
Then with conditional probability $1-o(1)$ the set $V_3(Q)$ of the $M_{N,k}$ largest unnormalized scores contains all but $o(k)$ vertices of $C_W$ and satisfies $|V_3(Q)|=M_{N,k}$ and $k^2/M_{N,k}=\Omega(\log^3 n)$.
\end{lemma}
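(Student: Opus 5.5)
The plan is to reduce the seeded statement to the unseeded tree results by conditioning on the seed neighborhood. Fix $Q\subseteq C\cap V_1$ and condition on $W$ and $C_W$. Every vertex of $C\cap V_1$ outside $Q$ is adjacent to all of $Q$, so $C_W\subseteq W$. The edges inside $W$ are disjoint from the edges incident to $Q$ that determine $W$. Hence, conditional on $(W,C_W)$, the graph $G[W]$ is a planted clique instance on $|W|$ vertices with the fixed clique $C_W$.

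Because a random relabeling of $W$ leaves the conditional law invariant, the statements of Lemma~\ref{lem:rooted-moments} and Lemma~\ref{lem:p-var} apply to $G[W]$ with clique location distributed uniformly given $|C_W|$. They are averaged over location, but they are uniform in $i$. Since every conditional law is exchangeable over $W$, we can also read them per vertex class.

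The first step is to observe that the weights $w_Q(\phi(u))$ restrict embeddings exactly to $W$. Therefore $F_{Q,\mathrm{tree},i}(X)$ coincides with $F_{\mathcal T_v,i}$ computed on the host graph $G[W]$. Likewise, the batched color-coded estimate restricted to seed $Q$ has the same law as Algorithm~\ref{alg:estimate-family-recovery} run on $G[W]$. The only change is that colorings are drawn on $V_1$; restricting them to $W$ still gives i.i.d.\ uniform colorings.

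With $N_W=|W|\ge n^\alpha$, we have $v=\lfloor\sqrt{\log n}\rfloor=\Theta(\sqrt{\log N_W})$, so $v\to\infty$ and $v=o(\log N_W/\log\log N_W)$. Moreover $|C_W|\ge(e^{-1/2}+\eta)\sqrt{N_W}$ with $c(1)=e$, so we can fix $c_0<e$ with $c_0(e^{-1/2}+\eta)^2>1$. Lemma~\ref{lem:p-var} then gives exact-polynomial mean-square error $(1+o(1))(c_0\lambda_W^2)^{-(v-1)}=\exp(-\Omega(v))$. Lemma~\ref{lemma_recovery_concentration} (with $t=1$) adds a color-coding error $O(\rho_v)=\exp(-\Omega(v))$. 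Combining the two via $(a+b)^2\le 2a^2+2b^2$ yields $\delta_v$.

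The main obstacle I expect is uniformity. The constants in those lemmas were stated for $k=\lambda\sqrt n$ with fixed $\lambda$ and $n\to\infty$. Here the instance size $N_W$ is random, but it is at least $n^\alpha$, and the ratio $|C_W|/\sqrt{N_W}$ is bounded below by a fixed constant. I would check that every $o(1)$ and $\Omega(\cdot)$ in those proofs depends only on lower bounds on $N_W$ and on that ratio. If the ratio is large, the bounds only improve, since $(c_0\lambda_W^2)^{-(v-1)}$ decreases. If needed, I would re-run the variance bound of Lemma~\ref{lem:p-var} with $v$ given and $N_W\ge n^\alpha$ explicitly.

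For the top-$M$ step, I would argue as in the proof of Theorem~\ref{thm:tw2-mm}. The unnormalized scores differ from $\widehat p_{Q,i}$ by the common positive factor $\mu_W$, so ranking is unaffected. Scores for $i\in V_1\setminus W$ are exactly $0$, since $w_Q(i)=0$. By Markov's inequality, the set $S=\{i:\widehat p_{Q,i}>1/2\}$ satisfies $\E|S\triangle C_W|\le 4N_W\delta_v$, with $N_W\le N$. Because $\delta_v=\exp(-\Omega(\sqrt{\log n}))=o((\log N)^{-4})$, with probability $1-o(1)$ we get $|S|\le |C_W|+N/(\log N)^3<M_{N,k}$ whenever $M_{N,k}<N$, using $|C_W|\le k$. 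The same bound gives $|C_W\setminus S|=o(k)$, since $\E|C_W\setminus S|\le 4k\delta_v$.

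All vertices of $S$ have scores above every vertex outside $S$. Hence the top-$M_{N,k}$ set contains $S$, and so contains all but $o(k)$ vertices of $C_W$; when $M_{N,k}=N$ this is trivial. Finally, $|V_3(Q)|=M_{N,k}$ holds by construction (we have $M\le N$). The bound $k^2/M_{N,k}\ge \min\{k,\,\Omega(k^2(\log N)^3/N)\}=\Omega(\log^3 n)$ follows from $k\ge\lambda\sqrt n$ and $N\le n$.
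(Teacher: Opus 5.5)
Your proposal is correct and follows essentially the same route as the paper. The paper also conditions on $(W,C_W)$, identifies the masked seeded statistic with the ordinary rooted tree statistic on $G[W]$, applies the rooted-moment and color-coding bounds with ambient size $|W|$, and justifies the top-$M_{N,k}$ selection through the Markov argument on the threshold set $\{\widehat p_{Q,i}>1/2\}$.

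The differences are harmless. Your uniformity check in the random ambient size $N_W$, and your exchangeability argument for a fixed clique, fill in details the paper leaves implicit. Using $(a+b)^2\le 2a^2+2b^2$ instead of the orthogonal decomposition from unbiasedness costs only a factor of $2$.
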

\noindent
The proof is deferred to Appendix~\ref{app:proof-good-seeded-row}.

\begin{lemma}\label{lem:batched-seeded-tree-dp}
Fix $s=O(1)$, $R=n^{s/2+o(1)}$ ordered seed tuples $Q$, a coloring $\zeta:V_1\to[v]$, and a rooted tree $T$ on $v$ vertices. For every root $r\in V(T)$, the matrix
\[
\left(T^{\mathrm{col}}_{Q,T,r}(G,\zeta;i)\right)_{Q,i}
\]
whose rows are indexed by seed tuples and whose columns are indexed by vertices
can be computed in time $2^{O(v)}n^{\omega(1,1,s/2)+o(1)}$.
\end{lemma}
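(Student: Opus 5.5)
We run the tree color-coding dynamic program of Algorithm~\ref{alg:tree-colorful-character} with the seed tuple as an extra row index. For a rooted subtree $(T',r')$ of $T$ (with the fixed palette $[v]$), let $D_{T',r'}[C]$ be the $R\times N$ matrix ($N=|V_1|\le n$) whose $(Q,x)$ entry is the signed sum over colorful embeddings $\phi$ of $T'$ into $V_1\setminus Q$ with $\phi(r')=x$ and color set exactly $C$. Each embedding is weighted by $\prod_{u}w_Q(\phi(u))\prod_{e}X_{\phi(e)}$. Define $T^{\mathrm{col}}_{Q,T,r}(G,\zeta;i)$ as the analogue of $T^{\mathrm{col}}_{T,r}$ for the seeded statistic $F_{Q,\mathrm{tree},i}$, so that $T^{\mathrm{col}}_{Q,T,r}(G,\zeta;i)=D_{T,r}[[v]](Q,i)$.

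\textbf{Base case.} For a single vertex, $D[\{c\}](Q,x)=w_Q(x)\mathbf 1\{\zeta(x)=c\}$, and all other $D[C]$ vanish. We first form the $R\times N$ 0/1 matrix $W=(w_Q(x))$, which takes $O(sRN)=n^{1+s/2+o(1)}\le n^{\omega(1,1,s/2)+o(1)}$ time by the output-size lower bound.

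\textbf{Recursive step.} We split $T'$ along an edge $(r',r'')$ into $(T_1,r')$ and $(T_2,r'')$, as in Algorithm~\ref{alg:tree-colorful-character}. The recursion becomes
\[
D_{T',r'}[C](Q,x)=\sum_{\substack{C_1\sqcup C_2=C}} D_{T_1,r'}[C_1](Q,x)\sum_{y\in V_1,\,y\ne x} X_{xy}\,D_{T_2,r''}[C_2](Q,y).
\]
The inner sum is the $(Q,x)$ entry of $D_{T_2,r''}[C_2]\cdot X_{V_1}$, where $X_{V_1}$ is the $N\times N$ signed adjacency matrix with zero diagonal, which automatically excludes $y=x$. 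This is a product of an $n^{s/2+o(1)}\times n$ matrix with an $n\times n$ matrix, costing $n^{\omega(1,1,s/2)+o(1)}$ by the permutation symmetry of rectangular exponents.

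We then take an entrywise (Hadamard) product with $D_{T_1,r'}[C_1]$ and sum over $C_1$, which costs $O(RN)$ per pair $(C_1,C_2)$.

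\textbf{Correctness.} It follows exactly as in Lemma~\ref{lem:tree_algo}. Three points need checking. First, disjointness of color sets forces distinct images within the combined embedding. Second, the weights $w_Q$ factor over vertices. Third, $w_Q(x)=0$ for $x\in Q$ automatically enforces $\phi(V(T))\subseteq V_1\setminus Q$, so no extra bookkeeping is required.

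\textbf{Cost.} There are $O(v)$ recursive nodes. Each needs at most $2^v$ matrix products (one per $C_2$; products can be shared across $C_1$) and $3^v$ Hadamard combinations. The total is $2^{O(v)}n^{\omega(1,1,s/2)+o(1)}$, and reading off the root table with $C=[v]$ gives the claimed matrix for every root $r$.

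\textbf{Main obstacle.} The delicate point is ensuring the seed dimension never multiplies the cost beyond a single rectangular product. This works for two reasons. The seed dependence enters only through the vertex weights $w_Q$, which are applied entrywise to rows. The edge structure is seed-independent, so one common $X_{V_1}$ serves all rows at once.

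I would also verify two side conditions:
\begin{itemize}
\item When $R$ is not exactly a power of $n$, we pad $R$ up to $n^{s/2+o(1)}$, which leaves the bound unchanged.
\item Sampling with replacement in Algorithm~\ref{alg:batched-seeded-tree-recovery} may produce duplicate rows; these are harmless.
\end{itemize}
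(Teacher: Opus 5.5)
Your proposal is correct and follows essentially the same route as the paper: both use seed-indexed tables $D_U^A(Q,x)$ with the mask $w_Q$ applied at the leaves, and both split the tree along an edge. The inner sum over the child's root image then becomes a single $(n^{s/2+o(1)}\times n)\cdot(n\times n)$ product with the seed-independent signed adjacency matrix of $G[V_1]$, followed by entrywise products over disjoint color sets, for a total cost of $2^{O(v)}n^{\omega(1,1,s/2)+o(1)}$ per root (one DP run rooted at each $r$).
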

\noindent
The proof is deferred to Appendix~\ref{app:proof-batched-seeded-tree-dp}.

\begin{lemma}\label{lem:batched-cleanup}
Fix $\lambda>0$ and assume $k\ge\lambda\sqrt n$. Let $R=n^{s/2+o(1)}$, and let $[n]=V_1\sqcup V_2$ with $|V_2|=\lfloor n^{3/5}\rfloor$. Suppose the partition satisfies
\[
|C\cap V_1|=(1-o(1))k,
\qquad
|C\cap V_2|\ge \Omega(n^{1/10}).
\]
Suppose an algorithm has computed, for all sampled seed tuples $Q\subseteq V_1$, rooted score vectors $S(Q,i)$ for $i\in V_1$, and that these scores are functions only of edges inside $V_1$ together with the edges between the seed vertices and $V_1$ that define the seed masks. For each sampled seed, form $V_3(Q)$ by selecting the $M_{N,k}$ largest scores as in Algorithm~\ref{alg:batched-seeded-tree-recovery}, where $N=|V_1|$ and $M_{N,k}=\min\{N,\lceil k+N/(\log N)^3\rceil\}$. Suppose that for at least one seed $Q^\star\subseteq C\cap V_1$, the set $V_3(Q^\star)$ contains all but $o(k)$ vertices of $C\cap V_1$. Then the cleanup and checking whether we have found a clique or not, for all sampled seeds, can be performed in $n^{\omega(1,1,s/2)+o(1)}$ additional time, and the seed $Q^\star$ outputs the planted clique with probability $1-o(1)$.
\end{lemma}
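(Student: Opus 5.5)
The plan is to carry out the holdout cleanup of Algorithm~\ref{alg:cleanup} for all $R$ sampled seeds at once, expressing each step as a product of an $R\times n$ matrix with an $n\times n$ matrix. The correctness argument for the good seed $Q^\star$ is then a direct reuse of Lemmas~\ref{lem:implementable-threshold} and~\ref{lem:final-adjacency-cleanup}.

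\textbf{Computational step.} Build the $R\times |V_1|$ indicator matrix $M_3$ with $(M_3)_{Q,x}=\1\{x\in V_3(Q)\}$. Selecting the top $M_{N,k}$ scores costs $O(n)$ per row, so $n^{s/2+1+o(1)}$ overall. Let $A_{12}$ be the $|V_1|\times|V_2|$ $0/1$ adjacency matrix between $V_1$ and $V_2$. The product $M_3A_{12}$ gives $|E(\{y\},V_3(Q))|$ for every $Q$ and every $y\in V_2$, and it costs $n^{\omega(s/2,1,3/5)+o(1)}\le n^{\omega(1,1,s/2)+o(1)}$. Thresholding at $T=|V_3(Q)|/2+0.2499\lfloor(1-n^{-2/5})k\rfloor$ gives an $R\times|V_2|$ indicator matrix $M_2$ of the sets $\widehat C_2(Q)$.

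Next compute $M_2A_{12}^{\top}$, which counts, for each $u\in V_1$, its neighbours in $\widehat C_2(Q)$. Set $\widehat C_1(Q)=\{u: (M_2A_{12}^\top)_{Q,u}=|\widehat C_2(Q)|\}$, which is again one rectangular product. Checking all candidates exactly would cost $k^2$ per seed and be too expensive. Instead, discard every $Q$ with $|\widehat C(Q)|\ne k$. For the survivors, form their indicator matrix $M$ and compute $MA$ with $A$ the full adjacency matrix. Then $\widehat C(Q)$ is a clique iff $(MA)_{Q,u}=k-1$ for every $u\in\widehat C(Q)$, and this is checked in $O(nR)$ additional time. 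Every step is bounded by $n^{\omega(1,1,s/2)+o(1)}$, using $\omega(1,1,s/2)\ge 1+s/2$.

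\textbf{Correctness step.} This covers only $Q^\star$, since any other seed whose output passes the clique check is, w.h.p., equal to $C$ by uniqueness of the $k$-clique. Condition on $C$, the partition, the seeds, and all edges inside $V_1$ together with the seed-to-$V_1$ edges. By hypothesis these determine $V_3(Q^\star)$. Provided seeds are drawn from $V_1$, the non-clique edges between $V_1$ and $V_2$ and inside $V_2$ remain independent uniform bits. Then $V_3=V_3(Q^\star)$ has size $M_{N,k}=k+o(k)+O(N/\log^3N)$ and contains $(1-o(1))k$ clique vertices, so $|V_3\cap C_1|^2/|V_3|=\omega(\log^3 n)$. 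These are exactly the inputs to Lemma~\ref{lem:implementable-threshold}, with \eqref{eq:V3-C1-big} and \eqref{eq:V3-snr} replaced by these bounds.

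Concretely, for $y\in V_2\cap C$ the edge count to $V_3$ is $|V_3\cap C_1|+\Bin(|V_3\setminus C_1|,1/2)$, and for $y\notin C$ it is $\Bin(|V_3|,1/2)$. Hoeffding gives a gap of about $k/4$ against fluctuations $O(\sqrt{|V_3|\log n})=o(k)$. A union bound over the $n^{3/5}$ vertices of $V_2$ then yields $\widehat C_2(Q^\star)=C_2$. Since $|C_2|=\Omega(n^{1/10})$, Lemma~\ref{lem:final-adjacency-cleanup} gives $\widehat C_1(Q^\star)=C_1$ w.h.p.

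\textbf{Main obstacle.} The delicate point is the independence claim behind the Hoeffding step. Because the threshold in $V_3$ may be slightly off from the ideal, I would handle that exactly as in Remark~\ref{rem:threshold-T}. The claim also needs $|V_3|$ to be close to $k$ rather than merely $O(k+N/\log^3N)$; if $N/\log^3 N\gg k$ the signal-to-noise ratio is only $k^2/M\ge\Omega(\log^3 n)$. To cover this I would check that Hoeffding still works, using a deviation $\sqrt{M\log n}\ll k$, which follows from $k^2/M=\omega(\log n)$.
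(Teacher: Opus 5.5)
Your proposal is correct and follows the paper's proof essentially step for step: the same three batched rectangular products (neighbor counts of $V_2$ into each $V_3(Q)$, neighbor counts of $V_1$ into each $\widehat C_2(Q)$, and the final size-$k$ clique check via a product with $A$), with correctness for $Q^\star$ obtained by conditioning on the edges inside $V_1$ and the seed edges, and then reusing the Hoeffding and union-bound argument of Lemma~\ref{lem:implementable-threshold} followed by Lemma~\ref{lem:final-adjacency-cleanup}. In your correctness step, $|V_3\cap C_1|^2/|V_3|=\omega(\log^3 n)$ should read $\Omega(\log^3 n)$. Your closing observation already fixes this: only $k^2/M_{N,k}=\omega(\log n)$ is needed for the union bound over $V_2$. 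That matches, and slightly sharpens, how the paper invokes that lemma with an $\Omega((\log n)^3)$ bound.
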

\noindent
The proof is deferred to Appendix~\ref{app:proof-batched-cleanup}.

\begin{theorem}
\label{thm:seeded-tree-rectangular-mm}
\label{thm:seeded-rectangular}
For every fixed integer $s\ge0$,
\[
\tau\!\left((2^s e)^{-1/2}\right)
\le
\omega(1,1,s/2).
\]
More generally,
\[
\tau(\lambda)
\le
\min_{s\in\ZZ_{\ge0}}
\left[
\omega(1,1,s/2)
+
\frac12
\left\lceil
\log_2\frac{1}{\lambda^2\,2^s e}
\right\rceil_+
\right].
\]
\end{theorem}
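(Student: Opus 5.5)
The plan has two parts. First we prove the point bound $\tau((2^s e)^{-1/2})\le\omega(1,1,s/2)$ by running Algorithm~\ref{alg:batched-seeded-tree-recovery} with $R=n^{s/2+o(1)}$ random seed tuples. Then we deduce the general envelope by combining this with Proposition~\ref{prop:randomized-seed-boosting}.

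\textbf{Point bound.} Fix $\lambda>(2^se)^{-1/2}$ and $k\ge\lambda\sqrt n$, and take $v=\lfloor\sqrt{\log n}\rfloor$. The case $s=0$ is just Theorem~\ref{thm:intro-main-general} with $t=1$ and $c(1)=e$, so assume $s\ge1$.
\begin{enumerate}
\item By Lemma~\ref{lem:holdout-partition}, with probability $1-o(1)$ the random partition satisfies $|C\cap V_1|=(1-o(1))k$ and $|C\cap V_2|=\Omega(n^{1/10})$.
\item A uniform ordered $s$-tuple from $V_1$ lies in $C\cap V_1$ with probability $\Theta(\lambda^s n^{-s/2})$. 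Choosing $R=n^{s/2}\log n$ therefore yields, w.h.p., at least one seed $Q^\star\subseteq C\cap V_1$.
\item Condition on $Q^\star$ and on $C$. Each non-clique vertex of $V_1$ lies in $W=W_{Q^\star}(V_1)$ independently with probability $2^{-s}$, so by Chernoff $|W|=(1+o(1))2^{-s}n$. Every clique vertex lies in $W$, so $|C_W|=(1-o(1))k\ge(1-o(1))\lambda 2^{s/2}\sqrt{|W|}$. Since $\lambda 2^{s/2}>e^{-1/2}$, we get $|C_W|\ge(e^{-1/2}+\eta)\sqrt{|W|}$ for some fixed $\eta>0$.
\item Conditional on $(W,C_W)$, the graph $G[W]$ is again a planted clique instance, because the edges inside $W$ are independent of the seed-mask edges. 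Lemma~\ref{lem:good-seeded-row} then applies and gives a good top-$M_{N,k}$ set $V_3(Q^\star)$.
\item Lemma~\ref{lem:batched-cleanup} turns this into exact recovery through $Q^\star$. The final clique check returns only genuine $k$-cliques, and by the first-moment facts stated before Proposition~\ref{prop:randomized-seed-boosting} the only such clique is w.h.p.\ $C$. So outputting any successful candidate is correct, and strong detection follows by Remark~\ref{rem:recovery-detection}.
\end{enumerate}

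\textbf{Runtime.} The DP for each tree $T\in\mathcal T_v$, each root, and each of the $q=2^{O(v)}$ colorings costs $2^{O(v)}n^{\omega(1,1,s/2)+o(1)}$ by Lemma~\ref{lem:batched-seeded-tree-dp}. The number of trees is $v^{v-2}=n^{o(1)}$, since $v\log v=o(\log n)$. The cleanup costs $n^{\omega(1,1,s/2)+o(1)}$ by Lemma~\ref{lem:batched-cleanup}. Computing the masks $w_Q$ for all seeds costs $R\cdot n\cdot s\le n^{1+s/2+o(1)}$, which is within the output-size lower bound $\omega(1,1,s/2)\ge 1+s/2$. The $\varepsilon$ of room in Definition~\ref{def:tau-variants} absorbs all $n^{o(1)}$ factors.

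\textbf{General bound.} Fix $\lambda$ and $s$, and set $a=\lceil\log_2\frac{1}{\lambda^2 2^s e}\rceil_+$. The point bound provides a randomized recovery algorithm with $r=\omega(1,1,s/2)\ge 2$ and $\lambda_0=(2^se)^{-1/2}$. Proposition~\ref{prop:randomized-seed-boosting} then gives
\[
\tau\bigl(2^{-a/2}\lambda_0\bigr)\le\omega(1,1,s/2)+\tfrac a2 .
\]
By the choice of $a$ we have $2^{-a/2}\lambda_0\le\lambda$, and $\tau$ is nonincreasing in $\lambda$, since an algorithm for smaller cliques also handles larger ones. So the displayed bound holds at $\lambda$, and we minimize over $s$.

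\textbf{Main obstacle.} The delicate step is the conditioning in item 4. The scores for $Q^\star$ depend on the seed-to-$V_1$ edges as well as the edges inside $V_1$, and $W$ is itself random. We must check that, given $(Q^\star,W,C)$, the edges inside $W$ are still fresh and uniform off the clique. We must also check that the normalization by the unknown $|C_W|$ is harmless, and it is: ranking by unnormalized scores with a common positive scale gives the same top-$M$ set. Lemmas~\ref{lem:good-seeded-row} and~\ref{lem:batched-cleanup} are stated to absorb exactly these issues, so the remaining work is verifying their hypotheses through the concentration estimates above.
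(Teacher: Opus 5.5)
Your proposal is correct and follows the paper's proof essentially step for step: a random holdout split, $R=n^{s/2}\log n$ random seed tuples, Lemma~\ref{lem:good-seeded-row} and Lemma~\ref{lem:batched-cleanup} applied to the good seed $Q^\star$, the runtime count via Lemma~\ref{lem:batched-seeded-tree-dp}, and then Proposition~\ref{prop:randomized-seed-boosting} with $a=\lceil\log_2(1/(\lambda^2 2^s e))\rceil_+$ together with monotonicity of $\tau$. The differences are minor: you cite Theorem~\ref{thm:intro-main-general} with $t=1$ for $s=0$ instead of running the algorithm with the empty seed, and your estimate $|W|=(1+o(1))2^{-s}n$ tacitly assumes $k=o(n)$, whereas the paper covers all $k\ge\lambda\sqrt n$ by using that $x\mapsto x/\sqrt{2^{-s}N+(1-2^{-s})x}$ is increasing (for larger $k$ the effective signal only improves, so your conclusion still holds).
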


\begin{proof}
Fix an integer $s\ge0$, and put
\[
\lambda_s=(2^s e)^{-1/2}.
\]
It suffices to prove recovery for every fixed $\lambda>\lambda_s$ in time $n^{\omega(1,1,s/2)+o(1)}$. Run Algorithm~\ref{alg:batched-seeded-tree-recovery} with $v=\lfloor\sqrt{\log n}\rfloor$. Put $N=|V_1|$ and let
\[
M_{N,k}=\min\left\{N,\left\lceil k+\frac{N}{(\log N)^3}\right\rceil\right\}
\]
be the truncation size used in Algorithm~\ref{alg:batched-seeded-tree-recovery}. By the same hypergeometric concentration used in Lemma~\ref{lem:holdout-partition}, the random partition satisfies
\[
|C\cap V_1|=(1-o(1))k,
\qquad
|C\cap V_2|\ge \Omega(n^{1/10})
\]
with probability $1-o(1)$. Condition on this event. If $s=0$, use the single empty seed tuple. If $s\ge1$, sample $R=n^{s/2}\log n$ ordered seed tuples from $V_1$. Since $|C\cap V_1|=(1-o(1))k$, a uniform ordered $s$-tuple from $V_1$ is contained in $C\cap V_1$ with probability
\[
\frac{(|C\cap V_1|)_s}{(|V_1|)_s}
\ge
(1-o(1))\lambda^s n^{-s/2}.
\]
Thus, with probability $1-o(1)$, at least one sampled seed $Q^\star$ satisfies $Q^\star\subseteq C\cap V_1$.

For this seed, put $K=|(C\cap V_1)\setminus Q^\star|$. With probability $1-o(1)$ over the exposed edges between $Q^\star$ and $V_1\setminus Q^\star$,
\[
K=(1-o(1))k,
\qquad
|W_{Q^\star}(V_1)|
=(1+o(1))\bigl(2^{-s}N+(1-2^{-s})K\bigr).
\]
The function $x\mapsto x/\sqrt{2^{-s}N+(1-2^{-s})x}$ is increasing. Since $N\le n$ and $K\ge(1-o(1))\lambda\sqrt n$, the effective signal in the induced subgraph on the common neighbors of the seed set $Q^\star$ in $V_1$ is therefore at least
\[
\frac{K}{\sqrt{|W_{Q^\star}(V_1)|}}
\ge
(1-o(1))\frac{\lambda\sqrt n}{\sqrt{2^{-s}n+(1-2^{-s})\lambda\sqrt n}}
=
(1-o(1))\lambda 2^{s/2}.
\]
Since $\lambda>\lambda_s$, this is larger than $e^{-1/2}$ by a fixed margin. Lemma~\ref{lem:good-seeded-row} therefore shows that the set $V_3(Q^\star)$ of the $M_{N,k}$ largest scores contains all but $o(k)$ vertices of $C_W=(C\cap V_1)\setminus Q^\star$. Since $|Q^\star|=s=O(1)=o(k)$, the same set contains all but $o(k)$ vertices of $C\cap V_1$ and has the signal-to-noise ratio required for cleanup using the holdout set.

It remains to account for the runtime. For each coloring, rooted tree, and root, Lemma~\ref{lem:batched-seeded-tree-dp} computes the score matrix in time $2^{O(v)}n^{\omega(1,1,s/2)+o(1)}$.
The number of labeled trees and admissible roots is at most $v^{v-2}\cdot v=n^{o(1)}$, and the number of color-coding repetitions is $\lceil\rho_v^{-2}\rceil=2^{O(v)}=n^{o(1)}$. Thus computing all scores takes time $n^{\omega(1,1,s/2)+o(1)}$. By Lemma~\ref{lem:batched-cleanup}, the cleanup and checking for cliques over all sampled seeds take the same asymptotic time, and the seed $Q^\star$ outputs the planted clique with probability $1-o(1)$.

Strong detection follows by running the recovery algorithm and accepting if and only if a clique of the target size $k$ is found. Uniformly for $k\ge\lambda\sqrt n$, under $\QQ_n$ no such $k$-clique exists with high probability, while under $\PP_{n,k}$ the recovery argument outputs $C$. This proves
\[
\tau(\lambda_s)\le \omega(1,1,s/2).
\]
For a general $\lambda>0$, fix an integer $s\ge0$. If $\lambda<\lambda_s$, apply Proposition~\ref{prop:randomized-seed-boosting} with an additional seed of size
\[
a=
\left\lceil
\log_2\frac{\lambda_s^2}{\lambda^2}
\right\rceil_+
=
\left\lceil
\log_2\frac{1}{\lambda^2 2^s e}
\right\rceil_+.
\]
By the definition of $a$,
\[
2^{-a/2}\lambda_s\le\lambda.
\]
Proposition~\ref{prop:randomized-seed-boosting} bounds $\tau(2^{-a/2}\lambda_s)$ by the base exponent plus $a/2$, and monotonicity of the tradeoff in the signal strength then gives
\[
\tau(\lambda)
\le
\omega(1,1,s/2)+\frac a2.
\]
Optimizing over $s$ gives the result.
\end{proof}

\begin{remark}
    In principle one may extend this idea beyond just counting signed trees.
    Trees are special in this setting because the rooted dynamic program has one boundary vertex: after putting $R=n^{s/2+o(1)}$ seeds into one table, each transition is a product of shape $(R\times n)\cdot(n\times n)$, giving the exponent $\omega(1,1,s/2)$.

    For a bounded-treewidth family, the tables have several boundary vertices. After putting the seeds into one table and combining the boundary indices, the transitions would involve more general rectangular products, for example $(n^{s/2+a}\times n^b)\cdot(n^b\times n^c)$, with $a,b,c$ depending on the chosen decomposition. Thus one would need a separate algorithm using exponents such as $\omega(s/2+a,b,c)$ and, for larger treewidth, possibly faster methods for tensor contractions. We do not pursue these variants here.
\end{remark}
\section{Derandomization of algorithms}
\label{sec:derandomization}

This section gives the methods needed to make the above algorithms deterministic.
Recall that there are three sources of randomness that have been involved: (1) the random colorings in color-coding, (2) the sets of seed vertices in boosted algorithms, and (3) the choice of holdout sets in recovery algorithms.
We describe how each of these can be derandomized below.

\subsection{Balanced hash families for color coding}
\label{sec:derand}

We first describe how the color-coding approximations above can be derandomized by replacing random colorings with explicit families of hash functions.
This follows an argument of \cite{alon2010balanced}.

The key point is that, for our color-coding argument, we do not need every $v$-subset of vertices to be uniformly colored by every hash function; rather, it suffices to guarantee that each such subset is assigned distinct colors by roughly the same number of hash functions. This leads to the notion of a balanced family of perfect hash functions.

\begin{definition}
Let $1 \le v \le n$ and $\delta \ge 1$. A family $\mathcal{T}$ of functions $\zeta:[n]\to[v]$ is called a \emph{$\delta$-balanced $(n,v)$-family of perfect hash functions} if there exists a real number $T>0$ such that for every $S \in \binom{[n]}{v}$, the number of functions in $\mathcal{T}$ that are injective on $S$,
\[
N_{\mathcal T}(S)\colonequals
\bigl|\{\zeta\in\mathcal{T}:\ \zeta|_S \text{ is one-to-one}\}\bigr|,
\]
satisfies
\[
\frac{T}{\delta} \le N_{\mathcal T}(S) \le \delta T.
\]
\end{definition}

\begin{theorem}[Theorem~7 of \cite{alon2010balanced}]
\label{thm:balanced-phf-main}
Let $1<\delta\le 2$. There is an explicit construction of a $\delta$-balanced $(n,v)$-family of perfect hash functions of size
\[
2^{O(v\log\log v)}\cdot (\delta-1)^{-O(\log v)}\cdot \log n,
\]
and the construction, together with a balancing parameter $T$ satisfying the definition above, can be produced in time
\[
2^{O(v\log\log v)}\cdot (\delta-1)^{-O(\log v)}\cdot n\log n
 +(\delta-1)^{-O(v/\log v)}.
\]
In particular, for any fixed $1<\delta\le2$, the size is $2^{O(v\log\log v)}\log n$ and the construction time is $2^{O(v\log\log v)}n\log n$.
\end{theorem}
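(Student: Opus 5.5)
The plan follows the construction of Alon and Gutner \cite{alon2010balanced}: a composition of three explicit hash layers, each balanced (or perfect) on $v$-sets, followed by an analysis of how balancedness multiplies under composition. Fix $\delta\in(1,2]$ and set $\delta_0=\delta^{1/3}$, so that composing three $\delta_0$-balanced layers gives a $\delta$-balanced family.

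\textbf{Layer 1: shrink $[n]$ to a range of size $v^{O(1)}$.} Take a $\delta_0$-balanced family of maps $[n]\to[v^2]$ that are injective on every $v$-set. Use an explicit almost-pairwise-independent or $\epsilon$-biased family (Naor--Naor, or Alon--Goldreich--H\aa stad--Peralta). For a fixed $S$, the number of collisions has expectation $\binom v2/v^2<1/2$ under true pairwise independence. To get \emph{balanced} injectivity counts rather than merely positive ones, use $k$-wise $\epsilon$-biased families with $k=O(v)$ or $O(\log v)$ as appropriate. The fraction of functions injective on $S$ is then determined up to additive error $\epsilon\cdot 2^{O(v\log v)}$... this is too costly. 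Instead, follow Alon--Gutner: pick the range $[v^{c}]$ with $c$ large and use Fourier/small-bias estimates, so that the injectivity probability is $1-O(v^2/v^c)$ up to additive $(\delta_0-1)$ error. This costs size $\mathrm{poly}(v,(\delta_0-1)^{-1})\log n$ and time $n\log n$ times that.

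\textbf{Layer 2: reduce $[v^c]$ to $[v]$ in blocks of size about $\log v$.} Split $[v^c]$ via a further balanced family into $\ell=v/\log v$ buckets so that each $v$-set has bucket sizes near a prescribed profile. Then, inside each bucket, map onto a palette of size $O(\log v)$ using a family that is \emph{exactly} uniform over all injections. This is possible by brute-force enumeration, since the domain is only $v^{O(1)}$ and the targets have size $O(\log v)$; this enumeration produces the $(\delta-1)^{-O(v/\log v)}$ additive construction term. Concatenating the buckets gives size $\bigl(2^{O(\log v\cdot\log\log v)}\bigr)^{v/\log v}=2^{O(v\log\log v)}$, with a $(\delta-1)^{-O(\log v)}$ factor coming from the balancing of the bucket splits.

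\textbf{Balancing parameter and verification.} Take $T$ to be the product of the three expected counts, which are computable in closed form. Verify $T/\delta\le N(S)\le\delta T$ by multiplying the per-layer bounds. The main obstacle is Layer 2: showing that the bucket-partition step is \emph{balanced} and not merely good with positive probability. I would handle it by using an exactly uniform family over all $\ell$-way splittings of the small domain and counting injective compositions exactly, so that only Layer 1 contributes approximation error.
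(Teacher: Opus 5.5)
The paper does not prove this statement; it is quoted from Alon--Gutner, so the question is whether your sketch stands on its own. Two parts of it are sound. The composition bookkeeping is fine: balance factors multiply, so $\delta_0=\delta^{1/3}$ per layer suffices. A first layer into a range of size about $v^2/(\delta-1)$ is also a reasonable start, since there all but roughly a $(\delta_0-1)$-fraction of functions are injective on every $v$-set. Note, though, that this range is not $v^{O(1)}$ when $\delta\to1$, as in this paper's use with $\eta_v=e^{-\Theta(v)}$.

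The step you yourself flag as the main obstacle is never supplied, and the fix you propose fails. Because the bucket palettes partition $[v]$ and $|S|=v$, every $v$-set must be split \emph{exactly} according to the profile. So you need the number of splitters realizing that exact profile to agree within a factor $\delta$ across all $v$-sets. An ``exactly uniform family over all $\ell$-way splittings'' does this only if it is something like the family of all maps $[v^c]\to[\ell]$, of size $\ell^{v^c}$, or a $v$-wise independent family, of size $2^{\Omega(v\log v)}$ by the Rao bound. Both are far above $2^{O(v\log\log v)}$. The cheap splitter by consecutive intervals, of size $2^{O(v)}$, is badly unbalanced. Write $s_1<\dots<s_v$ for the elements of $S$ and $P_j=p_1+\dots+p_j$. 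The number of interval partitions realizing the profile is $\prod_{j<\ell}(s_{P_j+1}-s_{P_j})$, which is $1$ when $S$ is an interval and $v^{\Omega(\ell)}$ when $S$ is spread out. Your accounting is also inconsistent: you first attribute the $(\delta-1)^{-O(\log v)}$ factor to balancing the bucket splits, then say only Layer~1 contributes approximation error.

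The within-bucket step has the same defect. Brute-force enumeration over maps from a domain of size $v^{c}$ is infeasible: there are at least $2^{v^{c}}$ such maps, and doubly exponentially many families of them. The exactly balanced families one can actually write down, such as $\Theta(\log v)$-wise independent ones, have size $2^{\Theta(\log^2 v)}$ per bucket. Concatenating over $v/\log v$ buckets then gives $2^{\Theta(v\log v)}$ rather than $2^{O(v\log\log v)}$. An exact, $\delta$-independent object also cannot be the source of the additive $(\delta-1)^{-O(v/\log v)}$ term.

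If you instead use $\delta_1$-balanced families chosen independently in each bucket, the balance factors multiply over the $v/\log v$ buckets. That forces $\delta_1-1\approx(\delta-1)\log v/v$. Since approximate constructions have size polynomial in $1/(\delta_1-1)$, this puts a factor $(\delta-1)^{-\Omega(v/\log v)}$ into the \emph{size} of the product family, not into an additive term of the running time.

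So neither the claimed size nor the claimed running time follows from your outline. What is missing is a genuinely balanced splitter of size $2^{O(v\log\log v)}(\delta-1)^{-O(\log v)}$, together with a way of handling the buckets that does not multiply errors across them; that is where the difficulty of the theorem lies.
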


When $v=o(\log n/\log\log n)$, the family size is $n^{o(1)}$ for every fixed $\delta>1$. To derandomize rooted recovery below, where $v=\lfloor\sqrt{\log n}\rfloor$, we instead take $\delta=1+\eta_v$ with $\eta_v=\exp(-\Theta(v))$ small enough. The factors involving $(\delta-1)^{-1}$ are then still $n^{o(1)}$. Let $\mathcal T$ be such a balanced family, viewed as colorings $\zeta:[n]\to[v]$. Define
\[
\widetilde{T}_H(G)
=
\frac{1}{\beta_v|\mathcal T|}
\sum_{\zeta\in\mathcal T}T_H^{\mathrm{col}}(G,\zeta),
\]
where $\beta_v=T/|\mathcal T|$. For each injective embedding $\phi:V(H)\hookrightarrow[n]$, the weight of the corresponding embedding term in $\widetilde T_H$ lies in $[1/\delta,\delta]$. For a family $\sH$, define
\[
\widetilde F_{\sH}(G)=\sum_{H\in\sH}\widetilde T_H(G).
\]
The derandomization increases the running time by only a subpolynomial factor.

\begin{lemma}\label{lem:derand-second-moment}\label{lem:balanced-hash-detection}\label{lem:balanced-hash-variance}
Let $\sH=\sG_{v,\le t}^{\conn}$, and let $\mathcal T$ be a $\delta$-balanced $(n,v)$-family of perfect hash functions with $1\le\delta\le2$. Then
\begin{align*}
\E_{\QQ}[\widetilde F_{\sH}] &= \E_{\QQ}[F_{\sH}] = 0, \\
\frac{1}{\delta}\E_{\PP}[F_{\sH}]
&\le \E_{\PP}[\widetilde F_{\sH}]
\le \delta \E_{\PP}[F_{\sH}], \\
\Var_{\QQ}(\widetilde F_{\sH})
&\le \delta^2\Var_{\QQ}(F_{\sH}).
\end{align*}
Moreover, if $G\sim\PP$ with $k\ge\lambda\sqrt n$, $\lambda>1/\sqrt{c(\sH)}$, $v^2=o(k)$, and
\[
\frac{\Var_{\PP}(F_{\sH})}{(\E_{\PP}[F_{\sH}])^2}=o(1),
\]
then
\[
\frac{\Var_{\PP}(\widetilde F_{\sH})}
{(\E_{\PP}[\widetilde F_{\sH}])^2}=o(1).
\]
The planted-mean bounds and the final normalized-variance conclusion also hold with $\PP$ replaced by $\PP_{n,C}$, uniformly over every fixed $C\subseteq[n]$ of size $k$.
\end{lemma}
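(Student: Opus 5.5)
The approach is to write $\widetilde F_{\sH}$ as a reweighted version of $F_{\sH}$ and then transfer the moment calculations of Lemmas~\ref{lem:null-moments} and~\ref{lem:planted-moments}. For an embedding $\phi$ of $H$, set $\gamma(\phi(V(H))) \colonequals N_{\mathcal T}(\phi(V(H)))/T \in [1/\delta,\delta]$. Then
\[
\widetilde F_{\sH}(X) = \sum_{H\in\sH}\sum_{\phi}\gamma(\phi(V(H)))\,\chi_{H,\phi}(X),
\]
and the weights depend only on the image set, not on $X$. All claimed statements are therefore about the same characters as before, but with deterministic weights in $[1/\delta,\delta]$.

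\textbf{Null model.} The null mean vanishes because every character has mean zero. For the variance, orthonormality of characters (as in Lemma~\ref{lem:null-moments}) makes $\E_{\QQ}[\chi_{H,\phi}\chi_{H',\psi}]$ nonzero only when $\phi(H)=\psi(H')$ as edge sets. In that case the two image vertex sets agree, since there are no isolated vertices, so both weights are equal. Hence $\Var_{\QQ}(\widetilde F)=\sum \gamma^2 \cdot(\text{same nonnegative terms})\le\delta^2\Var_{\QQ}(F)$.

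\textbf{Planted mean.} Each $\E_{\PP}\chi_{H,\phi}=\PP[\phi(V(H))\subseteq C]\ge0$, so multiplying the nonnegative terms by weights in $[1/\delta,\delta]$ gives the sandwich bound. For fixed $C$ the term is the indicator $\mathbf 1\{\phi(V(H))\subseteq C\}$, still nonnegative, so the same argument applies uniformly over $C$.

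\textbf{Planted variance (the main obstacle).} Here the covariance terms $\alpha_\sigma-\alpha_v^2$ can have either sign, so naive termwise weighting fails. I would split by the overlap $u=|\phi([v])\cap\psi([v])|$.
\begin{itemize}
\item For $u=0$, $\sigma=2v$, and the term is $\alpha_{2v}-\alpha_v^2\le0$. Its absolute size is at most $O(v^2/k)\alpha_v^2$, because $\alpha_{2v}/\alpha_v^2=\prod(1-\ldots)=1-O(v^2/k)$. This is where the hypothesis $v^2=o(k)$ enters. The total contribution is then at most $\delta^2\, O(v^2/k)(\E_{\PP}F)^2=o((\E_{\PP}\widetilde F)^2)$, and the same estimate holds for fixed $C$, where the disjoint-pair covariance is exactly zero.
\item For $u\ge1$, I would bound $|\alpha_\sigma-\alpha_v^2|\le\alpha_\sigma+\alpha_v^2$. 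Both parts are nonnegative, so they can be weighted by $\delta^2$.
\end{itemize}

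To finish the $u\ge1$ case, I would check that the proof of Lemma~\ref{lem:sufficient-conditions-overlap} in fact controls
\[
\sum_{u\ge1}\bigl(\alpha_\sigma+\alpha_v^2\bigr)
\]
by $o((\E_{\PP}F)^2)$. This should hold because that proof bounds these $u\ge1$ contributions by positive quantities, namely the terms with $v^{2u}n^\ell/k^{u+\ell}$ and $(v+1)!$ appearing in the three error terms. The $\alpha_v^2$ part is easy: the fraction of pairs with $u\ge1$ is $O(v^2/n)$.

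Combining the two cases gives $\Var_{\PP}(\widetilde F)\le o(\delta^2(\E_{\PP}F)^2)$. Using $\E_{\PP}\widetilde F\ge\E_{\PP}F/\delta$, this yields the normalized bound. The argument is unchanged for $\PP_{n,C}$, since the same sums with indicators in place of $\alpha$ admit the same bounds uniformly in $C$ by symmetry of the counting.
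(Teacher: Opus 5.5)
Your reweighting, the null-model and planted-mean arguments, and the split of the planted variance into disjoint pairs ($u=0$) and overlapping pairs ($u\ge1$) are all fine. The disjoint pairs have nonpositive terms of total size $O(v^2/k)(\E_{\PP}F_{\sH})^2$. This split is essentially the paper's decomposition $\Var_{\PP}(F_{\sH})=P-N$ into positive and negative parts.

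The gap is in how you control the overlapping pairs. You propose to read off $\sum_{u\ge1}(\alpha_\sigma+\alpha_v^2)=o((\E_{\PP}F_{\sH})^2)$ from the proof of Lemma~\ref{lem:sufficient-conditions-overlap}. That proof does bound $\sum_{u\ge1}\alpha_\sigma$ by $(\E_{\PP}F_{\sH})^2$ times the three error terms. However, those terms are shown to be $o(1)$ only in the proof of Theorem~\ref{thm:general-separation}. That proof uses $p_{u,\ell}\le 1$ together with $v\to\infty$ and $v\le\log n/(5\log\log n)$, for instance to get $v^{2v}\le n^{2/5}$. The lemma assumes only $v^2=o(k)$. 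Under that assumption the crude bound $v^{2u}n^{\ell}k^{-(u+\ell)}$ at $\ell=u-1$ can be enormous; take $v=n^{1/10}$ and $k=\lambda\sqrt n$. So, as written, your argument proves the conclusion only under extra hypotheses. It also never uses the hypothesis $\Var_{\PP}(F_{\sH})/(\E_{\PP}F_{\sH})^2=o(1)$, which the lemma supplies precisely for this step.

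The fix is one line and is what the paper does. Summing over all quadruples $(H,H',\phi,\psi)$,
\[
\sum_{u\ge1}\bigl(\alpha_\sigma-\alpha_v^2\bigr)
=\Var_{\PP}(F_{\sH})+\sum_{u=0}\bigl(\alpha_v^2-\alpha_{2v}\bigr)
\le \Var_{\PP}(F_{\sH})+\frac{2v^2}{k}(\E_{\PP}F_{\sH})^2
=o\bigl((\E_{\PP}F_{\sH})^2\bigr).
\]
Adding $2\sum_{u\ge1}\alpha_v^2=O(v^2/n)(\E_{\PP}F_{\sH})^2$ gives the bound you need. Then weight by $\delta^2$ and divide by $(\E_{\PP}\widetilde F_{\sH})^2\ge\delta^{-2}(\E_{\PP}F_{\sH})^2$, as you planned. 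In the paper's notation: $N/(\E_{\PP}F_{\sH})^2\le 1-\alpha_{2v}/\alpha_v^2\le 2v^2/k$, so $P=\Var_{\PP}(F_{\sH})+N=o((\E_{\PP}F_{\sH})^2)$, and $\Var_{\PP}(\widetilde F_{\sH})\le\delta^2P$.

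For fixed $C$, your phrase ``the same sums with indicators in place of $\alpha$'' is too vague. The analogue of $\alpha_v^2$ is then an indicator depending on $(\phi,\psi)$, not a constant. The clean argument is that under $\PP_{n,C}$ every covariance
\[
\Cov(\chi_{H,\phi},\chi_{H',\psi})=\mathbf 1\{V(\phi(H)\triangle\psi(H'))\subseteq C\}-\mathbf 1\{\phi(V(H))\cup\psi(V(H'))\subseteq C\}
\]
is nonnegative. Hence $\Var_{\PP_{n,C}}(\widetilde F_{\sH})\le\delta^2\Var_{\PP_{n,C}}(F_{\sH})$ with no splitting at all. By relabeling invariance, $F_{\sH}$ has the same law under every $\PP_{n,C}$ as under $\PP_{n,k}$, which gives uniformity in $C$.
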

\noindent
The proof is deferred to Appendix~\ref{app:proof-balanced-hash-detection}.

Thus all detection results based on the first two moments also have deterministic versions with the same asymptotic guarantees.
In particular, for every fixed $t\ge1$, choosing $v\to\infty$ with $v\log v=O(\log\log n)$ gives a deterministic algorithm that achieves strong detection for every clique location at threshold $c(t)^{-1/2}$ in time $\widetilde O(n^{t+1})$.

\begin{lemma}\label{lem:balanced-hash-rooted}\label{lem:derandomized-rooted}
Let $\mathcal H=\mathcal G^{\mathrm{conn}}_{v,\le t}$, and define the rooted statistic obtained from the balanced hash family by
\[
  \widetilde F_{\mathcal H,i}
  =
  \sum_{H\in\mathcal H}
  \sum_{r\in\mathcal R(H)}
  \sum_{\substack{\phi:V(H)\hookrightarrow[n]\\ \phi(r)=i}}
  c_{\phi,H}\chi_{H,\phi},
\]
where the coefficients arise from a $(1+\eta_v)$-balanced hash family and therefore satisfy
\[
(1+\eta_v)^{-1}\le c_{\phi,H}\le 1+\eta_v.
\]
Let
\[
\mu=\nu(\mathcal H)(k-1)_{v-1}.
\]
If $v=\lfloor\sqrt{\log n}\rfloor$, $k\ge\lambda\sqrt n$, and $c_0\lambda^2>1$ for some $c_0<c(t)$, put
\[
\delta_v=K_{\lambda,c_0,t}(c_0\lambda^2)^{-(v-1)}.
\]
Here $K_{\lambda,c_0,t}$ is a sufficiently large positive constant depending only on $\lambda,c_0,t$.
Choose $\eta_v=o(\delta_v^{1/2})$. Then, uniformly over every fixed $C\subseteq[n]$ of size $k$ and every $i\in[n]$, the following casewise bounds hold (the expectation is only over the background edges):
\begin{align*}
\E_{\PP_{n,C}}\left[
  \left(
    \frac{\widetilde F_{\mathcal H,i}}{\mu}-1
  \right)^2
  \right]
&\le (1+o(1))\delta_v,
&& i\in C,\\
\E_{\PP_{n,C}}\left[
  \left(
    \frac{\widetilde F_{\mathcal H,i}}{\mu}
  \right)^2
  \right]
&\le (1+o(1))\delta_v,
&& i\notin C.
\end{align*}
\end{lemma}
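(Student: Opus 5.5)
We will compare $\widetilde F_{\mathcal H,i}$ with the exact rooted statistic $F_{\mathcal H,i}$ through the decomposition
\[
\widetilde F_{\mathcal H,i}=F_{\mathcal H,i}+D_i,\qquad D_i\colonequals\sum_{H,r}\sum_{\phi(r)=i}(c_{\phi,H}-1)\chi_{H,\phi},
\]
where $|c_{\phi,H}-1|\le\eta_v+O(\eta_v^2)$. By $(a+b)^2\le(1+\epsilon)a^2+(1+\epsilon^{-1})b^2$ with $\epsilon\to0$ slowly, it then suffices to prove two bounds. The first is the exact bound $\E_{\PP_{n,C}}[(F_{\mathcal H,i}/\mu-\1_{i\in C})^2]\le(1+o(1))\delta_v$ for \emph{fixed} $C$. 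The second is $\E_{\PP_{n,C}}[D_i^2]/\mu^2=o(\delta_v)$, with room to absorb the $\epsilon^{-1}$ factor. The second bound is where $\eta_v=o(\delta_v^{1/2})$ enters.

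For the exact part, we redo the computation behind Lemmas~\ref{lem:rooted-moments} and~\ref{lem:rooted-mse} under a fixed clique. Conditional on $C$, we have $\E[\chi_{H,\phi}]=\1\{\phi(V(H))\subseteq C\}$, and $\E[\chi_{H,\phi}\chi_{H',\psi}]=\1\{V(\phi(H)\Delta\psi(H'))\subseteq C\}$. This gives exact mean $\mu$ for $i\in C$ and $0$ for $i\notin C$, since $(k-1)_{v-1}$ counts the embeddings into $C$ with $\phi(r)=i$. The second moment is a sum over overlapping pairs of indicators, in place of the averaged $\alpha_\sigma$. The key observation is that the proof of Lemma~\ref{lem:rooted-mse} only used counts of pairs $(\phi,\psi)$ with prescribed overlap pattern and of how many symmetric-difference vertices land in $C$. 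For fixed $C$, these counts are bounded by $(k)_a(n)_b$-type products, which match $\alpha_\sigma(n)_\cdot$ up to $1+O(v^2/k)$ factors uniformly over $C$. So we follow the same overlap expansion, using $p_{u,\ell}$ and connectivity via Lemma~\ref{lem:puu-zero-connected} together with the rootedness/$\mathcal R(H)$ structure. The dominant term is the disjoint-except-root pairs, contributing $\nu(\mathcal H)(n)_{v-1}v!/\mu^2\approx v^{O(1)}(n/k^2)^{v-1}(v-1)!v!/(|\mathcal H|\cdot\ldots)\lesssim(c_0\lambda^2)^{-(v-1)}$. All other overlap terms are smaller by $n^{-\Omega(1)}$ factors, as in the proof of Theorem~\ref{thm:general-separation}. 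The constant $K_{\lambda,c_0,t}$ absorbs the polynomial-in-$v$ prefactors, after possibly shrinking $c_0$ slightly.

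For $D_i$, the main obstacle is that the coefficients $c_{\phi,H}-1$ are arbitrary, with signs depending on $\phi$, so no cancellation structure is available. We expand $\E[D_i^2]=\sum(c_{\phi,H}-1)(c_{\psi,H'}-1)\E[\chi_{H,\phi}\chi_{H',\psi}]$ and bound it by $\eta_v'^2\sum|\E[\chi\chi]|$, where $\eta_v'=O(\eta_v)$. Since all these expectations are nonnegative indicators, the sum equals $\sum\E[\chi\chi]$ with all coefficients replaced by $1$, i.e.\ the second moment of the \emph{exact} statistic $\E[F_{\mathcal H,i}^2]$. By the previous step, this is at most $\mu^2(\1_{i\in C}+O(\delta_v))\le2\mu^2$. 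Hence $\E[D_i^2]/\mu^2\le O(\eta_v^2)=o(\delta_v)$. Choosing $\epsilon=(\eta_v^2/\delta_v)^{1/2}\to0$ in the splitting inequality then makes both error terms $o(\delta_v)$, giving $(1+o(1))\delta_v$ uniformly over $C$ and $i$.
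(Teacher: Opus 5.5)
Your proof is correct, but it is organized differently from the paper's. The paper never isolates the perturbation $D_i$. It does a bias--variance decomposition of $\widetilde F_{\mathcal H,i}$ itself. Under a fixed clique, every covariance $\Cov_{\PP_{n,C}}(\chi_a,\chi_b)=\mathbf 1\{R_C(a)=R_C(b)\neq\emptyset\}$ is nonnegative, which gives the multiplicative comparison $\Var(\widetilde F_{\mathcal H,i})\le(1+\eta_v)^2\Var(F_{\mathcal H,i})$. The mean of $\widetilde F_{\mathcal H,i}$ is exactly $0$ for $i\notin C$ and lies in $[(1+\eta_v)^{-1}\mu,(1+\eta_v)\mu]$ for $i\in C$, for a total of $(1+\eta_v)^2\delta_v+O(\eta_v^2)$.

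You instead bound $D_i$ in $L^2$ using only nonnegativity of the raw moments $\E[\chi_a\chi_b]$. This gives $\E[D_i^2]\le\eta_v^2\E[F_{\mathcal H,i}^2]\le 2\eta_v^2\mu^2$, with the bias absorbed into $D_i$, and you conclude by the $L^2$ triangle inequality. Your route is slightly more elementary and would work even without conditioning on a fixed $C$: the raw moments $\alpha_\sigma$ stay nonnegative under the averaged model, where covariances can be negative. The paper's route yields a clean variance comparison that it reuses for the unrooted statistic.

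For the exact fixed-$C$ bound, the paper simply observes that $F_{\mathcal H,i}$ is invariant under relabeling of $[n]$. Its law under $\PP_{n,C}$ therefore depends on $(C,i)$ only through $\mathbf 1\{i\in C\}$, and Lemmas~\ref{lem:rooted-moments} and~\ref{lem:p-var} transfer verbatim; the proof of Lemma~\ref{lem:rooted-moments} is in fact carried out for each fixed $C$. Your proposed redo is thus unnecessary, and as sketched it is inaccurate in three places. First, pairs disjoint except at the root contribute zero covariance. For $i\in C$ they make up the $\mu^2$ part of the second moment, and for $i\notin C$ their product has mean zero. Second, the dominant term $\nu(\mathcal H)(n-k-1)_{v-1}v!$ comes from relabelings of the same embedded image, and the $p_{u,\ell}$ machinery belongs to the unrooted statistic. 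Third, polynomial-in-$v$ factors cannot be absorbed into $K_{\lambda,c_0,t}$ or by shrinking the prescribed $c_0$. They are absorbed through an intermediate $c_1\in(c_0,c(t))$, as in the proof of Lemma~\ref{lem:p-var}.
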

\noindent
The proof is deferred to Appendix~\ref{app:proof-balanced-hash-rooted}.

\subsection{Seed subsets and Tur\'an derandomization}
\label{sec:turan-seed-derandomization}

For the standard planted clique model, the clique location is random, so random choices of seed vertex subsets can be replaced by a fixed seed list inside a moderately large set of vertices without changing the runtime exponent.
This gives the same upper bounds for $\tau_{\det,\mathrm{avg}}$ as for $\tau_{\mathrm{rand}}$ in Theorem~\ref{thm:intro-main-bounds}.

\begin{proposition}
\label{prop:det-random-location-seeds}
\label{prop:det-seeds-random-location}
For every fixed integer $s\ge0$,
\[
\tau_{\det,\mathrm{avg}}\!\left((2^s e)^{-1/2}\right)
\le
\omega(1,1,s/2).
\]
\end{proposition}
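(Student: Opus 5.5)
We derandomize the three sources of randomness in the proof of Theorem~\ref{thm:seeded-tree-rectangular-mm} one at a time, keeping the runtime exponent $\omega(1,1,s/2)$. Throughout, the clique is uniformly random, and this is the only randomness we use.

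\textbf{Colorings.} Replace the $q$ random colorings by a $(1+\eta_v)$-balanced $(n,v)$-family of perfect hash functions from Theorem~\ref{thm:balanced-phf-main}, with $v=\lfloor\sqrt{\log n}\rfloor$ and $\eta_v=\exp(-\Theta(v))$. The family has size $n^{o(1)}$. Each embedding term then carries a coefficient in $[(1+\eta_v)^{-1},1+\eta_v]$. The argument of Lemma~\ref{lem:balanced-hash-rooted} applies verbatim to the seeded tree statistic on the common-neighborhood subgraph $W$: conditional on $(W,C_W)$, that subgraph is a planted clique instance with a fixed clique, and the bounds of Lemma~\ref{lem:balanced-hash-rooted} hold uniformly over clique locations. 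This gives the mean-square guarantee of Lemma~\ref{lem:good-seeded-row} without random colorings. The batched dynamic program of Lemma~\ref{lem:batched-seeded-tree-dp} is unchanged because it is run once per coloring.

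\textbf{Holdout split.} Fix a deterministic partition, say $V_2=\{1,\dots,\lfloor n^{3/5}\rfloor\}$. Since $C$ is uniform, $|C\cap V_1|$ and $|C\cap V_2|$ have exactly the hypergeometric laws used in Lemma~\ref{lem:holdout-partition}. The same concentration therefore gives $|C\cap V_1|=(1-o(1))k$ and $|C\cap V_2|=\Omega(n^{1/10})$ with high probability. Conditional on $C$, edges inside $V_1$, inside $V_2$, and between them remain independent, so Lemma~\ref{lem:batched-cleanup} applies unchanged.

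\textbf{Seeds.} This is the step needing the most care. For $s=0$ there is nothing to do. For $s\ge1$, fix in advance a deterministic list of $R=n^{s/2}\log n$ ordered $s$-tuples from $V_1$. We need the list to hit $C\cap V_1$ with high probability over uniform $C$. I would first try the simplest option: a fixed block $B\subseteq V_1$ of size $m=\lceil n^{1/2}(\log n)^{1/s}\cdot L\rceil$ for a large constant $L$, taking all ordered $s$-tuples of $B$. There are $m^s=n^{s/2+o(1)}$ of them. Since $C$ is uniform, $|C\cap B|$ is hypergeometric with mean $km/n\to\infty$ (at least $\lambda L(\log n)^{1/s}$), so it is at least $s$ with high probability. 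Some listed tuple $Q^\star$ then lies in $C\cap V_1$, and $R=n^{s/2+o(1)}$ still gives exponent $\omega(1,1,s/2)$ in Lemma~\ref{lem:batched-seeded-tree-dp}.

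The subtle point is that conditioning on the event ``$Q^\star\subseteq C$'' for a tuple chosen as a function of $C$ must not bias the remaining analysis. It does not. Fix the first tuple in lexicographic order contained in $C$; this choice depends only on $C\cap B$. Conditional on $C$, all edges not inside $C$ are still i.i.d. The computation of $|W_{Q^\star}|$ and of the effective signal $(1-o(1))\lambda2^{s/2}$ in the proof of Theorem~\ref{thm:seeded-tree-rectangular-mm} used only $C$ and the edges from $Q^\star$, so it goes through. Lemma~\ref{lem:good-seeded-row} is stated conditionally on $(W,C_W)$, which makes it insensitive to how $Q^\star$ was selected. Since $s=O(1)$ and $|B|=o(k)$ vertices are not excluded, $|C_W|=(1-o(1))k$. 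If needed, one can also remove $B$ from the scored vertex set to avoid interactions, at negligible cost.

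\textbf{Conclusion.} The algorithm returns the first candidate that forms a $k$-clique. Under $\PP_{n,k}$, $C$ is the unique $k$-clique with high probability, and under $\QQ_n$ none exists. This gives deterministic strong recovery and detection in time $n^{\omega(1,1,s/2)+o(1)}$ for every $\lambda>(2^se)^{-1/2}$.
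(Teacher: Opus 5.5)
Your proposal is correct and takes essentially the same route as the paper. The paper also fixes a block of roughly $\sqrt{n}\log n$ vertices inside $V_1$ and lists all its ordered $s$-tuples, fixes the holdout split in advance using hypergeometric concentration under the uniform clique, and replaces random colorings by balanced hash families; your explicit conditioning on $C$ (so that the lexicographically first planted tuple is determined and the remaining edges stay i.i.d.) spells out a point the paper leaves implicit. One small slip: $|B|$ is not $o(k)$, since it is much larger than $k$. All you need is that only the $s$ vertices of $Q^\star$ are excluded, or, if you do remove $B$, that $|C\cap B|=o(k)$ with high probability, which holds.
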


\begin{proof}
Fix $\lambda>(2^s e)^{-1/2}$ and let $k\ge\lambda\sqrt n$. For $s=0$ there is no seed to choose. For $s\ge1$, fix
\[
U_0=\{1,\ldots,\lceil \sqrt n\log n\rceil\}
\]
and enumerate all ordered $s$-tuples of distinct vertices in $U_0$. This list has size $n^{s/2+o(1)}$. If the planted clique is uniformly random and $k\ge\lambda\sqrt n$, then $|C\cap U_0|$ is hypergeometric with mean at least $(1+o(1))\lambda\log n$, and hence $|C\cap U_0|\ge s$ with probability $1-o(1)$. Thus, with high probability, the fixed list contains a seed from the planted clique.

We also fix the holdout partition used by the cleanup step, for example by taking $V_2$ to be the last $\lfloor n^{3/5}\rfloor$ vertices. Since the clique location is uniformly random, the usual hypergeometric concentration for $C\cap V_1$ and $C\cap V_2$ still holds.

Use this fixed seed list, replace random colorings by the balanced hash family above, and run the batched algorithm for seeded trees from Theorem~\ref{thm:seeded-tree-rectangular-mm}. This gives the claimed bound on $\tau_{\det,\mathrm{avg}}$.
\end{proof}

\begin{proposition}
\label{prop:det-avg-square-mm}
For graphs of treewidth at most $2$, $\tau_{\det,\mathrm{avg}}(1/\sqrt{c(2)})\le \omega$.
\end{proposition}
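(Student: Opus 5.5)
The plan is to derandomize the two sources of randomness in the proof of Theorem~\ref{thm:tw2-mm}. These are the random colorings used in color coding and the random holdout partition; there are no seeds here. Square matrix multiplication is already deterministic. We fix $v=\lfloor\sqrt{\log n}\rfloor$ and $\mathcal H_v=\mathcal G^{\mathrm{conn}}_{v,\le 2}$, and enumerate all patterns with their $2$-tree completions and construction histories deterministically via Lemma~\ref{lem:enumerate-partial-2trees}, in $n^{o(1)}$ time.

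First, we replace the $q=\lceil\rho_v^{-2}\rceil$ random colorings by the explicit $(1+\eta_v)$-balanced $(n,v)$-family $\mathcal T$ of Theorem~\ref{thm:balanced-phf-main}, with $\eta_v=\exp(-\Theta(v))$ chosen so that $\eta_v=o(\delta_v^{1/2})$ as in Lemma~\ref{lem:balanced-hash-rooted}. Its size and construction time are $n^{1+o(1)}$, since $v\log\log v$ and $(\delta-1)^{-O(\log v)}=\exp(O(v\log v))$ are both $n^{o(1)}$. For each $\zeta\in\mathcal T$, each $H\in\mathcal H_v$ and each root $r\in\mathcal R(H)$, Lemma~\ref{lem:one-partial-2tree-mm} computes $T^{\mathrm{col}}_H$ and the rooted vector $(T^{\mathrm{col}}_{H,r}(X,\zeta;i))_i$ in time $2^{O(v)}n^{\omega+o(1)}$. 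Summing over $|\mathcal T|\cdot|\mathcal H_v|\cdot v=n^{o(1)}$ terms gives total time $n^{\omega+o(1)}$.

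For detection, Lemma~\ref{lem:balanced-hash-detection} combined with Theorem~\ref{thm:general-separation} shows that $\widetilde F_{\mathcal H_v}$ strongly separates whenever $\lambda>c(2)^{-1/2}$. The required input $\Var_{\PP}(F)/(\E_{\PP}F)^2=o(1)$ is exactly what the separation proof delivers.

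For recovery, we fix the holdout partition deterministically, taking $V_2$ to be the last $\lfloor n^{3/5}\rfloor$ vertices, exactly as in Proposition~\ref{prop:det-random-location-seeds}. Because $C$ is uniformly random, $|C\cap V_1|$ and $|C\cap V_2|$ obey the same hypergeometric concentration as in Lemma~\ref{lem:holdout-partition}. Conditional on $C\cap V_1$, the graph $G[V_1]$ is a planted clique instance with a fixed clique, so Lemma~\ref{lem:balanced-hash-rooted}, applied on $V_1$ with the fixed clique $C\cap V_1$, gives the casewise mean-square bounds $(1+o(1))\delta_v=\exp(-\Omega(\sqrt{\log n}))$ uniformly in $i$. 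These bounds hold over the background edges alone, which is what the cleanup needs.

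The step we expect to be the main obstacle is that the normalization $\mu$ depends on the unknown $|C\cap V_1|$. We will handle this exactly as in the proof of Theorem~\ref{thm:tw2-mm}, by taking the top $M_{N,k}$ raw scores with deterministic tie-breaking. Markov's inequality then shows that this set contains all but $o(k)$ clique vertices and has $k^2/M_{N,k}=\Omega(\log^3 n)$. Since the scores depend only on edges inside $V_1$, the cleanup of Lemmas~\ref{lem:implementable-threshold} and~\ref{lem:final-adjacency-cleanup} applies verbatim and costs $O(n^2)$. Strong detection also follows, by Remark~\ref{rem:recovery-detection}. Since every step is deterministic and runs in $n^{\omega+o(1)}$ time for every $\lambda>c(2)^{-1/2}$, this yields $\tau_{\det,\mathrm{avg}}(1/\sqrt{c(2)})\le\omega$.
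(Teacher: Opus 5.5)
Your proposal is correct and follows essentially the same route as the paper's proof. It replaces the random colorings in the algorithm of Theorem~\ref{thm:tw2-mm} by the balanced hash families of Lemmas~\ref{lem:balanced-hash-detection} and~\ref{lem:balanced-hash-rooted}, uses the deterministic matrix-multiplication dynamic program of Lemma~\ref{lem:one-partial-2tree-mm}, fixes the holdout set $V_2$ in advance using hypergeometric concentration for the uniformly random clique, and then runs the same top-$M_{N,k}$ cleanup. One small wording slip: by Theorem~\ref{thm:balanced-phf-main} the hash family itself has size $n^{o(1)}$, and only its construction time is $n^{1+o(1)}$; the size bound is what your runtime count $|\mathcal T|\cdot|\mathcal H_v|\cdot v=n^{o(1)}$ actually relies on.
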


\begin{proof}
Use the algorithm from Theorem~\ref{thm:tw2-mm} and replace its random colorings by the balanced hash families from Lemmas~\ref{lem:balanced-hash-detection} and~\ref{lem:balanced-hash-rooted}. For each fixed coloring, the dynamic program of Lemma~\ref{lem:one-partial-2tree-mm} is deterministic. Its total running time remains $n^{\omega+o(1)}$ after summing over the hash family. For cleanup, fix the holdout set in advance, for instance $V_2=\{n-\lfloor n^{3/5}\rfloor+1,\ldots,n\}$. Since the planted clique location is uniformly random, the same hypergeometric concentration used for the random holdout set holds for this fixed set. The same cleanup procedure therefore gives strong recovery, and hence strong detection, by a deterministic algorithm with runtime exponent $\omega$.
\end{proof}

The same argument allows us to add more seed subsets when the clique location is uniformly random. If a deterministic algorithm succeeds for a uniformly random clique location at threshold $\lambda_0$ in time $n^{r+o(1)}$ for some $r\ge2$, then, for every fixed integer $a\ge0$,
\[
\tau_{\det,\mathrm{avg}}(2^{-a/2}\lambda_0)
\le
r+\frac a2.
\]
This justifies the ceiling term in the bound on $\tau_{\det,\mathrm{avg}}$ in Theorem~\ref{thm:intro-main-bounds}.

To see this, fix $\lambda>2^{-a/2}\lambda_0$, let $k\ge\lambda\sqrt n$, fix a vertex set $U_0$ of size $\lceil\sqrt n\log n\rceil$, and enumerate all ordered $a$-tuples of distinct vertices in $U_0$. This list has size $n^{a/2+o(1)}$. Run the assumed algorithm on the induced subgraph on common neighbors of every tuple and perform the standard cleanup. For a fixed listed tuple $Q$, conditional on $Q\subseteq C$ and on the set of common neighbors, $C \setminus Q$ is uniformly distributed among the $K$-subsets of that set, where $K=k-a$. Writing $q=2^{-a}$, the set of common neighbors has size $(1+o(1))(qn+(1-q)K)$ with probability $1-o(1)$. Since $x/\sqrt{qn+(1-q)x}$ is increasing and $\lambda2^{a/2}>\lambda_0$, the resulting planted clique instance has signal greater than $\lambda_0$ by a fixed margin. Thus, conditional on $Q\subseteq C$, the call to the algorithm associated with $Q$ fails to return $C$ with probability $o(1)$.

Let $Z$ be the number of listed tuples contained in $C$, and let $Z_{\mathrm{bad}}$ be the number of these planted tuples whose calls fail. Hypergeometric concentration gives $Z=(1+o(1))\E Z$ with probability $1-o(1)$, where $\E Z=(|U_0|)_a(k)_a/(n)_a$ and $\E Z\ge(1+o(1))(\lambda\log n)^a$. The preceding conditional bound and symmetry give $\E Z_{\mathrm{bad}}=o(\E Z)$. Hence Markov's inequality implies $Z_{\mathrm{bad}}<Z$ with probability $1-o(1)$, so at least one call returns the planted clique. The base calls take $n^{r+a/2+o(1)}$ time and checking that the output is a clique costs $O(k^2)\le O(n^2)$ per tuple.

To obtain a deterministic guarantee for every clique location, the seed list must contain a seed from every possible clique, and the cleanup step must provide a suitable holdout set for every possible clique. We now construct families of sets with these properties.

\begin{definition}[Universal seed family]
For integers $n,k,s$, an \emph{$(n,k,s)$-universal seed family} is a family $\mathcal S\subseteq \binom{[n]}s$ such that every $C\subseteq[n]$ with $|C|=k$ contains at least one $S\in\mathcal S$.
\end{definition}

\begin{lemma}
\label{lem:explicit-universal-seed-family}
Fix an integer $s\ge1$, a constant $\lambda>0$, and $k\ge\lambda\sqrt n$. There is an explicit $(n,k,s)$-universal seed family of size $O_s(n^s/k^{s-1})\le n^{(s+1)/2+o(1)}$.
\end{lemma}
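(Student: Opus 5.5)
The plan is to formalize the Turán-type block construction already sketched in Section~\ref{sec:prior}. Handle $s=1$ separately by taking $\mathcal S=\binom{[n]}{1}$, which has size $n=O_1(n^1/k^0)$, is trivially universal, and satisfies $n\le n^{1+o(1)}$. For $s\ge2$, set $b=\lfloor (k-1)/(s-1)\rfloor$. Since $k\ge\lambda\sqrt n\to\infty$, we have $b\ge1$ for large $n$. Partition $[n]$ into $b$ consecutive intervals $P_1,\dots,P_b$ of sizes $\lfloor n/b\rfloor$ or $\lceil n/b\rceil$, and let
\[
\mathcal S=\bigcup_{j=1}^b\binom{P_j}{s}.
\]
This family is explicit: it can be listed in time linear in its size.

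\textbf{Universality} follows by pigeonhole. Let $C$ be a $k$-set containing no member of $\mathcal S$. Then $|C\cap P_j|\le s-1$ for every $j$, so
\[
k=|C|\le b(s-1)\le k-1,
\]
a contradiction.

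\textbf{Size bound.} We have $|\mathcal S|\le b\binom{\lceil n/b\rceil}{s}\le b(2n/b)^s/s!=O_s(n^s/b^{s-1})$. Since $b\ge (k-1)/(s-1)-1=\Omega_s(k)$ once $k$ is large, this gives $O_s(n^s/k^{s-1})$. Substituting $k\ge\lambda\sqrt n$ yields
\[
n^s/k^{s-1}\le\lambda^{-(s-1)}n^{s-(s-1)/2}=O_{s,\lambda}\bigl(n^{(s+1)/2}\bigr)\le n^{(s+1)/2+o(1)}.
\]

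The only mildly delicate point is to make sure $b$ is comparable to $k$ and at least $1$, and that the parts have size $\Theta(n/b)$ rather than being empty. This holds because $k\ge\lambda\sqrt n\gg s$ for large $n$ and $b\le k\le n$. For small $n$, the constant in the $O_s$ absorbs everything.
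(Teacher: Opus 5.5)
Your proposal is correct and follows essentially the same route as the paper: singletons for $s=1$, and for $s\ge2$ the partition of $[n]$ into $b=\lfloor (k-1)/(s-1)\rfloor$ consecutive near-equal parts, keeping all $s$-subsets inside a single part, with the pigeonhole bound $|C|\le b(s-1)\le k-1$ and the count $O_s\bigl(b(n/b)^s\bigr)=O_s(n^s/k^{s-1})$. Your extra checks that $b\ge1$, $b=\Omega_s(k)$, and $\lceil n/b\rceil\le 2n/b$ fill in details the paper leaves implicit.
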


\begin{proof}
For $s=1$, take all singleton seeds. For $s\ge2$, put $b=\left\lfloor\frac{k-1}{s-1}\right\rfloor$, partition $[n]$ into $b$ consecutive parts whose sizes differ by at most one, and include every $s$-subset contained in one part. If a $k$-set $C$ contained fewer than $s$ vertices in every part, then $|C|\le b(s-1)\le k-1$, a contradiction. Thus some part contains an $s$-subset of $C$. The number of listed seeds is $O_s\!\left(b(n/b)^s\right) =O_s(n^s/k^{s-1}) \le n^{(s+1)/2+o(1)}$.
\end{proof}

We continue to use $\tau_{\det}$ for the tradeoff defined in Definition~\ref{def:tau-variants} for deterministic algorithms that work for every clique location. By definition, it requires both strong recovery and strong detection.

\begin{lemma}\label{lem:deterministic-holdout-splitter}
Fix a constant $\lambda>0$, let $k\ge\lambda\sqrt n$, and set $h=\left\lfloor \frac{k}{10\log n}\right\rfloor$.
There is an explicit family $\mathfrak P$ of subsets $B\subseteq[n]$, with
\[
|\mathfrak P|=O(n\log n/k)\le n^{1/2+o(1)},
\]
such that every $C\subseteq[n]$ with $|C|=k$ has some $B\in\mathfrak P$ satisfying $h\le |C\cap B|\le 2h$ and $|B|\le \frac{2n}{3}$.
Consequently, if $V_2=B$ and $V_1=[n]\setminus B$, then
\[
|C\cap V_2|\ge h\gg \log n,\qquad
|C\cap V_1|=k-o(k),\qquad
|V_1|\ge n/3.
\]
\end{lemma}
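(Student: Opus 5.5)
We construct $\mathfrak P$ as a family of contiguous windows in the fixed order $1,\dots,n$ and use a discrete intermediate value argument.

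\textbf{Construction.} Set $m=\lceil 2n h/k\rceil$, so $m\approx n/(5\log n)\le 2n/3$ for large $n$. For every start point $a$ in the grid $\{1,1+g,1+2g,\dots\}$ with step $g=\max\{1,\lfloor n/(k\cdot 10\log n)\rfloor\cdot\}$, we would include the cyclic window $B_a=\{a,a+1,\dots,a+m-1\}$ (indices taken mod $n$). A coarser and simpler option is to take \emph{all} $n$ cyclic windows of length $m$. That gives $|\mathfrak P|=n$, which is slightly too many. To reach the stated bound $O(n\log n/k)$, we instead use starts spaced by roughly $\Delta=\lfloor m h/(4k)\rfloor$-ish steps; the right spacing is derived below.

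\textbf{Existence of a good window.} Fix $C$ with $|C|=k$ and let $f(a)=|C\cap B_a|$. Averaged over all $n$ cyclic starts, $f$ has mean $km/n\approx 2h$, up to rounding. Since some window is at least the mean and some window is at most the mean, consider also windows of length $m'=\lceil 1.5\,nh/k\rceil$. Each vertex is counted in exactly $m'$ windows, so the average of $f$ over windows of length $m'$ is $km'/n\in[1.5h,1.5h+1]$. Hence some window $B$ of length $m'$ has $|C\cap B|\ge 1.5h\ge h$.

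For the upper bound, shrink $B$ one endpoint at a time. Each step changes $|C\cap B|$ by at most one, so we pass through every value between $|C\cap B|$ and $0$. If $|C\cap B|>2h$, shrinking yields a sub-window with exactly $\lceil 1.5h\rceil$ clique vertices. This sub-window is an interval, but its length is arbitrary.

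The family therefore should be all intervals whose endpoints lie on a grid of spacing $g$, where $g$ is chosen so that rounding both endpoints to the grid moves at most $h/2$ clique vertices. This is the main obstacle: a worst-case $C$ can concentrate its vertices in short stretches, so grid rounding of positions does not control the clique count.

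\textbf{Handling the obstacle.} Rather than relying on grid rounding, we would index by rank. Choose the family to be, for each pair of grid points, the interval between them, and use the deterministic splitting structure of the proof of Lemma~\ref{lem:explicit-universal-seed-family}. Partition $[n]$ into $p=\lceil 10 k\log n\cdot /k\rceil$ consecutive blocks $P_1,\dots,P_p$, with $p=\Theta(n\log n/k)$ blocks of size $\Theta(k/\log n)$ each. Then let $\mathfrak P$ consist of prefix-differences of a single fixed ordering: the sets $B_j=P_j\cup P_{j+1}\cup\cdots$.

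The honest fallback, which clearly works, is to take $\mathfrak P$ to be the $n$ cyclic windows together with a sliding-count argument. Since $f(a+1)-f(a)\in\{-1,0,1\}$ and $f$ takes values both $\ge$ and $\le$ its mean, some start $a$ has $f(a)=\lceil 1.5h\rceil\in[h,2h]$. Each such window has $|B|=m'\le 2n/3$. To reduce the count to $O(n\log n/k)$, keep only starts spaced $\Theta(h)$ apart among the \emph{positions}. This requires the $\pm$ rounding estimate, which I expect to be the delicate step.

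\textbf{Consequences.} These follow directly from the properties of $B$. We have $|C\cap V_2|\ge h=\Theta(\sqrt n/\log n)\gg\log n$. We have $|C\cap V_1|\ge k-2h=k-o(k)$. Finally, $|V_1|\ge n-2n/3=n/3$.
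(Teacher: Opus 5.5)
The gap is the size bound $|\mathfrak P|=O(n\log n/k)$, which your proposal never establishes. The only construction you carry to completion is the ``honest fallback'' using all $n$ cyclic windows of length $m'=\lceil 1.5nh/k\rceil$. That sliding-count argument is correct: it gives a window with $|C\cap B|=\lceil 1.5h\rceil\in[h,2h]$ and $|B|\le 2n/3$. But it uses $n$ sets instead of $n^{1/2+o(1)}$.

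This loss is not cosmetic. The factor $|\mathfrak P|$ multiplies the running time in Lemma~\ref{lem:universal-holdout-wrapper}, and it is exactly the source of the additive $1/2$ in the exact-recovery bounds on $\tau_{\det}$; with $n$ sets that $1/2$ would become $1$.

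Your other candidates do not repair this. Intervals with both endpoints on a grid of spacing $\Theta(h)$ number $\Theta((n/h)^2)$, which is about $n\log^2 n$ when $k=\Theta(\sqrt n)$. The suffixes $P_j\cup P_{j+1}\cup\cdots$ alone violate $|B|\le 2n/3$ when, say, $C\subseteq[k]$ with $k<n/3$, since then every suffix containing $h$ clique vertices has size at least $n-k$.

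The step you flag as delicate is in fact elementary, and your reason for doubting it is mistaken. Shifting a window by $d\le m'$ positions, or moving one endpoint of an interval by $d$, changes $|C\cap B|$ by at most $d$, however clustered $C$ is: at most $d$ vertices leave and at most $d$ enter. So keep only the cyclic windows of length $m'$ whose starts lie in $\Gamma=\{1,1+g,1+2g,\dots\}\cap[n]$, with $g=\lfloor h/4\rfloor$ (recall $h\to\infty$).

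\begin{itemize}
\item Every start lies within $g$ of $\Gamma$. So over $\Gamma$ the count has maximum at least $1.5h-h/4\ge h$ and minimum at most $1.5h+1+h/4<2h$.
\item Cyclically consecutive values differ by at most $h/4$.
\item Hence some window indexed by $\Gamma$ has count in $[h,2h]$: if the maximum exceeds $2h$, walk from a maximizer to a minimizer, and the first value $\le 2h$ is at least $2h+1-h/4>h$.
\item The family has size $|\Gamma|=O(n/h)=O(n\log n/k)$.
\end{itemize}

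With this repair, your sliding-window argument is a valid alternative to the paper's proof. The paper instead partitions $[n]$ into consecutive blocks of size in $[h,2h]$ and builds a balanced binary tree over them, so every non-root node has size at most $n/2+2h\le 2n/3$. It then takes an inclusion-minimal node with more than $2h$ clique vertices; one of its two children must contain between $h$ and $2h$. That version needs no averaging, while yours relies on the mean count $km'/n\approx 1.5h$ lying strictly inside the target range.
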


\begin{proof}
Partition $[n]$ into consecutive blocks of size between $h$ and $2h$. Build a binary tree by recursively splitting each consecutive list of these blocks into two groups whose total sizes differ by at most $2h$, and let $\mathfrak P$ be the collection of all non-root nodes, each identified with the union of the blocks of leaves below that node. The number of leaves, and hence the number of tree nodes, is $O(n/h)=O(n\log n/k)\le n^{1/2+o(1)}$.
At the root, each child has size at most $n/2+2h\le 2n/3$ for all sufficiently large $n$. Every non-root node is contained in one of these two children and therefore also has size at most $2n/3$.

Fix $C\subseteq[n]$ with $|C|=k$. The root contains $k>2h$ vertices of $C$, while each leaf contains at most $2h$ vertices of $C$. Choose a minimal node $u$, with respect to inclusion, such that $|C\cap u|>2h$.
This node is not a leaf. Since $u$ is minimal, each child $u_1,u_2$ of $u$ satisfies $|C\cap u_j|\le 2h$. But $|C\cap u_1|+|C\cap u_2|=|C\cap u|>2h$, so at least one child, say $u_j$, satisfies $|C\cap u_j|>h$. Therefore $h<|C\cap u_j|\le 2h$.
Taking $B=u_j$ proves the claim.
\end{proof}

\begin{lemma}\label{lem:universal-holdout-wrapper}
Fix $\lambda>0$ and $\varepsilon>0$. Suppose that, for every induced subgraph $G[U]$ with $|U|=N$ and every fixed planted clique $C_U\subseteq U$ of size
\[
|C_U|\ge(\lambda+\varepsilon/2)\sqrt N,
\]
there is a deterministic algorithm running in time $T(N)$, where $T$ is nondecreasing, that outputs raw scores $S_i$, $i\in U$, for which there exists a positive factor $\mu_{U,C_U}$, common to all $i$ and not required to be known to the algorithm, such that the normalized scores $\widehat p_i=S_i/\mu_{U,C_U}$ satisfy
\[
\E\left[
\left(\widehat p_i-\mathbf 1_{\{i\in C_U\}}\right)^2
\right]
\le
\delta_N
\]
uniformly in $i\in U$, where $\delta_N^{1/2}=o((\log N)^{-3})$.
Then there is a deterministic exact-recovery algorithm that succeeds with high probability for every clique location and, for every planted clique size $k\ge(\lambda+\varepsilon)\sqrt n$, runs in time $n^{1/2+o(1)}T(n)+O(n^{5/2+o(1)})$.
\end{lemma}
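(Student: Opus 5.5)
The plan is to run the holdout-and-cleanup scheme of Lemma~\ref{lem:recovery} once for every candidate holdout set $B$ in the splitter family $\mathfrak P$ of Lemma~\ref{lem:deterministic-holdout-splitter}. We then accept the first candidate output that is a $k$-clique. Concretely, for each $B\in\mathfrak P$ we set $V_2=B$ and $V_1=[n]\setminus B$. We run the assumed scoring algorithm on $G[V_1]$, which costs $T(|V_1|)\le T(n)$ by monotonicity. We let $V_3$ be the $M_{N,k}$ largest raw scores, with $N=|V_1|$ and $M_{N,k}=\min\{N,\lceil k+N/(\log N)^3\rceil\}$, exactly as in the proof of Theorem~\ref{thm:tw2-mm}. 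Using top-$M$ selection avoids needing the unknown normalization $\mu_{U,C_U}$. We threshold degrees from $V_2$ into $V_3$ to get $\widehat C_2$, set $\widehat C_1$ to be the common neighbors of $\widehat C_2$ in $V_1$, and check whether $\widehat C_1\sqcup\widehat C_2$ is a $k$-clique. With $|\mathfrak P|=n^{1/2+o(1)}$ and $O(n^2)$ cleanup and checking per set, the total time is $n^{1/2+o(1)}T(n)+O(n^{5/2+o(1)})$.

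For correctness, fix any clique location $C$ with $|C|=k\ge(\lambda+\varepsilon)\sqrt n$. Lemma~\ref{lem:deterministic-holdout-splitter} gives a good $B^\star$ with $h\le|C\cap B^\star|\le 2h$ and $|V_1|\ge n/3$. For $B^\star$ we have $|C\cap V_1|=k-O(k/\log n)$, so $|C\cap V_1|\ge(\lambda+\varepsilon/2)\sqrt n\ge(\lambda+\varepsilon/2)\sqrt N$ for large $n$. This is because $N\le n$ and the loss is only a factor $1-O(1/\log n)$; the slack $\varepsilon/2$ absorbs it. Hence the hypothesis applies to $U=V_1$ with $C_U=C\cap V_1$. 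Markov's inequality on the mean-square bound, as in Lemma~\ref{lem:rounding}, shows that the set where $\widehat p_i>1/2$ has size at most $|C_U|+O(N\delta_N^{1/2})$ and misses $o(k)$ clique vertices with probability $1-o(1)$. Since $N\delta_N^{1/2}=o(N/(\log N)^3)$, this set has fewer than $M_{N,k}$ elements unless $M_{N,k}=N$. Because every normalized score is a positive multiple of the raw score, the top-$M_{N,k}$ set contains it, so $V_3$ misses only $o(k)$ vertices of $C_U$. It also has $|V_3\cap C_1|^2/|V_3|=\Omega(\min\{k,k^2(\log N)^3/N\})=\Omega((\log n)^3)$.

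The scores depend only on edges inside $V_1$, and for fixed $C$ and fixed $B^\star$ the edges between $V_1$ and $V_2$ are independent of them. Therefore the arguments of Lemmas~\ref{lem:implementable-threshold} and~\ref{lem:final-adjacency-cleanup} apply conditional on $V_3$ and give $\widehat C_1\sqcup\widehat C_2=C$ with probability $1-o(1)$. The argument for $|C\cap V_2|\gg\log n$ goes through with $h$ in place of $n^{1/10}$. The threshold must use $|V_3\cap C_1|\approx(1-o(1))k$, and the $\Omega(\log^3 n)$ signal-to-noise ratio gives Chernoff separation over all $|V_2|\le n$ vertices by a union bound. For the other sets $B\ne B^\star$ we need no guarantee, because any output is verified to be a $k$-clique. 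With high probability $C$ is the unique $k$-clique in $G\sim\PP_{n,C}$, by the first-moment fact stated before Proposition~\ref{prop:randomized-seed-boosting}; this holds uniformly over fixed $C$. So the first accepted output equals $C$, and detection follows as in Remark~\ref{rem:recovery-detection}.

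The main obstacle is rechecking the cleanup lemmas for a deterministic split. The holdout size is now between $h$ and $2h$ rather than about $n^{3/5}\cdot k/n$. The threshold $T$ of Algorithm~\ref{alg:cleanup} relied on $\widetilde k_1$ being close to $k_1$, which here becomes $k_1=k-\Theta(k/\log n)$. I would replace $T$ with $|V_3|/2+0.2499\,k$, or better $|V_3|/2+(1/4-1/\log\log n)k$. Then the gap from $|V_3\cap C_1|/4$ is still $\Omega(k/\log\log n)$ below the clique mean and far above the null mean, compared with fluctuations of order $\sqrt{|V_3|\log n}=o(k)$, given the signal-to-noise bound above.
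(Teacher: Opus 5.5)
Your proposal is correct and follows essentially the same route as the paper's proof: loop over the splitter family $\mathfrak P$, run the scorer on $G[V_1]$, take the $M_{N,k}$ largest raw scores to sidestep the unknown normalization, threshold the $V_2$-degrees into $V_3$, take common neighbors, and accept the first candidate that is a $k$-clique, using that $C$ is the unique $k$-clique. Your concern about the threshold is unnecessary. The proof of Lemma~\ref{lem:implementable-threshold} only needs $0.2499k$ to be a constant fraction of $s=|V_3\cap C_1|=(1-o(1))k$ that stays bounded away from $0$ and from $s/2$, and the paper uses exactly your first option, $|V_3|/2+0.2499\,k$.
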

\noindent
The proof is deferred to Appendix~\ref{app:proof-universal-holdout-wrapper}.

\begin{lemma}\label{lem:seeded-rooted-derandomization}
Fix a constant $\eta>0$ and a deterministic pair $(B,Q)$ consisting of a holdout set $B$ and a seed $Q$. Put $V_1=[n]\setminus B$, let $v=\lfloor\sqrt{\log n}\rfloor$, and let
\[
W_{B,Q}
=
\{x\in V_1\setminus Q:\ x\text{ is adjacent to every vertex of }Q\}.
\]
Let $\widetilde F_{B,Q,\mathrm{tree},i}$ be the seeded tree statistic rooted at $i$ for the pair $(B,Q)$. Compute it with a $(1+\eta_v)$-balanced hash family in place of random colorings, where the family is chosen so that $\eta_v=o(\delta_v^{1/2})$ for the $\delta_v$ appearing below. If $Q\subseteq C\cap V_1$, $|W_{B,Q}|=n^{\Omega(1)}$, and
\[
|C_W|\ge(e^{-1/2}+\eta)\sqrt{|W_{B,Q}|},
\qquad
C_W=(C\cap V_1)\setminus Q,
\]
then, after dividing by the same positive normalizing factor as in Lemma~\ref{lem:good-seeded-row}, the estimates $\widetilde p_{B,Q,i}$ satisfy
\[
\E\left[
\left(
\widetilde p_{B,Q,i}-\mathbf 1_{\{i\in C_W\}}
\right)^2
\mid W_{B,Q},C_W
\right]
\le \delta_v,
\qquad
\delta_v=\exp(-\Omega(v)),
\]
uniformly over $i\in W_{B,Q}$.
\end{lemma}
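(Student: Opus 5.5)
The plan is to reduce the statement to two facts already established: Lemma~\ref{lem:good-seeded-row}, which handles the randomized color-coded seeded tree statistic, and Lemma~\ref{lem:balanced-hash-rooted}, which handles balanced-hash rooted statistics over a fixed clique. The key observation is that, once we condition on the seed $Q$, on the edges from $Q$ to $V_1$, and on $W_{B,Q}$ and $C_W$, the seeded tree statistic restricted to $W=W_{B,Q}$ is simply the unseeded rooted tree statistic $F_{\mathcal T_v,i}$ on the induced subgraph $G[W]$. This holds because $w_Q$ is the indicator of $W$, and edges inside $W$ are disjoint from the seed-to-$V_1$ edges. Given $(W,C_W)$, the graph $G[W]$ is a planted clique instance $\PP_{|W|,C_W}$ with a \emph{fixed} clique $C_W$. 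Its effective signal is at least $e^{-1/2}+\eta$ relative to $\sqrt{|W|}$, and $|W|=n^{\Omega(1)}$.

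\textbf{Step 1 (the exact statistic).} Apply Lemma~\ref{lem:rooted-mse}, or rather its fixed-clique version implicit in Lemma~\ref{lem:balanced-hash-rooted}, with $t=1$, $\mathcal H=\mathcal T_v$ and $c(1)=e$. Choose $c_0<e$ with $c_0(e^{-1/2}+\eta)^2>1$. This bounds the normalized mean-square error of $F_{\mathcal T_v,i}/\mu_W$ by $(1+o(1))(c_0\lambda_W^2)^{-(v-1)}=\exp(-\Omega(v))$. Here we need to check two things: that $v=\lfloor\sqrt{\log n}\rfloor$ is still $o(\log|W|/\log\log|W|)$ and $\omega(1)$, which holds since $|W|=n^{\Omega(1)}$; and that the constants are uniform in $(W,C_W)$.

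\textbf{Step 2 (the balanced-hash coefficients).} The balanced family is constructed on the ambient $[n]$, or on $V_1$. Its restriction to $W$ is still $(1+\eta_v)$-balanced for $v$-subsets of $W$, because the balancing condition is stated for every $v$-subset. So the computed statistic has the form $\sum c_{\phi,T}\chi_{T,\phi}$ with $c_{\phi,T}\in[(1+\eta_v)^{-1},1+\eta_v]$, summed over embeddings into $W$. Lemma~\ref{lem:balanced-hash-rooted}, applied with host $W$ and fixed clique $C_W$, then gives the casewise bounds $(1+o(1))\delta_v$ for $i\in C_W$ and $i\notin C_W$. The normalizing factor is $\mu_W=\nu(\mathcal T_v)(|C_W|-1)_{v-1}$, the same one used in Lemma~\ref{lem:good-seeded-row}. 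Because that lemma holds uniformly over every fixed clique, conditioning on $(W_{B,Q},C_W)$ poses no issue.

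\textbf{Step 3.} Absorb the constant $K_{\lambda,c_0,t}$ and the $(1+o(1))$ factor into $\exp(-\Omega(v))$. This yields the claim uniformly over $i\in W_{B,Q}$.

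\textbf{Main obstacle.} The delicate step is justifying that the conditioning is clean when $B$ and $Q$ are deterministic. Specifically, we must argue that, given $Q\subseteq C$ and the realized $Q$-to-$V_1$ edges, the edges inside $W$ are still independent uniform off the clique. They are, since they are a disjoint set of coordinates. We must also confirm that the bounds of Lemma~\ref{lem:balanced-hash-rooted} depend only on $|W|$, $|C_W|$ and the signal ratio, not on $n$. If that lemma's proof uses $k\ge\lambda\sqrt n$ with the global $n$, we would rerun it with $N=|W|$, using $v=o(\sqrt{|C_W|})$ and $\eta_v=o(\delta_v^{1/2})$.
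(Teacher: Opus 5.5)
Your proposal is correct and follows essentially the same route as the paper: you condition on $(B,Q)$, $W_{B,Q}$ and $C_W$, observe that the mask turns the seeded statistic into the ordinary rooted tree statistic on the fixed-clique instance $G[W_{B,Q}]$, and rerun the coefficient-comparison argument of Lemma~\ref{lem:balanced-hash-rooted} with $\mathcal H=\mathcal T_v$ on host $W_{B,Q}$, using $|W_{B,Q}|=n^{\Omega(1)}$ to get $v=o(\log|W_{B,Q}|/\log\log|W_{B,Q}|)$. Your additional checks are details the paper leaves implicit: that the restricted hash family stays $(1+\eta_v)$-balanced on $v$-subsets of $W_{B,Q}$, and that one can choose $c_0<e$ with $c_0(e^{-1/2}+\eta)^2>1$.
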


\begin{proof}
With $v=\lfloor\sqrt{\log n}\rfloor$ as in the statement, condition on the deterministic pair $(B,Q)$, on the exposed set of common neighbors $W_{B,Q}$, and on $C_W$. The hypothesis $|W_{B,Q}|=n^{\Omega(1)}$ ensures that $v=o(\log |W_{B,Q}|/\log\log |W_{B,Q}|)$. The vertex weights $w_{B,Q}$ only restrict the host set to $W_{B,Q}$, so this statistic is the ordinary rooted tree statistic for the planted clique instance $G[W_{B,Q}]$, as in Lemma~\ref{lem:good-seeded-row}. The proof of Lemma~\ref{lem:balanced-hash-rooted}, which compares the coefficients before and after replacing random colorings, therefore applies with $\mathcal H=\mathcal T_v$ and with vertex set $W_{B,Q}$. We have $O(\eta_v^2)=o(\delta_v)$ and the variance changes by a factor $1+o(1)$, giving the displayed bound.
\end{proof}

The following theorem collects the bounds for deterministic algorithms and arbitrary clique locations from Theorems~\ref{thm:intro-main-general}, \ref{thm:tau-fast-tw2}, and~\ref{thm:intro-main-bounds}.
\begin{theorem}\label{thm:universal-deterministic-exact-recovery}\label{thm:universal-det}\label{thm:universal-deterministic}
For deterministic algorithms that must work for every clique location, the following bounds hold. For every fixed integer $t\ge1$,
\[
\tau_{\det}(c(t)^{-1/2})\le t+\frac32.
\]
For treewidth at most $2$, the algorithm using square matrix multiplication gives
\[
\tau_{\det}(1/\sqrt{c(2)})\le \omega+\frac12.
\]
For every fixed $\lambda>c(2)^{-1/2}$, a deterministic algorithm can achieve strong detection for every clique location in time $n^{\omega+o(1)}$.
At $c(1)^{-1/2}=e^{-1/2}$, the deterministic message-passing algorithm gives the sharper bound $\tau_{\det}(e^{-1/2}) \le 2$.
Lastly, for every fixed integer $s\ge0$, define
\[
\kappa_s=
\begin{cases}
0, & s=0,\\
(s+1)/2, & s\ge1.
\end{cases}
\]
Then
\[
\tau_{\det}\!\left((2^s e)^{-1/2}\right)
\le
\min\{\omega(1,1,\kappa_s+1/2),\,5/2+\kappa_s\}.
\]
\end{theorem}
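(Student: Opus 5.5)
The proof assembles the derandomized ingredients of this section around the wrapper Lemma~\ref{lem:universal-holdout-wrapper}. Each claimed exponent is the cost of the base scoring routine, plus $1/2$ for iterating over the splitter family $\mathfrak P$ of Lemma~\ref{lem:deterministic-holdout-splitter}.

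\emph{Bounded treewidth.} Take $\mathcal H=\mathcal G^{\conn}_{v,\le t}$ with $v=\lfloor\sqrt{\log N}\rfloor$ on each induced subgraph $G[U]$. Replace the random colorings by a $(1+\eta_v)$-balanced hash family (Theorem~\ref{thm:balanced-phf-main}) and compute the rooted scores with Algorithm~\ref{alg:treewidth-recovery}. By Lemma~\ref{lem:balanced-hash-rooted}, applied uniformly over every fixed clique location $C_U$, the normalized scores have mean-square error $\delta_v=\exp(-\Omega(\sqrt{\log N}))$. This gives $\delta_N^{1/2}=o((\log N)^{-3})$. Lemmas~\ref{lem:treewidth_recovery_algo} and~\ref{lem:enumerate-bounded-treewidth-patterns} give $T(N)=N^{t+1+o(1)}$. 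Choosing $\lambda$ with margin $\varepsilon$ above $c(t)^{-1/2}$, the wrapper yields time $n^{1/2+o(1)}n^{t+1}+n^{5/2+o(1)}$, so $\tau_{\det}\le t+3/2$; the $5/2$ term is dominated since $t\ge1$. The treewidth-$2$ bound is identical, except that we use Lemma~\ref{lem:one-partial-2tree-mm} with hash colorings, so $T(N)=N^{\omega+o(1)}$ and the total is $n^{\omega+1/2+o(1)}$, using $\omega\ge2$. For detection alone, no holdout is needed. We compute $\widetilde F_{\mathcal H}$ with hash families and invoke Lemma~\ref{lem:balanced-hash-detection} in its fixed-$C$ form together with Theorem~\ref{thm:general-separation} and Chebyshev. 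With square matrix multiplication this takes $n^{\omega+o(1)}$. The case $e^{-1/2}$ with exponent $2$ is cited directly from the deterministic algorithm of \cite{deshpande2015finding}.

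\emph{Seeded bound.} For $s=0$, the $\min$ is $\min\{\omega(1,1,1/2),5/2\}$, which follows from the tree case via the wrapper. In the seeded case we are not really invoking $t=1$: the claimed $\omega(1,1,1/2)$ comes from batching the $n^{1/2}$ holdout choices as rows of a rectangular product. For $s\ge1$, form the product list of pairs $(B,Q)$ with $B\in\mathfrak P$ and $Q$ drawn from the universal seed family of Lemma~\ref{lem:explicit-universal-seed-family} on $[n]\setminus B$. This list has size $n^{1/2+(s+1)/2+o(1)}=n^{\kappa_s+1/2+o(1)}$. It is universal: for each $C$, some $B$ is good by Lemma~\ref{lem:deterministic-holdout-splitter}, and the family on $[n]\setminus B$ contains some $Q\subseteq C\cap V_1$, since $|C\cap V_1|=k-o(k)\ge s$.

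For that good pair we need the following chain.
\begin{itemize}
\item The common-neighbor set $W_{B,Q}$ has size $(1+o(1))(2^{-s}|V_1|+\dots)$ and signal at least $(1-o(1))\lambda2^{s/2}>e^{-1/2}+\eta$, by the same monotonicity computation as in Theorem~\ref{thm:seeded-tree-rectangular-mm}.
\item Lemma~\ref{lem:seeded-rooted-derandomization} gives small mean-square error, and hence top-$M$ selection as in Lemma~\ref{lem:good-seeded-row}.
\item The holdout cleanup of Lemma~\ref{lem:batched-cleanup} then outputs $C$.
\item Every other pair's output is checked for being a $k$-clique. With high probability the only $k$-clique is $C$, so any accepted output is correct.
\end{itemize}

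For the runtime, each row of the batched dynamic program of Lemma~\ref{lem:batched-seeded-tree-dp} is indexed by a pair $(B,Q)$. The masks $w_{B,Q}$ are computed in $n^{\kappa_s+1/2+1}$ time, and each tree merge is a $(n^{\kappa_s+1/2}\times n)\cdot(n\times n)$ product. This gives $n^{\omega(1,1,\kappa_s+1/2)+o(1)}$ per coloring, tree and root, and there are only $n^{o(1)}$ of these. The batched cleanup of Lemma~\ref{lem:batched-cleanup} has the same cost. The alternative bound $5/2+\kappa_s$ comes instead from running the unbatched $O(n^{2+o(1)})$ seeded tree computation serially over the $n^{\kappa_s+1/2}$ pairs.

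\emph{Main obstacle.} The hardest step is the uniformity over the worst-case $C$. The wrapper and seeded lemmas give high-probability success only for the good pair, while all other pairs may produce garbage. I would rely on the final clique verification plus uniqueness of the $k$-clique to make the bad pairs harmless. I would also check carefully that the good pair's scores depend only on edges inside $V_1$ and the seed masks, so that the holdout edges between $V_1$ and $B$ remain independent. This independence is what lets the conditional analysis of Lemma~\ref{lem:batched-cleanup} go through for a deterministic $(B,Q)$.
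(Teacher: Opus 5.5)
Your proposal follows the paper's proof essentially step for step. The holdout wrapper with balanced-hash rooted scores gives $t+\frac32$ and $\omega+\frac12$, and hash-family detection without a holdout gives the $n^{\omega+o(1)}$ detection claim. Batching all pairs $(B,Q)$ as rows of $(n^{\kappa_s+1/2}\times n)\cdot(n\times n)$ products, followed by $k$-clique verification and uniqueness, gives $\omega(1,1,\kappa_s+\frac12)$, and running the same computation serially gives $\frac52+\kappa_s$.

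The one step whose justification is wrong as written is universality of the seed list for $s\ge2$. The fact that $|C\cap V_1|=k-o(k)\ge s$ is not what the Tur\'an family needs. If you instantiate Lemma~\ref{lem:explicit-universal-seed-family} on $V_1$ with clique-size parameter $k$, it uses $\lfloor (k-1)/(s-1)\rfloor$ blocks and only guarantees a seed inside sets of size at least $k-s+1$. For the good $B$, however, $|C\cap V_1|\le k-h<k-s+1$. A worst-case $C$ may therefore put at most $s-1$ vertices of $C\cap V_1$ in each block, and then the lemma gives no listed seed inside $C\cap V_1$.

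The fix is to build the family on $V_1$ for a smaller size that the algorithm knows. For example, take $k'=k-2h$, which is valid since $|C\cap B|\le 2h$, or take $k'=\lceil k/2\rceil$. A family universal for $k'$-sets hits every larger set, and its size $O_s(n^s/k'^{s-1})\le n^{(s+1)/2+o(1)}$ is unchanged. The paper does exactly this by using $\lfloor k/(4s)\rfloor$ blocks.

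Separately, Lemma~\ref{lem:batched-cleanup} is stated for a single partition with $|V_2|=\lfloor n^{3/5}\rfloor$ and uses $A_{V_1,V_2}$. With a different $(B,Q)$ in each row, you should run the cleanup products against the full adjacency matrix, restricting coordinates row by row. You should then re-check the Hoeffding and union-bound step using only $|C\cap B|\ge h\gg\log n$, $|V_2|\le n$, and conditional independence of the edges between $V_1$ and $B$. This is what the paper does.
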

\noindent
The proof is deferred to Appendix~\ref{app:proof-universal-deterministic-upper-bounds}.

\section*{Acknowledgments}
\addcontentsline{toc}{section}{Acknowledgments}

Thanks to Alex Wein and Ilias Zadik for helpful discussions with the first author that eventually led to this work. All mathematical proofs in this paper are due to the authors. ChatGPT 5.6 was used for language polishing and final proofreading.

\addcontentsline{toc}{section}{References}
\bibliographystyle{alpha}
\bibliography{references}

@article{alon1998finding,
  title={Finding a large hidden clique in a random graph},
  author={Alon, Noga and Krivelevich, Michael and Sudakov, Benny},
  journal={Random Structures \& Algorithms},
  volume={13},
  number={3-4},
  pages={457--466},
  year={1998},
  publisher={Wiley Online Library}
}

@article{kuvcera1995expected,
  title={Expected complexity of graph partitioning problems},
  author={Ku{\v{c}}era, Lud{\v{e}}k},
  journal={Discrete Applied Mathematics},
  volume={57},
  number={2-3},
  pages={193--212},
  year={1995},
  publisher={Elsevier}
}

@inproceedings{feige2010finding,
  author    = {Uriel Feige and Dorit Ron},
  title     = {Finding hidden cliques in linear time},
  booktitle = {Proceedings of the 21st International Meeting on Probabilistic, Combinatorial, and Asymptotic Methods in the Analysis of Algorithms (AofA'10)},
  series    = {Discrete Mathematics \& Theoretical Computer Science Proceedings},
  volume    = {AM},
  pages     = {189--204},
  year      = {2010},
  doi       = {10.46298/dmtcs.2802}
}

@article{dekel2014finding,
  title={Finding hidden cliques in linear time with high probability},
  author={Dekel, Yael and Gurel-Gurevich, Ori and Peres, Yuval},
  journal={Combinatorics, Probability and Computing},
  volume={23},
  number={1},
  pages={29--49},
  year={2014},
  publisher={Cambridge University Press}
}

@article{feige2000finding,
  title={Finding and certifying a large hidden clique in a semirandom graph},
  author={Feige, Uriel and Krauthgamer, Robert},
  journal={Random Structures \& Algorithms},
  volume={16},
  number={2},
  pages={195--208},
  year={2000},
  publisher={Wiley Online Library}
}

@article{deshpande2015finding,
  title={Finding hidden cliques of size N/e in nearly linear time},
  author={Deshpande, Yash and Montanari, Andrea},
  journal={Foundations of Computational Mathematics},
  volume={15},
  number={4},
  pages={1069--1128},
  year={2015},
  publisher={Springer}
}

@article{ames2011nuclear,
  title={Nuclear norm minimization for the planted clique and biclique problems},
  author={Ames, Brendan PW and Vavasis, Stephen A},
  journal={Mathematical programming},
  volume={129},
  number={1},
  pages={69--89},
  year={2011},
  publisher={Springer}
}

@article{ames2015guaranteed,
  title={Guaranteed recovery of planted cliques and dense subgraphs by convex relaxation},
  author={Ames, Brendan PW},
  journal={Journal of Optimization Theory and Applications},
  volume={167},
  number={2},
  pages={653--675},
  year={2015},
  publisher={Springer}
}

@article{li2025smooth,
  title={A Smooth Computational Transition in Tensor PCA},
  author={Li, Zhangsong},
  journal={arXiv preprint arXiv:2509.09904},
  year={2025}
}

@article{alon1995color,
  title={Color-coding},
  author={Alon, Noga and Yuster, Raphael and Zwick, Uri},
  journal={Journal of the ACM (JACM)},
  volume={42},
  number={4},
  pages={844--856},
  year={1995},
  publisher={ACM New York, NY, USA}
}

@article{bringmann2021current,
  title={Current algorithms for detecting subgraphs of bounded treewidth are probably optimal},
  author={Bringmann, Karl and Slusallek, Jasper},
  journal={arXiv preprint arXiv:2105.05062},
  year={2021}
}

@article{alman2024asymmetry,
  title={More asymmetry yields faster matrix multiplication},
  author={Alman, Josh and Duan, Ran and Vassilevska Williams, Virginia and Xu, Yinzhan and Xu, Zixuan and Zhou, Renfei},
  journal={arXiv preprint arXiv:2404.16349},
  year={2024}
}

@inproceedings{vassilevska2024new,
  title={New Bounds for Matrix Multiplication: from Alpha to Omega},
  author={Vassilevska Williams, Virginia and Xu, Yinzhan and Xu, Zixuan and Zhou, Renfei},
  booktitle={Proceedings of the 2024 Annual ACM-SIAM Symposium on Discrete Algorithms},
  pages={3792--3835},
  year={2024},
  organization={SIAM}
}

@article{fomin2012faster,
  title={Faster algorithms for finding and counting subgraphs},
  author={Fomin, Fedor V and Lokshtanov, Daniel and Raman, Venkatesh and Saurabh, Saket and Rao, BV Raghavendra},
  journal={Journal of Computer and System Sciences},
  volume={78},
  number={3},
  pages={698--706},
  year={2012},
  publisher={Elsevier}
}

@article{alon2010balanced,
  title={Balanced families of perfect hash functions and their applications},
  author={Alon, Noga and Gutner, Shai},
  journal={ACM Transactions on Algorithms (TALG)},
  volume={6},
  number={3},
  pages={1--12},
  year={2010},
  publisher={ACM New York, NY, USA}
}

@article{baste2017number,
  title={On the number of labeled graphs of bounded treewidth},
  author={Baste, Julien and Noy, Marc and Sau, Ignasi},
  journal={European Journal of Combinatorics},
  volume={71},
  pages={12--21},
  year={2018},
  publisher={Elsevier}
}

@article{norine2006proper,
  title={Proper minor-closed families are small},
  author={Norine, Serguei and Seymour, Paul and Thomas, Robin and Wollan, Paul},
  journal={Journal of Combinatorial Theory, Series B},
  volume={96},
  number={5},
  pages={754--757},
  year={2006},
  publisher={Elsevier}
}

@inproceedings{sohn2025sharp,
  title={Sharp phase transitions in estimation with low-degree polynomials},
  author={Sohn, Youngtak and Wein, Alexander S},
  booktitle={Proceedings of the 57th Annual ACM Symposium on Theory of Computing},
  pages={891--902},
  year={2025}
}

@article{lugosi2017lectures,
  title={Lectures on combinatorial statistics},
  author={Lugosi, G{\'a}bor},
  journal={47th Probability Summer School, Saint-Flour},
  pages={1--91},
  year={2017}
}

@article{chvatal1979tail,
  title={The tail of the hypergeometric distribution},
  author={Chv{\'a}tal, Vasek},
  journal={Discrete Mathematics},
  volume={25},
  number={3},
  pages={285--287},
  year={1979},
  publisher={Elsevier}
}

@phdthesis{Shamos-1978-ComputationalGeometry,
  title={Computational geometry},
  author={Shamos, Michael Ian},
  year={1978},
  school={Yale University}
}

@book{FK-2016-RandomGraphs,
  title={Introduction to random graphs},
  author={Frieze, Alan and Karo{\'n}ski, Micha{\l}},
  year={2016},
  publisher={Cambridge University Press}
}

@article{mao2024testing,
  title={Testing network correlation efficiently via counting trees},
  author={Mao, Cheng and Wu, Yihong and Xu, Jiaming and Yu, Sophie H},
  journal={The Annals of Statistics},
  volume={52},
  number={6},
  pages={2483--2505},
  year={2024},
  publisher={Institute of Mathematical Statistics}
}

@inproceedings{mao2023random,
  title={Random graph matching at Otter’s threshold via counting chandeliers},
  author={Mao, Cheng and Wu, Yihong and Xu, Jiaming and Yu, Sophie H},
  booktitle={Proceedings of the 55th Annual ACM symposium on theory of computing},
  pages={1345--1356},
  year={2023}
}

@article{DHS-2020-SpikedMatrixHeavyTailed,
  title={Estimating Rank-One Spikes from Heavy-Tailed Noise via Self-Avoiding Walks},
  author={Ding, Jingqiu and Hopkins, Samuel B and Steurer, David},
  journal={arXiv preprint arXiv:2008.13735},
  year={2020}
}

@article{feldman2017statistical,
  title={Statistical algorithms and a lower bound for detecting planted cliques},
  author={Feldman, Vitaly and Grigorescu, Elena and Reyzin, Lev and Vempala, Santosh S and Xiao, Ying},
  journal={Journal of the ACM (JACM)},
  volume={64},
  number={2},
  pages={1--37},
  year={2017},
  publisher={ACM New York, NY, USA}
}

@article{barak2019nearly,
  title={A nearly tight sum-of-squares lower bound for the planted clique problem},
  author={Barak, Boaz and Hopkins, Samuel and Kelner, Jonathan and Kothari, Pravesh K and Moitra, Ankur and Potechin, Aaron},
  journal={SIAM Journal on Computing},
  volume={48},
  number={2},
  pages={687--735},
  year={2019},
  publisher={SIAM}
}

@article{berthet2013computational,
  title={Computational lower bounds for sparse PCA},
  author={Berthet, Quentin and Rigollet, Philippe},
  journal={arXiv preprint arXiv:1304.0828},
  year={2013}
}

@article{ma2015computational,
  title={Computational barriers in minimax submatrix detection},
  author={Ma, Zongming and Wu, Yihong},
  journal={The Annals of Statistics},
  pages={1089--1116},
  year={2015},
  publisher={JSTOR}
}

@inproceedings{brennan2020reducibility,
  title={Reducibility and statistical-computational gaps from secret leakage},
  author={Brennan, Matthew and Bresler, Guy},
  booktitle={Conference on Learning Theory},
  pages={648--847},
  year={2020},
  organization={PMLR}
}

@inproceedings{bresler2023detection,
  title={Detection-recovery and detection-refutation gaps via reductions from planted clique},
  author={Bresler, Guy and Jiang, Tianze},
  booktitle={The Thirty Sixth Annual Conference on Learning Theory},
  pages={5850--5889},
  year={2023},
  organization={PMLR}
}

@inproceedings{brennan2019universality,
  title={Universality of computational lower bounds for submatrix detection},
  author={Brennan, Matthew and Bresler, Guy and Huleihel, Wasim},
  booktitle={Conference on Learning Theory},
  pages={417--468},
  year={2019},
  organization={PMLR}
}

@article{chen2016statistical,
  title={Statistical-computational tradeoffs in planted problems and submatrix localization with a growing number of clusters and submatrices},
  author={Chen, Yudong and Xu, Jiaming},
  journal={Journal of Machine Learning Research},
  volume={17},
  number={27},
  pages={1--57},
  year={2016}
}

@article{abbe2018community,
  title={Community detection and stochastic block models: recent developments},
  author={Abbe, Emmanuel},
  journal={Journal of Machine Learning Research},
  volume={18},
  number={177},
  pages={1--86},
  year={2018}
}

@article{hopkins2017bayesian,
  title={Bayesian estimation from few samples: community detection and related problems},
  author={Hopkins, Samuel B and Steurer, David},
  journal={arXiv preprint arXiv:1710.00264},
  year={2017}
}

@book{hopkins2018statistical,
  title={Statistical inference and the sum of squares method},
  author={Hopkins, Samuel},
  year={2018},
  publisher={Cornell University}
}

@inproceedings{kunisky2019notes,
  title={Notes on computational hardness of hypothesis testing: Predictions using the low-degree likelihood ratio},
  author={Kunisky, Dmitriy and Wein, Alexander S and Bandeira, Afonso S},
  booktitle={ISAAC Congress (International Society for Analysis, its Applications and Computation)},
  pages={1--50},
  year={2019},
  organization={Springer}
}

@article{wein2025computational,
  title={Computational Complexity of Statistics: New Insights from Low-Degree Polynomials},
  author={Wein, Alexander S},
  journal={arXiv preprint arXiv:2506.10748},
  year={2025}
}

@article{FVRS-2021-TutorialAMP,
  title={A unifying tutorial on Approximate Message Passing},
  author={Feng, Oliver Y and Venkataramanan, Ramji and Rush, Cynthia and Samworth, Richard J},
  journal={arXiv preprint arXiv:2105.02180},
  year={2021}
}

@article{montanari2022equivalence,
  title={Equivalence of approximate message passing and low-degree polynomials in rank-one matrix estimation},
  author={Montanari, Andrea and Wein, Alexander S},
  journal={arXiv preprint arXiv:2212.06996},
  year={2022}
}

@article{Borchardt-1860-CountingTrees,
  author   = {Borchardt, C. W.},
  title    = {Ueber eine der Interpolation entsprechende Darstellung der Eliminations-Resultante},
  journal  = {Journal f{\"u}r die reine und angewandte Mathematik},
  volume   = {57},
  year     = {1860},
  pages    = {111--121},
  doi      = {10.1515/crll.1860.57.111}
}

@article{Cayley-1889-CountingTrees,
  author  = {Cayley, Arthur},
  title   = {A Theorem on Trees},
  journal = {Quarterly Journal of Pure and Applied Mathematics},
  volume  = {23},
  year    = {1889},
  pages   = {376--378}
}

@article{bodirsky2007enumeration,
  title={Enumeration and limit laws for series--parallel graphs},
  author={Bodirsky, Manuel and Gim{\'e}nez, Omer and Kang, Mihyun and Noy, Marc},
  journal={European Journal of Combinatorics},
  volume={28},
  number={8},
  pages={2091--2105},
  year={2007},
  publisher={Elsevier}
}

@book{cygan2015parameterized,
  title={Parameterized algorithms},
  author={Cygan, Marek and Fomin, Fedor V and Kowalik, {\L}ukasz and Lokshtanov, Daniel and Marx, D{\'a}niel and Pilipczuk, Marcin and Pilipczuk, Micha{\l} and Saurabh, Saket},
  volume={5},
  year={2015},
  publisher={Springer}
}

@article{yu2025counting,
  title={Counting stars is constant-degree optimal for detecting any planted subgraph},
  author={Yu, Xifan and Zadik, Ilias and Zhang, Peiyuan},
  journal={Mathematical Statistics and Learning},
  volume={8},
  number={1},
  pages={105--164},
  year={2025}
}

@article{ma2025nonlinear,
  title={Nonlinear Laplacians: Tunable principal component analysis under directional prior information},
  author={Ma, Yuxin and Kunisky, Dmitriy},
  journal={arXiv preprint arXiv:2505.12528},
  year={2025}
}

@inproceedings{bodlaender1997treewidth,
  title={Treewidth: Algorithmic techniques and results},
  author={Bodlaender, Hans L},
  booktitle={International symposium on mathematical foundations of computer science},
  pages={19--36},
  year={1997},
  organization={Springer}
}

@book{kloks1994treewidth,
  title={Treewidth: computations and approximations},
  author={Kloks, Ton},
  year={1994},
  publisher={Springer}
}

@article{schramm2022computational,
  title={Computational barriers to estimation from low-degree polynomials},
  author={Schramm, Tselil and Wein, Alexander S},
  journal={The Annals of Statistics},
  volume={50},
  number={3},
  pages={1833--1858},
  year={2022},
  publisher={Institute of Mathematical Statistics}
}

@article{chen2026computational,
  title={A computational transition for detecting correlated stochastic block models by low-degree polynomials},
  author={Chen, Guanyi and Ding, Jian and Gong, Shuyang and Li, Zhangsong},
  journal={The Annals of Statistics},
  volume={54},
  number={1},
  pages={226--251},
  year={2026},
  publisher={Institute of Mathematical Statistics}
}

@article{li2025algorithmic,
  title={The Algorithmic Phase Transition in Correlated Spiked Models},
  author={Li, Zhangsong},
  journal={arXiv preprint arXiv:2511.06040},
  year={2025}
}

@article{robertson1986graph,
  title={Graph minors. II. Algorithmic aspects of tree-width},
  author={Robertson, Neil and Seymour, Paul D.},
  journal={Journal of algorithms},
  volume={7},
  number={3},
  pages={309--322},
  year={1986},
  publisher={Elsevier}
}

@article{yedidia2001understanding,
  title={Understanding belief propagation and its generalizations},
  author={Yedidia, Jonathan S and Freeman, William T and Weiss, Yair and others},
  journal={Exploring artificial intelligence in the new millennium},
  volume={8},
  number={236-239},
  pages={0018--9448},
  year={2003}
}

@inproceedings{jones2024fourier,
  title={Fourier Analysis of Iterative Algorithms},
  author={Jones, Chris and Pesenti, Lucas},
  booktitle={52nd International Colloquium on Automata, Languages, and Programming (ICALP 2025)},
  pages={102--1},
  year={2025},
  organization={Schloss Dagstuhl--Leibniz-Zentrum f{\"u}r Informatik}
}

@article{zdeborova2015statphys,
  title={Statistical physics of inference: Thresholds and algorithms},
  author={Zdeborov{\'a}, Lenka and Krzakala, Florent},
  journal={Advances in Physics},
  volume={65},
  number={5},
  pages={453--552},
  year={2016},
  publisher={Taylor \& Francis}
}

@article{castellvi2024chordal,
  title={Chordal graphs with bounded tree-width},
  author={Castellv{\'\i}, Jordi and Drmota, Michael and Noy, Marc and Requil{\'e}, Cl{\'e}ment},
  journal={Advances in Applied Mathematics},
  volume={157},
  pages={102700},
  year={2024},
  publisher={Elsevier}
}

@article{otter1948number,
  title={The number of trees},
  author={Otter, Richard},
  journal={Annals of Mathematics},
  volume={49},
  number={3},
  pages={583--599},
  year={1948},
  publisher={JSTOR}
}

@article{mao2023exact,
  title={Exact matching of random graphs with constant correlation},
  author={Mao, Cheng and Rudelson, Mark and Tikhomirov, Konstantin},
  journal={Probability Theory and Related Fields},
  volume={186},
  number={1},
  pages={327--389},
  year={2023},
  publisher={Springer}
}

@article{chen2025detecting,
  title={Detecting correlation efficiently in stochastic block models: breaking Otter's threshold by counting decorated trees},
  author={Chen, Guanyi and Ding, Jian and Gong, Shuyang and Li, Zhangsong},
  journal={arXiv e-prints},
  pages={arXiv--2503},
  year={2025}
}

@article{Jerrum-1992-LargeCliques,
  title={Large cliques elude the {Metropolis} process},
  author={Jerrum, Mark},
  journal={Random Structures \& Algorithms},
  volume={3},
  number={4},
  pages={347--359},
  year={1992},
  publisher={Wiley Online Library}
}

@inproceedings{CMZ-2023-LinearPlantedCliquesMetropolis,
  title={Almost-linear planted cliques elude the {Metropolis} process},
  author={Chen, Zongchen and Mossel, Elchanan and Zadik, Ilias},
  booktitle={Proceedings of the 2023 Annual ACM-SIAM Symposium on Discrete Algorithms (SODA)},
  pages={4504--4539},
  year={2023},
  organization={SIAM}
}

@article{GJX-2023-PlantedCliqueMCMC,
  title={Finding planted cliques using {Markov} chain {Monte} {Carlo}},
  author={Gheissari, Reza and Jagannath, Aukosh and Xu, Yiming},
  journal={arXiv preprint arXiv:2311.07540},
  year={2023}
}

@article{MPW-2015-PlantedClique,
  title={Sum-of-squares lower bounds for planted clique},
  author={Meka, Raghu and Potechin, Aaron and Wigderson, Avi},
  journal={arXiv preprint arXiv:1503.06447},
  year={2015}
}

@article{DM-2015-SOSPlantedClique,
  title={Improved sum-of-squares lower bounds for hidden clique and hidden submatrix problems},
  author={Deshpande, Yash and Montanari, Andrea},
  journal={arXiv preprint arXiv:1502.06590},
  year={2015}
}

@inproceedings{HKPRS-2018-PlantedCliqueSOS4,
  title={The power of sum-of-squares for detecting hidden structures},
  author={Hopkins, Samuel B. and Kothari, Pravesh K. and Potechin, Aaron and Raghavendra, Prasad and Schramm, Tselil and Steurer, David},
  booktitle={2017 IEEE 58th Annual Symposium on Foundations of Computer Science (FOCS)},
  pages={720--731},
  year={2017},
  organization={IEEE}
}

@article{dupont2026improving,
  title={Improving the matrix multiplication exponent with modern optimization and {AlphaEvolve}},
  author={Dupont, Emilien and Eisenberger, Marvin and Kozlovskii, Borislav and Mehrabian, Abbas and Ruiz, Francisco J. R. and See, Abigail and Zhou, Renfei and Alman, Josh and Vassilevska Williams, Virginia and Balog, Matej},
  journal={arXiv preprint arXiv:2608.16884},
  year={2026},
  doi={10.48550/arXiv.2608.16884}
}

\clearpage

\appendix

\section{Combinatorial estimates}

\begin{lemma}[Lemma 24.1(e) of \cite{FK-2016-RandomGraphs}]\label{lem:fall}
    Let $1 \leq k \leq n$.
    Then
    \[
    \left(1 - \frac{k(k - 1)}{2n}\right)n^k \leq (n)_k \leq n^k.
    \]
\end{lemma}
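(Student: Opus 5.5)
The statement is the elementary falling-factorial estimate $\bigl(1-\frac{k(k-1)}{2n}\bigr)n^k\le (n)_k\le n^k$. I will prove both sides by writing $(n)_k = n^k\prod_{j=0}^{k-1}\bigl(1-\frac{j}{n}\bigr)$, which follows directly from the definition $(n)_k=n(n-1)\cdots(n-k+1)$ by factoring $n$ out of each of the $k$ terms.

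The upper bound is immediate from this product form. Since $1\le k\le n$, every index satisfies $0\le j\le k-1\le n-1$, so each factor $1-j/n$ lies in $(0,1]$. Their product is therefore at most $1$, which gives $(n)_k\le n^k$.

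For the lower bound, the step I would use is the Weierstrass product inequality. For reals $x_0,\dots,x_{k-1}\in[0,1]$,
\[
\prod_{j}(1-x_j)\ge 1-\sum_j x_j .
\]
This is proved by induction on the number of factors. If $\prod_{j<m}(1-x_j)\ge 1-S_m$ with $S_m=\sum_{j<m}x_j$, then multiplying by $1-x_m\ge0$ gives
\[
\prod_{j\le m}(1-x_j)\ge (1-S_m)(1-x_m)=1-S_m-x_m+S_mx_m\ge 1-S_{m+1},
\]
using $S_m x_m\ge 0$. One point needs care: multiplying an inequality by $1-x_m$ is only valid because $1-x_m\ge0$. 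The left side is always positive, so when $1-S_m$ is negative the bound holds trivially and the product step is still fine.

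Applying this with $x_j=j/n\in[0,1)$ gives
\[
\sum_{j=0}^{k-1}\frac{j}{n}=\frac{k(k-1)}{2n},
\]
which is exactly the claimed lower bound. There is no real obstacle here. The only things to check are that every factor is nonnegative, which uses $k\le n$, and the sign handling in the induction step described above.
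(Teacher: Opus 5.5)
Your proof is correct and complete. The product form $(n)_k=n^k\prod_{j<k}(1-j/n)$ with every factor in $(0,1]$ gives the upper bound, and the Weierstrass inequality $\prod_j(1-x_j)\ge 1-\sum_j x_j$ with $x_j=j/n$ gives the lower bound. The paper supplies no proof of its own; it simply cites Frieze--Karo\'nski, and your argument is the standard one.

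One small correction: the sign caveat in your induction step is unnecessary. Multiplying $a\ge b$ by $c=1-x_m\ge 0$ preserves the inequality whatever the signs of $a$ and $b$.
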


\begin{lemma}\label{lem:alpha-diff}
Let $k = k(n)$ and $v = v(n)$ be positive integers such that $2v \leq k < n$ for all sufficiently large $n$, $k = o(n)$, and $v = o(\sqrt{k})$.
For $0\le t\le2v$, write $\alpha_t \colonequals (k)_t / (n)_t$.
Then, as $n \to \infty$,
\begin{align*}
    \frac{\alpha_{2v} - \alpha_v^2}{\alpha_v^2}
    &=
    -\frac{v^2}{k}\,(1+o(1)), \\
    \frac{\alpha_t}{\alpha_v^2}
    &=
    \left(\frac{k}{n}\right)^{t-2v}(1+o(1)) \text{ uniformly over } 0 \leq t \leq 2v - 1,
\end{align*}
where the second $o(1)$ term is bounded in absolute value by a single $\varepsilon(n)=o(1)$ for all such $t$.
\end{lemma}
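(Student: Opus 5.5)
We expand both quantities as products of ratios and use Lemma~\ref{lem:fall}-type logarithmic expansions. Write
\[
\alpha_t = \prod_{j=0}^{t-1}\frac{k-j}{n-j} = \left(\frac{k}{n}\right)^t \prod_{j=0}^{t-1}\frac{1-j/k}{1-j/n}.
\]
Since $t\le 2v = o(\sqrt k)$ and $k=o(n)$, every factor has $j/k\le 2v/k=o(1)$. We may therefore take logarithms and use $\log(1-x) = -x + O(x^2)$ uniformly. This gives
\[
\log\frac{\alpha_t}{(k/n)^t} = -\sum_{j<t}\frac{j}{k} + \sum_{j<t}\frac{j}{n} + O\!\left(\sum_{j<t}\frac{j^2}{k^2}\right) = -\frac{t(t-1)}{2k}\bigl(1+O(k/n)\bigr) + O\!\left(\frac{t^3}{k^2}\right).
\]

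For the second claim, we compare $\alpha_t$ with $\alpha_v^2$. Their ratio is $(k/n)^{t-2v}$ times $\exp(E_t - 2E_v)$, where $E_s$ denotes the error exponent above. Each $|E_s|\le O(v^2/k)+O(v^3/k^2) = o(1)$ uniformly over $s\le 2v$, because $v=o(\sqrt k)$. Setting $\varepsilon(n) = \exp(C v^2/k)-1$ for a suitable constant $C$ then gives a single uniform $o(1)$ that works for all $t\le 2v-1$, and indeed for $t=2v$ as well.

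For the first claim, the crude bound is not enough. The difference $\alpha_{2v}-\alpha_v^2$ cancels the leading term, so we need the precise second-order term. We compute $E_{2v}-2E_v$ exactly at first order:
\[
-\frac{2v(2v-1)}{2k} + 2\cdot\frac{v(v-1)}{2k} = -\frac{v^2}{k}.
\]
The same combination with $k$ replaced by $n$ contributes $+v^2/n = o(v^2/k)$. The quadratic remainders contribute $O(v^3/k^2)$, which is $o(v^2/k)$ because $v=o(k)$. Hence
\[
\frac{\alpha_{2v}}{\alpha_v^2} = \exp\!\left(-\frac{v^2}{k}(1+o(1))\right),
\]
and since $v^2/k=o(1)$ we get $\exp(-x)-1 = -x(1+o(1))$, which is the claim.

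The one point needing care is keeping track of the remainder terms relative to $v^2/k$ rather than relative to $1$. In particular, the $O(j^2/k^2)$ terms must be summed to $O(v^3/k^2)$ and compared with $v^2/k$. This also covers the degenerate case $v$ bounded, where $v^2/k\to 0$ but the relative error is still controlled by $v/k$ and $k/n$.
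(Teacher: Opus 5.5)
Your proof is correct, and it is organized somewhat differently from the paper's.

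For the second claim, you and the paper do essentially the same thing. The paper invokes Lemma~\ref{lem:fall} to get $\alpha_t=(k/n)^t(1+O(v^2/k))$ uniformly in $t\le 2v$, which is exactly your first-order logarithmic expansion.

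For the first claim, the paper avoids any cancellation by pairing factors. It writes $\alpha_{2v}/\alpha_v^2=\prod_{i=0}^{v-1}(1-x_i)$ with $x_i=v(n-k)/((k-i)(n-v-i))=(v/k)(1+o(1))$. The leading term $-\sum_i x_i\approx -v^2/k$ then appears directly, and the higher-order terms of the product are bounded by $(1+M)^v-1-vM=O(v^4/k^2)$.

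You instead expand each $\log\alpha_t$ separately to order $t^2/k$ and extract $-v^2/k$ from the identity $\binom{2v}{2}-2\binom{v}{2}=v^2$. This forces you to check that the $O(v^3/k^2)$ remainders and the $v^2/n$ correction are small relative to $v^2/k$ rather than relative to $1$. You do this correctly, using $v/k\to0$ and $k/n\to0$.

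Your route handles both claims with one uniform expansion. It also makes the combinatorial origin of $v^2$ transparent: it counts the cross pairs between two $v$-sets. The paper's pairing exhibits each factor as $1-x_i$ with $0\le x_i\le 2v/k$. This immediately gives the one-sided facts $\alpha_{2v}\le\alpha_v^2$ and $1-\alpha_{2v}/\alpha_v^2\le 2v^2/k$, which the paper reuses in the proofs of Lemma~\ref{lem:sufficient-conditions-overlap} and Lemma~\ref{lem:balanced-hash-detection}.
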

\begin{proof}
    Lemma~\ref{lem:fall} gives
    \[
    \alpha_t = \left(\frac{k}{n}\right)^t\left(1 + O\left(\frac{v^2}{k}\right)\right)
    \]
    uniformly over $0 \leq t \leq 2v$.
    Since $v = o(\sqrt{k})$, this proves the second statement.
    For the first,
    \[
    \frac{\alpha_{2v}}{\alpha_v^2} = \frac{(k)_{2v}}{(k)_v^2}\frac{(n)_v^2}{(n)_{2v}} = \prod_{i=0}^{v-1}\frac{k-v-i}{k-i}\frac{n-i}{n-v-i} = \prod_{i=0}^{v-1}\left(1 - x_i\right),
    \]
    where
\[
x_i \colonequals \frac{v(n-k)}{(k-i)(n-v-i)} = \frac{v}{k}(1 + o(1))
\]
uniformly over $0 \leq i \leq v - 1$.
With $M \colonequals \max_{i = 0}^{v - 1} x_i = \frac{v}{k}(1 + o(1))$, we obtain
\begin{align*}
    \frac{\alpha_{2v}}{\alpha_v^2} - 1
    &= -\sum_{i = 0}^{v - 1} x_i + O\left((1 + M)^v - 1 - vM\right) \\
    &= -\frac{v^2}{k}(1 + o(1)) + O\left(\exp(vM) - 1 - vM\right) \\
    &= -\frac{v^2}{k}(1 + o(1)) + O((vM)^2) \\
    &= -\frac{v^2}{k}(1 + o(1)) + O\left(\frac{v^4}{k^2}\right) \\
    &= -\frac{v^2}{k}(1 + o(1)).
\end{align*}
Here Taylor's theorem applies because $vM=o(1)$, and the last step uses $v=o(\sqrt{k})$.
\end{proof}
\section{Omitted proofs from Section \ref{sec:stats}}
\subsection{Proof of Lemma \ref{lem:sufficient-conditions-overlap}}\label{apx:sufficient-condition}
\begin{proof}
Since $\E_{\QQ}F_{\sH}=0$ by Lemma~\ref{lem:detection-null}, we must bound
\[
\frac{\Var_{\PP}[F_{\sH}(X)] + \Var_{\QQ}[F_{\sH}(X)]}{ (\E_{\PP} F_{\sH}(X) - \E_{\QQ} F_{\sH}(X))^2} = \frac{\Var_{\PP}[F_{\sH}(X)]}{(\E_{\PP} F_{\sH}(X))^2} + \frac{\Var_{\QQ}[F_{\sH}(X)]}{(\E_{\PP} F_{\sH}(X))^2}.
\]
For the null variance, Lemmas~\ref{lem:detection-null}, \ref{lem:detection-planted}, and~\ref{lem:fall} give
\[
\frac{\Var_{\QQ}[F_{\sH}(X)]}{(\E_{\PP} F_{\sH}(X))^2} = \frac{|\sH| \cdot (n)_v \cdot v!}{|\sH|^2 \cdot ((k)_v)^2} = (1 + o(1))\frac{(n / k^2)^v \cdot v!}{|\sH|},
\]
which is absorbed by the first term of the stated bound.

We bound the planted variance using $p_{u,\ell}(\sH)$.
For each integer $t$, define
\[
N_t
\colonequals
\Bigl|
\Bigl\{
(H,H',\phi,\psi):
H,H'\in\mathcal H,\
\phi,\psi\in\Inj([v],[n]),\
\sigma(H,H';\phi,\psi)=t
\Bigr\}
\Bigr|,
\]
and define
\begin{align*}
M_{u,\ell}
&\colonequals \Bigl|\Bigl\{(H,H',\phi,\psi):\,H,H'\in\mathcal H,\ \phi,\psi\in\Inj([v],[n]),\ |\phi([v])\cup\psi([v])|=2v-u, \\
&\hspace{3.5cm} \sigma(H,H';\phi,\psi)=2v-u-\ell\Bigr\}\Bigr|.
\end{align*}
Then
\[
\sum_{t=0}^{2v-1}N_t\alpha_t
=
\sum_{\substack{0\le u\le v\\0\le \ell\le u\\(u,\ell)\ne(0,0)}}
M_{u,\ell}\alpha_{2v-u-\ell}.
\]

For all sufficiently large $n$, $2v\le k$.
The contribution from disjoint embeddings is nonpositive, since
\[
\frac{\alpha_{2v}}{\alpha_v^2}
=
\prod_{j=0}^{v-1}
\frac{k-v-j}{k-j}\frac{n-j}{n-v-j}
\le1.
\]
Grouping the terms in Lemma~\ref{lem:detection-planted} by the support size $t$ gives
\begin{align*}
\frac{\Var_{\PP}[F_{\sH}(X)]}{(\E_{\PP} F_{\sH}(X))^2}
&= \frac{1}{|\sH|^2 ((k)_v)^2} \sum_{H,H'\in\mathcal H}\ \sum_{\phi, \psi \in \Inj([v], [n])}
\Bigl(\alpha_{\sigma(H,H';\phi,\psi)}-\alpha_v^2\Bigr)
\\
&\leq \frac{1}{|\sH|^2 ((k)_v)^2}
\sum_{t=0}^{2v-1} N_t\bigl(\alpha_t-\alpha_v^2\bigr)
\\
&\leq
\frac{1}{|\sH|^2 ((k)_v)^2} \sum_{t=0}^{2v-1} N_t\,\alpha_t \\
&= \frac{1}{|\sH|^2 ((k)_v)^2}
\sum_{\substack{0\le u\le v\\0\le \ell\le u\\(u,\ell)\ne(0,0)}}
M_{u,\ell}\alpha_{2v-u-\ell}.
\end{align*}
The cases $u<v$ and $u=v$ give the other two terms of the bound.

\paragraph{Case 1: $u<v$.}
For $0\le\ell\le u\le v-1$,
\[
M_{u,\ell}=\binom{n}{v}\binom{v}{u}\binom{n-v}{v-u}(v!)^2|\mathcal H|^2\,p_{u,\ell}(\mathcal H).
\]
Since $v^2=o(k)$,
\[
\begin{aligned}
\frac{M_{u,\ell}\,\alpha_{2v-u-\ell}}{|\mathcal H|^2\,(k)_v^2}
&=\binom{n}{v}\binom{v}{u}\binom{n-v}{v-u}(v!)^2\cdot\frac{(k)_{2v-u-\ell}}{(n)_{2v-u-\ell}(k)_v^2}\cdot p_{u,\ell}(\mathcal H)\\
&=\frac{n!(v!)^2}{u!(v-u)!^2(n-2v+u)!}\cdot\frac{(k)_{2v-u-\ell}}{(n)_{2v-u-\ell}(k)_v^2}\cdot p_{u,\ell}(\mathcal H)\\
&=\frac{(v)_u^2}{u!}\cdot\frac{(n)_{2v-u}}{(n)_{2v-u-\ell}}\cdot\frac{(k)_{2v-u-\ell}}{(k)_v^2}\cdot p_{u,\ell}(\mathcal H)\\
&=\frac{(v)_u^2}{u!}\cdot(n-2v+u+\ell)_{\ell}\cdot\frac{(k)_{2v-u-\ell}}{(k)_v^2}\cdot p_{u,\ell}(\mathcal H)\\
&\le (1+o(1))\cdot \frac{1}{u!}\,v^{2u}n^{\ell}k^{-(u+\ell)}\,p_{u,\ell}(\mathcal H),
\end{aligned}
\]
where the last step uses Lemma~\ref{lem:fall}.
Consequently,
\[
\sum_{\substack{0\le u\le v-1\\0\le \ell\le u\\(u,\ell)\ne(0,0)}}
\frac{M_{u,\ell}\,\alpha_{2v-u-\ell}}{|\mathcal H|^2\,(k)_v^2}
\le
(1+o(1))
\sum_{\substack{0\le u\le v-1\\0\le \ell\le u\\(u,\ell)\ne(0,0)}}
\frac{1}{u!}\,v^{2u}n^{\ell}k^{-(u+\ell)}\,p_{u,\ell}(\mathcal H).
\]
Since $\sum_{u=0}^{\infty}(u+1)/u!<\infty$, we obtain
\[
\sum_{\substack{0\le u\le v-1\\0\le \ell\le u\\(u,\ell)\ne(0,0)}}
\frac{M_{u,\ell}\,\alpha_{2v-u-\ell}}{|\mathcal H|^2\,(k)_v^2}
\le
O\left(
\max_{\substack{0\le \ell\le u<v\\(u,\ell)\ne(0,0)}}
\Bigl\{v^{2u}n^{\ell}k^{-(u+\ell)}p_{u,\ell}(\mathcal H)\Bigr\}
\right).
\]
\paragraph{Case 2: $u=v$.}
For $0\le\ell\le v$, we similarly have
\[
M_{v,\ell}=\binom{n}{v}(v!)^2|\mathcal H|^2\,p_{v,\ell}(\mathcal H).
\]
Again using $v^2=o(k)$,
\[
\begin{aligned}
\frac{M_{v,\ell}\,\alpha_{v-\ell}}{|\mathcal H|^2\,(k)_v^2}
&=\binom{n}{v}(v!)^2\cdot\frac{(k)_{v-\ell}}{(n)_{v-\ell}(k)_v^2}\cdot p_{v,\ell}(\mathcal H)\\
&=(n)_v\,v!\cdot\frac{(k)_{v-\ell}}{(n)_{v-\ell}(k)_v^2}\cdot p_{v,\ell}(\mathcal H)\\
&=v!\cdot\frac{(n)_v}{(n)_{v-\ell}}\cdot\frac{(k)_{v-\ell}}{(k)_v^2}\cdot p_{v,\ell}(\mathcal H)\\
&\le (1+o(1))\,v!\,n^{\ell}k^{-(v+\ell)}\,p_{v,\ell}(\mathcal H).
\end{aligned}
\]
Therefore,
\[
\sum_{\ell=0}^{v}\frac{M_{v,\ell}\,\alpha_{v-\ell}}{|\mathcal H|^2\,(k)_v^2}
\le
(1+o(1))\sum_{\ell=0}^{v}v!\,n^{\ell}k^{-(v+\ell)}\,p_{v,\ell}(\mathcal H)
\le
O\left(\max_{0\le \ell\le v}\Bigl\{(v+1)!\,n^{\ell}k^{-(v+\ell)}\,p_{v,\ell}(\mathcal H)\Bigr\}\right).
\]

For $\ell=v$, the probability $p_{v,v}(\sH)$ is that a uniform permutation in $S_v$ maps independent uniform $H,H'\in\sH$ to one another.
By Lemma~\ref{lem:iso-sum-vfact}, this is
\[
\frac{\sum_{H, H' \in \sH} |\Iso(H, H')|}{|\sH|^2 v!} = \frac{1}{|\sH|},
\]
so
\[
(v + 1)! n^{v} k^{-2v} p_{v,v}(\sH) = \frac{(v + 1)! (n / k^2)^v}{|\sH|},
\]
which is absorbed by the first term of the bound.
The remaining maximum is over $0\le\ell\le v-1$, as in Lemma~\ref{lem:sufficient-conditions-overlap}.
Together with the null variance estimate, these bounds prove the lemma.
\end{proof}

\subsection{Proof of Theorem \ref{thm:separation-hardness}}\label{apx:separation}

\begin{proof}
Let $a$ be the number of vertices shared by two embeddings, and let $\ell$ count those absent from the symmetric difference.
Then $\sigma(H,H';\phi,\psi)=2v-a-\ell$, with $u=a$ in the notation of Lemma~\ref{lem:sufficient-conditions-overlap}.

Choose $0<\eta,\eta'\ll\sqrt\theta$, then $0<C_2,\varepsilon\ll\eta,\eta'$ with $C_2<1/4-\varepsilon/2$, and finally $C_1$ sufficiently large depending on $\beta,\theta,\eta,\eta'$.
We take $k=n^{1/2-\varepsilon}$, so $v=o(\sqrt k)$.
For sufficiently large $v$, the condition $\mathrm{VC}(\beta)$ excludes isolated vertices: if $x$ were isolated, taking $S=\{x\}$ in \eqref{eq:vcb} would give
\[
v-1\le\max\{2^{-\beta}(v-1),2\log v/\beta\}<v-1,
\]
a contradiction. Lemma~\ref{lem:sufficient-conditions-overlap} therefore applies.
It remains to show that
\begin{align*}
E_0&\colonequals\frac{(n/k^2)^v (v+1)!}{|\mathcal H|},\\
E_1&\colonequals
\max_{\substack{0\le \ell\le a\le v-1\\(a,\ell)\ne(0,0)}}
\frac{v^{2a}n^\ell}{k^{a+\ell}}
p_{a,\ell}(\mathcal H),\\
E_2&\colonequals
\max_{0\le \ell\le v-1}
\frac{(v+1)!n^\ell}{k^{v+\ell}}
p_{v,\ell}(\mathcal H)
\end{align*}
are $o(1)$.
The case $a=\ell=0$ is excluded because the embeddings are disjoint, so $\sigma=2v$ and their covariance under the planted model is nonpositive.

For $E_0$, the bounds $|\mathcal H|\ge2^{\theta v^2}$ and $v\ge C_1\log n$ give $|\mathcal H|\ge n^{\gamma v}$ for some $\gamma=\gamma(\theta)>0$, after increasing $C_1$ if necessary.
Since $v\le n^{C_2}$ and $k=n^{1/2-\varepsilon}$,
\[
E_0
\le
\frac{n^{2\varepsilon v}(v+1)!}{|\mathcal H|}
\le
n^{(2\varepsilon+C_2+o(1)-\gamma)v}
=o(1)
\]
provided $C_2$ and $\varepsilon$ are chosen sufficiently small relative to $\gamma$.

To bound $E_1$ and $E_2$, we use the following claim.
\begin{claim}[Overlap cancellation bound]\label{clm:hardness-overlap-cancel}
There is a constant $c_\beta>0$, depending only on $\beta$, such that for all $0\le \ell\le a\le v$,
\[
p_{a,\ell}(\mathcal H)
\le
\binom a\ell
\min\left\{
2^{-c_\beta \ell(v-a)},
2^{-\theta v^2+(v-\ell)^2}
\right\}.
\]
\end{claim}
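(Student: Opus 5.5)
We prove the two bounds in the minimum separately, each by conditioning on the overlap structure. Fix the union size $2v-a$, and condition on the overlap set $O=\phi([v])\cap\psi([v])$ of size $a$ together with the two preimages $A=\phi^{-1}(O)$ and $A'=\psi^{-1}(O)$. Since $\sH$ is isomorphism-closed and $H,H'$ are uniform in $\sH$, we may replace $H$ by $\pi(H)$ for a uniform relabeling $\pi$. Then $A\subseteq V(H)$ is a uniformly random $a$-subset, and likewise $A'\subseteq V(H')$, independently. The event $\sigma=2v-a-\ell$ requires a set $L\subseteq O$ with $|L|=\ell$ such that no vertex of $L$ is incident to the symmetric difference. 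We union bound over the $\binom a\ell$ choices of $L$; this produces the prefactor, and it remains to bound the probability for a fixed $L$.

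\textbf{First bound (vertex-connectivity).} Fix $L$ and let $S=\phi^{-1}(L)\subseteq A$ and $T=V(H)\setminus A$, the vertices of $H$ mapped outside the overlap. Every edge of $H$ between $S$ and $T$ lies in $\phi(E(H))$ but not in $\psi(E(H'))$, because its $T$-endpoint is not in $\psi([v])$. Such an edge would therefore be in the symmetric difference and would touch $L$. So the event forces $E_H(S,T)=\emptyset$. Under the random relabeling, $S$ is a uniform $\ell$-subset of $V(H)$ and $T$ is a uniform $(v-a)$-subset of its complement. Lemma~\ref{lem:VC-implies} then gives probability at most $2^{-(\beta/8)\ell(v-a)}$, so we may take $c_\beta=\beta/8$. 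This step is routine once the uniformity of $(S,T)$ is checked, which holds because the relabeling acts transitively on ordered pairs of disjoint subsets of given sizes.

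\textbf{Second bound (counting).} Fix $L$ and condition on $H$, $\phi$, $\psi$. The event forces the edges of $H'$ inside $\psi^{-1}(L)$, and between $\psi^{-1}(L)$ and $\psi^{-1}(O)$, to coincide with the corresponding edges of $H$ transported via $\psi^{-1}\circ\phi$. It also forces $H'$ to have no edges from $\psi^{-1}(L)$ to vertices outside the overlap. Hence every pair in $V(H')$ with at least one endpoint in $\psi^{-1}(L)$ is determined. The number of undetermined pairs is at most $\binom{v-\ell}{2}\le (v-\ell)^2$, so the number of admissible $H'$ is at most $2^{(v-\ell)^2}$. Since $H'$ is uniform in $\sH$ and $|\sH|\ge 2^{\theta v^2}$, the conditional probability is at most $2^{(v-\ell)^2-\theta v^2}$.

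Combining the two bounds with the union bound over $L$ gives the claim. The main obstacle I expect is the bookkeeping in the first bound: one must check that the conditioning on the overlap leaves $S$ and $T$ jointly uniform in the sense required by Lemma~\ref{lem:VC-implies}, and that edges of $H$ from $S$ to $T$ cannot be cancelled by edges of $H'$. If $\psi$ is instead fixed adversarially, the same argument goes through by using the symmetry over relabelings of $H$ alone.
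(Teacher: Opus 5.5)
Your proof is correct and follows essentially the same route as the paper: you take the $\binom{a}{\ell}$ union bound over cancelled sets, apply Lemma~\ref{lem:VC-implies} to the pair (preimage of the cancelled set, preimage of the non-overlap vertices) for the first bound, and count the at most $2^{\binom{v-\ell}{2}}$ graphs $H'$ consistent with the forced adjacencies for the second. The differences are cosmetic: the paper applies the lemma to both copies independently, getting $\beta/4$ before weakening to $\beta/8$, and it draws the uniformity of the pair from the uniform embeddings rather than from relabeling $H$.
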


\begin{proof}
Choose the $\ell$ cancelled vertices in the overlap, giving the factor $\binom a\ell$.
We bound the probability of cancellation for each choice in two ways.

First, fix the two image sets and a set $L_0$ of $\ell$ vertices in their intersection, leaving the preimages random.
Put
\[
I=\phi^{-1}(\phi([v])\cap\psi([v])),
\qquad
L=\phi^{-1}(L_0),
\]
and define $I',L'$ analogously using $\psi$.
The bijections defining the embeddings are independent and uniform, so $L$ is a uniform $\ell$-subset of $[v]$ and, conditional on $L$, $[v]\setminus I$ is a uniform $(v-a)$-subset of $[v]\setminus L$.
The pair $(L',[v]\setminus I')$ has the same law and is independent of $(L,[v]\setminus I)$.

For a vertex in $L_0$ to disappear from $V(\phi(H)\triangle\psi(H'))$, it can have no edge to a vertex outside the overlap in either copy, since that edge would remain in the symmetric difference.
Thus cancellation requires
\[
E_H(L,[v]\setminus I)=\emptyset,
\qquad
E_{H'}(L',[v]\setminus I')=\emptyset.
\]
Applying Lemma~\ref{lem:VC-implies} to these two independent random pairs, and then averaging over $H,H'$, gives
\[
\Pr[\text{this necessary condition holds}]
\le 2^{-(\beta/4)\ell(v-a)}
\le 2^{-c_\beta \ell(v-a)}
\]
for $c_\beta=\beta/8$.

Second, fix the cancelled vertices, both embeddings, and $H$.
Cancellation determines every adjacency of $H'$ incident to $L'$.
Only edges among the remaining $v-\ell$ vertices are free, giving at most $2^{\binom{v-\ell}{2}}\le2^{(v-\ell)^2}$ choices for $H'$.
Since $|\mathcal H|\ge2^{\theta v^2}$, the conditional probability is at most $2^{-\theta v^2+(v-\ell)^2}$.
Taking the minimum of the two bounds proves the claim.
\end{proof}

We now bound $E_1$.
For $k=n^{1/2-\varepsilon}$ and $v\le n^{C_2}$, the remaining factor satisfies
\[
\frac{v^{2a}n^\ell}{k^{a+\ell}}
\le
n^{a(2C_2-1/2+\varepsilon)+\ell(1/2+\varepsilon)}.
\]

If $\ell\le(1-\eta)a$, this exponent is at most $a(2C_2+2\varepsilon-\eta(1/2+\varepsilon))=-\Omega_\theta(a)$ by the choice $C_2,\varepsilon\ll\eta$.
The binomial factor in Claim~\ref{clm:hardness-overlap-cancel} is at most $2^a=n^{o(a)}$, so the contribution is uniformly $o(1)$.

Next suppose $\ell>(1-\eta)a$ but $a\le(1-\eta')v$.
The first bound in Claim~\ref{clm:hardness-overlap-cancel} gives
\[
p_{a,\ell}(\mathcal H)
\le
2^a\,2^{-c_\beta \ell(v-a)}
\le
2^a\,2^{-\Omega_{\beta,\eta'}(av)}.
\]
Since $v\ge C_1\log n$, choosing $C_1$ sufficiently large makes this $n^{-\Omega(a)}$, which dominates the polynomial prefactor.

Finally suppose $\ell>(1-\eta)a$ and $a>(1-\eta')v$.
Then
\[
v-\ell\le (v-a)+(a-\ell)\le(\eta'+\eta)v.
\]
Choosing $\eta,\eta'$ sufficiently small in terms of $\theta$, the second bound in Claim~\ref{clm:hardness-overlap-cancel} yields
\[
p_{a,\ell}(\mathcal H)
\le
2^a\,2^{-\theta v^2+(v-\ell)^2}
\le
2^v\,2^{-\Omega_\theta(v^2)}.
\]
Since $v\ge C_1\log n$, this decay dominates the prefactor, giving $E_1=o(1)$.

For $E_2$, we have $a=v$ and $0\le\ell\le v-1$.
At $\ell=0$,
\[
(v+1)!k^{-v}p_{v,0}(\mathcal H)
\le
(v+1)!n^{-(1/2-\varepsilon)v}
\le
n^{-(1/2-\varepsilon-C_2-o(1))v}
=o(1)
\]
by the choice of $C_2,\varepsilon$.
For the remaining cases, the prefactor is
\[
\frac{(v+1)!n^\ell}{k^{v+\ell}}
=
(v+1)!n^{-(1/2-\varepsilon)v+(1/2+\varepsilon)\ell}.
\]
If $\ell\le(1-\eta)v$, the exponent of $n$ is at most $(2\varepsilon-\eta(1/2+\varepsilon))v=-\Omega_\theta(v)$ for $\varepsilon\ll\eta$.
Also, $(v+1)!\le n^{(C_2+o(1))v}$ because $v\le n^{C_2}$.
Taking $C_2,\varepsilon$ sufficiently small relative to $\eta$ makes the contribution $o(1)$.

If $\ell>(1-\eta)v$, then the bound in Claim~\ref{clm:hardness-overlap-cancel} that uses $|\mathcal H|$ gives
\[
p_{v,\ell}(\mathcal H)
\le
2^v\,2^{-\theta v^2+(v-\ell)^2}
\le
2^v\,2^{-\Omega_\theta(v^2)}
\]
after choosing $\eta\ll\sqrt\theta$.
This dominates $(v+1)!$ and the polynomial prefactor because $v\ge C_1\log n$.
Hence $E_2=o(1)$.

All three terms in Lemma~\ref{lem:sufficient-conditions-overlap} are $o(1)$, so $F_{\mathcal H}$ strongly separates $\PP_{n,k}$ from $\QQ_n$.
Conjecture~\ref{conj:clique} therefore rules out a polynomial-time algorithm for computing $F_{\mathcal H}$ in this regime.
\end{proof}

\subsection{Proof of Lemma \ref{lem:rooted-moments}}\label{sec:proof_root}

\begin{proof}
We first compute the mean under the planted model.
The character $\chi_{H,\phi}(G)$ has nonzero expectation exactly when every edge of $H$ is mapped inside the planted clique, so
\begin{equation}\label{eq:char-expect}
\E\bigl[\chi_{H,\phi}(G)\mid C\bigr]
=
\begin{cases}
1, & \text{if }\phi(V(H))\subseteq C,\\
0, & \text{otherwise.}
\end{cases}
\end{equation}

\paragraph{Case 1: $i\in C$.} For every fixed planted set $C$ containing $i$, taking expectation in \eqref{eq:rooted-sum} and using \eqref{eq:char-expect} gives
\begin{equation*}
\begin{aligned}
\E\bigl[F_{\sH,i}\mid C\bigr]
&=
\sum_{H\in\sH}\sum_{r\in\mathcal R(H)}
\sum_{\substack{\phi:V(H)\hookrightarrow[n]\\ \phi(r)=i}}
\E\bigl[\chi_{H,\phi}(G)\mid C\bigr] \\
&=
\sum_{H\in\sH}\sum_{r\in\mathcal R(H)}
\#\Bigl\{\phi:V(H)\hookrightarrow C:\ \phi(r)=i\Bigr\}\\
&=
\nu(\sH)\cdot (k-1)_{v-1}.
\end{aligned}
\end{equation*}
This value is the same for every $C$ containing $i$, so it also equals $\E[F_{\sH,i}\mid i\in C]$.

\paragraph{Case 2: $i\notin C$.} Every embedding with $\phi(r)=i$ has $\phi(V(H))\nsubseteq C$.
By \eqref{eq:char-expect}, each term has mean zero, so $\E[F_{\sH,i}\mid C]=0$ for every such $C$ and $\E[F_{\sH,i}\mid i\notin C]=0$.

To compute the variance, fix $C$ and index the summands by triples $a=(H,r,\phi)$ with $H\in\sH$, $r\in\mathcal R(H)$, and $\phi\in\Inj(V(H),[n])$ satisfying $\phi(r)=i$.
Let $\sI_i$ be the set of these triples and write $\chi_a(G)\colonequals\chi_{H,\phi}(G)$.
Then
\begin{equation}\label{eq:var-expand}
\Var\bigl[F_{\sH,i}\mid C\bigr]
=
\sum_{a,b\in\sI_i}
\Bigl(\E[\chi_a\chi_b\mid C]-\E[\chi_a\mid C]\E[\chi_b\mid C]\Bigr).
\end{equation}

For $a=(H,r,\phi)$, define its set of random edges by
\[
R(a)\colonequals\Bigl\{\{\phi(u),\phi(w)\}:\ \{u,w\}\in E(H)\ \text{and}\ \{\phi(u),\phi(w)\}\not\subseteq C\Bigr\}.
\]
For any $a,b$,
\[
\E[\chi_a\chi_b\mid C]=
\begin{cases}
1, & \text{if }R(a)=R(b)\text{ as sets of edges,}\\
0, & \text{otherwise.}
\end{cases}
\]
Similarly, $\E[\chi_a\mid C]=\mathbf 1\{R(a)=\emptyset\}$, where $R(a)=\emptyset$ is equivalent to $\phi(V(H))\subseteq C$.

\paragraph{Case 1: $i\notin C$.}
Every $a\in\sI_i$ has $R(a)\ne\emptyset$ and $\E[\chi_a\mid C]=0$.
Thus \eqref{eq:var-expand} becomes
\[
\Var\bigl[F_{\sH,i}\mid C\bigr]
=
\sum_{a,b\in\sI_i}\E[\chi_a\chi_b\mid C]
=
\#\{(a,b)\in\sI_i^2:\ R(a)=R(b)\}.
\]

For $a=(H,r,\phi)\in\sI_i$, put $s(a)\colonequals|\phi(V(H))\cap C|\in\{0,1,\dots,v\}$ and $\sI_{i,s}\colonequals\{a\in\sI_i:s(a)=s\}$.
Write $E^{\mathrm{out}}(a)\colonequals\phi(E(H))\setminus\binom C2$ for the embedded edges not contained in $C$.
Conditional on $C$, the character depends only on these edges, and $\E[\chi_a\chi_b\mid C]=1$ exactly when $E^{\mathrm{out}}(a)=E^{\mathrm{out}}(b)$.
Keeping all root choices in $\sI_i$, we obtain
\[
\#\{(a,b)\in\sI_i^2:\ R(a)=R(b)\}
=
\sum_{s,t=0}^{v}\#\{(a,b)\in\sI_{i,s}\times\sI_{i,t}:\ E^{\mathrm{out}}(a)=E^{\mathrm{out}}(b)\}.
\]

For $s=0$, we have $\phi(V(H))\cap C=\emptyset$ and $E^{\mathrm{out}}(a)=\phi(E(H))$.
Collisions arise only from relabelings, so each $a\in\sI_{i,0}$ has exactly $v!$ matches $b\in\sI_{i,0}$.
Hence
\[
\#\{(a,b)\in\sI_{i,0}^2:\ R(a)=R(b)\}=|\sI_{i,0}|\cdot v!.
\]

\begin{claim}\label{clm:bounded-treewidth-types}
For every fixed $t\ge1$, there is a constant $A_t<\infty$ such that the number of isomorphism classes of graphs on $s$ vertices with treewidth at most $t$ is at most $A_t^s$ for every $s\ge1$.
\end{claim}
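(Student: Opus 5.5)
The plan is to bound the number of unlabeled graphs of treewidth at most $t$ by dividing labeled counts by $s!$. The one subtlety is that isomorphism classes with large automorphism groups contain fewer than $s!$ labelings.

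\textbf{Step 1: reduce to labeled counts.} Every graph of treewidth at most $t$ on $s$ vertices is a spanning subgraph of a $t$-tree on $s$ vertices (for $s\ge t+1$), or of $K_s$ when $s\le t$. A $t$-tree on $s$ vertices has $ts-\binom{t+1}{2}\le ts$ edges. Hence it suffices to count isomorphism classes of $t$-trees up to a factor of $2^{ts}$, since each class of partial $t$-trees arises as a spanning subgraph of some unlabeled $t$-tree. So the number of classes of treewidth-$\le t$ graphs on $s$ vertices is at most $2^{ts}$ times the number of unlabeled $t$-trees on $s$ vertices, plus a constant for the finitely many $s\le t$.

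\textbf{Step 2: count unlabeled $t$-trees by construction histories.} I would avoid labeled counting (Theorem~\ref{lem:count-tw}) because it loses the $s!$ factor when automorphisms are large. Instead, encode an unlabeled $t$-tree as follows.
\begin{itemize}
\item Fix a tree decomposition of the $t$-tree whose bags are its $(t+1)$-cliques. These form a tree $\mathsf T$ with $s-t$ nodes, adjacent bags sharing $t$ vertices.
\item Root $\mathsf T$ at a bag. Each child bag is determined by which vertex of the parent bag it drops, one of $t+1$ choices.
\item The $t$-tree is therefore determined up to isomorphism by an unlabeled rooted tree shape on $s-t$ nodes, together with an edge label in $[t+1]$ on each tree edge. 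Concretely, this is a plane rooted tree (ordered children), which overcounts but is harmless.
\end{itemize}

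\textbf{Step 3: bound the encodings.} Plane rooted trees on $m$ nodes number the Catalan number $C_{m-1}\le 4^{m}$, and the labelings contribute at most $(t+1)^{m}$. Combining with Step 1 gives the count
\[
4^{s}(t+1)^{s}2^{ts},
\]
so we may take $A_t=4(t+1)2^{t}$, enlarged slightly to absorb small $s$.

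\textbf{Main obstacle.} The key point to check is that the encoding is surjective onto isomorphism classes. The map from encodings to graphs is well defined: build the graph by starting with a $K_{t+1}$ at the root and, for each child, adding a new vertex adjacent to the parent bag minus the dropped vertex. Every $t$-tree arises this way from its clique tree by this recursive construction, exactly as in the 2-tree construction in Lemma~\ref{lem:enumerate-partial-2trees}.

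With surjectivity in hand, the remaining bounds are routine.
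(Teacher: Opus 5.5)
Your argument is correct and follows essentially the same route as the paper: complete to a $t$-tree, encode its clique tree as a plane rooted tree on $s-t$ nodes with a bounded amount of information per parent--child edge, and multiply by at most $2^{ts}$ spanning subgraphs. The one detail to state explicitly is that each bag carries an ordering, for instance with the new vertex inheriting the slot of the dropped one. Without it, your label in $[t+1]$ does not name a specific vertex of the parent bag and the decoding is not well defined; the paper's phrase ``order the vertices in each clique'' supplies exactly this.
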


\begin{proof}
It suffices to consider $s\ge t+1$, since increasing $A_t$ covers the remaining cases.
Every graph of treewidth at most $t$ is a spanning subgraph of a $t$-tree.
A $t$-tree starts from a clique of size $t+1$ and repeatedly adds a vertex adjacent to all vertices of an existing $t$-clique.

To count isomorphism classes of $t$-trees, root a construction at its initial $(t+1)$-clique and order the vertices in each clique.
Each later maximal clique shares exactly $t$ vertices with a parent maximal clique, giving a rooted tree with $s-t$ nodes.
Ordering the children gives a plane rooted tree, with at most $4^s$ choices of shape.
For each child and its parent, there are only a bounded number of ways, depending on $t$, to identify the shared vertices and specify the position of the new vertex.
Thus there are at most $D_t^s$ encodings, and hence isomorphism classes, for some $D_t<\infty$.

A $t$-tree on $s$ vertices has $\binom{t+1}{2}+t(s-t-1)$ edges and at most $2^{ts+O_t(1)}$ spanning subgraphs.
Combining these counts gives $A_t^s$ after increasing the constant.
\end{proof}

We next bound the number of collisions with a fixed $a=(H,r,\phi)\in\sI_i$.
Put $J\colonequals\phi^{-1}(C)$ and $s\colonequals|J|$.
Let
\[
\partial J\colonequals\{u\in J:\ N_H(u)\cap(V(H)\setminus J)\ne\emptyset\}
\]
be the vertices in $J$ with a neighbor outside $J$.
Put $d\colonequals|\partial J|$ and $h\colonequals s-d$.
Up to relabeling, $E^{\mathrm{out}}(a)$ determines the part of the rooted pattern outside $J$ and the images of $\partial J$.
Preserving this edge set allows only the images of $J\setminus\partial J$ and the edges inside $J$ to change.
For $\mathcal H\subseteq\mathcal G^{\mathrm{conn}}_{v,\le t}$, there is therefore a constant $A_t<\infty$ depending only on $t$ such that
\begin{equation}\label{eq:local-rooted-collision}
\#\{b\in\sI_i:\ E^{\mathrm{out}}(b)=E^{\mathrm{out}}(a)\}
\le
v!\,A_t^s\,s!\binom{k-d}{h}
\le
v!\,A_t^s\,(s)_d(k)_h.
\end{equation}
Indeed, there are at most $v!$ assignments of pattern labels and, by Claim~\ref{clm:bounded-treewidth-types}, at most $A_t^s$ isomorphism classes for the induced graph on $J$.
There are then at most $\binom{k-d}{h}$ choices for the remaining images in $C$ and $s!$ ways to assign the vertices of $J$ to these images and the fixed boundary images.
The last inequality uses $s!\binom{k-d}{h}=(s)_d(k-d)_h\le(s)_d(k)_h$.
Since $H$ is connected and $s<v$, we have $d\ge1$.

If $i\notin C$, then for each fixed $(H,r)$,
\[
\#\{\phi:\phi(r)=i,\ |\phi(V(H))\cap C|=s\}
=
\binom{v-1}{s}(k)_s (n-k-1)_{v-1-s}.
\]
Multiplying by the number of choices of $(H,r)$, namely $\nu(\mathcal H)$, yields
\begin{equation}\label{eq:Iis-notinC}
|\sI_{i,s}|
=
\nu(\mathcal H)\cdot \binom{v-1}{s}(k)_s (n-k-1)_{v-1-s},
\qquad i\notin C,\ \ s=0,1,\dots,v-1.
\end{equation}

\begin{claim}\label{lem:ratio-Is}
Assume $k=\lambda\sqrt n$ for fixed $\lambda>0$ and $v=O(\log n/\log\log n)$.
Fix $i\notin C$.
Then
\[
\sum_{s=1}^{v-1}
\sum_{a\in\sI_{i,s}}
\#\{b\in\sI_i:\ E^{\mathrm{out}}(b)=E^{\mathrm{out}}(a)\}
=
o(|\sI_{i,0}|\,v!).
\]
\end{claim}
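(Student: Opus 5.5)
The plan is to bound the left-hand side term by term in $s$. For each $s\ge1$ we combine the exact count \eqref{eq:Iis-notinC} of $|\sI_{i,s}|$ with the collision bound \eqref{eq:local-rooted-collision}, and compare the product with $|\sI_{i,0}|\,v!$. The crucial input is that connectedness forces $d\ge1$ whenever $1\le s\le v-1$. The collision factor therefore carries at most $(k)_h$ with $h=s-d\le s-1$, which is the source of the gain of a factor $k/n$ per overlap vertex.

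\textbf{Step 1 (ratio of $|\sI_{i,s}|$ to $|\sI_{i,0}|$).} From \eqref{eq:Iis-notinC},
\[
\frac{|\sI_{i,s}|}{|\sI_{i,0}|}=\binom{v-1}{s}\frac{(k)_s\,(n-k-1)_{v-1-s}}{(n-k-1)_{v-1}}\le \binom{v-1}{s}\,k^s\,(n-k-v)^{-s}\le (1+o(1))\binom{v-1}{s}\Bigl(\frac{k}{n}\Bigr)^s .
\]
The last step uses $v,k=o(n)$, and the $1+o(1)$ is uniform in $s\le v$ since $(1-(k+v)/n)^{-v}=1+o(1)$.

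\textbf{Step 2 (collision bound).} For every $a\in\sI_{i,s}$ with $1\le s\le v-1$ we have $d\ge1$ and $h\le s-1$. Hence \eqref{eq:local-rooted-collision} gives at most $v!\,A_t^s\,(s)_d\,k^{h}$ collisions. We maximize over $d\in\{1,\dots,s\}$. Since $(s)_d\le s^d$, we get $(s)_dk^{s-d}\le s^dk^{s-d}$, and because $s\le v\ll k$ this is maximized at $d=1$. The bound is therefore at most $v!\,A_t^s\,s\,k^{s-1}$.

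\textbf{Step 3 (sum over $s$).} Multiplying the two steps, the $s$-th summand divided by $|\sI_{i,0}|v!$ is at most
\[
(1+o(1))\binom{v-1}{s}A_t^s\,s\,\frac{k^{2s-1}}{n^s}=(1+o(1))\binom{v-1}{s}\,s\,A_t^s\lambda^{2s}\,k^{-1}.
\]
Here we used $k^2=\lambda^2n$. Summing over $s$ gives at most $(1+o(1))\,v\,(1+A_t\lambda^2)^{v}/k$. Since $v=O(\log n/\log\log n)$, we have $(1+A_t\lambda^2)^v=n^{o(1)}$, while $k=\lambda\sqrt n$. The total is therefore $n^{-1/2+o(1)}=o(1)$, which proves the claim.

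The main obstacle is Step 2, namely making sure that \eqref{eq:local-rooted-collision} really gives a factor $k^{s-1}$ and not $k^s$. This is exactly where connectedness is used: the boundary vertex guaranteed by $d\ge1$ has its image pinned by $E^{\mathrm{out}}(a)$. Crudely bounding $\binom{v-1}{s}\le v^s$ together with $(s)_d\le s^d$ would still be fine, since $v^{2s}$ is also $n^{o(1)}$ uniformly in this range of $v$.
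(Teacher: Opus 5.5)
Your proof is correct and follows the paper's argument essentially line by line: you combine \eqref{eq:Iis-notinC} with \eqref{eq:local-rooted-collision}, use $d\ge1$ (from connectedness) to trade one factor of $k$ for a factor $s\le v$, and sum $\binom{v-1}{s}C^s\le e^{O(v)}=n^{o(1)}$ against the $k^{-1}=\Theta(n^{-1/2})$ gain.

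The only flaw is in your closing aside, which the main argument does not use. It is not true that $v^{2s}=n^{o(1)}$ uniformly for $v=O(\log n/\log\log n)$: for $v=c\log n/\log\log n$ and $s$ near $v$ it is $n^{2c+o(1)}$. So the crude bound $\binom{v-1}{s}\le v^s$ would break down once the constant $c$ is large enough, and keeping the binomial sum, as your Step 3 does, is what makes the proof work.
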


\begin{proof}
By \eqref{eq:local-rooted-collision} and $d\ge1$, uniformly over $1\le s\le v-1$ we have
\[
(s)_d(k)_{s-d}\le(1+o(1))v(k)_{s-1}.
\]
Thus the left-hand side is at most
\[
\sum_{s=1}^{v-1}
(1+o(1))|\sI_{i,s}|v!A_t^s v(k)_{s-1}.
\]
Dividing by $|\sI_{i,0}|v!$ and using \eqref{eq:Iis-notinC}, we get
\[
\frac{|\sI_{i,s}|v!A_t^s v(k)_{s-1}}{|\sI_{i,0}|v!}
\le
\binom{v-1}{s}vA_t^s\frac{(k)_s(k)_{s-1}}{(n-k-v+s)_s}.
\]
Since $v=o(n)$ and $k=\lambda\sqrt n$, the last expression is bounded by $n^{-1/2}v\binom{v-1}{s}(C_t)^s$ for a constant $C_t<\infty$.
Summing over $1\le s\le v=O(\log n/\log\log n)$ gives at most $n^{-1/2}v(1+C_t)^v=n^{-1/2+o(1)}=o(1)$.
\end{proof}

Therefore,
\begin{align*}
\Var\bigl[F_{\sH,i}\mid C\bigr]
&\le
|\sI_{i,0}|\cdot v!+
\sum_{s=1}^{v-1}
\sum_{a\in\sI_{i,s}}
\#\{b\in\sI_i:\ E^{\mathrm{out}}(b)=E^{\mathrm{out}}(a)\} \\
&=
(1+o(1))|\sI_{i,0}|\cdot v! \\
&=
(1+o(1))\nu(\mathcal H)\cdot (n-k-1)_{v-1}\cdot v!.
\end{align*}
Every collision contributes a nonnegative term, and those with $s=0$ contribute $|\sI_{i,0}|v!$.
This gives the matching lower bound.

\paragraph{Case 2: $i\in C$.}
Some summands now have $R(a)=\emptyset$ and nonzero mean.
Recall that $\E[\chi_a\chi_b\mid C]=\mathbf 1\{R(a)=R(b)\}$ and $\E[\chi_a\mid C]=\mathbf 1\{R(a)=\emptyset\}$.
Let $\sI_i^{\mathrm{in}}\colonequals\{a\in\sI_i:R(a)=\emptyset\}$ and $\sI_i^{\mathrm{out}}\colonequals\sI_i\setminus\sI_i^{\mathrm{in}}$.
Substituting into~\eqref{eq:var-expand} gives
\[
\begin{aligned}
\Var\bigl[F_{\mathcal H,i}\mid C\bigr]
&=
\sum_{a,b\in\sI_i}
\Bigl(\mathbf 1\{R(a)=R(b)\}-\mathbf 1\{R(a)=\emptyset\}\mathbf 1\{R(b)=\emptyset\}\Bigr)\\
&=
\#\bigl\{(a,b)\in\sI_i^2:\ R(a)=R(b)\neq\emptyset\bigr\}\\
&=
\#\bigl\{(a,b)\in(\sI_i^{\mathrm{out}})^2:\ R(a)=R(b)\bigr\}\\
&\le
\sum_{s=1}^{v-1}\sum_{a\in\sI_{i,s}}
\#\bigl\{b\in\sI_i:\ R(b)=R(a)\bigr\},
\end{aligned}
\]
where $s(a)\colonequals|\phi(V(H))\cap C|\in\{1,2,\dots,v\}$ and $\sI_{i,s}\colonequals\{a\in\sI_i:s(a)=s\}$ for $a=(H,r,\phi)\in\sI_i$.
For $s=1$, no edge lies fully inside the planted part of the embedded pattern, so collisions are only relabelings and contribute at most $|\sI_{i,1}|v!$.
For $s\ge2$, we use \eqref{eq:local-rooted-collision}.

If $i\in C$, then for each fixed $(H,r)$,
\[
\#\{\phi:\phi(r)=i,\ |\phi(V(H))\cap C|=s\}
=
\binom{v-1}{s-1}(k-1)_{s-1}(n-k)_{v-s}.
\]
Multiplying by $\nu(\mathcal H)$ yields
\begin{equation}\label{eq:Iis-inC}
|\sI_{i,s}|
=
\nu(\mathcal H)\cdot \binom{v-1}{s-1}(k-1)_{s-1}(n-k)_{v-s},
\qquad i\in C,\ \ s=1,2,\dots,v.
\end{equation}

\begin{claim}\label{lem:dominant-s1-inC}
Assume $k=\lambda\sqrt n$ for fixed $\lambda>0$ and $v=O(\log n/\log\log n)$.
Fix $i\in C$.
Then
\[
\sum_{s=2}^{v-1}
\sum_{a\in\sI_{i,s}}
\#\{b\in\sI_i:\ E^{\mathrm{out}}(b)=E^{\mathrm{out}}(a)\}
=
o(|\sI_{i,1}|\,v!).
\]
\end{claim}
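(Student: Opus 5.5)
We will follow the argument of Claim~\ref{lem:ratio-Is}, now normalizing by the $s=1$ layer instead of the $s=0$ layer. The first step is to bound collisions for each fixed $a\in\sI_{i,s}$ with $2\le s\le v-1$ using \eqref{eq:local-rooted-collision}. Here $J=\phi^{-1}(C)$ has $|J|=s<v$. Because $H$ is connected, $d=|\partial J|\ge1$, and $h=s-d\le s-1$. This gives
\[
\#\{b\in\sI_i:\ E^{\mathrm{out}}(b)=E^{\mathrm{out}}(a)\}\le v!\,A_t^s\,(s)_d(k)_{s-d}\le(1+o(1))\,v!\,A_t^s\,s^{d}k^{s-d}.
\]
The worst case is $d=1$, and we bound $(s)_d(k)_{s-d}\le (1+o(1))\,v\,(k)_{s-1}$ exactly as before, which is valid uniformly because $v^2=o(k)$. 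The remaining cases $d\ge2$ are smaller by factors of $v/k$.

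The second step is the ratio computation using \eqref{eq:Iis-inC}:
\[
\frac{|\sI_{i,s}|}{|\sI_{i,1}|}=\binom{v-1}{s-1}\frac{(k-1)_{s-1}(n-k)_{v-s}}{(n-k)_{v-1}}\le \binom{v-1}{s-1}\frac{k^{s-1}}{(n-k-v)^{s-1}}.
\]
Multiplying by the per-element collision bound $v\,A_t^s(k)_{s-1}$ and dividing by $v!$ shows that the layer-$s$ contribution relative to $|\sI_{i,1}|v!$ is at most
\[
(1+o(1))\,v\binom{v-1}{s-1}A_t^s\left(\frac{k^2}{n-k-v}\right)^{s-1}.
\]
With $k=\lambda\sqrt n$, this quantity is of order $v\binom{v-1}{s-1}(A_t\lambda^2)^{s}$, so the powers of $n$ cancel exactly. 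Here the direct analogue of Claim~\ref{lem:ratio-Is} breaks down: there the layer-$0$ denominator carried one extra factor of $n$, and that factor produced the $n^{-1/2}$.

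The third step, and the main obstacle, is to recover that lost factor. Counting one root-attached boundary vertex costs too much. I would refine the collision count using the structure of $R(a)$ when $i\in C$. The root $i=\phi(r)$ lies in $J$. If $r\notin\partial J$, then $r$ is an interior clique vertex, and its image is forced to be $i$ rather than free among the $k$ clique vertices, so the collision count loses a factor $k$. If $r\in\partial J$, then $d\ge1$ already counts $r$. In that case every other vertex of $J$ that has a neighbor outside $J$ contributes a factor $s$ instead of $k$, and by connectedness of $H-r$ (the definition of $\mathcal R(H)$) such a vertex exists whenever $s\ge2$ and $s<v$. Indeed, $H-r$ is connected and meets $V(H)\setminus J$, so some vertex of $J\setminus\{r\}$ is adjacent to the outside. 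In either case $(k)_{s-d}$ is improved to at most $v^2(k)_{s-2}$.

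This improvement gains the factor $v^2/k=n^{-1/2+o(1)}$. Summing over $2\le s\le v-1$ then yields a total of at most $n^{-1/2}v^{3}(1+C_t)^{v}=n^{-1/2+o(1)}=o(1)$, using $v=O(\log n/\log\log n)$. This proves the claim.
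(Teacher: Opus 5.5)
Your proposal is correct and follows essentially the same argument as the paper. Since $i\in C$ forces $r\in J$, either $r\notin\partial J$, so the root image $i$ is fixed on top of the nonempty boundary, or $r\in\partial J$ and connectedness of $H-r$ (because $r\in\mathcal R(H)$) forces $d\ge 2$. In both cases at least two images in $C$ are fixed, so $(s)_d(k)_{s-d}$ improves to $O(v^2(k)_{s-2})$, which gives the $n^{-1/2+o(1)}$ bound after dividing by $|\sI_{i,1}|\,v!$; the extra factor of $v$ in your final sum, from writing the gain as $v^2/k$ rather than $v/k$, is harmless.
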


\begin{proof}
Since $i\in C$, the root $r$ belongs to $J$.
If $r\in\partial J$ and $d=1$, then $H-r$ has no edge between the nonempty sets $J\setminus\{r\}$ and $V(H)\setminus J$, contradicting $r\in\mathcal R(H)$.
Thus $d\ge2$ in this case, and
\[
(s)_d(k)_{s-d}\le(1+o(1))v^2(k)_{s-2}.
\]
If $r\notin\partial J$, then the fixed root image is determined in addition to the images of the nonempty set $\partial J$, and the same counting argument instead has the factor
\[
(s)_{d+1}(k)_{s-d-1}\le(1+o(1))v^2(k)_{s-2}.
\]
In either case, at least two images in $C$ are fixed.
Summing over the possible boundary sets $\partial J$ bounds the $s$th summand by $(1+o(1))|\sI_{i,s}|v!A_t^s v^2(k)_{s-2}$, after adjusting $A_t$ by a constant depending only on $t$.
Using \eqref{eq:Iis-inC} and dividing by $|\sI_{i,1}|v!$ gives
\[
\frac{|\sI_{i,s}|v!A_t^s v^2(k)_{s-2}}{|\sI_{i,1}|v!}
\le
\binom{v-1}{s-1}v^2A_t^s\frac{(k-1)_{s-1}(k)_{s-2}}{(n-k-v+s)_{s-1}}.
\]
Since $k=\lambda\sqrt n$ and $v=o(n)$, this is at most $n^{-1/2}v^2\binom{v-1}{s-1}(C_t)^s$.
Summing over $2\le s\le v=O\!\left(\frac{\log n}{\log\log n}\right)$ gives $n^{-1/2+o(1)}=o(1)$.
\end{proof}

This implies
\begin{align*}
\Var\bigl[F_{\sH,i}\mid C\bigr]
&\le
|\sI_{i,1}|v!+
\sum_{s=2}^{v-1}
\sum_{a\in\sI_{i,s}}
\#\{b\in\sI_i:\ E^{\mathrm{out}}(b)=E^{\mathrm{out}}(a)\}\\
&=
(1+o(1))|\sI_{i,1}|v!\\
&=
(1+o(1))\nu(\mathcal H)\cdot (n-k)_{v-1}\cdot v!,
\end{align*}
as required for each fixed $C$.
The conditional mean $\E[F_{\sH,i}\mid C]$ depends only on whether $i\in C$.
The law of total variance therefore gives the claimed bounds conditional on $i\in C$ or $i\notin C$ by averaging over $C$.
The matching lower bound when $i\notin C$ is preserved.
\end{proof}

\subsection{Proof of Lemma~\ref{lem:p-var}}\label{apx:rooted-mse}

\begin{proof}
Put $r\colonequals n/k^2\le\lambda^{-2}$.
We group collisions as in the proof of Lemma~\ref{lem:rooted-moments}, according to the number $s$ of pattern vertices mapped into $C$.
The bound \eqref{eq:local-rooted-collision} gives the following estimates for some $A_t<\infty$ depending only on $t$.

First suppose $i\notin C$.
After division by $\mu^2$, the contribution from $s=0$ is at most $(1+o(1))(v!/\nu(\sH))r^{v-1}$.
For $1\le s\le v-1$, equations \eqref{eq:local-rooted-collision} and \eqref{eq:Iis-notinC}, together with $(k-1)_{v-1}=(1+o(1))k^{v-1}$, give the uniform bound
\[
\frac{|\sI_{i,s}|\,v!A_t^s v(k)_{s-1}}{\mu^2}
\le
(1+o(1))\frac{v!}{\nu(\sH)}
\binom{v-1}{s}\frac{vA_t^s}{k}r^{v-1-s}.
\]
Summing over $s$ yields
\begin{equation}
\frac{\Var(F_{\sH,i}\mid i\notin C)}{\mu^2}
\le
(1+o(1))\frac{v!}{\nu(\sH)}
\left(r^{v-1}+\frac{v(r+A_t)^{v-1}}{k}\right).
\label{eq:rooted-normalized-notin}
\end{equation}

Now suppose $i\in C$.
The contribution from $s=1$ has the same bound $(1+o(1))(v!/\nu(\sH))r^{v-1}$.
For $2\le s\le v-1$, at least two images in $C$ are fixed by the root condition, as in Claim~\ref{lem:dominant-s1-inC}, so
\[
\frac{|\sI_{i,s}|\,v!A_t^s v^2(k)_{s-2}}{\mu^2}
\le
(1+o(1))\frac{v!}{\nu(\sH)}
\binom{v-1}{s-1}\frac{v^2A_t^s}{k}r^{v-s}.
\]
After increasing $A_t$ by a constant factor if necessary, this gives
\begin{equation}
\frac{\Var(F_{\sH,i}\mid i\in C)}{\mu^2}
\le
(1+o(1))\frac{v!}{\nu(\sH)}
\left(r^{v-1}+\frac{v^2A_t(r+A_t)^{v-1}}{k}\right).
\label{eq:rooted-normalized-in}
\end{equation}

Choose a fixed $c_1$ with $c_0<c_1<c(\sH)$.
For all sufficiently large $v$, $\nu(\sH)\ge2|\sH|\ge2c_1^v v!$.
The first term in each of \eqref{eq:rooted-normalized-notin} and \eqref{eq:rooted-normalized-in} is therefore at most $\frac{1+o(1)}{2c_1}(c_1\lambda^2)^{-(v-1)}$.
The second is $n^{-1/2+o(1)}=o((c_0\lambda^2)^{-(v-1)})$, since $k\ge\lambda\sqrt n$ and $v=o(\log n/\log\log n)$.
By the conditional means in Lemma~\ref{lem:rooted-moments}, these normalized variances equal the stated mean square errors.
Averaging over the two conditioning events gives the unconditional bound.
\end{proof}

\subsection{Proof of Lemma \ref{lem:partition-clique-size}}\label{apx:proof-recover1}

\begin{proof}
The uniform partition with prescribed part sizes makes $k_1$ and $k_2=k-k_1$ hypergeometric, with means $(n_1/n)k$ and $(n_2/n)k$.
Since $n_2=\lfloor n^{3/5}\rfloor$, we have $n_1/n=1-n^{-2/5}+o(1)=1-o(1)$ and $\E[k_1]=(1-o(1))k$.
The hypergeometric tail bound of \cite{chvatal1979tail} gives, for each fixed $\eta>0$,
\[
\Pr\bigl[|k_1-\E[k_1]|\ge \eta k\bigr]
\le
2\exp(-\Omega_\eta(k)).
\]
Since $k\ge\lambda\sqrt n\to\infty$, we obtain $k_1=(1-o(1))k$ with probability $1-o(1)$.

For the sharper estimate of $k_1-\widetilde k_1$, the same bound gives
\[
\Pr\!\left[\left|k_1-\frac{n_1}{n}k\right|>\sqrt{k}\log n\right]
\le
2\exp(-2(\log n)^2)
=o(1).
\]
Also,
\[
\left|\frac{n_1}{n}k-\lfloor(1-n^{-2/5})k\rfloor\right|
=O(k/n+1).
\]
The triangle inequality proves \eqref{eq:k1-proxy-close}.

For $k_2$, we have $\E[k_2]=(n_2/n)k\ge(1-o(1))\lambda n^{1/10}\ge\lambda n^{1/10}/2$ for all sufficiently large $n$.
From the multiplicative Chernoff bound for hypergeometric variables \cite{chvatal1979tail},
\[
\Pr\!\left[k_2\le \frac12 \E[k_2]\right]\le \exp\!\left(-\frac{\E[k_2]}{8}\right),
\]
we obtain
\[
\Pr\!\left[k_2\le \frac{\lambda}{4} n^{1/10}\right]
\le
\exp\!\bigl(-\Omega(n^{1/10})\bigr)
=o(1).
\]
Thus $k_2>\lambda n^{1/10}/4$ with probability $1-o(1)$.
\end{proof}
\subsection{Proof of Lemma~\ref{lem:filter-V2}}\label{apx:proof-recover2}

\begin{proof}
Condition on $C$, the partition $[n]=V_1\sqcup V_2$, the estimator's independent randomness (including its colorings), and $V_3$.
We suppress the partition and the estimator's randomness from the notation below.
The set $V_3$ depends only on $G[V_1]$ and this randomness, so the edges between $V_1$ and $V_2$ retain their conditional independence and Bernoulli laws.
Put $S=V_3\cap C_1$ and $s=|S|$.

For $v\in V_2\setminus C_2$, the edges from $v$ to $V_3$ are independent Bernoulli$(1/2)$ variables, so $|E(\{v\},V_3)|\sim\Bin(|V_3|,1/2)$.
Hoeffding's inequality gives
\[
\Pr\!\left[v\in \widehat{C}_{2,\mathrm{ideal}}\mid C,V_3\right]
=
\Pr\!\left[|E(\{v\}, V_3)|-\frac{|V_3|}{2}\ge \frac{s}{4}\,\middle|\,C,V_3\right]
\le
\exp\!\left(-\frac{2(s/4)^2}{|V_3|}\right)
=
\exp\!\left(-\frac{s^2}{8|V_3|}\right).
\]
By \eqref{eq:V3-snr}, $s^2/|V_3|=\omega((\log n)^3)$, so this probability is at most $\exp(-\omega((\log n)^3))$.
A union bound over $v\in V_2\setminus C_2$, using $|V_2|=n^{0.6}+O(1)$, gives
\[
\Pr\!\left[\widehat{C}_{2,\mathrm{ideal}}\setminus C_2\neq\varnothing\mid C,V_3\right]=o(1).
\]
For $v\in C_2$, all edges to $S$ are present, so $|E(\{v\},V_3)|=s+\Bin(|V_3|-s,1/2)$.
Its conditional mean is $|V_3|/2+s/2$, exceeding $T_{\mathrm{ideal}}$ by $s/4$.
Hoeffding's inequality gives
\begin{align*}
\Pr\!\left[v\notin \widehat{C}_{2,\mathrm{ideal}}\mid C,V_3\right]
&\le
\Pr\!\left[|E(\{v\}, V_3)|-\E[|E(\{v\}, V_3)|\mid C,V_3]\le -\frac{s}{4}\,\middle|\,C,V_3\right] \\
&\le
\exp\!\left(-\frac{2(s/4)^2}{|V_3|}\right) \\
&\le
\exp\!\bigl(-\omega((\log n)^3)\bigr).
\end{align*}
Taking a union bound over all $v\in C_2$, we obtain
\[
\Pr\!\left[C_2\not\subseteq \widehat{C}_{2,\mathrm{ideal}}\mid C,V_3\right]=o(1).
\]
The two bounds give $\widehat C_{2,\mathrm{ideal}}=C_2$ with probability $1-o(1)$.

For the implementable threshold, the assumed event gives $s=(1-o(1))k_1=(1-o(1))k$, and also $\widetilde k_1=(1-o(1))k$.
Thus, for all sufficiently large $n$,
\[
0.2499\,\widetilde k_1\le \frac{s}{3},
\qquad
\frac{s}{2}-0.2499\,\widetilde k_1\ge \frac{s}{5},
\]
and also $0.2499\,\widetilde k_1\ge c_0s$ for some absolute constant $c_0>0$.

For $v\in V_2\setminus C_2$, the preceding lower bound on $0.2499\,\widetilde k_1$ and Hoeffding's inequality give
\[
\Pr[v\in \widehat C_2\mid C,V_3]
\le
\exp\!\left(-\Omega\!\left(\frac{s^2}{|V_3|}\right)\right).
\]
For $v\in C_2$, the mean remains $|V_3|/2+s/2$.
The second inequality above and Hoeffding's inequality give
\[
\Pr[v\notin \widehat C_2\mid C,V_3]
\le
\exp\!\left(-\Omega\!\left(\frac{s^2}{|V_3|}\right)\right).
\]
Using $s^2/|V_3|=\omega((\log n)^3)$ from \eqref{eq:V3-snr}, a union bound over $|V_2|\le n$ vertices gives $\widehat C_2=C_2$ with probability $1-o(1)$.
\end{proof}
\subsection{Proof of Lemma~\ref{lem:recover-K1-from-K2}}\label{apx:proof-recover3}

\begin{proof}
On the event $\widehat C_2=C_2$, every $v\in C_1$ is adjacent to all vertices of $\widehat C_2$, so $C_1\subseteq\widehat C_1$.
By Lemma~\ref{lem:partition-clique-size}, $k_2\ge\lambda n^{1/10}/4$ with probability $1-o(1)$.
On this event, we bound the joint failure probability by
\begin{align*}
&\Pr\bigl[
\widehat C_2=C_2,\ \widehat C_1\setminus C_1\ne\emptyset
\bigr]\\
&\qquad\le
\Pr\bigl[
\exists v\in V_1\setminus C_1:
v\text{ is adjacent to every vertex of }C_2
\bigr]\\
&\qquad\le
n\,2^{-k_2}
\le
n\,2^{-\lambda n^{1/10}/4}
=o(1).
\end{align*}
Since $C_1\subseteq\widehat C_1$ whenever $\widehat C_2=C_2$, this proves $\Pr[\widehat C_2=C_2,\widehat C_1\ne C_1]=o(1)$.
If $\Pr[\widehat C_2=C_2]=1-o(1)$, as in Lemma~\ref{lem:implementable-threshold}, a union bound gives $\Pr[\widehat C_1\sqcup\widehat C_2=C]=1-o(1)$.
\end{proof}
\section{Omitted proofs from Section \ref{sec:approx}}
\subsection{Proof of Lemma~\ref{lem:tree_algo}}
\label{apx:tree_algo}
\begin{proof}
We prove correctness by induction on $|V(H)|$, using the color palette $[v]$ of the original tree throughout.

If $|V(H)|=1$, then $H$ consists only of the root $r$.
For each $x\in[n]$, the unique embedding with $\phi(r)=x$ uses the color set $\{\zeta(x)\}$ and has empty edge product $1$.
Thus $\DP_{H,r}(x,C)=\mathbf 1\{C=\{\zeta(x)\}\}$, as initialized by the algorithm.

Assume $|V(H)|>1$ and the claim holds for smaller rooted trees.
Choose an edge $e_T=(r,r')\in E(H)$, and let $(H_1,r)$ and $(H_2,r')$ be the rooted trees obtained by deleting it.
Fix $x\in[n]$ and $C\subseteq [v]$.
An embedding of $H$ with root image $x$ is colorful exactly when its restrictions to $H_1$ and $H_2$ are colorful and use disjoint color sets $C_1,C_2$.
Conversely, each such pair gives a unique colorful embedding of $H$.
Its signed contribution is the product of the two subtree contributions and the weight $X_{xy}$ of the connecting edge, where $y$ is the image of $r'$.
Therefore
\[
\DP_{H,r}(x,C)
=
\sum_{y\in[n]\setminus\{x\}}
\ \sum_{\substack{C_1,C_2\subseteq [v]\\
C_1\cap C_2=\emptyset\\
C_1\cup C_2=C}}
\DP_{H_1,r}(x,C_1)
\DP_{H_2,r'}(y,C_2)
X_{xy},
\]
which is the algorithm's recurrence.
The induction hypothesis gives the correct tables for $H_1$ and $H_2$, so the table for $H$ is also correct.

Each merge scans at most $n(n-1)$ ordered pairs of host vertices and $3^v$ disjoint ordered pairs of color sets, taking $2^{O(v)}n^2$ time.
The recursion tree has $2v-1$ nodes, and initializing all leaves takes $O(v2^vn)$ time.
The total running time is therefore $O(v)2^{O(v)}n^2+O(v2^vn)=2^{O(v)}n^2$.
\end{proof}

\subsection{Proof of Lemma~\ref{lemma_recovery_concentration}}
\label{apx:recovery-concentration}

\begin{proof}
Fix $H\in\mathcal H$, a root $a\in\mathcal R(H)$, and a host vertex $i$.
Let $\Phi_{H,a,i}\colonequals\{\phi:V(H)\hookrightarrow[n]:\phi(a)=i\}$, and for $\phi\in\Phi_{H,a,i}$ define $I_\phi(\zeta)=\mathbf 1\{\zeta\circ\phi\text{ is injective on }V(H)\}$.
The colors on the image of $\phi$ are independent and uniform in $[v]$, so $\Pr_\zeta[I_\phi(\zeta)=1]=\rho_v$.
Linearity of expectation over all triples $(H,a,\phi)$ gives
\[
\E_{\zeta_1,\ldots,\zeta_q}
[\widehat F_{\mathcal H,i}(G)]
=F_{\mathcal H,i}(G).
\]

For one coloring, define
\[
Y_i(G,\zeta)
\colonequals
\sum_{H\in\mathcal H}\sum_{a\in\mathcal R(H)}
\sum_{\phi\in\Phi_{H,a,i}}
I_\phi(\zeta)\chi_{H,\phi}(G).
\]
Then
\[
\widehat F_{\mathcal H,i}(G)
=
\frac{1}{q\rho_v}\sum_{r=1}^qY_i(G,\zeta_r).
\]
Conditioning on $G$ and using independence of the colorings,
\[
\Var_\zeta(\widehat F_{\mathcal H,i}\mid G)
\le
\frac{1}{q\rho_v^2}\,
\E_\zeta[Y_i(G,\zeta)^2\mid G].
\]

For $b\in\{0,1\}$, recall $\mathsf B_{i,b}=\{\mathbf 1_{\{i\in C\}}=b\}$.
We average over $G\sim\PP_{n,k}$ conditional on $\mathsf B_{i,b}$.
For any two rooted embeddings $\phi,\psi$, we have $\E_\zeta[I_\phi(\zeta)I_\psi(\zeta)]\le\rho_v$ and
\[
\E_G[\chi_{H,\phi}(G)\chi_{H',\psi}(G)\mid\mathsf B_{i,b}]\geq 0
\]
because the latter expectation is the conditional probability that the vertices of the symmetric difference lie in $C$.
Hence
\[
\E_{G,\zeta}[Y_i(G,\zeta)^2\mid\mathsf B_{i,b}]
\le
\rho_v\,\E_G[F_{\mathcal H,i}(G)^2\mid\mathsf B_{i,b}].
\]
Since $q\ge\rho_v^{-2}$,
\[
\E_G[\Var_\zeta(\widehat F_{\mathcal H,i}\mid G)\mid\mathsf B_{i,b}]
\le
\frac{1}{q\rho_v}\E_G[F_{\mathcal H,i}(G)^2\mid\mathsf B_{i,b}]
\le
\rho_v\,\E_G[F_{\mathcal H,i}(G)^2\mid\mathsf B_{i,b}].
\]
The estimator is conditionally unbiased, so the left side is
\[
\E_{G,\zeta}
\left[
\left(\widehat F_{\mathcal H,i}(G)-F_{\mathcal H,i}(G)\right)^2
\,\middle|\,\mathsf B_{i,b}
\right].
\]

It remains to normalize.
Lemma~\ref{lem:p-var} gives, uniformly in $i$ and $b$,
\[
\E_G\left[
\left(\frac{F_{\mathcal H,i}}{\mu}-b\right)^2
\,\middle|\,\mathsf B_{i,b}
\right]
=\frac{\Var(F_{\mathcal H,i}\mid\mathsf B_{i,b})}{\mu^2}
\leq \delta_v,
\qquad
\delta_v\colonequals(1+o(1))(c_0\lambda^2)^{-(v-1)}=o(1).
\]
Here we used the conditional means
$\E[F_{\mathcal H,i}\mid\mathsf B_{i,b}]=b\mu$.
It follows that
\[
\frac{\E_G[F_{\mathcal H,i}^2\mid\mathsf B_{i,b}]}{\mu^2}
=b+\frac{\Var(F_{\mathcal H,i}\mid\mathsf B_{i,b})}{\mu^2}
\leq 1+\delta_v.
\]
Dividing the preceding color-coding variance bound by $\mu^2$ therefore gives
\[
\E_{G,\zeta}\left[
\left(
\frac{\widehat F_{\mathcal H,i}-F_{\mathcal H,i}}{\mu}
\right)^2
\,\middle|\,\mathsf B_{i,b}
\right]
\leq \rho_v(1+\delta_v)
=O(\rho_v),
\]
which proves the claimed bound on the error from color coding.

Finally, $\E_\zeta[\widehat F_{\mathcal H,i}-F_{\mathcal H,i}\mid G,C]=0$ by conditional unbiasedness.
The cross term therefore vanishes after conditioning on $\mathsf B_{i,b}$, giving
\begin{align*}
\E_{G,\zeta}\left[
\left(
\frac{\widehat F_{\mathcal H,i}}{\mu}-b
\right)^2
\,\middle|\,\mathsf B_{i,b}
\right]
&=
\E_{G,\zeta}\left[
\left(
\frac{\widehat F_{\mathcal H,i}-F_{\mathcal H,i}}{\mu}
\right)^2
\,\middle|\,\mathsf B_{i,b}
\right]\\
&\quad+
\E_G\left[
\left(
\frac{F_{\mathcal H,i}}{\mu}-b
\right)^2
\,\middle|\,\mathsf B_{i,b}
\right]\\
&\leq O(\rho_v)+\delta_v.
\end{align*}
For $v=\lfloor\sqrt{\log n}\rfloor$, Stirling's formula gives $\rho_v=\exp(-\Theta(v))$, and $c_0\lambda^2>1$ gives $\delta_v=\exp(-\Omega(v))$.
The last display is therefore $\exp(-\Omega(\sqrt{\log n}))$, uniformly in $i$ and $b$.
Averaging over $\mathsf B_{i,0}$ and $\mathsf B_{i,1}$ gives the unconditional bound.
\end{proof}

\subsection{Proof of Lemma \ref{lem:treewidth_algo}}\label{apx:treewidth-alg}

\begin{proof}
Choose a pattern vertex $u_r$, root the given tree decomposition at a bag containing it, and convert it to a nice tree decomposition $\sD=(\sT,\{B_x\}_{x\in V(\sT)})$ of width at most $t$ with $O(v)$ nodes.
Attach a chain of forget nodes above the root ending at $B_r=\{u_r\}$.
This preserves the width and the $O(v)$ size bound.

For each node $x\in V(\sT)$, let $V_x$ be the set of pattern vertices appearing in bags of the subtree rooted at $x$, and let $H_x\colonequals H[V_x]$.
For each injective assignment $\psi:B_x\hookrightarrow [n]$ and color set $C\subseteq [v]$, the algorithm maintains
\[
\DP_x(\psi,C)
=
\sum_{\phi:V(H_x)\hookrightarrow[n]}
\mathbf{1}\!\left\{
\begin{array}{l}
\phi|_{B_x}=\psi,\\
\zeta(\phi(V(H_x)))=C,\\
\zeta\circ\phi \text{ is injective on }V(H_x)
\end{array}
\right\}
\prod_{e\in E(H_x)\setminus E(H[B_x])} X_{\phi(e)}.
\]
The product includes each edge once one of its endpoints has been forgotten.

We prove that the algorithm computes $\DP_x(\psi,C)$ for every $x,\psi,C$.
At the root, $B_r=\{u_r\}$ and $E(H[B_r])=\emptyset$, so this gives
\[
\sum_{z\in[n]} \DP_r(\psi_z,[v])
=
\sum_{\phi:V(H)\hookrightarrow[n]}
\mathbf{1}\{\zeta\circ\phi\text{ is injective on }V(H)\}
\prod_{e\in E(H)} X_{\phi(e)}
=
T_H^{\mathrm{col}}(G,\zeta),
\]
where $\psi_z(u_r)=z$.

We proceed by induction from the leaves to the root.

\medskip \noindent \textbf{Leaf node.} Here $B_x=\{u\}$ and $H_x$ consists only of $u$.
The edge product is empty, so $\DP_x(\psi,C)=1$ if $C=\{\zeta(\psi(u))\}$ and $0$ otherwise, as initialized by the algorithm.

\medskip \noindent \textbf{Introduce node.} Let $y$ be the child of $x$, with $B_x=B_y\cup\{u\}$.
The vertex $u$ does not appear in $H_y$, and $V(H_x)=V(H_y)\cup\{u\}$.
Every edge of $H_x$ incident to $u$ has both endpoints in $B_x$, so none contributes yet.
Each colorful embedding counted by $\DP_x(\psi,C)$ restricts to one of $H_y$ extending $\psi|_{B_y}$ and using the colors $C\setminus\{\zeta(\psi(u))\}$.
Conversely, each such embedding extends uniquely by mapping $u$ to $\psi(u)$.
Therefore
\[
\DP_x(\psi,C)=
\DP_y\bigl(\psi|_{B_y},\, C\setminus\{\zeta(\psi(u))\}\bigr)
\]
whenever $\zeta(\psi(u))\in C$, and $\DP_x(\psi,C)=0$ otherwise, as in the algorithm.

\medskip \noindent \textbf{Forget node.} Let $y$ be the child of $x$, with $B_x=B_y\setminus\{u\}$.
Fix $\psi:B_x\hookrightarrow[n]$ and $C\subseteq[v]$.
For each image $z\in[n]\setminus\psi(B_x)$ of $u$, extend $\psi$ to $\psi':B_y\hookrightarrow[n]$ by setting $\psi'(u)=z$.
By induction, $\DP_y(\psi',C)$ gives the contribution of the corresponding embeddings before $u$ is forgotten.

The edges that now contribute are exactly $\{u,w\}\in E(H)$ with $w\in B_x$: both endpoints were in $B_y$, but only $w$ remains in $B_x$.
All other edges retain their previous status.

Thus we multiply by
\[
m(u,\psi',\psi)
=
\prod_{\substack{w\in B_x\\ \{u,w\}\in E(H)}} X_{\psi'(u)\psi(w)}.
\]
Summing over $z$ gives
\[
\DP_x(\psi,C)
=
\sum_{z\in[n]\setminus \psi(B_x)}
m(u,\psi',\psi)\,\DP_y(\psi',C),
\]
as in the algorithm.

\medskip \noindent \textbf{Join node.} Let $y_1,y_2$ be the children of $x$, with $B_x=B_{y_1}=B_{y_2}$.
Fix $\psi:B_x\hookrightarrow[n]$, and put $C_\psi\colonequals\zeta(\psi(B_x))$.
The graphs $H_{y_1}$ and $H_{y_2}$ intersect exactly in $B_x$.
A colorful embedding of $H_x$ extending $\psi$ is therefore equivalent to a pair of colorful embeddings of these graphs extending $\psi$, with color sets satisfying $C_1\cup C_2=C$ and $C_1\cap C_2=C_\psi$.
The intersection condition ensures that the embeddings share only the images of the bag vertices.
Their signed contributions multiply because their edge sets are disjoint outside the bag, giving
\[
\DP_x(\psi,C)
=
\sum_{\substack{C_1,C_2\subseteq [v]\\
C_1\cup C_2=C\\
C_1\cap C_2=C_\psi}}
\DP_{y_1}(\psi,C_1)\,\DP_{y_2}(\psi,C_2),
\]
as in the algorithm.

This completes the induction and proves correctness.

\medskip \noindent \textbf{Running time.} Each bag has size at most $t+1$, giving at most $n^{t+1}$ assignments and $2^v$ color sets per assignment.
Thus each node has at most $2^v n^{t+1}$ states.

At a leaf or introduce node, each state takes constant time, giving $2^v n^{t+1}$ time per node.

At a forget node, the parent bag has size at most $t$, giving at most $2^v n^t$ states.
For each, we sum over at most $n$ images of the forgotten vertex and compute $m(u,\psi',\psi)$ in $O(t)$ time.
The total time per node is $2^v n^{t+1}$, absorbing the factor $O(t)$ into the bound.

At a join node, for fixed $\psi,C$, there are at most $3^v$ pairs $(C_1,C_2)$ with $C_1\cup C_2=C$ and $C_1\cap C_2=C_\psi$.
Each join node therefore takes $2^{O(v)}n^{t+1}$ time.

There are $O(v)$ nodes, so the total running time is $O(v)2^{O(v)}n^{t+1}=2^{O(v)}n^{t+1}$.
\end{proof}

\subsection{Proof of Lemma~\ref{lem:enumerate-bounded-treewidth-patterns}}
\label{apx:enumerate-bounded-treewidth-patterns}

\begin{proof}
Handle the finitely many cases $v\leq t$ by exhaustive enumeration, and assume $v\geq t+1$.
Every graph of treewidth at most $t$ is a spanning subgraph of a $t$-tree on the same vertex set.
To enumerate labeled $t$-trees, choose the initial $(t+1)$-clique, order the remaining vertices, and choose an existing $t$-clique to which each new vertex is attached.
At each step there are $O_t(v)$ available $t$-cliques, so there are $\exp(O_t(v\log v))$ construction histories.

Each $t$-tree has $O_t(v)$ edges, so enumerating its spanning subgraphs adds a factor $2^{O_t(v)}$.
Discard disconnected subgraphs and remove duplicates by comparing the bit vectors of their edge sets.
For each remaining graph, retain the decomposition of width $t$ given by its construction history: the initial clique is one bag, and each added vertex forms a bag with the $t$-clique to which it was attached.
This produces all graphs in $\mathcal G^{\mathrm{conn}}_{v,\leq t}$ and their decompositions in time and space $\exp(O_t(v\log v))$.

For each graph $H$ and vertex $a\in V(H)$, root the decomposition at a bag containing $a$, convert it to a nice decomposition, and attach a chain of forget nodes ending at $\{a\}$.
This takes $\poly_t(v)$ time per pair $(H,a)$, within the same total bound.
\end{proof}

\subsection{Proof of Lemma \ref{lemma_concentration}}
\label{apx:concentration}
\begin{proof}
Fix $G$.
For each $H\in\sH$, let $\Phi_H\colonequals\{\phi:V(H)\hookrightarrow[n]\}$, and for $\phi\in\Phi_H$ define $I_\phi(\zeta)\colonequals\mathbf 1\{\zeta\circ\phi\text{ is injective on }V(H)\}$.
The colors on the image of $\phi$ are independent and uniform in $[v]$, so $\E_\zeta[I_\phi(\zeta)\mid G]=v!/v^v=\rho_v$.
By linearity of expectation,
\[
\E_\zeta[T_H^{\mathrm{col}}(G,\zeta)\mid G] = \sum_{\phi\in \Phi_H}\E_\zeta[I_\phi(\zeta)\mid G]\chi_\phi(G) = \rho_v\,T_H(G).
\]
Define $F_{\sH}^{\mathrm{col}}(G,\zeta)\colonequals\sum_{H\in\sH}T_H^{\mathrm{col}}(G,\zeta)$.
Summing over $H$ gives $\E_\zeta[F_{\sH}^{\mathrm{col}}(G,\zeta)\mid G]=\rho_v F_{\sH}(G)$.
For independent colorings $\zeta_1,\dots,\zeta_q$ as in the statement, put $Y_s(G)\colonequals F_{\sH}^{\mathrm{col}}(G,\zeta_s)$.
Then
\[
\widehat F_{\sH}(G)=\frac{1}{q\rho_v}\sum_{s=1}^q Y_s(G),
\]
so $\E_{\zeta_1,\dots,\zeta_q}[\widehat F_{\sH}(G)\mid G]=F_{\sH}(G)$.

Conditional on $G$, the variables $Y_s(G)$ are independent and identically distributed, so
\[
\Var_{\zeta_1,\dots,\zeta_q}(\widehat F_{\sH}(G)\mid G)
=
\frac{1}{\rho_v^2}\cdot \frac{1}{q}\,\Var_\zeta(Y_1(G)\mid G).
\]
To bound this variance, let $\DD$ be either $\PP$ or $\QQ$.
Expanding $Y_1(G)=\sum_{H\in\sH}\sum_{\phi\in\Phi_H}I_\phi(\zeta)\chi_\phi(G)$ and using independence of $G$ and $\zeta$ gives
\[
\E_{\DD,\zeta}[Y_1^2] = \sum_{H,H'\in \sH}\sum_{\phi\in \Phi_H}\sum_{\psi\in \Phi_{H'}}\E_\zeta[I_\phi(\zeta)I_\psi(\zeta)]\,\E_{\DD}[\chi_\phi\chi_\psi].
\]
For every pair of embeddings, $\E_\zeta[I_\phi I_\psi]\le\E_\zeta[I_\phi]=\rho_v$ and $\E_{\DD}[\chi_\phi\chi_\psi]\ge0$.
Therefore
\[
\E_{\DD,\zeta}[Y_1^2]
\le
\rho_v
\sum_{H,H'\in \sH}\sum_{\phi\in \Phi_H}\sum_{\psi\in \Phi_{H'}} \E_{\DD}[\chi_\phi\chi_\psi] = \rho_v\,\E_{\DD}[F_{\sH}(G)^2].
\]
Since variance is bounded by the second moment and $q=\lceil\rho_v^{-2}\rceil$, conditional unbiasedness gives
\begin{align*}
\E_{\DD,\zeta}\!\left[(\widehat F_{\sH}-F_{\sH})^2\right]
&=\E_{G\sim\DD}\left[\Var_{\zeta_1,\dots,\zeta_q}(\widehat F_{\sH}(G)\mid G)\right] \\
&\le \frac{1}{q\rho_v}\E_{\DD}[F_{\sH}(G)^2]
\le \rho_v\E_{\DD}[F_{\sH}(G)^2].
\end{align*}
Since $v\to\infty$, we have $\rho_v=\exp(-\Theta(v))=o(1)$.
Theorem~\ref{thm:general-separation}, with $v\leq\log n/(5\log\log n)$, also gives
\begin{align*}
\frac{\E_{\QQ}[F_{\sH}(G)^2]}{\E_{\PP}[F_{\sH}(G)]^2}&=o(1),\\
\frac{\E_{\PP}[F_{\sH}(G)^2]}{\E_{\PP}[F_{\sH}(G)]^2}&=1+o(1).
\end{align*}
Therefore, for both $\DD=\PP$ and $\DD=\QQ$,
\[
\frac{\E_{\DD,\zeta}\!\left[(\widehat F_{\sH}-F_{\sH})^2\right]}
{\E_{\PP}[F_{\sH}(G)]^2}
\to 0. \qedhere
\]
\end{proof}
\section{Omitted proofs from Section \ref{sec:mm-acceleration}}
\subsection{Proof of Lemma~\ref{lem:one-partial-2tree-mm}}\label{app:proof-one-partial-2tree-mm}

\begin{proof}
Fix a rooted construction of the $2$-tree completion $K$.
For each oriented boundary edge $e=(p,q)$, let $V_e$ consist of $p,q$ and all vertices introduced below $e$.
For each color set $S\subseteq[v]$, our dynamic program computes an $n\times n$ matrix $D_e^S$ satisfying
\[
D_e^S(a,b)
=
\sum_{\substack{
\phi:V_e\hookrightarrow[n]\\
\phi(p)=a,\ \phi(q)=b\\
\zeta\circ\phi\text{ injective}\\
\zeta(\phi(V_e))=S
}}
\prod_{\{x,y\}\in E(H[V_e])} X_{\phi(x)\phi(y)}.
\]
We prove this invariant by induction from the leaves toward the root.

For $e=(p,q)$ and $a,b\in[n]$, let $L_e(a,b)=X_{ab}$ if $\{p,q\}\in E(H)$ and $L_e(a,b)=1$ otherwise.
List the vertices introduced directly on $e$ as $w_1,\ldots,w_m$, in the order of construction, and initialize
\[
D_{e,0}^S(a,b)
=
\mathbf 1\{a\ne b,\ \zeta(a)\ne\zeta(b),\ S=\{\zeta(a),\zeta(b)\}\}L_e(a,b).
\]
If no vertex is introduced below $e$, then $V_e=\{p,q\}$, and the invariant holds.

For a vertex $w=w_j$ introduced directly on $e=(p,q)$, write $e_L=(p,w)$ and $e_R=(w,q)$ for its child edges.
For $\gamma\in[v]$, let $P_\gamma$ be the diagonal matrix with $(P_\gamma)_{zz}=\mathbf 1\{\zeta(z)=\gamma\}$.
Using the previously computed child tables, define
\[
B_w^S
=
\sum_{\substack{
S_L,S_R,\gamma\\
S_L\cup S_R=S\\
S_L\cap S_R=\{\gamma\}
}}
D_{e_L}^{S_L}P_\gamma D_{e_R}^{S_R}.
\]
By induction, $D_{e_L}^{S_L}(a,z)$ and $D_{e_R}^{S_R}(z,b)$ sum over the two child pieces with $w$ mapped to $z$.
Thus
\[
(D_{e_L}^{S_L}P_\gamma D_{e_R}^{S_R})(a,b)
=\sum_{z\in[n]}D_{e_L}^{S_L}(a,z)\mathbf 1\{\zeta(z)=\gamma\}D_{e_R}^{S_R}(z,b).
\]
The conditions $S_L\cup S_R=S$ and $S_L\cap S_R=\{\gamma\}$ ensure that the child embeddings overlap only at $w$ and use distinct colors elsewhere.
Hence $B_w^S(a,b)$ sums the colorful signed contributions of the whole branch below $w$, with $p\mapsto a$ and $q\mapsto b$.

After computing $B_{w_j}^S$, we update the accumulated table by
\[
D_{e,j}^S(a,b)
=
\sum_{\substack{S_0,S_1\subseteq[v]\\
S_0\cup S_1=S\\
S_0\cap S_1=\{\zeta(a),\zeta(b)\}}}
D_{e,j-1}^{S_0}(a,b)B_{w_j}^{S_1}(a,b),
\]
where the right-hand side is zero when $a=b$ or $\zeta(a)=\zeta(b)$.
Distinct branches below $e$ intersect only at $p$ and $q$, so their color sets must intersect exactly at $\{\zeta(a),\zeta(b)\}$.
The entrywise product multiplies their signed contributions, and the sum records the union of their colors.
The initial table $D_{e,0}$ contributes the factor for $\{p,q\}$ exactly once if this edge belongs to $H$.
Thus every edge of $H[V_e]$ is counted exactly once, and no completion edge contributes a factor.

After processing all branches directly below $e$, set $D_e^S=D_{e,m}^S$.
This proves the invariant.

For each $w$, computing $B_w^S$ for all $S$ requires $2^{O(v)}$ products of $n\times n$ matrices.
The entrywise products and sums over color sets cost $2^{O(v)}n^2$, which is dominated by the matrix products.
There are $O(v)$ boundary edges and branch vertices, so the total running time is $2^{O(v)}n^{\omega+o(1)}$.

Let $e_0=(p,q)$ be the root edge.
The invariant gives
\[
T_H^{\mathrm{col}}(X,\zeta)
=
\sum_{a,b\in[n]}D_{e_0}^{[v]}(a,b).
\]
If the desired root vertex $r$ lies on the root edge, say $e_0=(r,r')$, then fixing the image of $r$ to be $i$ and summing over the image of $r'$ gives
\[
T^{\mathrm{col}}_{H,r}(X,\zeta;i)
=
\sum_{j\in[n]}D_{e_0}^{[v]}(i,j).
\]
If $r$ is not on the root edge, choose any edge of $K$ containing $r$ as the initial edge.
A $2$-tree admits a construction from any prescribed edge: repeatedly remove a simplicial vertex of degree $2$ outside that edge, then reverse the elimination order.
Finding and storing these orders for the $O(v)$ choices needed here takes polynomial time in $v$.
Thus the same runtime bound holds for the full vector at any prescribed root $r$.
\end{proof}

\subsection{Proof of Proposition~\ref{prop:randomized-seed-boosting}}
\label{app:proof-randomized-seed-boosting}

\begin{proof}
For $a=0$ there is nothing to prove.
Fix $a\ge1$ and $\lambda>2^{-a/2}\lambda_0$, put $\varepsilon=(2^{a/2}\lambda-\lambda_0)/2>0$, and assume $k\ge\lambda\sqrt n$.
Independently sample $R=A n^{a/2}\log n$ uniform random $a$-sets with replacement, where $A=A(a,\lambda_0,\varepsilon)$ is sufficiently large.
A sampled seed lies in $C$ with probability $p=(k)_a/(n)_a\ge\Omega_{a,\lambda}(n^{-a/2})$.
Let $Z$ count these occurrences and put $\mu_Z=Rp$.
Conditional on $C$, we have $Z\sim\operatorname{Bin}(R,p)$ and $\mu_Z=\Omega(\log n)$, so $Z\ge\mu_Z/2$ with probability $1-o(1)$.
For each sampled $S$ that is a clique, form
\[
G_S=G\!\left[\bigcap_{u\in S}N(u)\setminus S\right].
\]
If $S\subseteq C$, put $K=k-a$ and $q=2^{-a}$.
Then
\[
|V(G_S)|=(1+o(1))\bigl(qn+(1-q)K\bigr),
\]
and $G_S$ contains the planted clique $C\setminus S$.
The function $x\mapsto x/\sqrt{qn+(1-q)x}$ is increasing, so the effective signal is minimized at the smallest allowed $k$ and is at least
\[
(1-o(1))\frac{\lambda\sqrt n}{\sqrt{qn+(1-q)\lambda\sqrt n}}
=(1-o(1))2^{a/2}\lambda.
\]
For sufficiently large $n$, this signal is at least $\lambda_0+\varepsilon$.
For any fixed sampled occurrence, conditional on its seed $S\subseteq C$ and vertex set $W=V(G_S)$, the clique $C\setminus S$ is uniform among the $K$-subsets of $W$.
Conditional on this clique, $G_S$ has the ordinary planted clique law on $W$.
Indeed, the edges from $S$ that determine $W$ are independent of the edges inside $W$, and their likelihood is the same for every $K$-subset of $W$.
Using fresh randomness for each base call, the probability of failure for any fixed occurrence of a seed in $C$ is therefore $o(1)$, including the exceptional probability in the size estimate above.

Let $Z_{\mathrm{bad}}$ count the occurrences of seeds in $C$ on which the base call fails.
By linearity of expectation, $\E[Z_{\mathrm{bad}}]=o(Rp)=o(\mu_Z)$.
Consequently, Markov's inequality and the concentration of $Z$ give
\[
\Pr[Z_{\mathrm{bad}}\ge Z]
\le
\Pr[Z<\mu_Z/2]+\Pr[Z_{\mathrm{bad}}\ge\mu_Z/2]
=o(1).
\]
Thus, with probability $1-o(1)$, at least one call recovers $C\setminus S$.
For each output $R_S$, we check whether $R_S\cup S$ is a clique of size $k$.
A successful call with $S\subseteq C$ produces such a clique.
Uniformly for $k\ge\lambda\sqrt n$, with high probability no candidate passes under $\QQ_n$, which contains no $k$-clique, and every candidate that passes under the planted law equals the unique planted $k$-clique $C$.
Finding the common neighbors of $S$ takes $O(n)$ time per seed, and checking a candidate takes $O(k^2)\le O(n^2)$ time.
Since $r\ge2$, the base calls dominate these costs, giving total runtime $n^{r+a/2+o(1)}$.
\end{proof}

\subsection{Proof of Lemma~\ref{lem:good-seeded-row}}
\label{app:proof-good-seeded-row}

\begin{proof}
Expose the edges between $Q$ and $V_1\setminus Q$, which determine $W=W_Q(V_1)$.
Conditional on $Q\subseteq C\cap V_1$ and any possible values of $W$ and $C_W=(C\cap V_1)\setminus Q$, the graph $G[W]$ has the planted clique law with clique $C_W$: all edges inside $C_W$ are present, and all other edges inside $W$ are independent fair coin flips.

For $x\in V_1$, the mask $w_Q(x)=\mathbf 1\{x\notin Q\}\prod_{q\in Q}\frac{1+X_{qx}}2$ is the indicator of $x\in W$.
Thus only embeddings with image in $W$ contribute, each with its ordinary signed tree character on $G[W]$.
The seeded statistic rooted at $i$ therefore equals the ordinary rooted tree statistic on $G[W]$ for $i\in W$ and is zero otherwise.

The hypotheses $|W|\ge n^\alpha$ and $|C_W|\ge(e^{-1/2}+\eta)\sqrt{|W|}$ ensure that $v=o(\log |W|/\log\log |W|)$ and $v^2=o(|C_W|)$, so the rooted moment and color-coding estimates apply with ambient size $|W|$, and the normalizing factor is positive for all sufficiently large $n$.
Since $|C_W|\ge (\exp(-1/2)+\eta)\sqrt{|W|}$, the rooted tree moment estimates give, after the usual positive normalization,
\[
\E\left[
\left(
p_{Q,i}-\mathbf 1_{\{i\in C_W\}}
\right)^2
\mid W,C_W
\right]
\le \exp(-\Omega(v))
\]
uniformly over $i\in W$.
The color-coding estimator is that of Lemma~\ref{lemma_recovery_concentration}, with $\mathcal H=\mathcal T_v$ and ambient vertex set $W$.
Combining its variance bound and unbiasedness with the bound for the exact statistic gives
\[
\E\left[
\left(
\widehat p_{Q,i}-\mathbf 1_{\{i\in C_W\}}
\right)^2
\mid W,C_W
\right]
\le \delta_v,
\qquad
\delta_v=\exp(-\Omega(v)).
\]

We now justify selecting the $M_{N,k}$ largest scores, where $N=|V_1|$.
Let $V_{\mathrm{thr}}(Q)=\{i\in W:\widehat p_{Q,i}> 1/2\}$.
By Markov's inequality, as used in Lemma~\ref{lem:rounding}, with conditional probability $1-o(1)$,
\[
|V_{\mathrm{thr}}(Q)|\le |C_W|+O(|W|\delta_v^{1/2})
\le k+o(N/(\log N)^3),
\]
and
\[
|V_{\mathrm{thr}}(Q)\cap C_W|\ge (1-o(1))|C_W|.
\]
Normalization multiplies all scores by the same positive factor and preserves their order.
If $M_{N,k}=N$, the selected set contains $V_{\mathrm{thr}}(Q)$.
Otherwise, the preceding bound gives $|V_{\mathrm{thr}}(Q)|<M_{N,k}$, so the same containment holds.
Thus the selected set contains all but $o(k)$ vertices of $C_W$.
Finally,
\[
\frac{k^2}{M_{N,k}}
\ge
\Omega\!\left(\min\left\{k,\frac{k^2(\log N)^3}{N}\right\}\right)
=\Omega((\log n)^3),
\]
as needed for the union bounds in the cleanup between $V_1$ and $V_2$.
\end{proof}

\subsection{Proof of Lemma~\ref{lem:batched-seeded-tree-dp}}
\label{app:proof-batched-seeded-tree-dp}

\begin{proof}
For each rooted subtree $U$ of $T$, color set $A\subseteq[v]$, seed tuple $Q$, and root image $x\in V_1$, let $D_U^A(Q,x)$ sum the signed contributions of colorful embeddings of $U$ with color set $A$ and root image $x$, including the mask $w_Q(z)=\mathbf 1\{z\notin Q\}\prod_{q\in Q}\frac{1+X_{qz}}2$ at each image vertex $z$.
For a leaf $U$, $D_U^{\{\zeta(x)\}}(Q,x)=w_Q(x)$ and all other color states are zero.
All $R|V_1|$ mask values can be computed in $O_s(Rn)=n^{s/2+1+o(1)}$ time.

Form $U$ by attaching a child subtree $W$ to the root of $U_0$.
Writing $x$ and $y$ for the images of their respective roots gives
\[
D_U^A(Q,x)
=
\sum_{\substack{A_0,B\subseteq[v]\\A_0\cap B=\emptyset\\A_0\cup B=A}}
D_{U_0}^{A_0}(Q,x)
\left(\sum_{y\in V_1}X_{xy}D_W^B(Q,y)\right).
\]
The mask at $x$ occurs in $D_{U_0}^{A_0}$, and the disjointness of $A_0$ and $B$ ensures injectivity between the two pieces.
For fixed $W$ and $B$, the table $D_W^B$ is an $R\times |V_1|$ matrix.
The product $D_W^B X_{V_1,V_1}^{\top}$ computes the parenthesized sums for all $Q$ and $x$.
After padding, its shape is $(n^{s/2+o(1)}\times n)\cdot(n\times n)$, so it costs $n^{\omega(1,1,s/2)+o(1)}$ by transpose symmetry.
Multiplying entrywise by $D_{U_0}^{A_0}$ and summing over disjoint color sets costs $2^{O(v)}Rn$, no more than writing the products.
There are $2^{O(v)}$ rooted subpieces and choices of color sets, giving total runtime $2^{O(v)}n^{\omega(1,1,s/2)+o(1)}$ for each rooted tree and coloring.
The final table gives the score matrix $\bigl(T^{\mathrm{col}}_{Q,T,r}(G,\zeta;i)\bigr)_{Q,i}$.
\end{proof}

\subsection{Proof of Lemma~\ref{lem:batched-cleanup}}
\label{app:proof-batched-cleanup}

\begin{proof}
For each seed tuple $Q$, form $V_3(Q)\subseteq V_1$ by selecting the $M_{N,k}$ largest scores.
These scores use only edges inside $V_1$ and edges from the seeds to $V_1$.
Thus, conditional on $C$, the sets $V_3(Q)$ are independent of the edges between $V_1$ and $V_2$ outside the clique.
The edges between $C\cap V_1$ and $C\cap V_2$ are deterministic.

Let $B$ be the $R\times |V_1|$ indicator matrix of the sets $V_3(Q)$.
The product $B A_{V_1,V_2}$ counts the neighbors of every $v\in V_2$ in each $V_3(Q)$.
After padding, its shape is $(n^{s/2+o(1)}\times n)\cdot(n\times n)$, so it costs $n^{\omega(1,1,s/2)+o(1)}$.
Define
\[
C_2(Q)
\colonequals
\left\{
v\in V_2:
|E(\{v\},V_3(Q))|
\ge
\frac{|V_3(Q)|}{2}+0.2499(1-n^{-2/5})k
\right\}.
\]
For the seed $Q^\star$, the set $V_3(Q^\star)$ contains a $1-o(1)$ fraction of $C\cap V_1$ and has size $M_{N,k}$.
Hence
\[
\frac{|V_3(Q^\star)\cap (C\cap V_1)|^2}{|V_3(Q^\star)|}
\ge
\Omega\!\left(\frac{k^2}{M_{N,k}}\right)
=
\Omega((\log n)^3).
\]
Hoeffding's inequality and a union bound, as in Lemma~\ref{lem:implementable-threshold}, give $C_2(Q^\star)=C\cap V_2$ with probability $1-o(1)$.

Let $B_2$ be the $R\times |V_2|$ indicator matrix of the sets $C_2(Q)$.
The product $A_{V_1,V_2}B_2^\top$ counts the neighbors of every $u\in V_1$ in each $C_2(Q)$, with the same runtime bound.
Thresholding at $|C_2(Q)|$ gives
\[
C_1(Q)=\{u\in V_1:\ |E(\{u\},C_2(Q))|=|C_2(Q)|\}.
\]
Since $C_2(Q^\star)=C\cap V_2$, Lemma~\ref{lem:recover-main-from-holdout} gives $C_1(Q^\star)=C\cap V_1$ with probability $1-o(1)$.
Thus this seed outputs $C$.

Let $D$ be the $R\times n$ indicator matrix of the candidates $C_1(Q)\sqcup C_2(Q)$.
The product $DA$ counts each vertex's neighbors in each candidate.
We accept a row exactly when its candidate has size $k$ and every selected coordinate of $DA$ equals $k-1$.
This product has the same shape and runtime bound as the preceding two products, and forming $D$ and scanning the output costs $O(Rn)$.
Thus checking all candidates takes $n^{\omega(1,1,s/2)+o(1)}$ time for every $k\ge\lambda\sqrt n$, proving the runtime and correctness claims.
\end{proof}
\section{Omitted Proofs from Section~\ref{sec:derandomization}}
\label{app:sec6-proofs}

\subsection{Proof of Lemma~\ref{lem:balanced-hash-detection}}
\label{app:proof-balanced-hash-detection}

\begin{proof}
Write
\[
\widetilde F_{\sH}(G)
=
\sum_{H\in\sH}\sum_{\phi\in\Inj(V(H),[n])}
c_{\phi,H}\chi_\phi(G).
\]
If $S=\phi(V(H))$, then $c_{\phi,H}=N_{\mathcal T}(S)/(\beta_v|\mathcal T|)$, so $1/\delta\le c_{\phi,H}\le\delta$.
This gives the expectation bounds.
Under $\QQ$, distinct edge characters are orthogonal.
After grouping identical characters, all coefficients are nonnegative and change by at most a factor $\delta$, giving the null variance bound.

For the planted variance, write $c_{H,\phi}=c_{\phi,H}$ and $\chi_{H,\phi}=\chi_\phi$.
The mean satisfies
\[
\E_{\PP}\widetilde F_{\mathcal H}
=
\sum_{H,\phi}c_{H,\phi}\alpha_v
\ge
\delta^{-1}\E_{\PP}F_{\mathcal H}.
\]
The variance is
\[
\Var_{\PP}(\widetilde F_{\mathcal H})
=
\sum_{H,H',\phi,\psi}
c_{H,\phi}c_{H',\psi}
\bigl(\alpha_{\sigma(H,H';\phi,\psi)}-\alpha_v^2\bigr).
\]
In the corresponding expansion for $F_{\mathcal H}$, let $P$ sum the positive terms and $N$ sum the absolute values of the negative terms.
Since $\sigma(H,H';\phi,\psi)\le 2v$ and $\alpha_t$ decreases in $t$,
\[
\frac{N}{(\E_{\PP}F_{\mathcal H})^2}
\le
1-\frac{\alpha_{2v}}{\alpha_v^2}.
\]
Since $v^2=o(k)$, we have $2v\le k$ for sufficiently large $n$, and
\[
\frac{\alpha_{2v}}{\alpha_v^2}
=
\prod_{j=0}^{v-1}(1-x_j),
\qquad
x_j=\frac{v(n-k)}{(k-j)(n-v-j)}.
\]
Here $0\le x_j\le 2v/k$, so
\[
1-\frac{\alpha_{2v}}{\alpha_v^2}
\le
\sum_{j=0}^{v-1}x_j
\le
\frac{2v^2}{k}
=o(1).
\]
Since $\Var_{\PP}(F_{\mathcal H})=P-N$, the assumed normalized variance bound gives $P/(\E_{\PP}F_{\mathcal H})^2=o(1)$.
Dropping the negative terms and using $c_{H,\phi}c_{H',\psi}\le\delta^2$ gives
\[
\frac{\Var_{\PP}(\widetilde F_{\mathcal H})}
{(\E_{\PP}\widetilde F_{\mathcal H})^2}
\le
\delta^4\frac{P}{(\E_{\PP}F_{\mathcal H})^2}
=o(1).
\]
Here $\delta^4\le16$ by the hypothesis $\delta\le2$.
The same comparison applies to a fixed clique $C\subseteq[n]$ of size $k$.
For an embedded pattern $a=(H,\phi)$, write $\chi_a=\chi_{H,\phi}$ and let $R_C(a)$ be its set of edges not contained in $C$.
Then
\[
\E_{\PP_{n,C}}[\chi_a]
=
\mathbf 1\{R_C(a)=\emptyset\},
\qquad
\E_{\PP_{n,C}}[\chi_a\chi_b]
=
\mathbf 1\{R_C(a)=R_C(b)\}.
\]
Thus every covariance equals $\mathbf 1\{R_C(a)=R_C(b)\ne\emptyset\}\ge0$.
Using $\delta^{-1}\le c_{H,\phi}\le\delta$ gives
\begin{align*}
\delta^{-1}\E_{\PP_{n,C}}[F_{\mathcal H}]
&\le
\E_{\PP_{n,C}}[\widetilde F_{\mathcal H}]
\le
\delta\E_{\PP_{n,C}}[F_{\mathcal H}],\\
\Var_{\PP_{n,C}}(\widetilde F_{\mathcal H})
&\le
\delta^2\Var_{\PP_{n,C}}(F_{\mathcal H}).
\end{align*}
By invariance under relabeling, $F_{\mathcal H}$ has the same distribution for every fixed $C$ as for uniformly random $C$.
Its normalized variance is therefore $o(1)$ uniformly in $C$, and these inequalities prove the claim for $\widetilde F_{\mathcal H}$.
\end{proof}

\subsection{Proof of Lemma~\ref{lem:balanced-hash-rooted}}
\label{app:proof-balanced-hash-rooted}

\begin{proof}
Put $\gamma_v=1+\eta_v$ and fix $C\subseteq[n]$ of size $k$.
For a rooted pattern $a=(H,r,\phi)$, let $R_C(a)$ be its set of edges not contained in $C$.
As in the preceding proof,
\[
\Cov_{\PP_{n,C}}(\chi_a,\chi_b)
=
\mathbf 1\{R_C(a)=R_C(b)\ne\emptyset\}\ge0.
\]
Since every coefficient lies in $[\gamma_v^{-1},\gamma_v]$, positivity gives
\[
\Var_{\PP_{n,C}}(\widetilde F_{\mathcal H,i})
\le
\gamma_v^2\Var_{\PP_{n,C}}(F_{\mathcal H,i}).
\]
By relabeling, the mean and variance of $F_{\mathcal H,i}$ depend on $(C,i)$ only through whether $i\in C$.
Thus Lemmas~\ref{lem:rooted-moments} and~\ref{lem:p-var} give, uniformly in $C$ and $i$,
\[
\frac{\Var_{\PP_{n,C}}(F_{\mathcal H,i})}{\mu^2}\le\delta_v,
\qquad
\E_{\PP_{n,C}}F_{\mathcal H,i}
=
\begin{cases}
\mu,&i\in C,\\
0,&i\notin C.
\end{cases}
\]
If $i\in C$, only embeddings contained in $C$ have nonzero mean, so $\E_{\PP_{n,C}}[\widetilde F_{\mathcal H,i}]\in[\gamma_v^{-1}\mu,\gamma_v\mu]$.
It follows that
\[
\E_{\PP_{n,C}}\left[
\left(\frac{\widetilde F_{\mathcal H,i}}{\mu}-1\right)^2
\right]
\le \gamma_v^2\delta_v+O(\eta_v^2),
\qquad i\in C.
\]
If $i\notin C$, every rooted character has an edge outside $C$, so its mean is zero, and
\[
\E_{\PP_{n,C}}\left[
\left(\frac{\widetilde F_{\mathcal H,i}}{\mu}\right)^2
\right]
\le \gamma_v^2\delta_v,
\qquad i\notin C.
\]
Since $\eta_v=o(\delta_v^{1/2})$ and $\gamma_v=1+o(1)$, both bounds are $(1+o(1))\delta_v$, uniformly in $(C,i)$.
\end{proof}

\subsection{Proof of Lemma~\ref{lem:universal-holdout-wrapper}}
\label{app:proof-universal-holdout-wrapper}

\begin{proof}
Let $\mathfrak P$ be the splitter family from Lemma~\ref{lem:deterministic-holdout-splitter}, of size $|\mathfrak P|=O(n\log n/k)\le n^{1/2+o(1)}$.
For every fixed clique $C\subseteq[n]$ of size $k\ge(\lambda+\varepsilon)\sqrt n$, some $B\in\mathfrak P$ gives $V_2=B$ and $V_1=[n]\setminus B$ satisfying
\[
|C\cap V_2|\gg\log n,
\qquad
|C\cap V_1|=k-o(k),
\qquad
|V_1|\ge n/3.
\]
For every $B\in\mathfrak P$, put $N=|V_1|$, run the assumed deterministic score algorithm on $G[V_1]$, set
\[
M_{N,k}=\min\left\{N,\left\lceil k+\frac{N}{(\log N)^3}\right\rceil\right\},
\]
and let $V_3(B)$ consist of the vertices with the $M_{N,k}$ largest raw scores $S_i$, breaking ties by a fixed deterministic rule.
For a set $B$ with the splitter properties, $|C\cap V_1|=k-o(k)$ and $N\le n$, so $|C\cap V_1|\ge(\lambda+\varepsilon/2)\sqrt N$ for sufficiently large $n$.
Thus the scoring algorithm applies to $G[V_1]$.
For analysis, divide by the unknown common positive factor $\mu_{V_1,C\cap V_1}$ and set $V_{\mathrm{thr}}(B)=\{i\in V_1:\widehat p_i>1/2\}$.
Summing the mean square bounds over $C\cap V_1$ and its complement and applying Markov's inequality gives the following with probability $1-o(1)$: $V_{\mathrm{thr}}(B)$ misses at most $O(k\delta_N^{1/2})=o(k)$ clique vertices and contains at most $O(N\delta_N^{1/2})=o(N/(\log N)^3)$ vertices outside $C$.
If $M_{N,k}=N$, then $V_3(B)=V_1$.
Otherwise, $|V_{\mathrm{thr}}(B)|<M_{N,k}$, and normalization preserves the order of the scores, so $V_3(B)$ contains $V_{\mathrm{thr}}(B)$.
In either case, $V_3(B)$ contains all but $o(k)$ vertices of $C\cap V_1$ and has size $M_{N,k}\le k+N/(\log N)^3+1$.

For each row $B$, define
\[
\widehat C_2(B)
=
\left\{
y\in B:
|E(\{y\},V_3(B))|
\ge
\frac{|V_3(B)|}{2}+0.2499\,k
\right\}.
\]
The scores use only edges inside $V_1$, which are independent of the edges between $V_1$ and $V_2$ conditional on $C$.
Also, $|C\cap V_2|\gg\log n$, $|V_3(B)\cap C|=(1-o(1))k$, and
\[
\frac{k^2}{|V_3(B)|}
\ge
\Omega\!\left(\min\left\{k,\frac{k^2(\log N)^3}{N}\right\}\right)
=\Omega((\log n)^3).
\]
Hoeffding's inequality and a union bound, as in Lemma~\ref{lem:implementable-threshold}, give $\widehat C_2(B)=C\cap V_2$ with probability $1-o(1)$ for this $B$.
Then recover
\[
\widehat C_1(B)
=
\{x\in V_1:\ x\text{ is adjacent to every vertex of }\widehat C_2(B)\}.
\]
As in Lemma~\ref{lem:final-adjacency-cleanup}, $\widehat C_1(B)=C\cap V_1$ with probability $1-o(1)$.
Check each candidate $\widehat C_1(B)\cup\widehat C_2(B)$ and output the first that is a clique of size $k$.
The first moment argument in Section~\ref{sec:mm-rectangular} shows, uniformly for $k\ge(\lambda+\varepsilon)\sqrt n$, that with probability $1-o(1)$ no $k$-clique exists under the null law and $C$ is the unique $k$-clique under the planted law.
Thus any accepted candidate equals $C$, and the candidate for this $B$ is accepted with probability $1-o(1)$.

Computing scores for all $B\in\mathfrak P$ costs at most $n^{1/2+o(1)}T(n)$.
Direct cleanup and checking the candidates cost $O(|\mathfrak P|n^2)=O(n^{5/2+o(1)})$, which is dominated by this bound in the relevant applications.
\end{proof}

\subsection{Proof of Theorem~\ref{thm:universal-deterministic-exact-recovery}}
\label{app:proof-universal-deterministic-upper-bounds}

\begin{proof}
We first prove the bound using seeded trees.
Fix an integer $s\ge0$, put $v=\lfloor\sqrt{\log n}\rfloor$, and define
\[
\kappa_s=
\begin{cases}
0, & s=0,\\
(s+1)/2, & s\ge1,
\end{cases}
\qquad
\bar\kappa_s=\kappa_s+\frac12.
\]
Fix $\lambda>(2^s e)^{-1/2}$ and a clique $C\subseteq[n]$ of size $k\ge\lambda\sqrt n$.
Construct the holdout splitter family $\mathfrak P$ from Lemma~\ref{lem:deterministic-holdout-splitter}.
For each $B\in\mathfrak P$, set $V_2=B$ and $V_1=[n]\setminus B$.
For the set $B^\star$ given by Lemma~\ref{lem:deterministic-holdout-splitter}, we have
\[
|C\cap V_2|\ge h\gg\log n,\qquad
|C\cap V_1|=k-o(k),\qquad
|V_1|\le n.
\]

For each $B\in\mathfrak P$, use the Tur\'an construction inside $V_1$ to form a family of seeds of size $s$ that works for every clique location.
For $s=0$, use the empty seed.
For $s=1$, use all singletons in $V_1$.
For $s\ge2$, apply Lemma~\ref{lem:explicit-universal-seed-family}: partition $V_1$ into $B_{\mathrm{num}}=\lfloor k/(4s)\rfloor$ consecutive blocks of size $\Theta(n/k)$ and include every $s$-subset within a block.
For $B^\star$, we have $|C\cap V_1|=k-o(k)\ge k/2$ for sufficiently large $n$, so each block contains at least $2s-o(1)$ clique vertices on average.
Some block therefore contains at least $s$ clique vertices.
In all cases, the family contains a seed $Q^\star\subseteq C\cap V_1$.

For fixed $s$, the number of seeds for each $B$ is
\[
B_{\mathrm{num}}\binom{O_s(n/k)}s
=O_s(n^s/k^{s-1})
\le n^{(s+1)/2+o(1)}
\]
for $s\ge2$, and at most $n$ for $s=1$.
Thus there are at most $n^{\kappa_s+o(1)}$ seeds for each $B$.
Since $|\mathfrak P|\le n^{1/2+o(1)}$, there are at most $n^{\bar\kappa_s+o(1)}=n^{\kappa_s+1/2+o(1)}$ pairs $(B,Q)$.
We pad the matrices below to this number of rows.

For each pair $(B,Q)$, compute the rooted scores using only the induced subgraph on $V_1=[n]\setminus B$, with vertex weights
\[
w_{B,Q}(x)
=
\mathbf 1\{x\in V_1\setminus Q\}
\prod_{q\in Q}\frac{1+X_{qx}}2.
\]
For each rooted tree piece $U$, color set $A$, pair $(B,Q)$, and root image $x$, let $D_U^A(B,Q,x)$ sum the signed contributions of colorful embeddings of $U$, weighted by $w_{B,Q}$ at each image vertex.
For fixed $U,A$, this is an $n^{\bar\kappa_s+o(1)}\times n$ matrix.
Attaching a child subtree $W$ to the root of $U_0$ gives the recurrence
\[
D_U^A(B,Q,x)
=
\sum_{\substack{A_0,A_1\subseteq[v]\\A_0\cap A_1=\emptyset\\A_0\cup A_1=A}}
D_{U_0}^{A_0}(B,Q,x)
\left(\sum_y X_{xy}D_W^{A_1}(B,Q,y)\right).
\]
The mask at $x$ occurs in $D_{U_0}^{A_0}$.
For fixed $W,A_1$, a matrix product of shape $(n^{\bar\kappa_s}\times n)\cdot(n\times n)$ computes the parenthesized sums for all pairs and all $x$.
The entrywise products and sums over color sets cost $2^{O(v)}$ times the output size and do not change the exponent.
For $v=\lfloor\sqrt{\log n}\rfloor$, the tree patterns, roots, color sets, and colorings from balanced hash families contribute a factor $n^{o(1)}$.
Thus computing all scores takes $n^{\omega(1,1,\bar\kappa_s)+o(1)}=n^{\omega(1,1,\kappa_s+1/2)+o(1)}$ time.
Lemma~\ref{lem:seeded-rooted-derandomization} derandomizes color coding for each pair.
For each pair, put $N_B=|[n]\setminus B|$ and form $V_3(B,Q)$ by selecting the vertices with the
\[
M_{B,k}=\min\left\{N_B,\left\lceil k+\frac{N_B}{(\log N_B)^3}\right\rceil\right\}
\]
largest raw scores, breaking ties by a fixed deterministic rule.

For the pair $(B^\star,Q^\star)$, put $N_\star=|V_1|$ and $K=|(C\cap V_1)\setminus Q^\star|$.
Since $Q^\star\subseteq C\cap V_1$, with probability $1-o(1)$ over the edges between $Q^\star$ and the other vertices of $V_1$,
\[
K=(1-o(1))k,
\qquad
|W_{B^\star,Q^\star}|
=(1+o(1))\bigl(2^{-s}N_\star+(1-2^{-s})K\bigr).
\]
Since $N_\star\ge n/3$, we have $|W_{B^\star,Q^\star}|=\Theta(n)=n^{\Omega(1)}$.
The function $x\mapsto x/\sqrt{2^{-s}N_\star+(1-2^{-s})x}$ is increasing.
Together with $N_\star\le n$ and $K\ge(1-o(1))\lambda\sqrt n$, this gives effective signal at least $(1-o(1))\lambda2^{s/2}$, which exceeds $\exp(-1/2)$ by a constant margin.
Choose a fixed $\eta>0$ smaller than this margin.
Lemma~\ref{lem:seeded-rooted-derandomization} and the argument for selecting the largest scores in Lemma~\ref{lem:good-seeded-row} give a set $V_3(B^\star,Q^\star)\subseteq V_1$ containing all but $o(k)$ vertices of $C\cap V_1$ and satisfying
\begin{align*}
|V_3(B^\star,Q^\star)|&=M_{B^\star,k}
\le k+\frac{N_\star}{(\log N_\star)^3}+1,
\\
\frac{k^2}{M_{B^\star,k}}&=\Omega((\log n)^3).
\end{align*}

The scores use only edges inside $V_1$, so $V_3(B,Q)$ is independent of the edges between $V_2$ and $V_1$ conditional on $C$.
The set $V_2$ contains at least $h\gg\log n$ clique vertices.
For each pair $(B,Q)$, define
\[
C_2(B,Q)
=
\left\{
y\in B:
|E(\{y\},V_3(B,Q))|
\ge
\frac{|V_3(B,Q)|}{2}+0.2499k
\right\}.
\]
Although Lemma~\ref{lem:implementable-threshold} is stated for the random split in Section~\ref{sec:recovery}, its proof only uses the independence of the bipartite edges, $|V_2|\le n$, $|V_3\cap C_1|=(1-o(1))k$, and $k^2/|V_3|\gg\log n$.
These properties hold here, so the same proof of Lemma~\ref{lem:implementable-threshold} gives $C_2(B^\star,Q^\star)=C\cap V_2$ with probability $1-o(1)$.
Lemma~\ref{lem:recover-main-from-holdout} then recovers $C\cap V_1$ as the vertices adjacent to all of $C_2(B^\star,Q^\star)$.

Three products with the adjacency matrix clean up all candidates and check whether they are $k$-cliques.
Let $B_{\mathrm{cand}}$ be the indicator matrix of the sets $V_3(B,Q)$, with rows indexed by $(B,Q)$ and columns by vertices.
The product $B_{\mathrm{cand}}A$ counts each vertex's neighbors in each $V_3(B,Q)$.
For each pair, apply the threshold above to the coordinates in $V_2=B$ to form $C_2(B,Q)$.
Let $B_{\mathrm{hold}}$ be the indicator matrix of these sets.
The product $B_{\mathrm{hold}}A$ counts each vertex's neighbors in each $C_2(B,Q)$.
Thresholding at $|C_2(B,Q)|$ within $V_1$ gives
\[
C_1(B,Q)
=
\{x\in [n]\setminus B:\ |E(\{x\},C_2(B,Q))|=|C_2(B,Q)|\}.
\]
Let $B_{\mathrm{fin}}$ be the indicator matrix of the final candidates $C_1(B,Q)\sqcup C_2(B,Q)$.
The product $B_{\mathrm{fin}}A$ counts each vertex's neighbors in each final candidate.
Accept a row exactly when its candidate has size $k$ and every selected coordinate of $B_{\mathrm{fin}}A$ equals $k-1$.
All three products have at most $n^{\bar\kappa_s+o(1)}$ rows and cost $n^{\omega(1,1,\bar\kappa_s)+o(1)}$ time.
Forming $B_{\mathrm{fin}}$ and scanning the output costs $O(n^{\bar\kappa_s+1+o(1)})$, which is dominated by this bound since the output size gives $\omega(1,1,\bar\kappa_s)\ge\bar\kappa_s+1$.
Thus the total runtime is
\[
n^{\omega(1,1,\bar\kappa_s)+o(1)}
=
n^{\omega(1,1,\kappa_s+1/2)+o(1)}.
\]
The pair $(B^\star,Q^\star)$ produces $C$ with probability $1-o(1)$.
The first moment argument in Section~\ref{sec:mm-rectangular} shows that, with probability $1-o(1)$ over the background \Erdos--\Renyi\ edges, $C$ is the unique $k$-clique, uniformly for every fixed $C$ with $k\ge\lambda\sqrt n$.
Thus every accepted candidate equals $C$ on this event.
Accepting if any candidate is a $k$-clique also gives strong detection: under $\QQ_n$ no such clique exists with high probability, while under the planted law the pair $(B^\star,Q^\star)$ produces one with probability $1-o(1)$.
This proves the bound for $\tau_{\det}$ using seeded trees and rectangular matrix multiplication.

For the basic bounds using patterns of bounded treewidth and the bound using square matrix multiplication for treewidth $2$, exact recovery for every clique location still requires a deterministic holdout split.
Replace random colorings by balanced hash families and apply Lemma~\ref{lem:universal-holdout-wrapper} to the raw rooted scores.
Lemma~\ref{lem:balanced-hash-rooted} gives the required mean square bounds inside and outside each fixed induced clique, with $\delta_N=\exp(-\Omega(\sqrt{\log N}))$.
Selecting the vertices with the largest scores does not require knowing the common positive normalizing factor, since it preserves their order.
This gives, respectively,
\begin{align*}
\tau_{\det}(c(t)^{-1/2})
&\le
t+\frac32,\\
\tau_{\det}(1/\sqrt{c(2)})
&\le
\omega+\frac12.
\end{align*}

For detection at the threshold for treewidth $2$, Lemma~\ref{lem:balanced-hash-detection} replaces the random colorings by a balanced hash family, without a holdout split.
Combined with the dynamic program from Theorem~\ref{thm:tw2-mm}, it gives a deterministic algorithm for strong detection in time $n^{\omega+o(1)}$, for every clique location and every fixed $\lambda>1/\sqrt{c(2)}$.

The deterministic message-passing algorithm of Deshpande and Montanari~\cite{deshpande2015finding} gives the sharper recovery bound $\tau_{\det}(\exp(-1/2))\le2$.

Using the ordinary color-coding dynamic program for trees costs $n^{2+o(1)}$ time per seed.
For at most $n^{\bar\kappa_s+o(1)}$ pairs $(B,Q)$, the total cost is $n^{2+\bar\kappa_s+o(1)}=n^{5/2+\kappa_s+o(1)}$.
The same batched or direct cleanup costs no more, giving recovery exponent $5/2+\kappa_s$.
\end{proof}
\end{document}